\numberwithin{equation}{subsection}
\theoremstyle{plain}
\newtheorem{theo}{Theorem}[section]
\newtheorem{lem}[theo]{Lemma}
\newtheorem{cor}[theo]{Corollary}
\newtheorem{prop}[theo]{Proposition}
\theoremstyle{definition}
\newtheorem{defin}[theo]{Definition}
\newtheorem{rem}[theo]{Remark}
\newtheorem{theorem}{Theorem}
\newcommand{\unit}{1\!\!1}
\newcommand{\R}{\mathbb{R}}
\newcommand{\Rtre}{\mathbb{R}^{3}}
\newcommand{\Gkerspace}{(-\Delta_{\Rtre})^{-1}(x,y)}
\DeclareMathOperator{\spn}{span}
\DeclareMathOperator{\infspec}{inf\,spec}
\DeclareMathOperator{\ran}{ran}
\DeclareMathOperator{\dist}{dist}
\DeclareMathOperator{\const}{const.}
\DeclareMathOperator{\Err}{Err}
\newcommand{\LL}{L^L}
\newcommand{\LLn}{L^{L_n}}
\newcommand{\XL}{X^L}
\newcommand{\XLn}{X^{L_n}}
\newcommand{\TL}{\mathbb{T}^3_L}
\newcommand{\TLn}{\mathbb{T}^3_{L_n}}
\newcommand{\EL}{\mathcal{E}_L}
\newcommand{\ELn}{\mathcal{E}_{L_n}}
\newcommand{\Einf}{\mathcal{E}}
\newcommand{\Emininf}{\varPsi}
\newcommand{\einf}{e_{\infty}}
\newcommand{\eL}{e_L}
\newcommand{\eLn}{e_{L_n}}
\newcommand{\Tker}{(-\Delta_L)^{-1}(x,y)}
\newcommand{\bary}{\bar y}
\newcommand{\ropsi}{\rho_{\psi}}
\newcommand{\MinLe}{\mathcal{M}^{\mathcal{E}}_{L}}
\newcommand{\MinLf}{\mathcal{M}^{\mathcal{F}}_L}
\newcommand{\GL}{\mathcal{G}_L}
\newcommand{\HL}{\mathbb{H}_L}
\newcommand{\FL}{\mathcal{F}_L}
\newcommand{\numero}{\mathbb{N}}
\newcommand{\ad}{a^{\dagger}}
\newcommand{\elecphononcoupl}{v^x_{L}}
\newcommand{\elecphononcouplCUT}{v^x_{L,\Lambda}}
\newcommand{\elecphononcouplREST}{w^x_{L,\Lambda}}
\newcommand{\LagrML}{\mu^L}
\newcommand{\LagrMLn}{\mu^{L_n}}
\newcommand{\LagrMinf}{\mu}
\newcommand{\HF}{H^{\FL}_{\varphi_L}}
\newcommand{\GradProj}{\Pi^L_{\nabla}}
\newcommand{\GradProjT}{\Pi^L_{\nabla,T}}
\newcommand{\GSOrthProj}{Q_{\psi_L}}
\newcommand{\Tneigh}{\left[\Pi\Omega_L(\varphi_L)\right]_{\varepsilon,T}}
\begin{document}

\title[Strongly Coupled Polaron on the Torus]{The Strongly Coupled Polaron on the Torus: Quantum Corrections to the Pekar Asymptotics}

\author{Dario Feliciangeli and Robert Seiringer}
\address{IST Austria, Am Campus 1, 3400 Klosterneuburg, Austria}

\begin{abstract}
We investigate the Fr\"ohlich polaron model on a three-dimensional torus, and give a proof of the 
 second-order quantum corrections  to its ground-state energy in the strong-coupling limit. Compared to previous work in the confined case, the translational symmetry (and its breaking in the Pekar approximation) makes the analysis substantially more challenging. 
\end{abstract}

\date{\today}

\maketitle

\section{Introduction}

The underlying physical system we are interested in studying is that of a charged particle (e.g., an electron) interacting with the quantized optical modes of a polar crystal (called phonons). In this situation, the electron excites the phonons by inducing a polarization field, which, in turn, interacts with the electron. 
In the case of a `large polaron' (i.e., when the De Broglie wave-length of the electron is much larger than the lattice spacing in the medium), this system is described by the Fr\"ohlich Hamiltonian \cite{frohlich1937theory}, which represents a simple and well-studied model of non-relativistic quantum field theory (see \cite{alexandrov2010advances,frank2014ground,gerlach1991analytical,moller2006polaron,seiringer2020polaron,spohn1987effective} for properties, results and further references).

A key parameter that appears in the problem is the coupling constant, usually denoted by $\alpha$. We study the strong coupling regime of the model, i.e., its asymptotic behavior as $\alpha \to \infty$. In this limit, the ground state energy of the Fr\"ohlich Hamiltonian agrees to leading order with the prediction of the Pekar approximation \cite{pekar1954untersuchungen}, which assumes a classical behavior for the phonon field. 
This was first proved  in \cite{donsker1983asymptotics}, using a path integral approach (see also \cite{mukherjee2018identification} and \cite{mukherjee2018strong}, for recent work on the Pekar process \cite{spohn1987effective}). Later, the result was improved in \cite{lieb1997exact}, by providing explicit bounds on the leading order correction term.

The object of our study is, precisely, the main correction to the classical (Pekar) approximation of the polaron model, i.e., the leading error term in the aforementioned asymptotics for the ground state energy. Such correction is expected to be of order $O(\alpha^{-2})$ smaller than the leading term, and arises from the quantum fluctuations about the classical limit \cite{allcock1963strong}. This claim was first verified rigorously in \cite{frank2019quantum}, where both the electron and the phonon field are confined to a bounded domain (of linear size adjusted to the natural length scale set by the Pekar ansatz) with Dirichlet boundary conditions. Such restriction breaks translation invariance and simplifies the structure of the Pekar problem in comparison with the unconfined case, guaranteeing, at least in the case of the domain being a ball \cite{feliciangeli2020uniqueness}, uniqueness up to phase of the Pekar minimizers and non-degeneracy of the Hessian of the Pekar functional. 
We build upon the strategy  developed in  \cite{frank2019quantum} to treat the ultraviolet singularity of the model, which in turn relies on multiple application of the Lieb--Yamazaki commutator method \cite{lieb1958ground} and a subsequent use of Nelson's Gross transformation \cite{gross1962particle,nelson1964interaction}.

The key novelty of the present work is to deal with a translation invariant setting. 
We investigate the quantum correction to the Pekar approximation of the polaron model on a torus, and prove the validity of the predictions in \cite{allcock1963strong} also in this setting. As a first step, we analyze the structure of the set of minimizers of the corresponding Pekar functional, proving uniqueness of minimizers up to symmetries, which 
was so far known to hold only in the unconfined case \cite{lieb1977existence,lenzmann2009uniqueness} and on balls with Dirichlet boundary conditions \cite{feliciangeli2020uniqueness}. The translation invariance leads to a degeneracy of the Hessian of the Pekar functional and corresponding zero modes, substantially complicating the analysis of the quantum fluctuations. In order to `flatten' the surface of minimizers, we introduce a convenient diffeomorphism inspired by formal computations  in \cite{gross1976strong}, which effectively allows us to decouple the zero modes.

\section{Setting and Main Results}

\subsection{The Model}
We consider a $3$-dimensional flat torus of side length $L>0$. We denote by $\Delta_L$ the  Laplacian on $\TL$ and by $\Tker$ the integral kernel of its `inverse', which we define by 
\begin{equation}
\begin{cases}
&(-\Delta_L) \left[(-\Delta_L)^{-1}(\,\cdot\,,y)\right]= \delta_y\\
&\int_{\TL} \Tker dx =0. 
\end{cases}
\end{equation}
An explicit formula for $\Tker$ is given by
\begin{equation}
\label{expltker}
\Tker=\sum_{0\neq k\in \frac {2\pi}{L} \mathbb{Z}^3} \frac 1 {|k|^2} \frac{e^{ik\cdot(x-y)}}{L^3},
\end{equation}
which, for any $x\in \TL$, yields an $L^2$ function of $y$, its Fourier coefficients being in $\ell^2$. Analogously we define $(-\Delta_L)^{-s}$ for any $s>0$. In the following, we identify  $\TL$ with the box $[-L/2,L/2]^3\subset \Rtre$, and the Laplacian with the corresponding one on  $[-L/2,L/2]^3$ with periodic boundary conditions.

Let
\begin{align}
\label{eq:ElPhCouplDef}
v_L(y) :=(-\Delta_L)^{-1/2}(0,y)=\sum_{0\neq k\in \frac {2\pi}{L} \mathbb{Z}^3} \frac 1 {|k|}\frac {e^{-ik\cdot y}}{L^3},
\end{align} 
and $\elecphononcoupl(y):=v_L(y-x)$. The Fr\"ohlich Hamiltonian \cite{frohlich1937theory} for the polaron is given by 
\begin{align}
\label{eq:FrHam}
\HL&:=(-\Delta_L)\otimes \unit+\unit\otimes\numero -a(\elecphononcoupl)-\ad(\elecphononcoupl)\nonumber\\
&\;=(-\Delta_L)\otimes \unit+\unit\otimes \left(\sum_{ k\in \frac {2\pi}{L} \mathbb{Z}^3}\ad_k a_k\right) -\frac 1 {L^{3/2}}\sum_{0\neq k\in \frac {2\pi} L \mathbb{Z}^3}\frac 1 {|k|}\left(a_ke^{ik\cdot x}+\ad_k e^{-ik\cdot x}\right),
\end{align}
acting on $L^2(\TL)\otimes \mathcal{F}(L^2(\TL))$, where $\mathcal{F}(L^2(\TL))$ denotes the bosonic Fock space over $L^2(\TL)$. The number operator, denoted by $\numero$, accounts for the field energy, whereas $-\Delta_L$ accounts for the electron kinetic energy. The creation and annihilation operators for a plane wave of momentum $k$ are denoted by $a_k^\dagger$ and $a_k$, respectively, and they are assumed to  satisfy the rescaled canonical commutation relations
\begin{align}
\label{eq:commrel}
[a_k,\ad_j]=\alpha^{-2} \delta_{k,j}.
\end{align}
In light of \eqref{eq:commrel}, $\numero$ has spectrum $\sigma(\numero)=\alpha^{-2} \{0,1,2,\dots\}$. We note that the definition \eqref{eq:FrHam} is somewhat formal, since $v_L\not \in L^2(\TL)$. It is nevertheless possible to define $\HL$ via the associated quadratic form, and to find a suitable domain on which it is self-adjoint and bounded from below (see \cite{griesemer2016self}, or Remark \ref{rem:Gross} in Section \ref{sec:ProofMainResult} below). 

We shall investigate the ground state energy of $\mathbb{H}_L$, for fixed $L$ and $\alpha \to \infty$. 

\begin{rem}
By rescaling all lengths by $\alpha$, $\HL$ is unitarily equivalent to the operator $\alpha^{-2} \widetilde{\mathbb{H}}_L$, where $\widetilde{\mathbb{H}}_L$ can be written compactly as
\begin{align}
\widetilde{\mathbb{H}}_L=(-\Delta_{\alpha^{-1} L})\otimes \unit-\sqrt{\alpha}\left[\tilde{a}(v_{\alpha^{-1}L}^x)+\tilde{a}^{\dagger}(v_{\alpha^{-1}L}^x)\right]+\unit\otimes\widetilde{\mathbb{N}},
\end{align}
with the creation and annihilation operators $\tilde{a}^{\dagger}$ and $\tilde{a}$ now satisfying  the (un-scaled) canonical commutation relations $[\tilde{a}(f),\tilde{a}^{\dagger}(g)]=\bra{f}\ket{g}$, and $\tilde{\mathbb{N}}$  the corresponding number operator. Large $\alpha$ hence corresponds to the strong-coupling limit of a polaron confined to a torus of side length $L\alpha^{-1}$. We find it more convenient to work in the variables defined in \eqref{eq:FrHam}, however.
\end{rem}

\begin{rem}
The Fr\"ohlich polaron model is typically considered without confinement, i.e., as a model on $L^2(\Rtre)\otimes \mathcal{F}(L^2(\Rtre))$ with electron-phonon coupling function given by $(-\Delta_{\Rtre})^{-1/2}(x,y)= (2\pi^2)^{-1} |x-y|^{-2}$. In the confined case studied in \cite{frank2019quantum}, $\R^3$ was  replaced by a  bounded domain $\Omega$, and thus  the electron-phonon coupling function was given by $(-\Delta_{\Omega})^{-1/2}(x,y)$, where $\Delta_{\Omega}$ denotes the Dirichlet Laplacian on $\Omega$. The latter setting, similarly to ours, has the advantage of guaranteeing compactness for the corresponding inverse Laplacian, which is a key technical ingredient both for  \cite{frank2019quantum} and  our main results. In addition, for generic domains $\Omega$ the Pekar functional has a unique minimizer up to phase (which is proved in \cite{feliciangeli2020uniqueness} for $\Omega$ a ball, and enters the analysis in \cite{frank2019quantum} for general $\Omega$ as an assumption). 
Compared with \cite{frank2019quantum}, setting the problem on the torus (or on $\R^3$) introduces the extra difficulty of having to deal with translation invariance and a whole continuum of Pekar minimizers. Hence the present work can be seen as a first step in the direction of generalizing the results of \cite{frank2019quantum} to the case of $\R^3$.
\end{rem} 

\subsection{Pekar Functional(s)} 
\label{sec:PekarFunctionalsTorus}

For $\psi\in H^1(\TL)$, $\|\psi\|_2=1$, and $\varphi\in L^2_{\R}(\TL)$, we introduce the classical energy functional corresponding to \eqref{eq:FrHam} as
\begin{align}
\label{eq:Gfun}
\GL(\psi,\varphi):=\expval{h_{\varphi}}{\psi}+\|\varphi\|_2^2,
\end{align}
where $h_\varphi$ is the Schr\"odinger operator
\begin{align}
\label{eq:hVphi}
h_{\varphi}:=-\Delta_L+V_{\varphi}, \quad V_{\varphi}:= -2 (-\Delta_L)^{-1/2} \varphi.
\end{align}  
We define the Pekar energy as
\begin{align}
\label{eq:pekaren}
\eL:=\min_{\psi,\varphi} \GL(\psi,\varphi).
\end{align}
In the case of $\Rtre$, it was shown in \cite{donsker1983asymptotics} and \cite{lieb1997exact} that the infimum of the spectrum of the Fr\"ohlich Hamiltonian converges to the minimum of the corresponding classical energy functional as $\alpha\to \infty$. In \cite{frank2019quantum}, it was shown that the same holds for the model confined to a bounded domain with Dirichlet boundary conditions and the subleading correction in this asymptotics was computed. Our goal is to extend the results of \cite{frank2019quantum} to the case of $\TL$.

We define the two functionals
\begin{align}
\label{eq:EFfun}
\EL(\psi):=\min_{\varphi} \GL(\psi,\varphi),\quad
\FL(\varphi):=\min_{\psi} \GL(\psi,\varphi),
\end{align}
and their respective sets of minimizers 
\begin{align}
\label{eq:minEL}
\MinLe:=&\left\{\psi\in H^1(\TL) \,|\, \|\psi\|_2=1, \; \EL(\psi)=\eL\right\},\\
\label{eq:minLf}
&\MinLf:=\{\varphi\in L^2_{\R}(\TL) \,\,|\,\, \FL(\varphi)=\eL\}.
\end{align}
Clearly, $\EL$ is invariant under translations and changes of phase and $\FL$ is invariant under translations. It is thus useful to introduce the notation
\begin{align}
\label{eq:invariantsurfaceE}
\Theta_L(\psi):=\{e^{i\theta}&\psi^y(\,\cdot\,):=e^{i\theta} \psi(\,\cdot\,-y)\,\,|\,\,\theta\in [0,2\pi), \,y\in \TL\},\\
\label{def:invariantsurfaceF}
&\Omega_L(\varphi)=\{\varphi^y \,\,|\,\, y\in \TL\},
\end{align}
for $\psi \in H^1(\TL)$ and $\varphi \in L^2_{\R}(\TL)$, respectively.

Our first result, Theorem \ref{uniquenessANDcoercivity} (or, more precisely, Corollary \ref{cor:uniquenessANDcoercivity}) is a fundamental ingredient to prove our main result, Theorem \ref{theo:GSEexpansion}. It concerns the uniqueness of minimizers of $\EL$ up to symmetries and shows the validity of a quadratic lower bound for $\EL$ in terms of the $H^1$-distance from the surface of minimizers. We shall prove these properties for sufficiently large $L$. 

\begin{theo}[Uniqueness of Minimizers and Coercivity for $\EL$]
	\label{uniquenessANDcoercivity}
	There exist $L_1>0$ and a positive constant $\kappa_1$ independent of $L$, such that for $L>L_1$ there exists $0<\psi_L\in C^{\infty}(\TL)$ such that
	\begin{align}
	\label{bigLregime}
	\eL<0, \quad \MinLe=\Theta_L(\psi_L).
	\end{align}
	Moreover $\psi_L^y\neq \psi_L$ for any $0\neq y\in \TL$ and, for any $L^2$-normalized $f\in H^1(\TL)$,
	\begin{align}
	\label{globalquadbound}
	\EL(f)-\eL\geq \kappa_1\dist^2_{H^1}\left(\MinLe,f\right).
	\end{align} 
\end{theo}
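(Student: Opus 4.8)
The plan is to transfer the known structure of the Pekar problem on $\Rtre$ to the large torus by a compactness/concentration argument, and then to deduce the coercivity bound \eqref{globalquadbound} from a local nondegeneracy statement plus a rigidity argument away from the minimizing surface. First I would recall that on $\Rtre$ the Pekar functional $\Einf$ has a minimizer $\Em inf$ which is unique up to translations and phase (Lieb \cite{lieb1977existence}, Lenzmann \cite{lenzmann2009uniqueness}), that it is a positive, radial, exponentially decaying Schwartz function, and — crucially — that the Hessian of $\Einf$ at $\Emininf$ restricted to the orthogonal complement of the zero modes (generated by the three translations and the phase) is strictly positive; this is the content of the nondegeneracy result that also underlies \cite{lenzmann2009uniqueness}. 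The strategy is then: (i) show $\eL \to \einf$ and that any sequence of minimizers $\psi_{L_n}$ of $\EL_n$, suitably recentered, converges to $\Emininf$; (ii) use this to get, for large $L$, a positive smooth minimizer $\psi_L$ and the local quadratic bound near $\Theta_L(\psi_L)$ with a constant uniform in $L$; (iii) upgrade the local bound to the global bound \eqref{globalquadbound} and simultaneously rule out any other connected component of minimizers.

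For step (i), since the torus of side $L$ with the minimizer localized on scale $O(1)$ looks locally like $\Rtre$, I would compare trial states: plugging a cutoff-and-periodized copy of $\Emininf$ into $\GL$ gives $\eL \le \einf + o(1)$; conversely, a minimizer $\psi_L$ of $\EL$ has $\EL(\psi_L)=\eL$ bounded, hence $\|\psi_L\|_{H^1}$ is bounded, and one runs a concentration-compactness argument (ruling out vanishing via the strict negativity $\einf<0$, and ruling out dichotomy via the strict binding inequality for the Pekar problem, exactly as in the classical existence proofs). The only subtlety versus $\Rtre$ is the change of the interaction kernel: $(-\Delta_L)^{-1/2}(0,y)$ differs from $(2\pi^2)^{-1}|y|^{-2}$ by a smooth (in fact real-analytic) bounded correction on compact sets, with uniform control as $L\to\infty$, so the Pekar energy and the Euler--Lagrange equation converge with quantitative $L^{-\infty}$-type error (bounded by, say, $Ce^{-cL}$ after subtracting the mean, since the lattice sum $\sum_{0\neq k\in\frac{2\pi}{L}\mathbb Z^3}|k|^{-2}e^{ik\cdot y}L^{-3}$ approximates $\int |k|^{-2}e^{ik\cdot y}dk/(2\pi)^3$ with exponentially small error for $y$ in a fixed ball). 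From the convergence $\psi_L\to\Emininf$ in $H^1$ (up to symmetry) together with elliptic regularity and the strong maximum principle applied to $h_{\varphi_L}$, I get $\psi_L\in C^\infty(\TL)$ and $\psi_L>0$; strict positivity also forces $\psi_L$ to be the unique nonnegative ground state of $h_{\varphi_L}$, which is what makes $\MinLe$ a single orbit. The fact that $\psi_L^y\neq\psi_L$ for $y\neq 0$ follows because $\psi_L$ is close in $H^1$ to the genuinely non-translation-invariant profile $\Emininf$, so it cannot be periodic with a smaller period.

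For step (ii), I would show the Hessian of $\EL$ at $\psi_L$, restricted to the $H^1$-orthogonal complement of the tangent space $T_{\psi_L}\Theta_L(\psi_L)$ of the minimizing orbit, is bounded below by a constant $\kappa_1>0$ independent of $L$. This is where the work is: $\EL$ itself is an implicitly-defined functional (a partial minimization of $\GL$ over $\varphi$), so I would first write $\EL(\psi) = -\|(-\Delta_L)^{-1/2}\rho_\psi\|_2^2 + \langle\psi,-\Delta_L\psi\rangle$ with $\rho_\psi=|\psi|^2$ (the minimizing $\varphi$ being $\varphi=(-\Delta_L)^{-1/2}\rho_\psi$), compute its second variation explicitly, and compare it term-by-term with the second variation of $\Einf$ at $\Emininf$, using the kernel estimates above to control the difference by $o(1)$ as $L\to\infty$; the positivity of the limiting Hessian on the orthogonal complement of the zero modes (translations $\partial_j\Emininf$ and phase $i\Emininf$) then gives a uniform gap for large $L$. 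By Taylor expansion around the orbit, this yields the quadratic bound \eqref{globalquadbound} for $f$ in an $H^1$-neighborhood of $\Theta_L(\psi_L)$, with $\kappa_1$ uniform. Finally, step (iii): outside that neighborhood, $\EL(f)-\eL$ is bounded below by a positive constant (again uniformly in $L$ for large $L$, by the concentration-compactness dichotomy — any near-minimizing $f$ far from the orbit would produce, in the $L\to\infty$ limit, a second minimizer of $\Einf$ not equivalent to $\Emininf$ by symmetry, contradicting uniqueness), while $\dist^2_{H^1}(\MinLe,f)$ is bounded above by $4$; shrinking $\kappa_1$ if necessary we get \eqref{globalquadbound} globally, and the same dichotomy argument shows there is no minimizer of $\EL$ outside $\Theta_L(\psi_L)$, completing \eqref{bigLregime}.

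The main obstacle I anticipate is making the Hessian comparison in step (ii) quantitative and, in particular, correctly identifying the kernel of the limiting Hessian. On $\Rtre$ the Pekar Hessian has a four-dimensional kernel (three translations plus the complex phase), and one must verify there are no additional, spurious zero modes surviving in the torus problem — this is exactly the translation-invariance degeneracy the authors flag — and that the orbit $\Theta_L(\psi_L)$ is a smooth $4$-dimensional submanifold of the $L^2$-sphere so that "$H^1$-distance to the orbit" is comparable to "norm of the component orthogonal to the tangent space." Getting the uniformity of $\kappa_1$ in $L$ requires the exponential-in-$L$ decay of the kernel correction, which in turn rests on the exponential decay of $\Emininf$ so that the relevant quantities only see $y$ in a fixed compact set; keeping careful track of this is the technical heart of the argument.
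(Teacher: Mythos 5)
Your overall strategy matches the paper's: compare $\EL$ to the full-space functional $\Einf$, show convergence of energies and minimizers, transfer nondegeneracy of the full-space Hessian (from \cite{lenzmann2009uniqueness}) to a uniform local coercivity bound, and finally combine local bounds with localization of near-minimizers to obtain uniqueness and the global quadratic estimate. The paper makes the same point you do, even explicitly noting that it proves Hessian positivity \emph{before} uniqueness and derives uniqueness from it at the end. However, there are a couple of places where your sketch either misstates a key estimate or skips over the actual technical difficulty.

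First, your claim that the torus Green's function agrees with the $\Rtre$ kernel up to an error $Ce^{-cL}$ on compact sets (after mean subtraction) is incorrect. The periodized Coulomb kernel on $\TL$ differs from $(4\pi|x-y|)^{-1}$ by a harmonic, bounded function whose $L^\infty$-norm scales like $1/L$, not exponentially: this is exactly Lemma~\ref{kernelbounds}, and the $O(1/L)$ rate is sharp because the nearest image charges sit at distance $\sim L$ and contribute $O(1/L)$ to the potential. Fortunately the paper only needs this polynomial rate, so your proof plan does not collapse; but you should not attribute the uniformity of $\kappa_1$ to exponential decay of the kernel correction, since that is not available.

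Second, and more substantively, the step you describe as ``compute the second variation explicitly, compare it term-by-term with the second variation of $\Einf$ at $\Emininf$, and control the difference by $o(1)$'' hides a genuine obstruction. The quantity you want to bound from below is a spectral gap of a Schr\"odinger-type operator on $\TL$ above its zero modes, and the infimizing test function realizing that gap may, a priori, place mass far from the (localized) minimizer $\psi_L$. Outside a ball where $\psi_L$ is concentrated, the kernel comparison to $\Rtre$ does not help; the operator there is essentially $-\Delta_L - \LagrML_{\psi}$, which is bounded below only by $-\LagrMinf_{\Emininf}+o(1)$, and one must separately check that this level sits strictly above the full-space gap $h_\infty''$ (the paper verifies $h_\infty''<-\LagrMinf_{\Emininf}$ using the infinite discrete spectrum of $L_\Emininf$ below its essential spectrum). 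To handle this, the paper performs an IMS localization of the near-optimizer, shows the outer piece has energy at least $-\LagrMinf_{\Emininf}+o(1)$ (Lemma~\ref{lochess}), and only then projects the inner piece onto the full-space constraint set to compare with $h_\infty''$. Your term-by-term plan does not address the potentially delocalized trial function and would need this extra localization machinery to close. Finally, your reliance on concentration-compactness with the strict binding inequality for the lower bound $\liminf_L \eL \geq \einf$ is a different, heavier route than the paper's direct IMS localization (cutting $\psi$ to a ball of radius $L/4$ at a cost $O(L^{-2})$); both can work, but the paper's approach avoids invoking binding inequalities and yields quantitative error rates more directly.
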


These properties of $\EL$ translate easily to analogous properties for the functional $\FL$, as stated in the following corollary. 

\begin{cor}[Uniqueness of Minimizers and Coercivity for $\FL$]
	\label{cor:uniquenessANDcoercivity}
	For $L>L_1$ (where $L_1$ is the same as in Theorem \ref{uniquenessANDcoercivity}) there exists $\varphi_L\in C^{\infty}(\TL)$ such that 
	\begin{align}
	\MinLf=\Omega_L(\varphi_L).
	\end{align}
	Moreover, with $\psi_L$ as in Theorem \ref{uniquenessANDcoercivity}, we have
	\begin{align}
	\label{eq:psilphil}
	\varphi_L=\sigma_{\psi_L}:=(-\Delta_L)^{-1/2} |\psi_L|^2, \quad \psi_L= \text{unique positive g.s. of } h_{\varphi_L}.
	\end{align}
	Finally, there exists $\kappa'>0$ independent of $L$ such that, for all $\varphi\in L^2(\TL)$,
	\begin{align}
	\label{eq:Fglobalquadbound}
	\FL(\varphi)-\eL&\geq \min_{y\in \TL} \expval{\unit -(\unit+\kappa'(-\Delta_L)^{1/2})^{-1}}{\varphi-\varphi_L^y}+\left|L^{-3/2}\int_{\TL}\varphi\right|^2,
	\end{align}
	and this implies
	\begin{align}
	\label{eq:Fglobalquadbound2}
	\FL(\varphi)-\eL\geq \tau_L \dist_{L^2}^2(\MinLf, \varphi)
	\end{align}
	with $\tau_L:=\frac {\kappa' (2\pi/L)^2}{1+\kappa' (2\pi/L)^2}$.
\end{cor}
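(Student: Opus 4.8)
The plan is to derive all three assertions from Theorem~\ref{uniquenessANDcoercivity} by exploiting the two partial minimizations of $\GL$. Completing the square in $\varphi$ in \eqref{eq:Gfun}--\eqref{eq:hVphi}, using $\expval{V_\varphi}{\psi}=-2\langle\varphi,(-\Delta_L)^{-1/2}|\psi|^2\rangle$, gives the identity $\GL(\psi,\varphi)=\EL(\psi)+\|\varphi-\sigma_\psi\|_2^2$ with $\sigma_\psi:=(-\Delta_L)^{-1/2}|\psi|^2$; thus the minimizer in $\varphi$ is $\sigma_\psi$, and $\EL(\psi)=\|\nabla\psi\|_2^2-\|\sigma_\psi\|_2^2$. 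Minimizing instead in $\psi$ gives $\FL(\varphi)=\infspec(h_\varphi)+\|\varphi\|_2^2$; since $\varphi\in L^2(\TL)$ makes $V_\varphi\in L^6(\TL)$ and $-\Delta_L$ has compact resolvent on the compact manifold $\TL$, the operator $h_\varphi$ is self-adjoint, bounded below, with compact resolvent, so a normalized ground state $\psi_\varphi$ exists; by Perron--Frobenius (real potential, connected torus) it is unique up to phase and may be chosen strictly positive, and it is the unique (up to phase) minimizer of $\psi\mapsto\GL(\psi,\varphi)$.

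Since $\eL=\min_{\psi,\varphi}\GL=\min_\psi\EL(\psi)=\min_\varphi\FL(\varphi)$, a field $\varphi$ lies in $\MinLf$ if and only if $(\psi_\varphi,\varphi)$ is a joint minimizer of $\GL$, which holds if and only if $\psi_\varphi\in\MinLe$ and $\varphi=\sigma_{\psi_\varphi}$. By Theorem~\ref{uniquenessANDcoercivity}, $\MinLe=\Theta_L(\psi_L)$, and since $(-\Delta_L)^{-1/2}$ commutes with translations while $|e^{i\theta}\psi^y|^2=(|\psi|^2)^y$, one has $\sigma_{e^{i\theta}\psi^y}=(\sigma_\psi)^y$; hence $\MinLf=\{(\sigma_{\psi_L})^y : y\in\TL\}=\Omega_L(\varphi_L)$ with $\varphi_L:=\sigma_{\psi_L}$. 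That $\varphi_L\in C^\infty(\TL)$ follows because $\psi_L\in C^\infty(\TL)$ and $(-\Delta_L)^{-1/2}$ maps $C^\infty(\TL)$ into $C^\infty(\TL)$ (rapid Fourier decay), while $y\mapsto\varphi_L^y$ is injective since $\psi_L>0$ together with $\psi_L^y\neq\psi_L$ for $y\neq0$ forces $|\psi_L^y|^2\neq|\psi_L|^2$ and hence $\sigma_{\psi_L^y}\neq\sigma_{\psi_L}$. The relations \eqref{eq:psilphil} are then immediate: $\varphi_L=\sigma_{\psi_L}$ by definition, and since $(\psi_L,\varphi_L)$ is a joint minimizer, $\psi_L$ minimizes $\expval{h_{\varphi_L}}{\cdot}$, i.e.\ $\psi_L=\psi_{\varphi_L}$ is the unique positive ground state of $h_{\varphi_L}$.

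For the coercivity bounds I would start from the exact identity $\FL(\varphi)-\eL=\bigl(\EL(\psi_\varphi)-\eL\bigr)+\|\varphi-\sigma_{\psi_\varphi}\|_2^2$. Since $(-\Delta_L)^{-1/2}$ annihilates constants, $\sigma_{\psi_\varphi}$ and every $\varphi_L^y$ are mean-zero, so the $k=0$ Fourier component of $\varphi-\sigma_{\psi_\varphi}$ is exactly the mean of $\varphi$; this extracts the term $|L^{-3/2}\int_{\TL}\varphi|^2$ and reduces matters to mean-zero fields. The next step is to turn \eqref{globalquadbound} into a weighted field estimate: writing $M:=\unit-(\unit+\kappa'(-\Delta_L)^{1/2})^{-1}$ one has $(-\Delta_L)^{-1/2}M(-\Delta_L)^{-1/2}\le\kappa'(-\Delta_L)^{-1/2}$, so that, with the Hardy--Littlewood--Sobolev bound $\langle h,(-\Delta_L)^{-1/2}h\rangle\lesssim\|h\|_{3/2}^2$ and $\||\psi|^2-|\psi'|^2\|_{3/2}\lesssim(\|\psi\|_3+\|\psi'\|_3)\,\|\psi-\psi'\|_{H^1}$ (all constants $L$-uniform by scaling to the unit torus), $\expval{M}{\sigma_\psi-\sigma_{\psi'}}\lesssim\kappa'(\|\psi\|_3+\|\psi'\|_3)^2\,\|\psi-\psi'\|_{H^1}^2$; choosing $\psi'\in\MinLe$ optimal and controlling $\|\psi\|_3^2\lesssim\|\psi\|_{H^1}\lesssim 1+(\EL(\psi)-\eL)^{1/2}$ (from $\EL(\psi)=\|\nabla\psi\|_2^2-\|\sigma_\psi\|_2^2$ and HLS) gives $\min_y\expval{M}{\sigma_\psi-\varphi_L^y}\lesssim\EL(\psi)-\eL$ with $\kappa'$ small and $L$-independent, the regime where $\EL(\psi)-\eL$ is not small being handled by the crude bound $\expval{M}{\cdot}\le\|\cdot\|_2^2$. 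Inserting this into the identity, together with $\|\varphi-\sigma_{\psi_\varphi}\|_2^2\ge|L^{-3/2}\int\varphi|^2+\expval{M}{\varphi-\sigma_{\psi_\varphi}}$ and a Cauchy--Schwarz step for the seminorm $\expval{M}{\cdot}^{1/2}$ (with $y$ the above minimizer for $\psi=\psi_\varphi$), assembles a bound of the form \eqref{eq:Fglobalquadbound}. Finally \eqref{eq:Fglobalquadbound2} follows by spectral calculus: on mean-zero fields $-\Delta_L\ge(2\pi/L)^2$, hence the operator in \eqref{eq:Fglobalquadbound} is $\ge\tau_L\,\unit$ there, while the $k=0$ contribution is dominated by $|L^{-3/2}\int\varphi|^2\ge\tau_L$ times the squared mean.

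I expect the structural parts --- the two identities, the description of $\MinLf$, and \eqref{eq:psilphil} --- to be routine once Theorem~\ref{uniquenessANDcoercivity} is available, and the real work to lie in the last paragraph: converting the $H^1$-coercivity of $\EL$ into the weighted $L^2$-coercivity of $\FL$ with the stated operator and an $L$-independent $\kappa'$. The delicate points are keeping the Sobolev and Hardy--Littlewood--Sobolev constants uniform in $L$, carrying out the zero-mode and low-versus-high-frequency bookkeeping so that the coefficient of $\|\varphi-\sigma_{\psi_\varphi}\|_2^2$ is not wasted in the triangle/Cauchy--Schwarz step, and treating separately the regime of $\varphi$ far from $\MinLf$ (using $\FL(\varphi)\sim\|\varphi\|_2^2$ for large fields together with compactness of $\TL$). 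An alternative, purely local route to \eqref{eq:Fglobalquadbound} near $\MinLf$ is a second-order expansion of $\FL$ around the nearest $\varphi_L^y$: the linear term vanishes (as $\varphi_L^y$ is a critical point of $\FL$, cf.\ \eqref{eq:psilphil}) and the Hessian equals $\unit-4\,T^*RT$ with $Tw:=((-\Delta_L)^{-1/2}w)\,\psi_L$ and $R:=Q_{\psi_L}(h_{\varphi_L}-\lambda_L)^{-1}Q_{\psi_L}\ge0$ the reduced resolvent, $\lambda_L:=\infspec(h_{\varphi_L})$; bounding this operator below by $M$ reproduces the stated form, with the far-field regime patched in as above.
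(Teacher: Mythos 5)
Your treatment of the structural assertions follows the same route as the paper: you use the identity $\GL(\psi,\varphi)=\EL(\psi)+\|\sigma_\psi-\varphi\|_2^2$, deduce $\MinLf=\Omega_L(\varphi_L)$ with $\varphi_L=\sigma_{\psi_L}$ from Theorem~\ref{uniquenessANDcoercivity}, and read off \eqref{eq:psilphil}. You also essentially rederive the paper's key input, Lemma~\ref{lem:HardyLem}, i.e.\ an $L$-uniform bound $\expval{(-\Delta_L)^{-1/2}}{\rho_\psi-\rho_{\phi}}\lesssim\||\psi|-|\phi|\|_{H^1(\TL)}^2$; note that splitting the H\"older exponents as $\||\psi|^2-|\phi|^2\|_{3/2}\le\||\psi|-|\phi|\|_6\,\||\psi|+|\phi|\|_2\lesssim\||\psi|-|\phi|\|_{H^1}$ (the $L^2$ factor is bounded by normalization) removes the extraneous $\|\psi\|_3$ from your estimate, and with it the need to treat separately the regime where $\EL(\psi)-\eL$ is large.

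The genuine gap is in the final assembly of \eqref{eq:Fglobalquadbound}. You throw away part of $\|\varphi-\sigma_{\psi_\varphi}\|_2^2$, keeping only $|L^{-3/2}\int\varphi|^2+\expval{M}{\varphi-\sigma_{\psi_\varphi}}$ with $M=\unit-F^{-1}$, $F=\unit+\kappa'(-\Delta_L)^{1/2}$, and then try to reconstitute $\expval{M}{\varphi-\varphi_L^y}$ by a triangle or parametrized Cauchy--Schwarz inequality for the seminorm $\expval{M}{\cdot}^{1/2}$. This necessarily costs a multiplicative constant: one has $\expval{M}{a+b}\le(1+\delta)\expval{M}{a}+(1+\delta^{-1})\expval{M}{b}$, and while shrinking $\kappa'$ inflates the prefactor of $\expval{M}{\sigma_{\psi_\varphi}-\varphi_L^y}$ coming from the $\EL$-coercivity, nothing compensates the $(1+\delta)$ in front of $\expval{M}{\varphi-\sigma_{\psi_\varphi}}$. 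The best this yields is $\FL(\varphi)-\eL\ge c\,\expval{M}{\varphi-\varphi_L^y}+|L^{-3/2}\int\varphi|^2$ with some $c<1$, and since the symbol of $cM$ tends to $c<1$ at high frequency while that of $\unit-(\unit+\kappa''(-\Delta_L)^{1/2})^{-1}$ tends to $1$, the factor $c$ cannot be absorbed into a smaller $\kappa''$: you do not obtain \eqref{eq:Fglobalquadbound} as stated. The paper avoids this loss by not discarding the complementary $F^{-1}$ part of $\|\varphi-\sigma_\psi\|_2^2$ and completing the square exactly. Since $\kappa'\expval{(-\Delta_L)^{-1/2}}{\rho_\psi-\rho_{\psi_L^y}}$ is \emph{identically} $\expval{F-\unit}{\sigma_\psi-\varphi_L^y}$, Theorem~\ref{uniquenessANDcoercivity} together with Lemma~\ref{lem:HardyLem} gives $\EL(\psi)-\eL\ge\expval{F-\unit}{\sigma_\psi-\varphi_L^y}$, and then
\begin{align*}
\expval{F-\unit}{\sigma_\psi-\varphi_L^y}+\|\varphi-\sigma_\psi\|_2^2
=\left\|F^{1/2}(\sigma_\psi-\varphi_L^y)+F^{-1/2}(\varphi_L^y-\varphi)\right\|_2^2+\expval{\unit-F^{-1}}{\varphi-\varphi_L^y},
\end{align*}
so dropping the nonnegative square yields \eqref{eq:Fglobalquadbound} with no slack. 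Replace your triangle-inequality step with this identity. (Your alternative via a local second-order expansion of $\FL$ gives only a neighbourhood bound and does not produce the global estimate \eqref{eq:Fglobalquadbound}.)
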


In the case of $\Rtre$, similar results are known to hold. In particular, the analogue of \eqref{bigLregime} was shown in \cite{lieb1977existence} and the analogue of \eqref{globalquadbound} follows from the results in \cite{lenzmann2009uniqueness}. In the case of a bounded domain with Dirichlet boundary conditions, an equivalent formulation of Theorem \ref{uniquenessANDcoercivity} was taken as working assumption in \cite{frank2019quantum}. In the case of a ball in $\Rtre$ with Dirichlet boundary conditions, the analogue of Theorem \ref{uniquenessANDcoercivity} was proved in \cite{feliciangeli2020uniqueness}. In both the case of $\Rtre$ and of balls, rotational symmetry plays a key role in the proof of these results. Rotational symmetry is  not present in our setting, hence a different approach is required. Our method of proof of Theorem \ref{uniquenessANDcoercivity} relies on a comparison of the models on $\TL$ and $\Rtre$, for large $L$. As a consequence, our analysis does not easily yield quantitative estimates on $L_1$. 

To state our main result, which also holds in the case $L>L_1$, we need to introduce the Hessian of the functional $\FL$ at its unique (up to translations) minimizer $\varphi_L$,
\begin{align}
\lim_{\varepsilon\to 0} \frac 1 {\varepsilon^2} \left(\FL(\varphi_L+\varepsilon \phi)-\eL\right)=:\expval{\HF}{\phi} \quad \forall \phi\in L^2_{\R}(\TL).
\end{align}
An explicit computation gives (see Proposition \ref{prop:FHessian})
\begin{align}
\label{eq:Hessianexpr} 
\HF=\unit-4(-\Delta_L)^{-1/2} \psi_L &\frac {Q_{\psi_L}} {h_{\varphi_L}-\infspec h_{\varphi_L}}\psi_L (-\Delta_L)^{-1/2},
\end{align}
where $h_{\varphi_L}$ is defined in \eqref{eq:hVphi}, $\psi_L$ is interpreted as a multiplication operator and $Q_{\psi_L}:=\unit-\ket{\psi_L}\bra{\psi_L}$. Clearly, by minimality of $\varphi_L$, $\HF\geq 0$, and it is also easy to see that $\HF\leq1$. We shall show that $\HF$ has a three-dimensional kernel,  given by $\spn\{\partial_j \varphi_L\}_{j=1}^3$, 
corresponding to the invariance under translations of the functional. Note that we could define the Hessian of $\FL$ at any other minimizer $\varphi_L^y$, obtaining a unitarily equivalent operator $H^{\FL}_{\varphi_L^y}$.

\subsection{Main Result} Recall the definition \eqref{eq:pekaren} for the Pekar energy $\eL$ as well as \eqref{eq:Hessianexpr} for the Hessian of $\FL$ at its minimizers, for $L>L_1$. Our main result is as follows.
\begin{theo}
\label{theo:GSEexpansion}
For any $L>L_1$, as $\alpha \to \infty$ 
	\begin{align}
	\label{eq:infspecHrough}
	\infspec \HL=\eL- \frac 1 {2\alpha^2} \Tr\left(\unit-\sqrt{H_{\varphi_L}^{\FL}}\right)+o(\alpha^{-2}).
	\end{align}
	More precisely, the bounds
	\begin{align}
	\label{eq:infspecHsharp}
	-C_L\alpha^{-1/7}\leq\alpha^2\infspec \HL -\alpha^2\eL+\frac 1 2  \Tr \left(\unit-\sqrt{\HF}\right)\leq C_L\alpha^{-2/11}
	\end{align}
	hold for some $C_L>0$ and $\alpha$ sufficiently large. 
\end{theo}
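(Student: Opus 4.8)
\emph{Overall strategy.} The plan is to prove the two inequalities in \eqref{eq:infspecHsharp} separately, following the scheme of \cite{frank2019quantum} and supplementing it with a treatment of the translational zero modes of $\HF$ via the diffeomorphism announced in the Introduction. The first step is to tame the ultraviolet singularity (recall $v_L\notin L^2(\TL)$): as in \cite{frank2019quantum} I would split the form factor $\elecphononcoupl=\elecphononcouplCUT+\elecphononcouplREST$ into a low-momentum part ($|k|\le\Lambda$) and a remainder, control the latter by the Lieb--Yamazaki commutator method \cite{lieb1958ground}, and remove its leading singular contribution by a Gross transformation $e^{T}$ (cf.\ Remark~\ref{rem:Gross}). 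Choosing $\Lambda=\Lambda(\alpha)\to\infty$ suitably, this reduces the problem, up to errors that are eventually $o(\alpha^{-2})$, to the transformed Hamiltonian $e^{-T}\HL\,e^{T}$, which has square-integrable coupling and is amenable to coherent-state and Bogoliubov techniques.

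\emph{Upper bound.} For the upper bound on $\infspec\HL$ I would symmetrize over the translation group $\TL$ the reference state $\psi_L\otimes\Xi$, where $\Xi$ is the Fock vector obtained by Weyl-shifting the vacuum by $\varphi_L$ and then applying, on the subspace $(\unit-\GradProj)L^2_{\R}(\TL)$ orthogonal to $\spn\{\partial_j\varphi_L\}_{j=1}^{3}$, the Bogoliubov transformation diagonalizing the quadratic fluctuation Hamiltonian there; this produces the Bogoliubov ground-state energy $-\tfrac{1}{2\alpha^{2}}\Tr \left((\unit-\sqrt{\HF})(\unit-\GradProj)\right)$. The symmetrization amounts to placing on the flat modes $\spn\{\partial_j\varphi_L\}_{j=1}^{3}$ the image, under the Gross-type diffeomorphism, of the state uniform along $\Omega_L(\varphi_L)$, and it supplies exactly the missing part of the trace, since $\unit-\sqrt{\HF}=\unit$ on $\ker\HF$; hence the total correction equals $-\tfrac{1}{2\alpha^{2}}\Tr \left(\unit-\sqrt{\HF}\right)$. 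Using that $\psi_L$ is the ground state of $h_{\varphi_L}$ and that $\varphi_L=\sigma_{\psi_L}$ (Corollary~\ref{cor:uniquenessANDcoercivity}) the linear-in-fluctuation terms vanish, the leading energy is $\eL$, and tracking the Gross-transformation remainders gives the right-hand inequality in \eqref{eq:infspecHsharp}.

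\emph{Lower bound.} For the lower bound I would first use the a priori estimate $\infspec\HL\le\eL+O(\alpha^{-2})$ just obtained, together with the coercivity bounds \eqref{globalquadbound} and \eqref{eq:Fglobalquadbound2} of Theorem~\ref{uniquenessANDcoercivity} and Corollary~\ref{cor:uniquenessANDcoercivity}, to show that any state with energy $\le\eL+O(\alpha^{-2})$ has controlled phonon number and generates a classical field close in $L^{2}$ to some translate $\varphi_L^{y}$, with the electron density correspondingly close to $|\psi_L^{y}|^{2}$; after an $y$-dependent translation one localizes near $\varphi_L$. Next I would complete the square, writing $e^{-T}\HL\,e^{T}\ge\eL+\mathbb{K}_L-\mathcal{R}_L$, where $\mathbb{K}_L\ge0$ is the quadratic fluctuation Hamiltonian which, after projecting the electron onto $\psi_L$ with $\GSProj$ and controlling the $\GSOrthProj$-sector by the spectral gap of $h_{\varphi_L}$, reduces to the quasi-free Hamiltonian with Hessian \eqref{eq:Hessianexpr}, while $\mathcal{R}_L$ gathers the cubic and higher terms together with the commutator remainders, all $o(\alpha^{-2})$ on the relevant states. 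The Bogoliubov lower bound $\infspec\mathbb{K}_L\ge-\tfrac{1}{2\alpha^{2}}\Tr \left(\unit-\sqrt{\HF}\right)$ then closes the argument: on $(\unit-\GradProj)L^{2}_{\R}(\TL)$ it is the standard quasi-free computation with $\HF>0$; on the flat directions $\GradProj L^{2}_{\R}(\TL)$ the part of $\mathbb{K}_L$ governed by $\HF$ vanishes, so $\mathbb{K}_L$ is bounded below there by the constant $-\tfrac{3}{2\alpha^{2}}$ dictated by the canonical commutation relations, reproducing the $\ker\HF$-contribution to the trace. This is precisely where translation invariance enters, and where the diffeomorphism is needed to guarantee that the flat directions split off cleanly and do not contaminate the Bogoliubov diagonalization of the massive ones.

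\emph{Main obstacle.} The crux, absent from \cite{frank2019quantum}, is the zero modes: both the localization step and the completing-the-square step must be carried out in curvilinear coordinates adapted to the manifold $\Omega_L(\varphi_L)$, so that $\spn\{\partial_j\varphi_L\}_{j=1}^{3}$ is genuinely decoupled. Constructing this diffeomorphism, controlling its Jacobian and the geometry of $\Omega_L(\varphi_L)$, and checking that the change of variables costs only $o(\alpha^{-2})$ on the relevant states is the technical heart of the proof. The remaining work --- balancing the cutoff $\Lambda$, the field-localization radius, the phonon-number truncation and the width of the trial state along $\Omega_L(\varphi_L)$ against one another --- is essentially bookkeeping, and it is what produces the explicit, certainly non-optimal, exponents $\alpha^{-1/7}$ and $\alpha^{-2/11}$ in \eqref{eq:infspecHsharp}.
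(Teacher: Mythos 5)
Your proposal correctly identifies the main pillars of the argument: the triple Lieb--Yamazaki bound and the Gross transformation to control the ultraviolet singularity, the expansion of $\FL$ to second order yielding $\HF$, a Bogoliubov/harmonic-oscillator calculation producing $\Tr(\unit-\sqrt{\HF})$, and the necessity of a diffeomorphism flattening $\Omega_L(\varphi_L)$ to decouple the translational zero modes. The routes nevertheless diverge in execution. The paper works throughout in the $Q$-space representation; the upper-bound trial state is $\Psi(x,\varphi)=G(\varphi)\eta(\varphi)\psi_\varphi(x)$ with the electron adiabatically following the field via the exact ground state $\psi_\varphi$ of $h_\varphi$ (freezing the electron at $\psi_L$, as your $\psi_L\otimes\Xi$ suggests, would replace the effective Hessian $\HF$ by $\unit$ and lose the entire correction, unless the Bogoliubov transformation you have in mind also mixes in electron excitations), and no ultraviolet cutoff is needed for the upper bound since this trial state has finite energy. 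For the lower bound the paper does not go through an a~priori localization of low-energy states but reduces, after the cutoff, to a Schr\"odinger operator $\mathbb{K}$ on $L^2(\R^N)$ with $N\sim(L\Lambda/2\pi)^3$, applies an IMS localization in the classical phonon variable $\lambda$, and changes coordinates on the inner region.

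The substantive gap is quantitative and lies in the lower bound. You invoke only the $L^2$-coercivity \eqref{eq:Fglobalquadbound2}, whose constant $\tau_L<1$ is $\Lambda$-independent. On the outer region the resulting harmonic-oscillator bound carries an error of order $(1-\sqrt{\tau_L})N/(2\alpha^2)\sim L^3\Lambda^3/\alpha^2$, which must be dominated by the coercivity margin $\sim\tau_L\varepsilon^2$; this forces $\varepsilon\gtrsim\Lambda^{3/2}/\alpha$, and combined with $\Lambda\gg\alpha^{4/5}$ (needed so that the $\Lambda^{-5/2}$ cutoff cost is $o(\alpha^{-2})$) gives $\varepsilon\gg\alpha^{1/5}$, incompatible with $\varepsilon\to0$. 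The paper avoids this by using the stronger coercivity \eqref{eq:Fglobalquadbound}, whose multiplier $\unit-(\unit+\kappa'(-\Delta_L)^{1/2})^{-1}$ tends to $\unit$ at large momentum, so that $\Tr\big[\Pi\big(\unit-\sqrt{\,\cdot\,}\big)\big]$ scales like $\Lambda^2$ rather than $\Lambda^3$, and by carrying out the IMS cutoff in the $\alpha$-dependent weighted norms $W_T$ of \eqref{eq:weightednorm} with $1\ll T\ll\Lambda^{2/3}$. Neither the sharper form of the coercivity nor the $W_T$-norms appear in your sketch, and without them the balance $\alpha^{-4}\varepsilon^{-2}$ (IMS error) versus $\alpha^{-2}\Lambda^2$ (outer region) versus $\Lambda^{-5/2}$ (cutoff) cannot be arranged to be $o(\alpha^{-2})$.
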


The trace appearing in \eqref{eq:infspecHrough} and \eqref{eq:infspecHsharp} is over $L^2(\TL)$. Note that, since $\HF\leq1$, the coefficient of $\alpha^{-2}$ in \eqref{eq:infspecHrough}  is negative.

In the case of bounded domains with Dirichlet boundary conditions, an analogue of Theorem \ref{theo:GSEexpansion} was proven in \cite{frank2019quantum} (where logarithmic corrections appear in the bounds that correspond to \eqref{eq:infspecHsharp} as a consequence of technical complications due to the boundary). Showing the validity of an analogous result on $\Rtre$ still remains an open problem, however. Indeed, the constant $C_L$ appearing in the lower bound in \eqref{eq:infspecHsharp} diverges as $L\to \infty$. This is mainly due to the lack of compactness of the resolvent of the full-space Laplacian (which leads, for instance, to a zero lower bound  in \eqref{eq:Fglobalquadbound2} and, in particular, a divergence of the effective number of modes in \eqref{eq:Nlambdaasymp}). On the other hand, our method of proof used in Section \ref{Sec:UpperBound} to show the upper bound in \eqref{eq:infspecHsharp} does apply, with little modifications, to the full space case. In any case, both the upper and lower bound are expected to hold in the case of $\Rtre$ as well \cite{allcock1963strong,gross1976strong,frank2019quantum,seiringer2020polaron}.

Compared to the results obtained in \cite{frank2019quantum}, Theorem \ref{theo:GSEexpansion} deals with the additional complication of the invariance under translations of the problem, which implies that the set of minimizers of $\FL$ is a three-dimensional manifold. This substantially complicates the proof of the lower bound in \eqref{eq:infspecHsharp}, as we shall see in Section \ref{Sec:LowerBoundII}. In particular, we need to perform a precise local study around the manifold of minimizers $\Omega_L(\varphi_L)$, which we carry out by introducing a suitable  diffeomorphism (inspired by \cite{gross1976strong}).

\begin{rem}[Small $L$ Regime]
\label{rem:smallL}
As we show in Lemma \ref{existence}, there exists $L_0>0$ such that the analogue of Theorem \ref{uniquenessANDcoercivity} for $L<L_0$  can be proven with a few-line-argument. In this case, $\EL$ is simply non-negative and is therefore minimized by the constant function. In particular, $e_L = 0$ and $\varphi_L=0$. 

 Also an analogue of Theorem \ref{theo:GSEexpansion} can be proven in the regime $L< L_0$, i.e., it is possible to show that for $L<L_0$ there exists $C_L>0$ such that
\begin{align}
-C_L\alpha^{-1/7}\leq\alpha^2\infspec \HL+\frac 1 {2} \sum_{0\neq k\in \frac {2\pi} L \mathbb{Z}^3} \left(1-\sqrt{1-\frac 4 {L^3|k|^4}}\right)\leq C_L\alpha^{-2/11}
\end{align} 
for large $\alpha$. 
In this case (unlike the regime $L>L_1$ where the set of minimizers $\MinLf$ is a three-dimensional manifold)  $\MinLf$ only consists of  the  $0$ function, and this allows to follow essentially the same arguments of \cite{frank2019quantum} (with only small modifications, which are also needed in the regime $L>L_1$ and hence are discussed in this paper). We shall therefore not carry out the details of this analysis here. 

Whether uniqueness of Pekar minimizers up to symmetries holds for {\em all} $L>0$ (i.e., also in the regime $L_0\leq L \leq L_1$) remains an open problem. 
\end{rem}  

Throughout the paper, we use the word \emph{universal} to describe any constant (which is generally denoted by $C$) or property that is independent of all the parameters involved and in particular independent of $L$, for $L\geq L_0$ (for some fixed $L_0>0$). Also, we write $a\lesssim b$ whenever $a\leq Cb$ for some universal and positive $C$. We write $C_L$ whenever a constant depends on $L$ but is otherwise universal with respect to all other parameters. Finally, we write $a\lesssim_L b$ whenever $a\leq C_L b$ for some positive $C_L$. 

\subsection{Proof Strategy and Structure of the Paper} In Section \ref{sec:PekarFun} we study the properties of the Pekar functionals $\EL$ and $\FL$ defined in \eqref{eq:EFfun}.  
We start by recalling the relevant properties of the Pekar functionals on $\R^3$ in Section~\ref{sec:FullSpacePekar}. 
 In the long Section \ref{subsec:EL} 
 we give the proof of Theorem \ref{uniquenessANDcoercivity}. Our method of proof 
 relies on showing the convergence, as $L\to \infty$, of $\EL$ to its full-space counterpart $\mathcal{E}$. 
 Proposition \ref{prop:En&MinConv} in Section \ref{subsubsec:prelres} formalizes the precise meaning of this convergence. 
 Then, in Section \ref{subsubsec:HessianEL}, we prove a stronger notion of convergence, 
 namely  that the Hessian of $\EL$ at any minimizer  converges to the Hessian  of $\mathcal{E}$ at a corresponding minimizer  (in the sense of Proposition \ref{coercivity}); in particular, it  is strictly positive above its trivial zero modes for large $L$. By combining  the results obtained in Sections \ref{subsubsec:prelres} and  \ref{subsubsec:HessianEL}, we conclude the proof of Theorem \ref{uniquenessANDcoercivity} in Section \ref{subsubsec:MainResEL}.  Section \ref{subsec:FL} is dedicated to the investigation of the properties of $\FL$. First, in Section \ref{sec:LBforFL}, we show the validity of Corollary \ref{cor:uniquenessANDcoercivity}. 
 Subsequently we compute the Hessian of $\FL$ (in Proposition \ref{prop:FHessian} in Section \ref{sec:HFL}) and characterize its kernel (in Proposition \ref{prop:Hessianstrictpos} in Section \ref{sec:localstudyFL}). Finally, in Section \ref{sec:localstudyFL} we introduce a family of weighted norms (see \eqref{eq:weightednorm}) which is of key importance in Section \ref{sec:ProofMainResult} and we show, in Lemma \ref{lemma:WTneighbourhood}, that the surface of minimizers of $\FL$ locally admits a unique projection w.r.t. any of these norms. 

In Section \ref{sec:ProofMainResult} we prove Theorem \ref{theo:GSEexpansion}. First of all, in Section \ref{Sec:UpperBound} we construct a trial state and use it to obtain an upper bound to the ground state energy of $\HL$.
This is carried out using the $Q$-space representation of the bosonic Fock space $\mathcal{F}(L^2(\TL))$ (see \cite{reed1975ii}) and follows ideas contained in \cite{frank2019quantum}, with only small modifications. The remaining sections are devoted to the lower bound. In Section \ref{Sec:LowerBoundI}, we show that it is possible to apply an ultraviolet cutoff on momenta of size larger than some  $\Lambda$ to $\HL$ at an expense of order $\Lambda^{-5/2}$ (see Proposition \ref{prop:cutoffH}). This is proven following closely the approach used in \cite{frank2019quantum}: as a first step we apply a triple Lieb--Yamazaki bound \cite{lieb1958ground} (in Section \ref{Sec:TripleLY}) and then make use of a Gross transformation \cite{gross1962particle,nelson1964interaction} (in Section \ref{Sec:Gross}).  In Section \ref{Sec:LowerBoundII} we show the validity of the lower bound in \eqref{eq:infspecHsharp}, thus completing the proof of Theorem \ref{theo:GSEexpansion}. With Proposition \ref{prop:cutoffH} at hand, we have good estimates on the cost of applying an ultraviolet cutoff to $\HL$ and this allows to reduce the problem to a finite dimensional one (with dimension $N$ diverging as $\alpha\to \infty$). We adopt a similar strategy to \cite{frank2019quantum}, using IMS localization to split the space into an inner region close to the surface of minimizers of $\FL$ and an outer region far away from it. The goal is to extract the relevant quantum correction to the ground state energy  from the inner region and to show, using the bound \eqref{eq:Fglobalquadbound2}, that the outer region contributes only as an error term. Compared to  \cite{frank2019quantum}, the translation invariance substantially complicates the analysis. In contrast to the case considered in \cite{frank2019quantum}, the set of minimizers of $\FL$ is a three-dimensional manifold and does not only consist of a single function. Hence, in order to treat the inner region and decouple the zero-modes of the Hessian of $\FL$, we have to introduce a suitable diffeomorphism (see Definition~\ref{def:GroosCoord} in Section~\ref{Sec:IMSinner}) that 'flattens' the manifold of minimizers and the region close to it. It is  here where we make use Lemma \ref{lemma:WTneighbourhood}, which allows us to understand the local structure of the tubular neighborhood of the surface of minimizers of $\FL$. Another technical complication relates to the metric used to distinguish between the inner and outer region, as simply considering the $L^2$-norm is not sufficient for our purposes, and we need the weighted norms defined in \eqref{eq:weightednorm} (in particular we apply the IMS localization with respect to a metric which depends on $\alpha$).

\section{Properties of the Pekar Functionals}
\label{sec:PekarFun}

In this section we derive important properties of the functionals $\EL$ and $\FL$, introduced in Section \ref{sec:PekarFunctionalsTorus} and defined in \eqref{eq:EFfun}. In Section \ref{subsec:EL}, we show the validity of Theorem \ref{uniquenessANDcoercivity}, relying on the comparison of the models on $\TL$ and $\Rtre$ for large $L$. In Section \ref{subsec:FL}, we study the functional $\FL$. In particular, we prove Corollary \ref{cor:uniquenessANDcoercivity} and compute the Hessian of $\FL$ at its minimizers. 

Given a function $f\in L^2(\TL)$ and $k\in \frac{2\pi}{L}\mathbb{Z}^3$, we denote by $f_k$ the $k$-th Fourier coefficient of $f$. We also denote
\begin{align}
\hat{f}:=f-L^{-3}\int_{\TL} f.
\end{align}
We shall use the following definition of \emph{fractional Sobolev semi-norms} for functions $f\in L^2(\TL)$, $0\neq s\in \mathbb{R}$:
\begin{align}
\label{def:fracsobnorm}
\|f\|_{\mathring{H}^s(\TL)}^2=\expval{(-\Delta_L)^{s}}{f}=\sum_{0\neq k\in \frac {2\pi}{L} \mathbb{Z}^3} |k|^{2s} |f_k|^2.
\end{align} 
Before moving on with the discussion, we recall in the following subsection the definition and relevant properties of the full-space Pekar functional.

\subsection{The Full-Space Pekar Functional}
\label{sec:FullSpacePekar}

 Let $\psi\in H^1(\Rtre)$ be an $L^2(\Rtre)$-normalized function and $\varphi\in L^2_{\R}(\Rtre)$. Then 
\begin{align}
\mathcal{G}(\psi,\varphi):=\expval{h^{\Rtre}_\varphi}{\psi} + \|\varphi\|_2^2
\end{align}
where $h_{\varphi}^{\Rtre}$ is the Schr\"odinger operator
\begin{align}
h_{\varphi}^{\Rtre}:=-\Delta_{\R^{3}}+V_{\varphi}, \quad V_{\varphi}:=-2(-\Delta_{\Rtre})^{-1/2}\varphi.
\end{align}
Comparing with \eqref{eq:Gfun} and \eqref{eq:hVphi}, we note the analogy between the definitions and observe that we are slightly abusing notation by denoting both potentials with the same symbol (we do this for simplicity and since ambiguity does not arise). Analogously to \eqref{eq:EFfun}, we define 
\begin{align}
\label{eq:EinfF}
\Einf(\psi):=\inf_{\varphi} \mathcal{G}(\psi,\varphi), \quad \mathcal{F}(\varphi):=\inf_{\psi} \mathcal{G}(\psi,\varphi).
\end{align}
In analogy with  \eqref{eq:pekaren}, we denote 
\begin{align}
\label{eq:pekareninf}
\einf:=\inf_{\psi,\varphi}\mathcal{G}(\psi,\varphi)=\inf_{\psi}\mathcal{E}(\psi)=\inf_{\varphi}\mathcal{F}(\varphi).
\end{align}  
For our purposes, in the case of $\Rtre$, it is sufficient to focus our discussion on the functional $\Einf$, of which we now recall the main properties. As shown in \cite{lieb1977existence}, $\Einf$ admits a \emph{unique} positive and radially decreasing minimizer $\Emininf$ which is also smooth, the set of minimizers of $\Einf$ coincides with 
\begin{align}
\label{eq:SpaceSurfaceofMin}
\Theta(\Emininf):=\{e^{i\theta}\Emininf^y\,\,|\,\, \theta\in[0,2\pi), \,\, y\in \Rtre\},
\end{align} 
and $\Emininf$ satisfies the Euler--Lagrange equation
\begin{align}
\left(-\Delta_{\Rtre} +V_{\sigma_{\Emininf}} -\LagrMinf_{\Emininf}\right)\Emininf=0,
\end{align}
with
\begin{align}
\label{eq:fullspaceVmu}
\sigma_\Emininf:=(-\Delta_{\Rtre})^{-1/2}|\Emininf|^2, \quad V_{\sigma_\Emininf}= -2 (-\Delta_{\Rtre})^{-1} |\Emininf|^2, \quad \LagrMinf_{\Emininf}=T(\Emininf)-2W(\Emininf),
\end{align}
where $T$ and $W$ are defined in \eqref{eq:FullSpaceE} below. 
Furthermore, as was shown in \cite{lenzmann2009uniqueness}, the Hessian of $\Einf$ at its minimizers is strictly positive above the trivial zero modes resulting from the invariance under translations and changes of phase. This implies the validity of the following Theorem, which is not stated explicitly in \cite{lenzmann2009uniqueness} but can be obtained by standard arguments (see, e.g., \cite[Appendix A]{feliciangeli2020persistence} or \cite{frank2013symmetry}) as a consequence of the results therein contained. 

\begin{theorem}
	\label{h1coerc}
	There exists a constant $C>0$, such that, for any $L^2$-normalized $f \in H^1(\Rtre)$
	\begin{align}
	\Einf(f)-\einf\geq C\dist_{H^1}^2\left(\Theta (\Emininf),f\right).
	\end{align}
\end{theorem}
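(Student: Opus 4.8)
The plan is to prove the bound in three regimes according to the location of $f$: close to the orbit $\Theta(\Emininf)$; in a bounded region of $H^1$ but at $H^1$-distance at least some $\delta_0>0$ from $\Theta(\Emininf)$; and where the kinetic energy $\|\nabla f\|_2^2$ is large. One then obtains the claimed inequality with $C$ equal to the minimum of the three constants produced, using that in the middle regime the distance $\dist_{H^1}(\Theta(\Emininf),f)$ is controlled by the radius of the $H^1$-ball. The first regime is where the non-degeneracy of the Hessian enters; the other two are handled by soft arguments.

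For the large-kinetic-energy regime, recall that after minimising out $\varphi$ one has $\Einf(f)=\|\nabla f\|_2^2-\|(-\Delta_{\Rtre})^{-1/2}|f|^2\|_2^2$, and the subtracted Coulomb term is bounded by $C\|\nabla f\|_2$ for $L^2$-normalised $f$ by the Hardy--Littlewood--Sobolev and Gagliardo--Nirenberg inequalities. Hence there is $M>0$ so that $\Einf(f)-\einf\ge \tfrac12\|\nabla f\|_2^2$ whenever $\|\nabla f\|_2\ge M$, while $\dist_{H^1}^2(\Theta(\Emininf),f)\le \|f-\Emininf\|_{H^1}^2\le 2\|\nabla f\|_2^2+C$, which is $\le C'\|\nabla f\|_2^2$ in that range; this gives the inequality there.

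For the regime near the orbit, recall that $\Theta(\Emininf)$ is a smooth four-dimensional manifold with tangent space at $\Emininf$ spanned by the zero modes $i\Emininf,\partial_1\Emininf,\partial_2\Emininf,\partial_3\Emininf$. By the implicit function theorem one parametrises a small $H^1$-tubular neighbourhood of the orbit, writing any $f$ in it as $f=e^{i\theta}\Emininf^y+r$ with $r$ orthogonal to these four directions and $\|r\|_{H^1}$ comparable to $\dist_{H^1}(\Theta(\Emininf),f)$ (the constraint $\|f\|_2=1$ perturbs $r$ only at quadratic order and is harmless). Taylor expanding $\Einf$ about the minimiser gives $\Einf(f)-\einf=\tfrac12\langle r,\mathcal{L}\,r\rangle+o(\|r\|_{H^1}^2)$, where $\mathcal{L}$ is the Hessian of $\Einf$ at $\Emininf$; since $\Einf$ and the orbit are translation- and phase-invariant it suffices to carry this out at $\Emininf$. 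By \cite{lenzmann2009uniqueness}, $\mathcal{L}$ is strictly positive above its zero modes, and since $\mathcal{L}$ equals $-\Delta_{\Rtre}$ plus a relatively $(-\Delta_{\Rtre})$-compact perturbation its essential spectrum is bounded away from $0$; the positivity therefore upgrades to $\langle r,\mathcal{L}\,r\rangle\gtrsim\|r\|_{H^1}^2$ on the orthogonal complement of the zero modes. Absorbing the $o(\cdot)$ term for $\dist_{H^1}$ small enough yields the inequality on the neighbourhood. This is the type of argument carried out in \cite[Appendix A]{feliciangeli2020persistence} and \cite{frank2013symmetry}.

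Finally, for the intermediate regime we argue by contradiction: if the infimum of $\Einf(f)-\einf$ over $\{\|\nabla f\|_2\le M,\ \dist_{H^1}(\Theta(\Emininf),f)\ge\delta_0\}$ were $0$, there would be a sequence $(f_n)$ in this set with $\Einf(f_n)\to\einf$. By the concentration--compactness method --- dichotomy is excluded by the strict binding inequality of \cite{lieb1977existence} and vanishing by $\einf<0$ --- a subsequence of translates $f_n(\cdot-z_n)$ converges weakly in $H^1$ and strongly in $L^{12/5}$ to some $g$; strong convergence of the Coulomb term then forces $\|\nabla f_n\|_2\to\|\nabla g\|_2$, so the convergence is strong in $H^1$ and $g\in\Theta(\Emininf)$ by \cite{lieb1977existence}. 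Since $\dist_{H^1}(\Theta(\Emininf),\cdot)$ is translation-invariant, $\dist_{H^1}(\Theta(\Emininf),f_n)\to 0$, a contradiction. Hence this infimum equals some $\eta>0$, and since $\dist_{H^1}(\Theta(\Emininf),f)$ is bounded on the set in terms of $M$, the inequality holds there as well. The main obstacle is the first (local) regime: upgrading the bare non-degeneracy of the Hessian to genuine $H^1$-coercivity transverse to the orbit while simultaneously controlling the tubular parametrisation, the $L^2$-constraint and the cubic remainder --- though, the orbit being a single translation--phase family, the last point reduces by symmetry to the estimate at $\Emininf$, and the rest is soft.
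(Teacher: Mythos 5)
The paper does not actually give a proof of this theorem; it attributes the Hessian non-degeneracy to \cite{lenzmann2009uniqueness} and the upgrade to coercivity to ``standard arguments'' (citing \cite[Appendix A]{feliciangeli2020persistence} and \cite{frank2013symmetry}). Your three-regime strategy --- local near the orbit via the Hessian, an intermediate regime via concentration--compactness, and a far regime via raw kinetic-energy coercivity --- is exactly the type of argument those references carry out, and your outline is sound. For comparison: the paper's own proof of the analogous torus statement (Theorem \ref{uniquenessANDcoercivity}) dispenses with the explicit intermediate regime by a small trick: if $\EL(f)-\eL<\kappa^*\dist_{H^1}^2$ for a small enough universal $\kappa^*$, then the a priori coercivity $\EL(f)\gtrsim T_L(f)-C$ forces $f$ to be $H^1$-bounded with small excess energy, so by Proposition \ref{prop:En&MinConv} it already lies in the local tubular neighbourhood where the Hessian bound applies. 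Both routes work; yours is the more classical one on $\Rtre$.

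One step in your local regime is stated incorrectly and, as written, does not give what you need. You argue that since $\mathcal{L}$ is $-\Delta_{\Rtre}$ plus a relatively $(-\Delta_{\Rtre})$-compact perturbation, its essential spectrum is bounded away from $0$. But a relatively compact perturbation of $-\Delta_{\Rtre}$ leaves the essential spectrum $[0,\infty)$ unchanged, so that reasoning alone would put $0$ at the bottom edge of the essential spectrum --- precisely what must be excluded. The correct observation is that the Hessian is $-\Delta_{\Rtre}-\LagrMinf_{\Emininf}$ plus a decaying potential (and, in the real-part block, the finite-rank term $-4X_{\Emininf}$); since $\LagrMinf_{\Emininf}<0$ (because $\einf<0$), the essential spectrum begins at $-\LagrMinf_{\Emininf}>0$, and it is this strictly negative Lagrange multiplier, not the compactness of the perturbation, that separates the essential spectrum from $0$. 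With that fix, the part of the spectrum below $-\LagrMinf_{\Emininf}$ is discrete, $0$ sits there only on the zero modes, and Lenzmann's non-degeneracy gives a positive gap $\tau$ on their orthogonal complement; your interpolation against the a priori bound $\mathcal{L}\gtrsim -\Delta_{\Rtre}-C$ then yields $H^1$-coercivity. The remaining ingredients you list --- the tubular parametrisation, the $L^2$-normalisation contributing only at quadratic order, and the cubic remainder estimate --- are in order and match what the paper does on the torus in Proposition \ref{prop:Univlocbounds}.
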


Our strategy to prove Proposition \ref{prop:En&MinConv} relies on Theorem \ref{h1coerc} and in comparing $\TL$ with $\Rtre$ for large $L$.

\subsection{Study of $\EL$and Proof of Theorem \ref{uniquenessANDcoercivity}} \label{subsec:EL} 

To compare $\EL$ and $\Einf$, we prefer to write both of them in the following form, which can be obtained from \eqref{eq:EFfun} and \eqref{eq:EinfF}, respectively, by a simple completion of the square,  
\begin{align}
\label{eq:FullSpaceE}
\Einf(\psi)&=\int_{\Rtre} |\nabla \psi(x)|^2 dx - \int_{\Rtre}\int_{\Rtre} \rho_{\psi}(x)(-\Delta_{\Rtre})^{-1}(x,y)\rho_{\psi}(y)dxdy=: T(\psi)-W(\psi),\\
\label{eq:EFunChar}
\EL(\psi)&=\int_{\TL} |\nabla \psi(x)|^2 dx - \int_{\TL}\int_{\TL} \rho_{\psi}(x)(-\Delta_L)^{-1}(x,y)\rho_{\psi}(y)dxdy=: T_L(\psi)-W_L(\psi).
\end{align}
The next ingredient, needed for the comparison of $\EL$ and $\Einf$, is the following lemma.

\begin{lem}
	\label{kernelbounds}
	There exists a universal constant $C$ such that 
	\begin{align}
	\sup_{x,y\in \TL}\left| \Tker-(4\pi)^{-1} (\dist_{\TL}(x,y))^{-1}\right|\leq \frac C L.
	\end{align}
\end{lem}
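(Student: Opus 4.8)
The plan is to compare the torus Green's function with the free-space Green's function $(4\pi)^{-1}|x-y|^{-1}$ (which is the heat-kernel integral $\int_0^\infty (4\pi t)^{-3/2} e^{-|x-y|^2/(4t)}\,dt$) by exploiting the method of images / Poisson summation. More precisely, since $\TL = [-L/2,L/2]^3$ with periodic boundary conditions, one expects that, up to the zero-mode subtraction enforced by the normalization $\int_{\TL}\Tker\,dx=0$, the torus kernel is a periodization of the free kernel: informally
\begin{align}
\Tker \approx \sum_{n\in \mathbb Z^3} \frac{1}{4\pi\,|x-y+Ln|}-(\text{a constant of size }\sim L^{-1}),
\end{align}
and the $n\neq 0$ terms together with the constant are bounded by $C/L$ uniformly in $x,y$, while the $n=0$ term is exactly $(4\pi)^{-1}\dist_{\TL}(x,y)^{-1}$ once we recall that for $x,y$ in the fundamental cell the distance on the torus is realized by the nearest image.

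To make this rigorous I would work with heat kernels, which converge absolutely. Write $G_L^{(t)}(x,y)$ for the heat kernel of $-\Delta_L$ on $\TL$ and $G_\infty^{(t)}(x,y)=(4\pi t)^{-3/2}e^{-|x-y|^2/4t}$ for the one on $\Rtre$. The method of images gives $G_L^{(t)}(x,y)=\sum_{n\in\mathbb Z^3} G_\infty^{(t)}(x,y+Ln)$. One then has the representation
\begin{align}
\Tker = \int_0^\infty\!\Big(G_L^{(t)}(x,y)-L^{-3}\Big)\,dt,
\end{align}
because $L^{-3}$ is the spectral projection onto the constant (zero) mode of $e^{t\Delta_L}$, and this integral converges (the integrand decays exponentially as $t\to\infty$ by the spectral gap $(2\pi/L)^2$, and is integrable as $t\to 0$ after the subtraction since the short-time singularity is $\sim t^{-3/2}$ but in three dimensions $\int_0 t^{-3/2}\,dt$ diverges — so one must be slightly more careful and split off exactly the $n=0$ image term). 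Concretely, write
\begin{align}
\Tker-(4\pi)^{-1}\dist_{\TL}(x,y)^{-1}
=\int_0^\infty\! G_\infty^{(t)}\!\big(x,y+Ln_0(x,y)\big)\!-\!G_\infty^{(t)}(x,y)\,\mathbf 1\{n_0\neq0\}\,dt
+\!\sum_{n\neq n_0}\!\int_0^\infty\! G_\infty^{(t)}(x,y+Ln)\,dt-\!\int_0^\infty\! L^{-3}\,dt,
\end{align}
where $n_0=n_0(x,y)$ is chosen so that $y+Ln_0$ is the nearest image of $y$ to $x$, so that $\int_0^\infty G_\infty^{(t)}(x,y+Ln_0)\,dt=(4\pi)^{-1}|x-y+Ln_0|^{-1}=(4\pi)^{-1}\dist_{\TL}(x,y)^{-1}$. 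The last $\int_0^\infty L^{-3}\,dt$ is of course divergent as written, so the honest version is to regroup: $\int_0^\infty(\sum_{n} G_\infty^{(t)}(x,y+Ln)-L^{-3})\,dt$ with the nearest-image term removed, which converges. For $|n|\geq 1$ and $x,y\in[-L/2,L/2]^3$ one has $|x-y+Ln|\gtrsim L|n|$, so $\sum_{n\neq 0}(4\pi)^{-1}|x-y+Ln|^{-1}$ by itself diverges like a 3D lattice sum $\sum |n|^{-1}$; this is precisely why the $-L^{-3}$ subtraction (equivalently, the constraint $\int\Tker\,dx=0$) is essential, and the convergent combination is $\int_0^\infty(\sum_{n\neq n_0}G_\infty^{(t)}(x,y+Ln)-L^{-3}+L^{-3}\mathbf 1\{\cdot\})\,dt$. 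Bounding this combination: for small $t$, $\sum_{n\neq n_0}G_\infty^{(t)}(x,y+Ln)\lesssim e^{-cL^2/t}$ (all remaining images are at distance $\gtrsim L$), so $\int_0^1(\cdots)\,dt\lesssim e^{-cL^2}+L^{-3}\lesssim L^{-1}$; for large $t$ one uses Poisson summation to write $\sum_n G_\infty^{(t)}(x,y+Ln)=L^{-3}\sum_{k\in\frac{2\pi}{L}\mathbb Z^3}e^{-t|k|^2}e^{ik\cdot(x-y)}$, so $\sum_n G_\infty^{(t)}(x,y+Ln)-L^{-3}=L^{-3}\sum_{k\neq 0}e^{-t|k|^2}e^{ik\cdot(x-y)}$, whose absolute value is $\lesssim L^{-3}\sum_{k\neq 0}e^{-t|k|^2}$, and $\int_1^\infty L^{-3}\sum_{0\neq k\in\frac{2\pi}{L}\mathbb Z^3}e^{-t|k|^2}\,dt\lesssim L^{-3}\sum_{k\neq 0}|k|^{-2}e^{-|k|^2}\lesssim L^{-3}\cdot L = L^{-2}$ (or even better). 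Collecting the small-$t$ and large-$t$ contributions yields the bound $C/L$.

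The main obstacle is the bookkeeping around the logarithmically/algebraically divergent pieces: one must be careful to subtract exactly the nearest-image term and the zero-mode constant in a way that produces an absolutely convergent integral before estimating, and one should double-check that the nearest-image choice $n_0(x,y)$ and the resulting $\dist_{\TL}(x,y)$ match (including the measure-zero set where the nearest image is not unique, which is harmless). A cleaner packaging, which I would actually present, is: set $R_L(x,y):=\Tker-(4\pi)^{-1}\dist_{\TL}(x,y)^{-1}$, note $-\Delta_L R_L = \delta_y - L^{-3} - (-\Delta_L)\big[(4\pi)^{-1}\dist_{\TL}(\cdot,y)^{-1}\big]$, observe that in a neighborhood of $y$ the free kernel is harmonic away from $y$ and has the same $\delta_y$ singularity, so the distributional Laplacian of $R_L$ is a bounded function plus boundary/corner contributions supported away from the diagonal, hence $R_L$ is the convolution of an $L^\infty$-controlled source (of total mass $\lesssim L$) with $(-\Delta_L)^{-1}$, and a standard estimate gives $\|R_L\|_\infty\lesssim L^{-1}$ — but unwinding this still requires the heat-kernel estimates above, so I would present the heat-kernel argument as the core of the proof.
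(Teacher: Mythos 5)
The paper's proof is much shorter and more structural than your heat-kernel approach: it observes that $F_L(x):=(-\Delta_L)^{-1}(x,0)$ scales as $F_L(x)=L^{-1}F_1(x/L)$ while $F(x)=(4\pi|x|)^{-1}$ scales the same way, so the claim reduces to showing $\|F_1-F\|_{L^\infty([-1/2,1/2]^3)}\le C$, and this follows because $F_1-F$ solves a Poisson equation with smooth right-hand side on $(-1,1)^3$ (so is smooth there, hence bounded on the compact $[-1/2,1/2]^3$). The $1/L$ comes out of scaling for free, with no need to ever estimate image sums. Your method-of-images/Poisson-summation route is a legitimate alternative — indeed it is morally an explicit, quantitative version of the same scale invariance — but it is substantially more delicate, and as written has a genuine gap.

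The gap is in your large-$t$ estimate. After subtracting the nearest image, the integrand for $t>1$ is
\begin{equation*}
\sum_{n\neq n_0}G_\infty^{(t)}(x,y+Ln)-L^{-3}
=\Bigl[\textstyle\sum_n G_\infty^{(t)}(x,y+Ln)-L^{-3}\Bigr]-G_\infty^{(t)}(x,y+Ln_0),
\end{equation*}
and you estimate only the bracketed Fourier-side piece, dropping $-G_\infty^{(t)}(x,y+Ln_0)$. That dropped term satisfies $\int_1^\infty(4\pi t)^{-3/2}e^{-\dist^2/4t}\,dt=O(1)$ when $\dist_{\TL}(x,y)\lesssim 1$, so it is not negligible. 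Moreover, the bound you state for the piece you do keep is wrong: $\sum_{0\neq k\in\frac{2\pi}{L}\mathbb{Z}^3}|k|^{-2}e^{-|k|^2}$ is of order $L^3$, not $L$ — there are $\sim L^3$ modes with $|k|\lesssim 1$, each contributing $\gtrsim 1$ to the sum — so $L^{-3}\sum_{k\neq 0}|k|^{-2}e^{-|k|^2}=O(1)$, not $O(L^{-2})$. Thus after $t=1$ you have two $O(1)$ contributions that must cancel down to $O(L^{-1})$, and your decomposition does not exhibit the cancellation. The fix is to split the time integral at $t\sim L^2$ (not $t=1$): for $t<L^2$ estimate the image sum and the $L^{-3}$ piece directly (using $|x-y+Ln_0+Lm|\gtrsim L|m|$ for $m\neq 0$, which requires reducing to $y=0$ by translation invariance so that $n_0=0$ and $x$ lies in the Voronoi cell), and for $t>L^2$ use the spectral gap of $-\Delta_L$ to bound $|G_L^{(t)}-L^{-3}|\lesssim L^{-3}e^{-(2\pi/L)^2(t-L^2)}$ together with $G_\infty^{(t)}(x,y+Ln_0)\le(4\pi t)^{-3/2}$; both then integrate to $O(L^{-1})$.
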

\begin{proof}
	We define $F_L(x):=-\Delta^{-1}_L(x,0)$ and $F(x)=(4\pi)^{-1}|x|^{-1}$ and observe that our statement is equivalent to showing that
	\begin{align}
	\label{star}
	\|F_L-F\|_{L^{\infty}([-L/2,L/2]^3)}\leq \frac C L.
	\end{align}
	By definition, we have $F_L(x)=\frac 1 L F_1(\frac x L)$. Hence, \eqref{star} is equivalent to 
	\begin{align}
	\|F_1-F\|_{L^{\infty}([-1/2,1/2]^3)}\leq C.
	\end{align} 
	Again by definition, $F_1-F$ is harmonic (distributionally and hence also classically) on $\left(\Rtre \setminus \{\mathbb{Z}^3\}\right)\cup \{0\}$ (when $F_1$, and only $F_1$, is extended to the whole space by periodicity). Thus we  conclude that $F_1-F$ is in $C^{\infty}\left( (-1,1)^3\right)$ and, in particular, bounded on $[-1/2,1/2]^3$. 
\end{proof}

The analogy between \eqref{eq:EFunChar} and \eqref{eq:FullSpaceE}, combined with Lemma \ref{kernelbounds}, clearly suggests that $\EL$ formally converges to $\Einf$ as $L\to \infty$. Hence, we set out to show that this convergence can be made rigorous and allows to infer properties of $\EL$ by comparing it to $\Einf$, in the large $L$ regime. 

In Section \ref{subsubsec:prelres} we derive an important preliminary result, namely Proposition \ref{prop:En&MinConv}. It formalizes in a mathematical useful way the concept of $\EL$ converging to $\Einf$. In Section \ref{subsubsec:HessianEL}, we study the Hessian of $\EL$, showing that it converges (in the sense of Proposition \ref{coercivity}) to the Hessian of $\Einf$ and therefore is strictly positive above its trivial zero modes for large $L$. Finally, in Section \ref{subsubsec:MainResEL} we use the results obtained in Sections \ref{subsubsec:prelres} and  \ref{subsubsec:HessianEL} to show the validity of Theorem \ref{uniquenessANDcoercivity}. 

We remark that our approach differs from the one used on $\Rtre$ and on balls to show, for the related $\mathcal{E}$-functional, uniqueness of minimizers and strict positivity of the Hessian (see \cite{lieb1977existence} and \cite{lenzmann2009uniqueness} for the case of $\Rtre$ and \cite{feliciangeli2020uniqueness} for the case of balls). In those cases, rotational  symmetry allows to first show uniqueness of minimizers and then helps to derive the positivity of the Hessian at the minimizers. We take somewhat the opposite road: comparing $\EL$ to $\Einf$, we first show that minimizers (even if not unique) all localize around the full-space minimizers (see Proposition \ref{prop:En&MinConv}) and that the Hessian at each minimizer is universally strictly positive (see Proposition \ref{coercivity}) for large $L$. We then use these two properties to derive, as a final step, uniqueness of minimizers.

\subsubsection{Preliminary Results} \label{subsubsec:prelres} The next Lemma proves the existence of minimizers for any $L>0$. Moreover, it shows that there exists $L_0>0$ such that, for $L<L_0$, $\EL$ is strictly positive on any non-constant $L^2$-normalized function, as already mentioned in Remark \ref{rem:smallL}.

\begin{lem}
	\label{existence}
	For any $L>0$, $\eL$ in \eqref{eq:pekaren} is attained, and there exists a universal constant $C>0$ such that $\eL>-C$. Moreover, there exists $L_0>0$ such that, for $L<L_0$, $\EL(\psi)>0$ for any non-constant $L^2$-normalized $\psi$.
\end{lem}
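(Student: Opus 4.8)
The plan is to establish the three claims in order: existence of a minimizer, the uniform lower bound $\eL > -C$, and strict positivity of $\EL$ on non-constant functions for small $L$. I would work throughout with the characterization \eqref{eq:EFunChar}, $\EL(\psi) = T_L(\psi) - W_L(\psi)$, which is the convenient form.

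First, for the lower bound I would control $W_L(\psi)$ in terms of $T_L(\psi)$. Using Lemma \ref{kernelbounds}, $\Tker \leq (4\pi)^{-1}(\dist_{\TL}(x,y))^{-1} + C/L$, so $W_L(\psi) \leq (4\pi)^{-1}\iint \rho_\psi(x)\dist_{\TL}(x,y)^{-1}\rho_\psi(y)\,dx\,dy + C/L$ (since $\|\rho_\psi\|_1 = 1$). The singular part is a torus analogue of the Coulomb energy, which can be bounded by a Hardy--Littlewood--Sobolev-type inequality on the torus (or, more elementarily, by comparing with the full-space Coulomb kernel via Lemma \ref{kernelbounds} again and then applying the sharp full-space bound $\iint \rho(x)|x-y|^{-1}\rho(y) \lesssim \|\rho\|_{6/5}^2$), giving $W_L(\psi) \lesssim \|\psi\|_{12/5}^4 + C/L$. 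By Gagliardo--Nirenberg on the torus (with the zero-mode/constant part handled separately, using $\|\psi\|_2 = 1$), $\|\psi\|_{12/5}^4 \lesssim T_L(\psi)^{1/2} + 1$, hence $\EL(\psi) \geq T_L(\psi) - \varepsilon T_L(\psi) - C_\varepsilon$ for a suitable splitting; absorbing the kinetic term yields $\EL(\psi) \geq -C$ with $C$ universal (for $L$ bounded below; for small $L$ the bound $\eL = 0$ from the third part is even better). I would take care that the constants are genuinely $L$-independent, which is where the scaling $F_L(x) = L^{-1}F_1(x/L)$ from the proof of Lemma \ref{kernelbounds} is useful.

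Second, for existence of a minimizer I would use the direct method. Take a minimizing sequence $\psi_n$; the lower bound just proved together with $\EL(\psi_n) \to \eL$ shows $T_L(\psi_n)$ is bounded, so $\psi_n$ is bounded in $H^1(\TL)$. Since $\TL$ is compact, $H^1(\TL) \hookrightarrow L^p(\TL)$ compactly for $p < 6$, so after passing to a subsequence $\psi_n \rightharpoonup \psi$ weakly in $H^1$, strongly in $L^p$ for $2 \le p < 6$; in particular $\|\psi\|_2 = 1$ and $\rho_{\psi_n} \to \rho_\psi$ strongly in $L^{6/5}$. Then $W_L$ is continuous along the sequence (the kernel $(-\Delta_L)^{-1}$ maps $L^{6/5} \to L^6$ boundedly) and $T_L$ is weakly lower semicontinuous, so $\EL(\psi) \le \liminf \EL(\psi_n) = \eL$, hence $\psi$ is a minimizer. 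No concentration-compactness is needed here — this is exactly where the torus is easier than $\Rtre$.

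Third, for the small-$L$ regime: on $\TL$ the Poincaré inequality gives $T_L(\psi) = \sum_{0 \neq k} |k|^2 |\psi_k|^2 \ge (2\pi/L)^2 \sum_{0\neq k} |\psi_k|^2 = (2\pi/L)^2(1 - |L^{-3/2}\int\psi|^2) \cdot$ (up to normalization), i.e. $T_L(\psi) \gtrsim L^{-2}\|\hat\psi\|_2^2$ where $\hat\psi = \psi - L^{-3}\int\psi$. Meanwhile $W_L(\psi) = \iint \hat\rho_\psi (-\Delta_L)^{-1}\hat\rho_\psi$ (the constant part of $\rho_\psi$ is annihilated by $(-\Delta_L)^{-1}$), and since $(-\Delta_L)^{-1} \le (2\pi/L)^{-2}$ as an operator on mean-zero functions, $W_L(\psi) \le (L/2\pi)^2 \|\hat\rho_\psi\|_2^2$. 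Now $\|\hat\rho_\psi\|_2 = \|\,\widehat{|\psi|^2}\,\|_2 \lesssim \|\psi\|_\infty \|\hat\psi\|_2$ — wait, more carefully $\hat\rho_\psi = |\psi|^2 - L^{-3}$ and one estimates $\| |\psi|^2 - L^{-3}\|_2 \lesssim \|\psi - L^{-3/2}e^{i\theta}\|_{H^1}$-type quantity; cleaner is $\|\hat\rho_\psi\|_2 \le \|\rho_\psi\|_2 \lesssim \|\psi\|_4^2 \lesssim \|\psi\|_2 \|\psi\|_{H^1}^{?}$. I would instead just use $W_L(\psi) \le C L^{?} \|\hat\psi\|_2^2 \cdot(\text{something bounded})$ and compare: for $L$ small enough the negative term is beaten by the kinetic term, giving $\EL(\psi) \ge c L^{-2}\|\hat\psi\|_2^2 > 0$ unless $\hat\psi \equiv 0$, i.e. unless $\psi$ is constant. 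The main obstacle is getting the second power of $\|\hat\psi\|_2$ out of $W_L$ cleanly — one needs $W_L(\psi) \lesssim L^a \|\hat\psi\|_2^2$ with the remaining factors uniformly bounded, which requires first establishing an a priori $H^1$ bound on a would-be competitor (any $\psi$ with $\EL(\psi) \le 0$ has bounded kinetic energy by part two), after which $\|\psi\|_4$ etc. are controlled and the quadratic-in-$\hat\psi$ estimate closes.

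Overall, I expect the genuinely delicate point to be tracking $L$-independence of constants in the lower bound of part two (needed because $\kappa_1$ in Theorem \ref{uniquenessANDcoercivity} must be $L$-independent), handled via the scaling relation for $F_L$; the existence and small-$L$ parts are routine given compactness of $\TL$ and the Poincaré inequality.
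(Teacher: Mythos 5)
Parts one (lower bound) and two (existence) are sound and follow essentially the paper's route. For the lower bound the paper splits $\dist_{\TL}^{-1}$ into an $L^{3/2}$ and an $L^\infty$ part and uses Poincar\'e--Sobolev on the mean-zero function $\hat\psi$, rather than quoting a torus HLS plus Gagliardo--Nirenberg, but the content is the same. The existence argument (direct method; Rellich--Kondrachov; weak l.s.c.\ of $T_L$; continuity of $W_L$ via $\|\cdot\|_{\mathring H^{-1}(\TL)}\lesssim L\|\cdot\|_{L^2(\TL)}$ and $L^4$-convergence) is identical to the paper's.

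The small-$L$ part has a genuine gap, and it is exactly the one you flag. Your bound $W_L(\psi)\leq (L/2\pi)^2\|\hat\rho_\psi\|_2^2$ (the operator norm of $(-\Delta_L)^{-1}$ on mean-zero functions) produces, via $\|\hat\rho_\psi\|_2\lesssim\|\hat\psi\|_4^2$ and Poincar\'e--Sobolev, a term of order $L^3 T_L(\psi)^2$ --- quadratic in $T_L(\psi)$ --- which cannot be absorbed into the kinetic energy without an a priori bound. Your proposed escape, invoking an a priori $H^1$ bound from the first part, does not close: the constant there contains a $C/L$ contribution and blows up as $L\to 0$, so there is no such $L$-independent bound in the small-$L$ regime (indeed you already acknowledged that for small $L$ you would need part three to supply the lower bound, which makes the argument circular). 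And even an $L$-dependent a priori bound would only yield $\EL(\psi)\geq (1-CL)T_L(\psi)-c(L)$ with a strictly positive additive error $c(L)$, which fails to give $\EL(\psi)>0$ when $T_L(\psi)$ is small but nonzero. The missing idea is a bound on $W_L(\hat\psi)$ that is \emph{linear} rather than quadratic in the norms of $\hat\psi$. The paper obtains this via Cauchy--Schwarz in $k$-space: writing
\begin{align}
W_L(\hat\psi)=\sum_{0\neq k}\frac{|(\rho_{\hat\psi})_k|^2}{|k|^2}\leq\left(\sum_{0\neq k}\frac{|(\rho_{\hat\psi})_k|^2}{|k|^4}\right)^{1/2}\left(\sum_{0\neq k}|(\rho_{\hat\psi})_k|^2\right)^{1/2}
\end{align}
and using the trivial $\ell^\infty$ bound $|(\rho_{\hat\psi})_k|\leq L^{-3/2}\|\hat\psi\|_2^2\leq L^{-3/2}$ from normalization on the first factor (together with $\sum_{0\neq k}|k|^{-4}\sim L^4$) gives $W_L(\hat\psi)\lesssim L^{1/2}\|\hat\psi\|_4^2\lesssim L\|\hat\psi\|_6^2\lesssim L\,T_L(\psi)$, plus the elementary cross-term estimate for $W_L(\psi)-3W_L(\hat\psi)$, yielding $\EL(\psi)\geq (1-CL)T_L(\psi)$ directly. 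This needs no a priori bound and gives strict positivity for non-constant $\psi$ once $L<C^{-1}$.
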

\begin{proof}
	We consider any $L^2$-normalized $\psi \in H^1(\TL)$ and begin by observing that in terms of the Fourier coefficients we have
	\begin{align}
	&W_L(\psi)=\sum_{0\neq k\in \frac {2\pi}{L} \mathbb{Z}^3} \frac {|(\ropsi)_k|^2} {|k|^2} ,\\
	&(\ropsi)_k= \sum_{j\in \frac {2\pi}{L} \mathbb{Z}^3} \frac {\bar{\psi}_j \psi_{j+k}}{L^{3/2}}=(\rho_{\hat{\psi}})_k+\frac{\bar{\psi}_0\psi_k}{L^{3/2}}+\frac{\bar{\psi}_{-k}\psi_0}{L^{3/2}}.
	\end{align}
	By Parseval's identity $|\psi_0|\leq 1$ and thus, using the Cauchy--Schwarz inequality, we can deduce that
	\begin{align}
	\label{robound}
	|(\ropsi)_k|^2\leq 
	\begin{cases}
	L^{-3},\\
	3|(\rho_{\hat{\psi}})_k|^2+\frac 3 {L^3} (|\psi_k|^2+|\psi_{-k}|^2).
	\end{cases}
	\end{align}
	Therefore
	\begin{align}
	W_L(\psi)&\leq 3 \left(\sum_{0\neq k\in \frac {2\pi}{L} \mathbb{Z}^3} \frac {|(\rho_{\hat{\psi}})_k|^2} {|k|^2}\right) + \frac 6 {L^3} \left(\sum_{0\neq k\in \frac {2\pi}{L} \mathbb{Z}^3} \frac {|\psi_k|^2} {|k|^2}\right) \nonumber\\
	&\leq 3W_L(\hat{\psi})+\frac {6}{(2\pi)^2 L}\|\hat{\psi}\|_{L^2(\TL)}^2.
	\end{align}
	We can bound both terms on the r.h.s. in two different ways, one which is good for small $L$ and one which is good for all the other $L$. Indeed, by applying estimate \eqref{robound} and using the Poincar\'e-Sobolev inequality (see \cite{lieb2001analysis}, chapter 8) on the zero-mean function $\hat{\psi}$, we get
	\begin{align}
	W_L(\hat{\psi})&\leq \left(\sum_{0\neq k\in \frac {2\pi}{L} \mathbb{Z}^3} \frac{|(\rho_{\hat{\psi}})_k|^2}{|k|^4}\right)^{1/2}\left(\sum_{0\neq k\in \frac {2\pi}{L} \mathbb{Z}^3} |(\rho_{\hat{\psi}})_k|^2\right)^{1/2}\lesssim L^2\|(\rho_{\hat{\psi}})_k\|_{l^{\infty}}\|\hat{\psi}\|_{L^4(\TL)}^2\nonumber\\
	&\lesssim L^{1/2}\|\hat{\psi}\|_{L^4(\TL)}^2\lesssim L\|\hat{\psi}\|_{L^6(\TL)}^2\lesssim LT_L(\hat{\psi})=L T_L(\psi).
	\end{align}
	Moreover, 
	\begin{align}
	L^{-1} \|\hat{\psi}\|_{L^2(\TL)}^2\lesssim L T_L(\hat{\psi})=L T_L(\psi).
	\end{align}
	Therefore, we can conclude that
	\begin{align}
	W_L(\psi)\lesssim L T_L(\psi)\;\;\Rightarrow\;\; \EL(\psi)\geq (1-CL)T_L(\psi).
	\end{align}
	Thus, for $L< L_0:=C^{-1}$, either ${\psi\equiv \const}$ and $\EL(\psi)=0$ or $\EL(\psi)\gtrsim T_L(\psi)>0$. Moreover, this also implies
	\begin{align}
	\EL(\psi)\gtrsim T_L(\psi)\geq \frac {(2\pi)^2}{2L_0^2}\|\hat\psi\|_2^2+\frac 1 2 T_L(\psi)\gtrsim \dist^2_{H^1}\left(\Theta_L\left(\frac 1 {L^{3/2}}\right), \psi\right),
	\end{align}
	which is the analogue  of \eqref{globalquadbound} from Theorem \ref{uniquenessANDcoercivity}  in the case $L< L_0$.
	
	We now proceed to study the more interesting regime $L\geq L_0$. By Lemma \ref{kernelbounds}, splitting $\dist^{-1}_{\TL}(x,\cdot)$ into an $L^{3/2}$ part and the remaining $L^{\infty}$ part (whose norms can be chosen to be  proportional to $\varepsilon$ and $\varepsilon^{-1}$, respectively, for any $\varepsilon>0$), and by applying again the Poincar\'e-Sobolev inequality, we obtain
	\begin{align}
	W_L(\hat{\psi})\leq \int_{\TL\times \TL} \frac{\rho_{\hat{\psi}}(x) \rho_{\hat{\psi}}(y)} {4\pi \dist_{\TL}(x,y)} dx dy + \frac C L\lesssim \varepsilon\|\hat{\psi}\|_{L^6(\TL)}^2+\varepsilon^{-1}+1\leq \frac{T_L(\psi)} 6+C.
	\end{align}
	Moreover, since $L\geq L_0$, trivially $ L^{-1} \|\hat{\psi}\|_{L^2(\TL)}^2\lesssim 1$ and we can conclude that for any $L^2$-normalized $\psi\in H^1(\TL)$
	\begin{align}
	\label{Tbounds}
	W_L(\psi)\leq \frac{T_L(\psi)} 2+ C \;\;\Rightarrow\;\;
	\EL(\psi)\geq \frac {T_L(\psi)} 2 - C.
	\end{align} 
	From this we can infer that $\eL\geq-C$ for any $L$. To show existence of minimizers, we observe that by \eqref{Tbounds} any minimizing sequence $\psi_n$ on $\TL$ must be bounded in $H^1(\TL)$. Therefore, there exists a subsequence (which we still denote by $\psi_n$ for simplicity)  that  converges weakly in $H^1(\TL)$ and strongly in $L^p(\TL)$, for any $1\leq p<6$ to some $\psi$ (by the Banach-Alaoglu Theorem and the Rellich-Kondrachov embedding Theorem). The limit function $\psi$ is $L^2$-normalized and 
	\begin{align}
	T_L(\psi)\leq \liminf_{n\to \infty} T_L(\psi_n)
	\end{align} 
	by weak lower semicontinuity of the norm. Using the $L^4$-convergence of $\psi_n$ to $\psi$ and the fact that $\|\cdot\|_{\mathring{H}^{-1}(\TL)}\lesssim L\|\cdot\|_{L^2(\TL)}$, we finally obtain
	\begin{align}
	|W_L(\psi_n)-W_L(\psi)|&=\left(\|\rho_{\psi}\|_{\mathring{H}^{-1}(\TL)}+\|\rho_{\psi_n}\|_{\mathring{H}^{-1}(\TL)}\right)\left|\|\rho_{\psi}\|_{\mathring{H}^{-1}(\TL)}-\|\rho_{\psi_n}\|_{\mathring{H}^{-1}(\TL)}\right|\nonumber\\
	&\lesssim L\|\rho_{\psi_n}-\rho_{\psi}\|_{\mathring{H}^{-1}(\TL)}\lesssim L^2 \|\rho_{\psi_n}-\rho_{\psi}\|_{L^2(\TL)}\nonumber\\
	&\leq L^2\|\psi_n-\psi\|_{L^4(\TL)}\left(\|\psi_n\|_{L^4(\TL)}+\|\psi\|_{L^4(\TL)}\right)\to 0.
	\end{align}
	This implies that 
	\begin{align}
	\EL(\psi)\leq \liminf_{n\to \infty} \EL(\psi_n)=\eL,
	\end{align}
	and thus that $\psi$ is a minimizer. Note that, since $\EL(\psi_n)\to \eL=\EL(\psi)$ by definition of $\psi_n$ and, as shown, $W_L(\psi_n)\to W_L(\psi)$, it also holds
	\begin{align}
	T_L(\psi_n)=\EL(\psi_n)+W_L(\psi_n)\to \EL(\psi)+W_L(\psi)=T_L(\psi)
	\end{align} 
	which implies that $\psi_n$ actually converges to $\psi$ strongly in $H^1(\TL)$.
\end{proof}

Once we have shown existence of minimizers, we need to investigate more carefully their properties. Some of them are derived in the following Lemma. Recall that 
\begin{align}
\label{eq:Vsigmatorus}
V_{\psi}= -2(-\Delta_L)^{-1/2} \psi, \quad \sigma_{\psi}= (- \Delta_L)^{-1/2} |\psi|^2,
\end{align}
and that, as stated above, we call  any property universal which does not depend on $L\geq L_0$.

\begin{lem}
	\label{minprop}
	Let $\psi\in \MinLe$ (as defined in \eqref{eq:minEL}). Then $\psi$ satisfies the following Euler-Lagrange equation
	\begin{align}
	\label{eulag}
	&(-\Delta_L+V_{\sigma_{\psi}}-\LagrML_{\psi})\psi=0, \quad \text{with} \quad \LagrML_{\psi}= T_L(\psi)-2W_L(\psi). 
	\end{align}
	Moreover, $\psi \in C^{\infty}(\TL)$, is universally bounded in $H^2(\TL)$ (and therefore in $L^{\infty}(\TL)$), has constant phase and never vanishes. Finally, any $L^2$-normalized sequence $f_n\in H^1(\mathbb{T}^3_{L_n})$ such that $\ELn(f_n)$ is universally bounded, is universally bounded in $H^1(\mathbb{T}^3_{L_n})$.
\end{lem}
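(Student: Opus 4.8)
The plan is to establish the four claims of Lemma~\ref{minprop} in sequence, exploiting the variational characterization of $\psi\in\MinLe$ and the smoothing properties of $(-\Delta_L)^{-1/2}$. First I would derive the Euler--Lagrange equation \eqref{eulag} in the usual way: perturb a minimizer $\psi$ by $\psi+\varepsilon\eta$ with $\eta\in H^1(\TL)$, renormalize, and take the derivative at $\varepsilon=0$ of $\EL$ in the form \eqref{eq:EFunChar}; the constraint $\|\psi\|_2=1$ produces the Lagrange multiplier $\LagrML_\psi$, and evaluating $\EL$ at $\psi$ itself (using that $T_L$ is quadratic and $W_L$ is quartic in $\psi$, so $\expval{V_{\sigma_\psi}}{\psi}=-2W_L(\psi)$) pins down $\LagrML_\psi=T_L(\psi)-2W_L(\psi)$. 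At this stage it is convenient to note that one may assume $\psi\geq0$: since $|\,\cdot\,|$ does not increase $T_L$ and leaves $\rho_\psi$ (hence $W_L$) unchanged, $|\psi|$ is also a minimizer, and then a standard strong maximum principle / Harnack argument applied to the Schr\"odinger equation $(-\Delta_L+V_{\sigma_\psi})|\psi|=\LagrML_\psi|\psi|$ with $V_{\sigma_\psi}$ bounded (see below) forces $|\psi|>0$ everywhere; uniqueness of the ground state of $h_{\sigma_\psi}=-\Delta_L+V_{\sigma_\psi}$ then shows $\psi$ equals $|\psi|$ up to a constant phase, i.e.\ $\psi$ has constant phase.

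Next comes the regularity and uniform-bound part, which I regard as the technical core. From Lemma~\ref{existence} (specifically \eqref{Tbounds}) every $\psi\in\MinLe$ is universally bounded in $H^1(\TL)$, hence universally bounded in $L^6(\TL)$ by Sobolev. Therefore $\rho_\psi=|\psi|^2$ is universally bounded in $L^3(\TL)$, and since $(-\Delta_L)^{-1}$ maps $L^3$ into $L^\infty$ on a torus of side length $\geq L_0$ with a universal norm (for instance via Lemma~\ref{kernelbounds}: the kernel differs from $(4\pi\dist_{\TL})^{-1}$ by at most $C/L\leq C/L_0$, and $\dist_{\TL}(x,\cdot)^{-1}\in L^{3/2-\delta}$ with universal norm, so a H\"older pairing against $\rho_\psi\in L^3$ closes), we get that $V_{\sigma_\psi}=-2(-\Delta_L)^{-1}|\psi|^2$ is universally bounded in $L^\infty(\TL)$. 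Feeding this into the Euler--Lagrange equation written as $-\Delta_L\psi=(\LagrML_\psi-V_{\sigma_\psi})\psi$, where $\LagrML_\psi$ is universally bounded (by the $H^1$ bound and the same estimate on $W_L$), elliptic regularity gives $\psi\in H^2(\TL)$ with a universal bound, hence $\psi\in L^\infty(\TL)$ universally by Sobolev embedding in three dimensions. Then $|\psi|^2\in L^\infty$, so $V_{\sigma_\psi}\in C^{0,\gamma}$, and bootstrapping the equation through the Schauder / $H^k$ scale yields $\psi\in C^\infty(\TL)$.

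Finally, the last sentence: given $L^2$-normalized $f_n\in H^1(\TLn)$ with $\ELn(f_n)$ universally bounded, I would simply invoke \eqref{Tbounds} again, which holds for every $L\geq L_0$ with universal constants and gives $T_{L_n}(f_n)\leq 2\ELn(f_n)+C$; combined with $\|f_n\|_2=1$ this is precisely a universal $H^1(\TLn)$ bound. The main obstacle in the whole argument is the uniform-in-$L$ control of $V_{\sigma_\psi}$ in $L^\infty$: one must be careful that all constants in the $L^3\to L^\infty$ mapping of the periodic inverse Laplacian, in the Sobolev embeddings on $\TL$, and in the elliptic estimates, can be taken independent of $L$ for $L\geq L_0$ — this is exactly what Lemma~\ref{kernelbounds} and the scaling $F_L(x)=L^{-1}F_1(x/L)$ are designed to provide, and it is the reason the small-$L$ regime has to be separated out.
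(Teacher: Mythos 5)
Your proposal follows essentially the same route as the paper's proof: universal $H^1$ bounds from \eqref{Tbounds}, a universal $L^\infty$ bound on $V_{\sigma_\psi}$ via Lemma~\ref{kernelbounds} and a near/far splitting of $\dist_{\TL}(x,\cdot)^{-1}$, elliptic bootstrap through the Euler--Lagrange equation for $H^2$ and $C^\infty$, and positivity/constant phase from simplicity of the ground state of $-\Delta_L+V_{\sigma_\psi}$; the only substantive variation is that the paper gets strict positivity from the positivity-improving property of $(-\Delta_L+\lambda)^{-1}$ for $\lambda$ large, whereas you invoke the strong maximum principle / Harnack, which serves the same purpose. One small technical slip: the local part of $\dist_{\TL}(x,\cdot)^{-1}$ should be taken in $L^{3/2}$ exactly (whose H\"older conjugate is $3$, matching $\rho_\psi\in L^3$), not $L^{3/2-\delta}$ whose conjugate exceeds $3$; and you must split off the far-from-diagonal part as an $L^\infty$ piece paired with $\rho_\psi\in L^1$, since the global $L^{3/2}$ norm of $\dist_{\TL}^{-1}$ grows with $L$ and is not universal.
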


\begin{proof}	
	The fact that sequences $f_n\in H^1(\mathbb{T}^3_{L_n})$ of $L^2$-normalized functions for which $\ELn$ is universally bounded are universally bounded in $H^1(\mathbb{T}^3_{L_n})$ follows trivially from estimate \eqref{Tbounds}. This immediately yields a universal bound on the $H^1$-norm of minimizers. 
	
	The Euler--Lagrange equation \eqref{eulag} for the problem is derived by standard computations omitted here. By Lemma \ref{kernelbounds} and by splitting $(\dist_{\TL}(0,\,\cdot\,))^{-1}$ in its $L^{3/2}$ and $L^{\infty}$ parts, we have
	\begin{align}
	|V_{\sigma_{\psi}}(x)|\leq 2\int_{\TL} \frac 1 {\dist_{\TL}(x,y)} |\psi(y)|^2 dy +\frac C L\lesssim \left( \|\psi\|_{L^6(\TL)}^2+1\right)\lesssim\left( T_L(\psi)+1\right).
	\end{align}
	Therefore, by the universal $H^1$-boundedness of minimizers, $V_{\sigma_{\psi}}$ is universally bounded in $L^{\infty}(\TL)$, for any $\psi \in \MinLe$. This immediately allows to conclude universal $\mathring{H}^2$ (and hence $H^2$) bounds for functions in $\MinLe$, using the Euler--Lagrange equation \eqref{eulag}, Lemma \ref{existence} and the universal $H^1$-boundedness of minimizers, which guarantee that 
	\begin{align*}
	0\geq \LagrML_{\psi}=2\EL(\psi)-T_L(\psi)\geq -C.
	\end{align*}
	Since $L\geq L_0$, universal $H^2$-boundedness also implies universal $L^{\infty}$-boundedness of minimizers by the Sobolev inequality. 
	
	For any $L>0$, any $\psi\in \MinLe$ satisfies $\eqref{eulag}$, is in $H^1(\TL)$ and is such that $V_{\sigma_{\psi}}\in L^{\infty}(\TL)$. Therefore $\psi$ also satisfies, for any $\lambda>0$
	\begin{align}
	\psi=(-\Delta_L+\lambda)^{-1}(-V_{\sigma_{\psi}}+\LagrML_{\psi}+\lambda)\psi.
	\end{align}
	In particular, by a bootstrap argument we can conclude that $\psi\in C^{\infty}(\TL)$. Moreover, picking $\lambda>-\LagrML_{\psi}+\|V_{\sigma_{\psi}}\|_{L^{\infty}(\TL)}$ and using that $(-\Delta_L+\lambda)^{-1}$ is positivity improving, we can also conclude that if $\psi\geq 0$ then $\psi>0$. By the convexity properties of the kinetic energy (see \cite{lieb2001analysis}, Theorem 7.8), we have that $T_L(|\psi|)\leq T_L(\psi)$ which implies that if $\psi\in \MinLe$ then  $T_L(\psi)=T_L(|\psi|)$ and also $|\psi|\in \MinLe$. Hence both $\psi$ and $|\psi|$ are eigenfunctions of the least and \emph{simple} (by positivity of one of the eigenfunctions) eigenvalue $\LagrML_{\psi}=\LagrML_{|\psi|}$ of the Schr\"odinger operator $-\Delta_L+V_{\sigma_{\psi}}$, which allows us to infer that $\psi$ has constant phase and never vanishes. 
\end{proof}

We now proceed to develop the tools that will allow to show the validity of Theorem \ref{uniquenessANDcoercivity}. We begin with a simple Lemma. 

\begin{lem}
	\label{lem:Hsrho}
	For $\psi\in H^1(\TL)$, 
	\begin{align}
	\|\rho_{\psi}\|_{\mathring{H}^{1/8}(\TL)}\lesssim \|\psi\|_{H^{1}(\TL)}^{2}.
	\end{align}
\end{lem}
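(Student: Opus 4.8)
The plan is to interpolate the $\mathring H^{1/8}(\TL)$-seminorm of $\ropsi=|\psi|^2$ between $L^2$ and $\mathring H^{1/2}$, and then to control the $\mathring H^{1/2}$-seminorm by writing $\nabla\ropsi$ as a product of a factor in $L^6(\TL)$ and a factor in $L^2(\TL)$. First I would record the elementary consequence of Hölder's inequality for the Fourier series (with exponents $4$ and $4/3$): for every $f\in L^2(\TL)$,
\begin{align}
\|f\|_{\mathring H^{1/8}(\TL)}^2=\sum_{0\neq k\in\frac{2\pi}{L}\mathbb{Z}^3}\big(|k|\,|f_k|^2\big)^{1/4}\big(|f_k|^2\big)^{3/4}\leq\|f\|_{\mathring H^{1/2}(\TL)}^{1/2}\,\|f\|_{L^2(\TL)}^{3/2},\nonumber
\end{align}
so that $\|\ropsi\|_{\mathring H^{1/8}(\TL)}\leq\|\ropsi\|_{\mathring H^{1/2}(\TL)}^{1/4}\,\|\ropsi\|_{L^2(\TL)}^{3/4}$.

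It then suffices to bound each of the two factors by $\|\psi\|_{H^1(\TL)}^2$. For the first, $\|\ropsi\|_{L^2(\TL)}=\|\psi\|_{L^4(\TL)}^2\leq\|\psi\|_{L^6(\TL)}^{3/2}\|\psi\|_{L^2(\TL)}^{1/2}$ by log-convexity of $L^p$-norms, and $\|\psi\|_{L^6(\TL)}\lesssim\|\psi\|_{H^1(\TL)}$ with a universal constant: one applies the Poincaré--Sobolev inequality to the zero-mean part $\hat\psi$ (exactly as in the proof of Lemma~\ref{existence}) and bounds the constant part $c:=L^{-3}\int_{\TL}\psi$ by $\|c\|_{L^6(\TL)}\lesssim L^{-1}\|\psi\|_{L^2(\TL)}\leq L_0^{-1}\|\psi\|_{L^2(\TL)}$. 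For the second factor I would invoke the scale-invariant Sobolev embedding $\dot{W}^{1,3/2}(\Rtre)\hookrightarrow\dot{H}^{1/2}(\Rtre)$, which transfers to zero-mean functions on $\TL$ with an $L$-independent constant, to get $\|\ropsi\|_{\mathring H^{1/2}(\TL)}\lesssim\|\nabla\ropsi\|_{L^{3/2}(\TL)}$; since $\nabla\ropsi=\bar\psi\,\nabla\psi+\psi\,\nabla\bar\psi$, Hölder's inequality with exponents $6$ and $2$ gives $\|\nabla\ropsi\|_{L^{3/2}(\TL)}\leq 2\|\psi\|_{L^6(\TL)}\|\nabla\psi\|_{L^2(\TL)}\lesssim\|\psi\|_{H^1(\TL)}^2$. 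Combining the three estimates yields $\|\ropsi\|_{\mathring H^{1/8}(\TL)}\lesssim\big(\|\psi\|_{H^1(\TL)}^2\big)^{1/4}\big(\|\psi\|_{H^1(\TL)}^2\big)^{3/4}=\|\psi\|_{H^1(\TL)}^2$, as claimed.

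The Hölder and interpolation steps are routine; the one point that genuinely needs attention is that every functional inequality used on $\TL$ carries an $L$-independent constant for $L\geq L_0$. This holds because the embeddings involved ($\dot{H}^1\hookrightarrow L^6$ and $\dot{W}^{1,3/2}\hookrightarrow\dot{H}^{1/2}$ in three dimensions) are dimensionless, so the sole non-scale-invariant ingredient — the constant Fourier mode of $\psi$ — enters only through a harmless factor $\lesssim L^{-1}\leq L_0^{-1}$. Should one prefer not to use $\dot{W}^{1,3/2}\hookrightarrow\dot{H}^{1/2}$, the same bound can be reached by duality combined with a Hardy--Littlewood--Sobolev estimate for $(-\Delta_L)^{-1/16}$ (whose uniformity in $L$ follows from an argument in the spirit of Lemma~\ref{kernelbounds}), or directly from a fractional Leibniz rule.
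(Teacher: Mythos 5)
Your proof is correct, and it takes a genuinely different route from the paper's. You first interpolate the $\mathring H^{1/8}$ seminorm between $L^2$ and $\mathring H^{1/2}$ via H\"older on the Fourier side, and then bound both factors independently: $\|\rho_\psi\|_{L^2}=\|\psi\|_{L^4}^2$ via the $L^6$ Sobolev estimate (handled carefully on $\TL$ by separating the constant and zero-mean parts, with the constant-mode contribution $\lesssim L^{-1}\leq L_0^{-1}$), and $\|\rho_\psi\|_{\mathring H^{1/2}}\lesssim\|\nabla\rho_\psi\|_{L^{3/2}}$ via the scale-invariant embedding $\dot W^{1,3/2}\hookrightarrow \dot H^{1/2}$, followed by H\"older for the Leibniz expansion of $\nabla\rho_\psi$. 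The paper instead writes the $\mathring H^{1/8}$ seminorm as the bilinear pairing $\langle\nabla\rho_\psi,\,\nabla(-\Delta_L)^{-7/8}\rho_\psi\rangle$, introduces an auxiliary vector field $g_i$ whose Fourier coefficients satisfy $|(g_i)_k|\leq|(\rho_\psi)_k|/|k|^{3/4}$ (so $\|g_i\|_{\mathring H^{3/4}}\leq\|\psi\|_{L^4}^2$), invokes the embedding $\mathring H^{3/4}(\TL)\hookrightarrow L^4(\TL)$, and closes with a triple H\"older using $\nabla\rho_\psi=2|\psi|\nabla|\psi|$. Both approaches hinge on exactly one fractional Sobolev embedding on the torus together with the standard $\dot H^1\hookrightarrow L^6$ input; the one you choose ($\dot W^{1,3/2}\hookrightarrow\dot H^{1/2}$) replaces the paper's ($\mathring H^{3/4}\hookrightarrow L^4$). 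Your argument is more modular in that the interpolation step splits the task into two self-contained norm estimates for $\rho_\psi$, whereas the paper handles everything inside a single bilinear form; the trade-off is that you need the (slightly less elementary, but equally standard and uniform in $L\geq L_0$ for zero-mean functions) $\dot W^{1,3/2}\hookrightarrow\dot H^{1/2}$ embedding. Either route gives the same bound.
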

\begin{proof}
	We have
	\begin{align}
	\label{eq:rhophibound}
	&\|\rho_{\psi}\|^2_{\mathring{H}^{1/8}(\TL)}=|\langle \nabla \rho_{\psi} | \nabla ((-\Delta_{L})^{-7/8}\rho_{\psi})\rangle|\nonumber\\
	&=2 \left|\int_{\TL}|\psi(x)| \nabla(|\psi(x)|) \cdot \nabla_x \left(\sum_{0\neq k\in \frac {2\pi}{L} \mathbb{Z}^3} \frac{(\rho_{\psi})_k}{|k|^{7/4}}\frac{e^{ik\cdot x}}{L^{3/2}}\right) dx\right|\nonumber\\
	&=\left|\sum_{i=1}^3\int_{\TL}|\psi(x)| \partial_i \left(|\psi(x)|\right) \sum_{0\neq k\in \frac {2\pi}{L} \mathbb{Z}^3} \frac{k_i(\rho_{\psi})_{k}}{|k|^{7/4}} \frac {e^{ik\cdot x}}{L^{3/2}}dx\right|.
	\end{align}
	We define
	\begin{align}
	g_i(x):=\sum_{0\neq k\in \frac {2\pi}{L} \mathbb{Z}^3} \frac{k_i(\rho_{\psi})_{k}}{|k|^{7/4}} \frac {e^{ik\cdot x}}{L^{3/2}},
	\end{align}
	and observe that $(g_i)_0=0$ and  $|(g_i)_k|=\frac {|k_i(\rho_{\psi})_{k}|}{|k|^{7/4}}\leq\frac {|(\rho_{\psi})_k|}{|k|^{3/4}}$ for $k\neq 0$. These estimates on the Fourier coefficients of $g_i$ imply that, for $i=1,2,3$, 
	\begin{align}
	\|g_i\|_{\mathring{H}^{3/4}(\TL)}^2=\sum_{0\neq k\in \frac {2\pi}{L} \mathbb{Z}^3} |k|^{3/2} |(g_i)_k|^2\leq\sum_{0\neq k\in \frac {2\pi}{L} \mathbb{Z}^3} |(\rho_{\psi})_k|^2\leq\|\psi\|_{L^4(\mathbb{T}^3_{L})}^4.
	\end{align}
	Moreover, using the fractional Sobolev embeddings (see, for example, \cite{benyi2013sobolev}) and that $g_i$ has zero mean, we have
	\begin{align}
	\|g_i\|_{L^4(\TL)}\lesssim \|g_i\|_{\mathring{H}^{3/4}(\TL)}\leq \|\psi\|_{L^4(\TL)}^2.
	\end{align} 
	Applying these results to \eqref{eq:rhophibound} and using H\"older's inequality two times, the Poincar\'e-Sobolev inequality and the convexity properties of the kinetic energy (see \cite{lieb2001analysis}, Theorem 7.8), we  conclude that
	\begin{align}
	\|\rho_{\psi}\|^2_{\mathring{H}^{1/8}(\mathbb{T}^3_{L})}&\lesssim  \|\psi\|_{L^4(\TL)}\|g^{1/8}_i\|_{L^4(\mathbb{T}^3_{L})}\|\nabla(|\psi|)\|_{L^2(\TL)}\leq \|\psi\|^3_{L^4(\TL)}\|\psi\|_{\mathring{H}^1(\TL)}\nonumber\\
	&\leq \|\psi\|_{L^2(\TL)}^{3/4}\|\psi\|_{L^6(\TL)}^{9/4}\|\psi\|_{\mathring{H}^1(\TL)}\lesssim \|\psi\|_{H^1(\TL)}^{4}.
	\end{align}
\end{proof}

Our next goal is to show that $\eL\to \einf$ as $L\to \infty$, and that in the large $L$ regime the states that are relevant for the minimization of $\EL$ are necessarily close to the full space minimizer (or any of its translates). This is a key ingredient for the discussion carried out in the following sections, and is stated in a precise way in the next proposition. The coercivity results obtained in \cite{lenzmann2009uniqueness} are of fundamental importance here as they guarantee that, at least for the full space model, low energy states are close to minimizers. 

We recall that the full-space Pekar functional, defined in \eqref{eq:FullSpaceE}, admits a unique positive and radial minimizer $\Emininf$ which is also smooth (see \eqref{eq:SpaceSurfaceofMin}), and we introduce the notation 
\begin{align}
\label{eq:PsiL}
\Emininf_L:=\Emininf \chi_{[-L/2,L/2]^3}.
\end{align} 
Note that $\Emininf_L\in H^1(\TL)$, by radiality and regularity of $\Emininf$. 

\begin{prop}
	\label{prop:En&MinConv}
	We have
	\begin{align}
	\lim_{L\to \infty}\eL= \einf.
	\end{align}
	Moreover, for any $\varepsilon>0$ there exist $L_{\varepsilon}$ and $\delta_{\varepsilon}$ such that for any $L>L_{\varepsilon}$ and any $L^2$-normalized $\psi \in H^1(\TL)$ with $\EL(\psi)-\eL<\delta_{\varepsilon}$,
	\begin{align}
	\label{eq:convofmin}
	\dist_{H^1}\left(\Theta_L(\psi),\Emininf_L\right)\leq \varepsilon, \quad |\LagrML_{\psi}-\LagrMinf_{\Emininf}|\leq \varepsilon,
	\end{align}
	where $\Theta_L(\psi)$, $\Emininf_L$, $\LagrML_{\psi}$ and $\LagrMinf_{\Emininf}$ are defined in \eqref{eq:invariantsurfaceE}, \eqref{eq:PsiL}, \eqref{eulag} and \eqref{eq:fullspaceVmu}, respectively.
\end{prop}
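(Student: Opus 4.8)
The plan is to establish the energy convergence $\eL \to \einf$ by a two-sided comparison, and then to bootstrap it into the quantitative closeness statement \eqref{eq:convofmin} using the full-space coercivity Theorem \ref{h1coerc}. For the upper bound $\limsup_{L\to\infty}\eL \le \einf$, I would use $\Emininf_L$ (the full-space minimizer cut off to the box, then viewed as periodic) as a trial state for $\EL$ after renormalizing it in $L^2$; since $\Emininf$ is smooth and decays (it is a positive radially decreasing $H^1$ minimizer), the cutoff error in $T_L$, in $\|\cdot\|_2$, and — via Lemma \ref{kernelbounds} together with Lemma \ref{lem:Hsrho} to control $\rho_\psi$ — in $W_L$, all vanish as $L\to\infty$, so $\EL(\Emininf_L/\|\Emininf_L\|_2) \to \Einf(\Emininf) = \einf$. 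For the lower bound $\liminf_{L\to\infty}\eL \ge \einf$, take a sequence $\psi_{L_n}$ of (approximate) minimizers of $\mathcal{E}_{L_n}$; by Lemma \ref{minprop} these are universally bounded in $H^1$ (in fact $H^2$ and $L^\infty$), and after translating so that the mass does not escape to the boundary one can regard them, up to a cutoff/partition-of-unity argument, as elements of $H^1(\Rtre)$ with controlled error. The key quantitative input is again Lemma \ref{kernelbounds}: writing $W_{L_n}(\psi) = \int\int \rho_\psi(x)(4\pi\,\dist_{\TL})^{-1}\rho_\psi(y) + O(L_n^{-1})$, the torus kernel is, up to $O(1/L)$, the full-space Coulomb kernel on the box, and the Coulomb energy is larger when computed on the periodized density than on its restriction (or this discrepancy is controlled by the $L^\infty$-norm of the density times a boundary term); hence $\mathcal{E}_{L_n}(\psi_{L_n}) \ge \Einf(\text{something close to }\psi_{L_n}) - o(1) \ge \einf - o(1)$.

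Once $\eL \to \einf$ is known, I would prove \eqref{eq:convofmin} by contradiction. Suppose it fails: then there is $\varepsilon>0$, a sequence $L_n\to\infty$, and $L^2$-normalized $\psi_n \in H^1(\TLn)$ with $\mathcal{E}_{L_n}(\psi_n) - e_{L_n} \to 0$ (taking $\delta_\varepsilon \to 0$ along the sequence) but $\dist_{H^1}(\Theta_{L_n}(\psi_n), \Emininf_{L_n}) > \varepsilon$ for all $n$. By Lemma \ref{minprop}, $\psi_n$ is universally bounded in $H^1(\TLn)$, so after a suitable translation (to prevent mass escaping to the boundary of the box) the $\psi_n$ can be transplanted into $H^1(\Rtre)$ — e.g. multiply by a smooth cutoff equal to $1$ on $[-L_n/4, L_n/4]^3$ — producing functions $\tilde\psi_n \in H^1(\Rtre)$ with $\|\tilde\psi_n\|_2 \to 1$, $T(\tilde\psi_n) \le T_{L_n}(\psi_n) + o(1)$, and, by the kernel comparison, $W(\tilde\psi_n) \ge W_{L_n}(\psi_n) - o(1)$. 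Hence $\Einf(\tilde\psi_n/\|\tilde\psi_n\|_2) \le \mathcal{E}_{L_n}(\psi_n) + o(1) = e_{L_n} + o(1) \to \einf$. Theorem \ref{h1coerc} then forces $\dist_{H^1}(\Theta(\Emininf), \tilde\psi_n/\|\tilde\psi_n\|_2) \to 0$, i.e. there are phases $\theta_n$ and shifts $y_n \in \Rtre$ with $e^{i\theta_n}\tilde\psi_n(\cdot - y_n) \to \Emininf$ in $H^1(\Rtre)$. Translating back and comparing with $\Emininf_{L_n} = \Emininf \chi_{[-L_n/2,L_n/2]^3}$ (whose $H^1$-distance to $\Emininf$, viewed appropriately, tends to $0$), this contradicts $\dist_{H^1}(\Theta_{L_n}(\psi_n),\Emininf_{L_n}) > \varepsilon$. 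The statement about the Lagrange multipliers, $|\LagrML_\psi - \LagrMinf_{\Emininf}| \le \varepsilon$, then follows from $\LagrML_\psi = T_L(\psi) - 2W_L(\psi)$ (valid for minimizers by \eqref{eulag}, and the analogous identity holding approximately for near-minimizers after projecting onto the surface of minimizers), since both $T_L(\psi) \to T(\Emininf)$ and $W_L(\psi) \to W(\Emininf)$ along such sequences — this uses the strong $H^1$-convergence just extracted, continuity of $T$, and continuity of $W$ under $H^1$-convergence (via Lemma \ref{lem:Hsrho} and the $\mathring H^{-1}$-estimates as in the proof of Lemma \ref{existence}).

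The main obstacle I expect is the careful handling of the transplantation between $\TL$ and $\Rtre$ in the lower-bound direction: one must ensure that the relevant low-energy states of $\mathcal{E}_{L_n}$ genuinely concentrate (so that a translation brings their mass away from the boundary $\partial[-L_n/2,L_n/2]^3$), that cutting them off costs only $o(1)$ in both the kinetic and the Coulomb energy, and that the periodic-versus-free Coulomb kernel discrepancy — which is pointwise $O(1/L)$ by Lemma \ref{kernelbounds} but is integrated against densities whose $L^1$-mass is $1$ — is controlled uniformly. Concentration is not automatic since a priori a minimizing sequence could spread out; one resolves this by the standard concentration-compactness dichotomy (ruling out vanishing via the strict subadditivity/binding inequality for the Pekar energy, and ruling out splitting similarly), using that $\einf < 0$ and the comparison already established. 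A secondary technical point is that the cutoff breaks the $L^2$-normalization, but this is harmless since the defect is $o(1)$ and all functionals are continuous under rescaling near norm $1$.
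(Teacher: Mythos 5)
Your upper bound and the subsequent bootstrap via Theorem \ref{h1coerc} follow the paper's route, but the concentration step in the lower bound is where your proposal diverges and where a genuine gap appears.

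The paper does \emph{not} invoke concentration-compactness or any binding/subadditivity inequality. Instead, it uses an IMS averaging identity: writing $\psi_R^y = \psi\eta_R^y/\|\psi\eta_R^y\|_2$ for a smooth bump $\eta_R$ centered at $y$, and using $|\psi|^2 = \int_{\TL}|\psi_R^y|^2\|\psi\eta_R^y\|_2^2\,dy$, a completion of the square gives
\begin{align}
\EL(\psi)+\frac{C}{R^2}=\int_{\TL}\left[\EL(\psi_R^y)+\big\||\psi_R^y|^2-|\psi|^2\big\|_{\mathring H^{-1}(\TL)}^2\right]\|\psi\eta_R^y\|_2^2\,dy.
\end{align}
Since the right-hand side is an average over the probability measure $\|\psi\eta_R^y\|_2^2\,dy$, there exists a center $\bar y$ at which \emph{both} the localized energy is controlled, $\ELn(\bar\psi_n)\leq\ELn(\psi_n)+O(L_n^{-2})$, \emph{and} the localized density is close to the original density, $\|\rho_{\bar\psi_n}-\rho_{\psi_n}\|_{\mathring H^{-1}}^2\leq O(L_n^{-2})$. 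This single step replaces all the machinery you invoke (dichotomy, binding inequalities, a separate translation to ``move mass away from the boundary''), and is the key insight of the proof. It also sidesteps the issue that Lions' dichotomy is a statement about functions on $\Rtre$, so applying it to your $\psi_n$ would require transplantation first, which is precisely what you are trying to justify --- a circularity your sketch does not resolve.

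The second, more concrete gap is how you pass from convergence of the truncated $\tilde\psi_n$ to convergence of the original $\psi_n$. You assert that ``cutting them off costs only $o(1)$'' and later treat the strong $H^1$-convergence of $\psi_n$ as already available when handling the Lagrange multiplier, but you never establish the relevant tightness. Standard $L^2$-concentration-compactness would at best give you $L^2$-tightness of the mass; you would still need an argument to control $\|\psi_n-\tilde\psi_n\|_{L^2}$ and then upgrade to $H^1$. The paper gets the needed control essentially for free from the $\mathring H^{-1}$-closeness above: it interpolates via
\begin{align}
\|f\|_{L^2(\TL)}^2\leq\|f\|_{\mathring H^{1/8}(\TL)}^{16/9}\|f\|_{\mathring H^{-1}(\TL)}^{2/9},
\end{align}
applied to $f=\rho_{\psi_n}-\rho_{\bar\psi_n}$, together with Lemma \ref{lem:Hsrho} to bound the $\mathring H^{1/8}$-norm of densities by $H^1$-norms of wave functions. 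This gives $\rho_{\psi_n}-\rho_{\bar\psi_n}\to 0$ in $L^2$, hence $\psi_n-\bar\psi_n\to 0$ in $L^4$ (using equality of phases), hence $\psi_n\to\Psi$ in $L^p$ for $2\leq p<6$. Only then does the paper upgrade to $H^1$ via convergence of the $H^1$-norms (which comes from the energy convergence, not from any a priori tightness) combined with weak lower semicontinuity and exhaustion by balls. Without an analogue of the $\mathring H^{-1}$ density-closeness estimate, your route is missing a load-bearing step.

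A minor point: your characterization ``the Coulomb energy is larger when computed on the periodized density than on its restriction'' is not the right comparison. The sharp input is Lemma \ref{kernelbounds} plus the fact that $\dist_{\TL}(x,y)=|x-y|$ whenever both points lie in $B_{L/4}$; the paper restricts to that ball precisely so the two kernels agree up to $O(1/L)$, rather than relying on a monotonicity that does not hold as stated.
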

\begin{proof}
	We first show that $\limsup_{L\to \infty} \eL\leq \einf$ by using $\Emininf_L$ as a trial state for $\EL$. Observe that ${\|\Emininf_L\|_{L^2(\TL)}\to 1}$ and $T_L(\Emininf_L)\to T(\Emininf)$ as $L\to \infty$. To estimate the difference of the interaction terms we note that $\Emininf_L(\Emininf-\Emininf_L)=0$ and therefore
	\begin{align}
	|W_L(\Emininf_L)-W(\Emininf)|\leq |W_L(\Emininf_L)-W(\Emininf_L)|+W(\Emininf-\Emininf_L)+2\bra{(\Emininf-\Emininf_L)^2}\ket{\Delta_{\Rtre}^{-1} \Emininf_L^2}.
	\end{align}
	By dominated convergence, the last two terms converge to zero as $L\to \infty$. On the other hand, by Lemma \ref{kernelbounds} and since $\Emininf$ is normalized
	\begin{align}
	&|W_L(\Emininf_L)-W(\Emininf_L)|\leq \frac C L +\frac 1 {4\pi}\int_{[-L/2,L/2]^6} \Emininf_L(x)^2\Emininf_L(y)^2 \left|\frac 1 {\dist_{\TL}(x,y)}-\frac 1 {|x-y|}\right|dxdy.
	\end{align}
	Moreover, since $\dist_{\TL}(x,y)=|x-y|$ for $x,y\in [-L/4,L/4]^3$ and using the symmetry and the positivity of the integral kernel and the fact that $\dist_{\TL}(x,y)\leq |x-y|$, we get
	\begin{align}
	\label{eq:InteractionBounds}
	&\int_{[-L/2,L/2]^6} \Emininf_L(x)^2\Emininf_L(y)^2 \left|\frac 1 {\dist_{\TL}(x,y)}-\frac 1 {|x-y|}\right|dxdy\notag\\
	&\leq 2\int_{[-L/2,L/2]^3} \Emininf_L^2(x)\left(\int_{[-L/2,L/2]^3}\frac{(\Emininf_L-\Emininf_{L/2})^2(y)} {\dist_{\TL}(x,y)}dy\right)dx. 
	\end{align}
	Finally, by splitting $\dist_{\TL}^{-1}(x,\cdot)$ in its $L^{\infty}$ and $L^1$ parts and using that $\Emininf$ is normalized, we can bound the r.h.s. of \eqref{eq:InteractionBounds} by $\left(C_1\|\Emininf_L-\Emininf_{L/2}\|_2^2+C_2\|\Emininf_L-\Emininf_{L/2}\|_{\infty}^2\right)$, which vanishes as $L\to \infty$, since $\Emininf(x)\xrightarrow{|x|\to\infty}0$. 
		Putting the pieces together, we  conclude that
	\begin{align}
	|W_L(\Emininf_L)-W(\Emininf_L)|=o_L(1).
	\end{align}
	This shows our first claim, since
	\begin{align}
	\eL\leq\EL(\Emininf_L/\|\Emininf_L\|_2)=\frac 1 {\|\Emininf_L\|_2^2}\left(T_L(\Emininf_L)-\frac 1 {\|\Emininf_L\|_2^2}W_L(\Emininf_L)\right)\to \einf.
	\end{align}
	
	\vspace{3mm}
	
	We now proceed to show that 
	\begin{align}
	\label{eq:liminf}
	\liminf_{L\to \infty} \eL\geq \einf
	\end{align}
	and the validity of \eqref{eq:convofmin} using IMS localization. We shall show that for any $L^2$-normalized sequence $\psi_n \in H^1(\TLn)$ with $L_n\to \infty$ such that 
	\begin{align}
	\label{eq:lowenergy}
	\ELn(\psi_n)-\eLn\to 0,
	\end{align} 
	we have
	\begin{align}
	\label{eq:sequenceclaims}
	\liminf_{n \to \infty}\ELn(\psi_n)\geq \einf, \quad
	\lim_{n\to \infty}\dist_{H^1}\left(\Theta_{L_n}(\psi_n),\Emininf_{L_n}\right)=0, \quad \lim_{n\to \infty}|\mu^{L_n}_{\psi_n}-\LagrMinf_{\Emininf}|=0,
	\end{align}
	 which implies the claim of the proposition.
	
	Pick $\eta\in C^{\infty}(\Rtre)$ with $\text{supp}(\eta)\subset B_1$ and $\|\eta\|_2=1$. We denote by $\eta_R$ the rescaled copy of $\eta$ supported on $B_R$ with $L^2$-norm equal to $1$. As long as $R\leq L/2$, $\eta_R \in C^{\infty}(\TL)$ and we then consider the translates $\eta_R^y$ for any $y\in \TL$. Given $\psi\in H^1(\TL)$, we also define 
	\begin{align}
 	\psi_R^y:=\psi \eta_R^y/\|\psi \eta_R^y\|_2.
	\end{align}

	By standard properties of IMS localization, for any $R\leq L/2$, we have
	\begin{align}
	\label{Test}
	\int_{\TL}  T_L(\psi_R^y) \|\psi\eta_R^y\|_2^2dy=\int_{\TL} T_L(\psi\eta_R^y) dy=T_L(\psi)+\frac {\int |\nabla \eta|^2} {R^2}.
	\end{align} 
	Moreover, by using that $|\psi|^2=\int_{\TL} |\psi \eta_R^y|^2 dy=\int_{\TL} |\psi_R^y|^2 \|\psi \eta_R^y\|^2 dy$ and completing the square
	\begin{align}
	\label{West}
	W_L(\psi)=\int_{\TL} \left[W_L(\psi_R^y)-\left\||\psi_R^y|^2-|\psi|^2\right\|_{\mathring{H}^{-1}(\TL)}^2\right] \|\psi \eta_R^y\|_2^2dy.
	\end{align}
	Combining \eqref{Test} and \eqref{West}, we therefore obtain 
	\begin{align}
	\EL(\psi)+\frac C {R^2}=\int_{\TL} \left[\EL(\psi_R^y)+\left\||\psi_R^y|^2-|\psi|^2\right\|_{\mathring{H}^{-1}(\TL)}^2\right]\|\psi \eta_R^y\|_2^2dy.
	\end{align}
	Since the integrand on the r.h.s. is equal to the l.h.s. on average (indeed $\|\psi \eta_R^y\|_2^2dy$ is a probability measure) there exists $\bar y\in \TL$ such that
	\begin{align}
	\EL(\psi_R^{\bar y})+\left\||\psi_R^{\bar y}|^2-|\psi|^2\right\|_{\mathring{H}^{-1}(\TL)}^2\leq\EL(\psi)+\frac C {R^2}.
	\end{align}
	This fact has several consequences and it is particularly useful if we apply it to our sequence $\psi_n$ with a radius $R=R_n\leq L_n/2$ (we take for simplicity $R=L_n/4$). Indeed, by the above discussion and \eqref{eq:lowenergy}, we obtain that there exists $\bary_n\in \TLn$ such that the $L^2$-normalized functions
	\begin{align}
	\bar\psi_n:=\frac {\psi_n\eta^{\bary_n}_{L_n/4}}{\|\psi_n\eta^{\bary_n}_{L_n/4}\|_2}
	\end{align}
	are competitors both for the minimization of $\ELn$ and $\Einf$ (indeed, $\bar\psi_n$ can then be thought of as a function in $C^{\infty}_c(\Rtre)$, supported on $B_{L_n/4}$) and satisfy
	\begin{align}
	\label{phinprop}
	\ELn(\bar\psi_n)&\leq \ELn(\psi_n)+\frac C {L_n^2}\leq  \eLn+o_{L_n}(1), \nonumber\\
	&\|\rho_{\bar\psi_n}-\rho_{\psi_n}\|^2_{\mathring{H}^{-1}(\mathbb{T}^3_{L_n})}\leq \frac C {L_n^2}. 
	\end{align}
	In other words, we can localize any element of our sequence $\psi_n$ to a ball of radius $R= L_n/4$ with an energy expense of order $L_n^{-2}$, and  the localized function is close (in the sense of the second line of \eqref{phinprop}) to $\psi_n$ itself, up to an error again of order $L_n^{-2}$.
	
	 Moreover $T_{L_n}(\bar \psi_n)=T(\bar\psi_n)$ and, using Lemma \ref{kernelbounds} and the fact that $\dist_{\TLn}(x,y)=|x-y|$ for all $x,y \in B_{L_n/4}$, we have
	\begin{align}
	\label{eq:ApproxInteraction}
	|W_{L_n}(\bar\psi_n)-W(\bar\psi_n)|\lesssim \frac 1 {L_n}.
	\end{align}
	Therefore, using \eqref{phinprop}
	\begin{align}
	\label{energysequence}
	\einf\leq \Einf(\bar\psi_n)\leq\ELn(\bar\psi)+\frac C {L_n}\leq \eLn+o_{L_n}(1),
	\end{align}
	which shows the first claim in \eqref{eq:sequenceclaims}. By Theorem \ref{h1coerc} and \eqref{energysequence}, it also follows that
	\begin{align}
	\dist_{H^1} \left( \Theta(\Emininf), \bar\psi_n\right)\xrightarrow{n\to\infty} 0.
	\end{align} 
	Hence, up to an $n$-dependent translation and change of phase (which we can both assume to be zero without loss of generality  by suitably redefining $\psi_n$),  $\bar\psi_n\xrightarrow{H^1(\Rtre)} \Emininf$, and the convergence also holds in $L^p(\Rtre)$ for any $2\leq p \leq6$. From this and the second line of \eqref{phinprop}, we would like to deduce that also $\psi_n$ and $\Emininf_{L_n}$ are close. We first note that, by a simple application of H\"older's inequality, it follows that for any $f\in L^2(\TL)$ with zero mean 
	\begin{align}
	\label{l2tofractsob}
	\| f\|_{L^2(\TL)}^2&\leq \left(\sum_{0\neq k\in \frac {2\pi}{L} \mathbb{Z}^3} |k|^{1/4} |f_k|^2\right)^{8/9}\left(\sum_{0\neq k\in \frac {2\pi}{L} \mathbb{Z}^3} |k|^{-2} |f_k|^2\right)^{1/9}\nonumber\\
	&=\|f\|_{\mathring{H}^{1/8}(\TL)}^{16/9}\|f\|_{\mathring{H}^{-1}(\TL)}^{2/9}.
	\end{align} 
	We combine this with \eqref{phinprop} and apply it to the zero mean function $(\rho_{\psi_n}-\rho_{\bar\psi_n})$ , obtaining
	\begin{align}
	\|\rho_{\bar\psi_n}-\rho_{\psi_n}\|_{L^2(\mathbb{T}^3_{L_n})}^2\lesssim \left(\frac{\|\rho_{\psi_n}\|^2_{\mathring{H}^{1/8}(\TLn)}+\|\rho_{\bar\psi_n}\|^2_{\mathring{H}^{1/8}(\mathbb{T}^3_{L_n})}}{L_n^{1/4}}\right)^{8/9}.
	\end{align}
	Applying Lemma \ref{lem:Hsrho} to $\psi_n$ and $\bar\psi_n$ (which are uniformly bounded in $H^1$ by Lemma \ref{minprop}) we  conclude that $(\rho_{\psi_n}-\rho_{\bar\psi_n})\xrightarrow{L^2}0$.
	
	As a consequence, since $\psi_n$ and $\bar\psi_n$ have the same phase, $\psi_n$ and $\bar\psi_n$ are arbitrarily close in $L^4$. Indeed,
	\begin{align}
	\|\psi_n-\bar\psi_n\|_{L^4(\mathbb{T}^3_{L_n})}^4= \int_{\mathbb{T}^3_{L_n}} \left||\psi_n|-|\bar\psi_n|\right|^4dx\leq\int_{\mathbb{T}^3_{L_n}} (\rho_{\psi_n}-\rho_{\bar\psi_n})^2 dx\xrightarrow{n\to \infty} 0. 
	\end{align}
	By the identification of $\mathbb{T}^3_{L_n}$ with $[-L_n/2,L_n/2]^3$, we finally get $\|\psi_n-\Emininf\|_{L^4(\Rtre)}\to 0$, if $\psi_n$ is set to be $0$ outside $[-L_n/2,L_n/2]^3$. Moreover, $\psi_n$ converges to $\Emininf$ in $L^p(\Rtre)$ for any $2\leq p <6$, since $\|\psi_n\|_2=1$, $\psi_n\xrightarrow{L^4}\Emininf$, $\|\Emininf\|_2=1$ and $\|\psi_n\|_p$ is uniformly bounded for any $2\leq p \leq 6$. 
	
	To show the second claim in \eqref{eq:sequenceclaims}, we need to show that the convergence actually holds in $H^1(\TLn)$, i.e., that $\|\psi_n-\Emininf_{L_n}\|_{H^1(\mathbb{T}^3_{L_n})}\to 0$. First, we show convergence in $H^1(B_R)$ for fixed $R$. Note that 
	\begin{align}
	\label{eq:convergenceofnorms}
	\left(\|\psi_n\|_{H^1(\TLn)}-\|\Emininf\|_{H^1(\Rtre)}\right)\to 0,
	\end{align} 
	since 
	\begin{align}
	|T_{L_n}(\psi_n)-T_{L_n}(\bar\psi_n)|&=|\ELn(\psi_n)+W_{L_n}(\psi_n)-\ELn(\bar\psi_n)+W_{L_n}(\bar\psi_n)|\nonumber\\
	&\leq|\ELn(\psi_n)-\ELn(\bar\psi_n)|+|W_{L_n}(\psi_n)-W_{L_n}(\bar\psi_n)|\to 0,
	\end{align}
	and $T_{L_n}(\bar\psi_n)=T(\bar\psi_n) \to T(\Emininf)$ by $H^1$ convergence. Moreover, given that $\psi_n$ is uniformly bounded in $H^1(B_R)$ and $\psi_n \to \Emininf$ in $L^2(B_R)$, we have $\psi_n\rightharpoonup \Emininf$ in $H^1(B_R)$ for any $R$ and this, together with \eqref{eq:convergenceofnorms} and weak lower semicontinuity of the norms, implies $\psi_n\to \Psi$ in $H^1(B_R)$ for any $R$. 
	Finally, for any $\varepsilon>0$ there exists $R=R(\varepsilon)$ such that $\|\Emininf\|_{H^1(B_R^c)}\leq \varepsilon$ and, using strong $H^1$-convergence on balls and again \eqref{eq:convergenceofnorms}, we obtain
	\begin{align}
	\|\psi_n-\Emininf_{L_n}\|_{H^1(\mathbb{T}^3_{L_n})}&\leq \|\psi_n-\Emininf\|_{H^1(B_R)}+\|\psi_n-\Emininf\|_{H^1([-L_n/2,L_n/2]^3\setminus B_R)}\nonumber\\
	&\leq \|\psi_n-\Emininf\|_{H^1(B_R)}+\|\psi_n\|_{H^1([-L_n/2,L_n/2]^3\setminus B_R)}+\|\Emininf\|_{H^1([-L_n/2,L_n/2]^3\setminus B_R)}\nonumber\\
	&\leq \|\psi_n-\Emininf\|_{H^1(B_R)}+2\varepsilon+o_n(1)\to 2\varepsilon, 
	\end{align}
 	which concludes the proof of the second claim in \eqref{eq:sequenceclaims}. 
	
	Finally, we show the third claim in \eqref{eq:sequenceclaims}. This simply follows from the previous bounds, which guarantee that $\ELn(\psi_n)\to \einf$ and $T_{L_n}(\psi_n)\to T(\Psi)$ and hence 
 	\begin{align}
 	\LagrML_{\psi_n}=T_{L_n}(\psi_n)-2W_{L_n}(\psi_n)=2\ELn(\psi_n)-T_{L_n}(\psi_n)\to 2\einf-T(\Psi)=\LagrMinf_{\Emininf}.
 	\end{align}
\end{proof}

We conclude this section with a simple corollary of Proposition \ref{prop:En&MinConv}.
\begin{cor}
	\label{cor:aperiodicity}
There exists $L^*$ such that for $L>L^*$ and any $\psi\in\MinLe$ we have $\psi\neq \psi^y$ for  $0\neq y\in \TL$.
\end{cor}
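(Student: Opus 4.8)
The plan is to argue by contradiction, leveraging Proposition \ref{prop:En&MinConv}: for large $L$, every $\psi\in\MinLe$ satisfies $\EL(\psi)-\eL=0<\delta_\varepsilon$, so (after a translation and a change of phase, which are irrelevant for the property $\psi=\psi^y$) it is $H^1$-close --- in particular $L^2$-close --- to the truncated full-space profile $\Emininf_L$. Since $\Emininf_L$ is strongly localized near the origin of $\TL$, invariance under a nontrivial translation will turn out to be incompatible with this.

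Concretely, I would fix $\varepsilon=\tfrac14$ and let $L_\varepsilon,\delta_\varepsilon$ be as in Proposition \ref{prop:En&MinConv}. Given $\psi\in\MinLe$ with $L>L_\varepsilon$, choose $\tilde\psi=e^{i\theta}\psi^z\in\Theta_L(\psi)$ with $\|\tilde\psi-\Emininf_L\|_{H^1(\TL)}\le\tfrac14$; then $\tilde\psi\in\MinLe$ as well, and if $\psi=\psi^y$ for some $0\neq y\in\TL$ then also $\tilde\psi=\tilde\psi^y$, hence $\tilde\psi=\tilde\psi^{ny}$ for every positive integer $n$ (here $f^w(\cdot):=f(\cdot-w)$ as in \eqref{eq:invariantsurfaceE}). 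Applying the translation unitary $f\mapsto f^{-ny}$, which preserves the $L^2$-norm and sends $\Emininf_L^{ny}$ to $\Emininf_L$, gives $\|\tilde\psi-\Emininf_L^{ny}\|_2=\|\tilde\psi^{-ny}-\Emininf_L\|_2=\|\tilde\psi-\Emininf_L\|_2\le\tfrac14$, so by the triangle inequality $\|\Emininf_L-\Emininf_L^{ny}\|_{L^2(\TL)}\le\tfrac12$ for every positive integer $n$.

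It then remains to combine two elementary observations. First, a torus-geometry fact: for every $0\neq y\in\TL$ there is a positive integer $n$ with $\dist_{\TL}(0,ny)\ge L/8$. Representing $y$ in $[-L/2,L/2]^3$ and writing $r=|y|$, this is clear with $n=1$ when $r\ge L/8$; when $r<L/8$ one takes $n=N$, the largest integer with $Ny\in[-L/2,L/2]^3$ (so that $\dist_{\TL}(0,Ny)=Nr$), and uses $N\ge L/(2r)-1$ to get $Nr\ge L/2-r\ge 3L/8$. Second, a localization estimate: if $z\in\TL$ with $\rho:=\dist_{\TL}(0,z)\ge L/8$, then splitting the integral defining $\langle\Emininf_L,\Emininf_L^z\rangle_{L^2(\TL)}$ at $\{\,\dist_{\TL}(0,x)\le\rho/2\,\}$ (where $\dist_{\TL}(x,z)\ge\rho/2$) and its complement, using that $\Emininf$ is radially decreasing together with Cauchy--Schwarz, yields $\langle\Emininf_L,\Emininf_L^z\rangle\le 2\,\|\Emininf\,\chi_{\{|\cdot|\ge\rho/2\}}\|_{L^2(\Rtre)}$; hence
\[
\|\Emininf_L-\Emininf_L^z\|_{L^2(\TL)}^2=2\|\Emininf_L\|_2^2-2\langle\Emininf_L,\Emininf_L^z\rangle\ge 2\|\Emininf_L\|_2^2-4\,\|\Emininf\,\chi_{\{|\cdot|\ge L/16\}}\|_{L^2(\Rtre)},
\]
which tends to $2$ as $L\to\infty$ since $\|\Emininf_L\|_2\to1$ and $\Emininf\in L^2(\Rtre)$; thus $\|\Emininf_L-\Emininf_L^z\|_2\ge1$ once $L$ exceeds some $L'$.

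Putting the pieces together, set $L^*:=\max\{L_\varepsilon,L'\}$. For $L>L^*$ and $\psi\in\MinLe$ with $\psi=\psi^y$, $0\neq y$, pick $n$ as above with $z=ny$, so $\dist_{\TL}(0,z)\ge L/8$; then $\|\Emininf_L-\Emininf_L^z\|_2\ge1$ contradicts $\|\Emininf_L-\Emininf_L^{ny}\|_2\le\tfrac12$, and no such $y$ can exist. I do not expect a serious obstacle here: the only points needing care are the bookkeeping of the translation and phase when passing from $\psi$ to $\tilde\psi$ (checking that $\psi=\psi^y$ is inherited by $\tilde\psi$) and the two elementary estimates above, the second of which uses nothing beyond $\Emininf\in L^2(\Rtre)$ being radially decreasing. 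The genuine work has already been carried out in Proposition \ref{prop:En&MinConv}.
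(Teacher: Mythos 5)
Your argument is correct and follows essentially the same strategy as the paper's proof: reduce via Proposition \ref{prop:En&MinConv} to a minimizer (or a representative in its $\Theta_L$-orbit) close to the truncated profile $\Emininf_L$, use the iteration $\psi=\psi^y\Rightarrow\psi=\psi^{ny}$ to pass to a translation of size comparable to $L$, and conclude from the fact that $\Emininf_L$ and a far translate of it are uniformly separated. The paper carries out the last step in $H^1$ and is terse about the quantitative lower bound on $\|\Emininf_L-\Emininf_L^y\|$; your version spells out the same localization estimate in $L^2$, which is equally adequate.
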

\begin{proof}
It is clearly sufficient to show the claim for $\psi \in \MinLe$ such that 
\begin{align}
\dist_{H^1}(\Theta_L(\psi),\Psi_L)=\|\psi-\Psi_L\|_{H^1(\TL)}
\end{align}
and for $y \in \TL$ such that $|y|\geq L/4$ (indeed, if the claim fails for some $y'$ such that $|y'|<L/4$ it also fails for some $y$ such that $|y|\geq L/4$). For any such $\psi$ and $y$, Proposition \ref{prop:En&MinConv} and the fact that $\Psi\neq \Psi^y$ for any $y\in \Rtre$ guarantee the existence of $L^*$ such that for any $L>L^*$ we have
\begin{align}
\|\psi-\psi^y\|_{H^1(\TL)}\geq \|\Psi_L^y-\Psi_L\|_{H^1(\TL)}-2\|\psi-\Psi_L\|_{H^1(\TL)}\geq C>0
\end{align}
and this completes the proof.
\end{proof}

\subsubsection{Study of the Hessian of $\EL$} \label{subsubsec:HessianEL} 
In this section we study the Hessian of $\EL$ at its minimizers, showing that it is strictly positive, universally, for $L$ big enough. Positivity is of course understood up to the trivial zero modes resulting from the symmetries of the problem (translations and changes of phase). This is obtained by comparing $\EL$ with $\mathcal{E}$ and exploiting Theorem \ref{h1coerc}. 

For any minimizer $\psi\in \MinLe$, the Hessian of $\EL$ at $\psi$ is defined by 
\begin{align}
\lim_{\varepsilon\to 0} \frac 1 {\varepsilon^2} \left(\EL\left(\frac{\psi+\varepsilon f}{\|\psi+\varepsilon f\|_2}\right)-\eL\right)=H^{\EL}_{\psi}(f)\quad \forall f\in H^1(\TL).
\end{align}
An explicit computation gives
\begin{align}
\label{Hessian}
H^{\EL}_{\psi}(f)=\expval{\LL_{\psi}}{\Im f}+\expval{Q_{\psi}(\LL_{\psi}-4\XL_{\psi}) Q_{\psi}}{\Re f}, 
\end{align}
with $Q_{\psi}=\unit- |\psi \rangle\langle\psi|$ and
\begin{align}
\LL_{\psi}:= - \Delta_{L} +V_{\sigma_{\psi}}-\LagrML_{\psi} \ , \quad \XL_{\psi}(x,y):=  \psi(x)\Tker \psi(y) \label{def:lpm}.
\end{align}
(We use the same notation for the operator $\XL_\psi$ and its integral kernel for simplicity.) 
We recall that $\LagrML_{\psi}=T_L(\psi)-2W_L(\psi)$ (see \eqref{eulag})  and that $V_{\sigma_{\psi}}= -2(-\Delta_L)^{-1} \rho_{\psi}$ (see \eqref{eq:Vsigmatorus})  and we note that $\LL_{\psi} \psi=0$ is exactly the Euler--Lagrange equation derived in Lemma \ref{minprop}.  

By minimality of $\psi$, we know that $\infspec L = \infspec Q (L-4X) Q =0$, since both operators are clearly nonnegative and $\psi$ is in the kernel of both of them. Moreover, $\ker \LL_{\psi}= \spn\{\psi\}$, since it is a Schr\"odinger operator of least (simple) eigenvalue $0$. The situation is more complicated for $Q_{\psi}(\LL_{\psi}-4\XL_{\psi}) Q_{\psi}$, whose kernel contains at least $\psi$ and $\partial_i \psi$ (by the translation invariance of the problem). Since both $\LL_{\psi}$ and $Q_{\psi}(\LL_{\psi}-4\XL_{\psi})Q_{\psi}$ have compact resolvents (they are given by bounded perturbations of $-\Delta_L$), they both have discrete spectrum. Our aim is two-fold: first we need to show that the kernel of $Q_{\psi}(\LL_{\psi}-4\XL_{\psi})Q_{\psi}$ is exactly spanned by $\psi$ and its partial derivatives, secondly we want to show that the spectral gap (above the trivial zero modes) of both operators is bounded by a universal positive constant. 

Before stating the main result of this section, we introduce the relevant full-space objects: let again $\Emininf$ be the unique positive and radial full-space minimizer of the Pekar functional~\eqref{eq:FullSpaceE} and, analogously to \eqref{def:lpm}, define
\begin{align}
L_{\Emininf}:= - \Delta_{\R^{3}} +V_{\sigma_{\Emininf}}-\LagrMinf_{\Emininf} \ , \quad X_{\Emininf}(x,y):= \Emininf(x)(-\Delta_{\Rtre})^{-1}(x,y)\Emininf(y).
\end{align}
We  introduce
\begin{align}
\label{eq:hinfty}
&h_{\infty}':=\inf_{f\in H_{\mathbb{R}}^1(\Rtre), \|f\|_2=1\atop f \in (\spn\{\Emininf\})^{\perp}} \expval{L_{\Emininf}}{f},\nonumber\\
&h_{\infty}'':=\inf_{f\in H_{\mathbb{R}}^1(\Rtre), \|f\|_2=1\atop f \in (\spn\{\Emininf, \partial_1 \Emininf, \partial_2 \Emininf, \partial_3 \Emininf\})^{\perp}} \expval{L_{\Emininf}-4X_{\Emininf}}{f}.
\end{align}  
We emphasize that the results contained in \cite{lenzmann2009uniqueness} imply that $\min \{h_{\infty}',h_{\infty}''\}>0$. Moreover, it is easy to see, using that $V_{\sigma_{\Emininf}}(x)\lesssim -|x|^{-1}$ for large $x$, that $L_{\Emininf}$ has infinitely many eigenvalues between $0$, its least and simple eigenvalue with eigenfunction given by $\Emininf$, and $-\LagrMinf_{\Emininf}$, the bottom of its continuous spectrum. Since furthermore $X_{\Emininf}$ is positive, this implies, in particular, that
\begin{align}
\label{eq:LowerBoundh2}
h_{\infty}'', h_{\infty}'< -\LagrMinf_{\Emininf},
\end{align}
which we shall use later. 

\begin{prop}
	\label{coercivity}
	For any $L>0$, we define
	\begin{align}
	\label{Lminusineq}
	&\quad\,h_L':=\inf_{\psi\in \MinLe}\inf_{f \in H_{\mathbb{R}}^1(\TL), \|f\|_2=1 \atop f \in(\spn\{\psi\})^{\perp}} \quad\expval{\LL_{\psi}}{f}, \\
	\label{Lplusineq}
	&h_L'':=\inf_{\psi\in\MinLe}\inf_{f \in H^1_{\mathbb{R}}(\TL), \|f\|_2=1 \atop f \in (\spn\{\psi, \partial_1 \psi, \partial_2 \psi, \partial_3 \psi\})^{\perp}} \expval{\LL_{\psi}-4\XL_{\psi}}{f}.
	\end{align}
	Then 
	\begin{align}
	\label{eq:claimliminf}
	&\liminf_{L\to \infty} h_L'\geq h_{\infty}', \quad\liminf_{L\to \infty} h_L''\geq h_{\infty}''.
	\end{align}
\end{prop}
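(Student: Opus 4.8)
The plan is to prove \eqref{eq:claimliminf} by a contradiction/compactness argument that mirrors the proof of Proposition \ref{prop:En&MinConv}, transplanting a near-optimal test function for the torus problem into full space via the IMS localization already developed there. I will argue the two statements separately but with the same scheme; let me describe the argument for $h_L''$, the case of $h_L'$ being strictly simpler (no derivative zero modes, and $\XL_\psi$ absent).

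\medskip

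\textbf{Step 1: Extract a bad sequence.} Suppose $\liminf_{L\to\infty} h_L''< h_\infty''$. Then there is a sequence $L_n\to\infty$, minimizers $\psi_n\in \mathcal{M}^{\mathcal{E}}_{L_n}$, and $L^2$-normalized real $f_n\in H^1(\TLn)$ with $f_n\perp \spn\{\psi_n,\partial_1\psi_n,\partial_2\psi_n,\partial_3\psi_n\}$ such that $\limsup_n \expval{L^{L_n}_{\psi_n}-4X^{L_n}_{\psi_n}}{f_n}< h_\infty''-c$ for some $c>0$. By Proposition \ref{prop:En&MinConv} (applied with $\EL(\psi_n)-\eL=0$), after translating and changing phase we may assume $\psi_n\to\Emininf$ in $H^1(\Rtre)$ and $\LagrMLn_{\psi_n}\to\LagrMinf_{\Emininf}$; by Lemma \ref{minprop} the $\psi_n$ are also uniformly bounded in $H^2$ and $L^\infty$, so $V_{\sigma_{\psi_n}}\to V_{\sigma_\Emininf}$ (e.g.\ in $L^\infty_{loc}$ and, using Lemma \ref{kernelbounds} plus the decay of $\Emininf$, in a way that controls the potential energy tails uniformly). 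Also $\expval{L^{L_n}_{\psi_n}-4X^{L_n}_{\psi_n}}{f_n}$ is bounded below by $T_{L_n}(f_n)-\const$ via the same potential bounds as in \eqref{Tbounds}, so $f_n$ is bounded in $H^1$; moreover the quadratic form value is bounded above, so combining gives a uniform $H^1$ bound on $f_n$.

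\medskip

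\textbf{Step 2: Localize $f_n$ to a ball.} Exactly as in \eqref{Test}--\eqref{West} (IMS with cutoffs $\eta^y_{R}$, $R=L_n/4$), but now applied to the quadratic form $q_n(g):=T_{L_n}(g)+\expval{V_{\sigma_{\psi_n}}-\LagrMLn_{\psi_n}}{g}-4\langle g,X^{L_n}_{\psi_n}g\rangle$ in place of $\EL$: the kinetic term localizes up to $C/R^2$, the multiplication part is exactly additive, and the term $\langle g,X^{L_n}_{\psi_n}g\rangle=\langle g\psi_n,(-\Delta_{L_n})^{-1}(g\psi_n)\rangle$ is handled by a completion-of-the-square identity identical to \eqref{West} (with $|\psi_R^y|^2$ replaced by the real signed density $\Re(\bar\psi_n\, g \eta_R^y)$-type object), plus the $\mathring H^{-1}$ control of localization errors via Lemmas \ref{kernelbounds} and \ref{lem:Hsrho} and the interpolation \eqref{l2tofractsob}. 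This produces $\bar f_n$, supported in $B_{L_n/4}$, $L^2$-normalized, with $q_n(\bar f_n)\le q_n(f_n)+C/L_n^2$ and $\|\bar f_n-f_n\|_{L^2(\TLn)}\to0$ (through $\mathring H^{-1}$ closeness and the interpolation, as in the derivation after \eqref{phinprop}). Since $\bar f_n$ lives in $B_{L_n/4}\subset\Rtre$ and $\dist_{\TLn}=|\cdot-\cdot|$ there, and using $\psi_n\to\Emininf$, one gets $q_n(\bar f_n)=\langle \bar f_n,(L_\Emininf-4X_\Emininf)\bar f_n\rangle_{\Rtre}+o(1)$: the kinetic and number-type terms are exact, the potential difference is $o(1)$ by the $L^\infty$ convergence of $V_{\sigma_{\psi_n}}$ on $B_{L_n/4}$ together with uniform tail control (here the uniform $H^2/L^\infty$ bounds and decay of $\Emininf$ matter), and the $X$-term converges because $\psi_n\to\Emininf$ in $L^4$ and $(-\Delta)^{-1}$ approximations converge by Lemma \ref{kernelbounds}.

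\medskip

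\textbf{Step 3: Fix the orthogonality and conclude.} The function $\bar f_n$ is only \emph{approximately} orthogonal to $\{\Emininf,\partial_j\Emininf\}$: since $f_n\perp\{\psi_n,\partial_j\psi_n\}$ exactly and $\psi_n\to\Emininf$ in $H^1$ (hence $\partial_j\psi_n\to\partial_j\Emininf$ weakly in $L^2$, and strongly on balls by the $H^1$-on-balls convergence in Proposition \ref{prop:En&MinConv}), while $\|\bar f_n-f_n\|_{L^2}\to0$, the projections of $\bar f_n$ onto $\Emininf,\partial_j\Emininf$ tend to $0$. Let $\tilde f_n$ be $\bar f_n$ minus those projections, renormalized; then $\tilde f_n$ is admissible for $h_\infty''$ and $\langle\tilde f_n,(L_\Emininf-4X_\Emininf)\tilde f_n\rangle_{\Rtre}=q_n(\bar f_n)+o(1)$ because the correction vectors have $L^2$-norm $o(1)$ and $L_\Emininf-4X_\Emininf$ is form-bounded relative to $-\Delta_\Rtre+1$ while $\tilde f_n$ is $H^1$-bounded — so the cross terms are $o(1)$ (one needs that $\partial_j\Emininf\in H^1$, which holds by smoothness and decay of $\Emininf$). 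Hence $h_\infty''\le \liminf_n q_n(\bar f_n)\le \liminf_n q_n(f_n)\le h_\infty''-c$, a contradiction. The same scheme with $\spn\{\psi_n\}$, $L^{L_n}_{\psi_n}$, and $q_n(g)=T_{L_n}(g)+\expval{V_{\sigma_{\psi_n}}-\LagrMLn_{\psi_n}}{g}$ (no $X$-term, hence no completion-of-square subtlety) gives $\liminf_L h_L'\ge h_\infty'$.

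\medskip

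The main obstacle I expect is \textbf{Step 2}: controlling the nonlocal term $\langle g,X^{L_n}_{\psi_n}g\rangle$ under IMS localization and under the torus-to-$\Rtre$ replacement. Unlike the density $|\psi_R^y|^2$ in \eqref{West}, here the relevant object $\bar\psi_n g$ is complex and sign-changing, so the completion-of-square identity must be set up carefully, and the localization error is an $\mathring H^{-1}$-norm of a difference of such products, which requires the $\mathring H^{1/8}$-type bound of Lemma \ref{lem:Hsrho} (or a variant for products $\bar\psi_n g$ rather than $|\psi|^2$) together with the interpolation inequality \eqref{l2tofractsob}; one must check these bounds survive for the mixed product, using the uniform $L^\infty$ bound on $\psi_n$ from Lemma \ref{minprop} and the $H^1$-bound on $f_n$. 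A secondary technical point is the uniform control of potential-energy tails when passing from $V_{\sigma_{\psi_n}}$ on $\TLn$ to $V_{\sigma_\Emininf}$ on $\Rtre$, for which Lemma \ref{kernelbounds} and the pointwise decay $V_{\sigma_\Emininf}(x)\sim -|x|^{-1}$ (already noted in the text) are the key inputs.
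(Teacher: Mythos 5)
Your global strategy --- translate so that $\psi_n\to\Emininf$ via Proposition \ref{prop:En\&MinConv}, localize the near-optimal test function, transplant to $\Rtre$, correct the orthogonality constraints, and conclude --- is the same as the paper's, and your Steps~1 and~3 essentially reproduce the paper's argument (including the treatment of the finite-rank orthogonality correction via the form-boundedness of $L_\Emininf-4X_\Emininf$). The gap is where you flag it, in Step~2, but it is more fundamental than a sign-bookkeeping issue to be ``set up carefully.'' The completion-of-square identity \eqref{West} depends on the fact that $W_L$ is \emph{quartic} in $\psi$: the density $|\psi|^2$ decomposes as $|\psi|^2=\int|\psi_R^y|^2\,\|\psi\eta_R^y\|_2^2\,dy$, and the crucial feature is that the weight $\|\psi\eta_R^y\|_2^2\,dy$ is a probability measure in $y$ \emph{alone}, which is what permits the polarization step producing a non-negative remainder $\||\psi_R^y|^2-|\psi|^2\|_{\mathring H^{-1}}^2$. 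For the \emph{quadratic} form $\langle g,X^{L_n}_{\psi_n}g\rangle=\|g\psi_n\|_{\mathring H^{-1}}^2$ the corresponding decomposition would require $g\psi_n=\int (g_R^y\psi_n)\,w(y)\,dy$ with a scalar weight $w(y)$. But $g\psi_n|\eta_R^y|^2=(g_R^y\psi_n)\,\eta_R^y(x)\|g\eta_R^y\|_2$ and the factor $\eta_R^y(x)\|g\eta_R^y\|_2$ is genuinely $x$-dependent, so no such decomposition exists and the completion-of-square identity fails. Replacing $|\psi_R^y|^2$ by a ``$\Re(\bar\psi_n g\eta_R^y)$-type object'' does not repair this.

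The paper avoids the obstruction entirely by using a \emph{fixed} two-piece partition of unity $(\eta^1_n,\eta^2_n)$ at scale $R\sim L_n$ centered at the concentration point of $\psi_n$, instead of averaging over moving cutoffs. For a fixed partition and a nonlocal operator, the correct IMS error is the double commutator, cf.~\eqref{llo}: the $X$-term produces the operators $[\eta^i_n,[\eta^i_n,X^{L_n}_{\psi_n}]]$ with integral kernel $\psi_n(x)(-\Delta_{L_n})^{-1}(x,y)\psi_n(y)(\eta^i_n(x)-\eta^i_n(y))^2$, and the paper shows via Lemma \ref{kernelbounds} and the uniform $L^\infty$ bound from Lemma \ref{minprop} that the Hilbert--Schmidt norm of this kernel is $o_n(1)$. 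This replaces your proposed Lemma-\ref{lem:Hsrho}-plus-\eqref{l2tofractsob} machinery. Moreover, the paper never needs $\|\bar f_n-f_n\|_{L^2}\to 0$: since the two pieces $\eta^1_n g_n$ and $\eta^2_n g_n$ form a convex combination, it suffices to bound the \emph{outer} piece from below (via Lemma \ref{lochess}, which gives $\langle L^{L_n}_{\psi_n}-4X^{L_n}_{\psi_n}\rangle\geq -\LagrMinf_\Emininf-\varepsilon>h_\infty''$ on functions supported far from the origin by \eqref{eq:LowerBoundh2}) and to convert the \emph{inner} piece $g_n^1$ into a full-space competitor. If you want to keep your moving-cutoff framing, you would have to replace the completion-of-square by a double-commutator estimate of this kind, and make explicit the case distinction on whether the optimal center $\bar y_n$ falls near the concentration region of $\psi_n$; that distinction is what Lemma \ref{lochess} accomplishes in the paper's two-piece version.
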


It is not difficult to show that 
\begin{align}
\label{eq:ConvergenceToFullSpace}
&\limsup_{L\to \infty} h_L'\leq h_{\infty}', \quad\limsup_{L\to \infty} h_L''\leq h_{\infty}'',
\end{align}
simply by considering localizations of the full-space optimizers and using Proposition \ref{prop:En&MinConv}. Hence there is actually equality in \eqref{eq:claimliminf}. 

To prove Proposition \ref{coercivity} we need the following two Lemmas.

\begin{lem}
	\label{Xprop}
	For $\psi\in \MinLe$, the operator $Y^L_{\psi}$ with integral kernel $Y^L_{\psi}(x,y):= \Tker\psi(y)$ is universally bounded from $L^2(\TL)$ to $L^{\infty}(\TL)$. This in particular implies that the operators $\XL_{\psi}$, defined in \eqref{def:lpm}, are universally bounded from $L^2(\TL)$ to $L^2(\TL)$. 
\end{lem}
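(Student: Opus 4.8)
The plan is to prove the claimed $L^2\to L^\infty$ bound for $Y^L_\psi$ by splitting the integral kernel $\Tker$ according to Lemma~\ref{kernelbounds} into a singular Coulomb-type piece and a uniformly bounded remainder, and then to estimate the Coulomb piece via a standard Hölder/Young-type argument using the universal $L^\infty$ and $H^1$ bounds on minimizers already established in Lemma~\ref{minprop}. First I would recall that for $\psi\in\MinLe$ we have a universal bound $\|\psi\|_{L^\infty(\TL)}\le C$ (and $\|\psi\|_{H^1}\le C$) for $L\ge L_0$. Then, writing $\Tker=(4\pi)^{-1}(\dist_{\TL}(x,y))^{-1}+r_L(x,y)$ with $\|r_L\|_{L^\infty(\TL\times\TL)}\le C/L\le C/L_0$ by Lemma~\ref{kernelbounds}, I would estimate, for any $g\in L^2(\TL)$ and any $x\in\TL$,
\begin{align}
\left|(Y^L_\psi g)(x)\right|
&\le \frac{1}{4\pi}\int_{\TL}\frac{|\psi(y)|\,|g(y)|}{\dist_{\TL}(x,y)}\,dy
+\frac{C}{L_0}\int_{\TL}|\psi(y)|\,|g(y)|\,dy\nonumber\\
&\le \frac{\|\psi\|_{L^\infty}}{4\pi}\left\|\frac{1}{\dist_{\TL}(x,\,\cdot\,)}\right\|_{L^2(\TL)}\|g\|_{L^2(\TL)}
+\frac{C}{L_0}\|\psi\|_{L^2(\TL)}\|g\|_{L^2(\TL)}.
\end{align}
The point is that $1/\dist_{\TL}(x,\,\cdot\,)$ lies in $L^2(\TL)$ uniformly in $x$ for $L\ge L_0$: on the box $[-L/2,L/2]^3$ we have $\dist_{\TL}(x,y)\gtrsim|x-y|$ near the diagonal up to a bounded factor, so $\int_{\TL}\dist_{\TL}(x,y)^{-2}\,dy\lesssim \int_{B_{\sqrt3 L}}|z|^{-2}\,dz\lesssim L$, giving $\|\dist_{\TL}(x,\,\cdot\,)^{-1}\|_{L^2(\TL)}\lesssim L^{1/2}$; but in fact for the bound to be \emph{universal} we should instead split $\dist_{\TL}(x,\,\cdot\,)^{-1}$ into its $L^2$-part supported near the diagonal (which is bounded uniformly in $L$, since the singularity is integrable in $L^2$ in three dimensions) and a bounded remainder, exactly as was done repeatedly in the proofs of Lemmas~\ref{existence} and~\ref{minprop}. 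Combining these estimates yields $\|Y^L_\psi g\|_{L^\infty(\TL)}\lesssim \|g\|_{L^2(\TL)}$ with a constant depending only on $L_0$ and the universal constants of Lemma~\ref{minprop}, which is the first claim.

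For the second claim, I would observe that $\XL_\psi(x,y)=\psi(x)\,Y^L_\psi(x,y)$, so that for $g\in L^2(\TL)$,
\begin{align}
\|\XL_\psi g\|_{L^2(\TL)}=\|\psi\cdot(Y^L_\psi g)\|_{L^2(\TL)}
\le \|\psi\|_{L^2(\TL)}\,\|Y^L_\psi g\|_{L^\infty(\TL)}
\lesssim \|g\|_{L^2(\TL)},
\end{align}
again with a universal constant, using $\|\psi\|_{L^2}=1$. This gives the universal $L^2\to L^2$ boundedness of $\XL_\psi$.

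The main obstacle is purely the \emph{uniformity in $L$} of the estimate: a naive bound on $\|\dist_{\TL}(x,\,\cdot\,)^{-1}\|_{L^2(\TL)}$ grows like $L^{1/2}$, so one must be careful to use the near-diagonal/far-diagonal splitting (the part that actually matters for the $L^\infty$ output is the local one, whose $L^2(\TL)$-norm is bounded independently of $L$ for $L\ge L_0$) rather than estimating the full kernel at once. Once that splitting is in place, together with the universal $L^\infty$-bound on minimizers from Lemma~\ref{minprop} and Lemma~\ref{kernelbounds} for the remainder, the argument is routine and short.
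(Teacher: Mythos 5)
Your proposal is correct and follows essentially the same route as the paper's proof: both apply Lemma~\ref{kernelbounds} to split $\Tker$ into the Coulomb part plus a uniformly bounded remainder, both invoke the universal $L^\infty$-bound on minimizers from Lemma~\ref{minprop}, and both treat the Coulomb part by a near-/far-diagonal splitting of $\dist_{\TL}(x,\cdot)^{-1}$ (the paper restricts the singular integral to $B_1(x)$ and absorbs the rest into $\|f\|_2$ via the normalization of $\psi$). Your explicit derivation of the $\XL_\psi$ bound from $\XL_\psi g=\psi\cdot(Y^L_\psi g)$ and $\|\psi\|_2=1$ is exactly the intended (unwritten) step in the paper.
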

\begin{proof} 
	Using Lemma \ref{kernelbounds} and the normalization of $\psi$, we have
	\begin{align}
	|Y^L_{\psi}(f)(x)|&=\left|\int_{\TL} \Tker \psi(y) f(y) dy\right|\lesssim \|f\|_2+\int_{\TL} \frac {|\psi(y)f(y)|}{4\pi \dist_{\TL}(x,y)} dy\nonumber\\
	&\lesssim \|f\|_2+\int_{B_{1}(x)}\frac {|\psi(y)f(y)|}{\dist_{\TL}(x,y)} dy\leq (1+C\|\psi\|_{\infty})\|f\|_2\lesssim \|f\|_2.
	\end{align}
	To conclude, we also made use of the fact that the minimizers are universally bounded in  $L^{\infty}$ by Lemma \ref{minprop}. 
\end{proof}

Recall the definition of $\Emininf_L$ in \eqref{eq:PsiL}. 

\begin{lem}
	\label{lochess}
	For any $\varepsilon>0$, there exists $R'_{\varepsilon}$ and $L'_{\varepsilon}$ (with $R_{\varepsilon}'\leq L_{\varepsilon}'/2$) such that for any $L>L'_{\varepsilon}$, any normalized $f$ in $L^2(\TL)$ supported on $B_{R'_{\varepsilon}}^c:=[-L/2,L/2]^3\setminus B_{R'_{\varepsilon}}$, and any $\psi\in \MinLe$ such that 
	\begin{align}
	\|\psi-\Emininf_L\|_{H^1(\TL)}=\dist_{H^1}(\Theta_L(\psi),\Emininf_L) 
	\end{align}
	we have
	\begin{align}
	\expval{\LL_{\psi}-4\XL_{\psi}}{f}\geq -\LagrMinf_{\Emininf}-\varepsilon.
	\end{align} 
\end{lem}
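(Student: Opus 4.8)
\textbf{Proof strategy for Lemma \ref{lochess}.} The plan is to show that for $f$ supported far from the origin, the quadratic form $\expval{\LL_\psi - 4\XL_\psi}{f}$ is close to $\expval{-\Delta_L + V_{\sigma_\psi} - \LagrML_\psi}{f}$, because the $\XL_\psi$ term is negligible on such $f$, and that the remaining Schrödinger part is at least $-\LagrMinf_{\Emininf} - \varepsilon$. The key inputs are: (i) Proposition \ref{prop:En&MinConv}, which tells us that $\psi$ (suitably translated so that it realizes the distance to $\Emininf_L$) is $H^1$-close — hence $L^p$-close for $2\le p\le 6$ — to $\Emininf_L$, and that $\LagrML_\psi$ is close to $\LagrMinf_{\Emininf}$; (ii) Lemma \ref{Xprop}, giving universal $L^2\to L^\infty$ bounds on $Y^L_\psi$; and (iii) the decay of the full-space objects $\Emininf$ and $V_{\sigma_{\Emininf}}$.

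\medskip

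First I would bound the $\XL_\psi$ term. Writing $\XL_\psi(x,y) = \psi(x) \Tker \psi(y)$, we have $\expval{\XL_\psi}{f} = \int_{\TL} \overline{f(x)}\psi(x)\, (Y^L_\psi f)(x)\, dx$ where $Y^L_\psi f$ is the function from Lemma \ref{Xprop}, universally bounded in $L^\infty$ by $\|f\|_2 = 1$. Hence $|\expval{\XL_\psi}{f}| \lesssim \|\psi \chi_{B^c_{R'_\varepsilon}}\|_{L^2}$, using that $f$ is supported on $B^c_{R'_\varepsilon}$ and Cauchy--Schwarz. Since $\psi$ is $L^2$-close to $\Emininf_L$ by Proposition \ref{prop:En&MinConv} and $\Emininf$ decays at infinity, choosing $R'_\varepsilon$ large and $L'_\varepsilon$ large makes $\|\psi\chi_{B^c_{R'_\varepsilon}}\|_{L^2}$ as small as we like; so $|\expval{4\XL_\psi}{f}| \le \varepsilon/3$.

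\medskip

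Next I would handle the potential term $\expval{V_{\sigma_\psi}}{f}$. Here $V_{\sigma_\psi} = -2(-\Delta_L)^{-1}\rho_\psi$, and I would show $\|V_{\sigma_\psi}\chi_{B^c_{R'_\varepsilon}}\|_{L^\infty}$ is small. Using Lemma \ref{kernelbounds} to replace $\Tker$ by $(4\pi\dist_{\TL})^{-1}$ up to $O(1/L)$, and splitting the integral $\int \dist_{\TL}(x,\cdot)^{-1}\rho_\psi$ over $|y|\le R'_\varepsilon/2$ (where $\rho_\psi$ has nearly full mass and $\dist_{\TL}(x,y)\gtrsim R'_\varepsilon$ for $x\in B^c_{R'_\varepsilon}$... — careful, this requires $x$ to actually be far from \emph{all} of $B_{R'_\varepsilon/2}$, which fails near the boundary of the box). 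A cleaner route: compare $V_{\sigma_\psi}$ with the full-space $V_{\sigma_{\Emininf}}$ on a fixed large ball $B_{R_0}$, using $L^2$-convergence of $\rho_\psi \to \rho_{\Emininf}$ and Lemma \ref{kernelbounds}; since $V_{\sigma_{\Emininf}}(x)\to 0$ as $|x|\to\infty$, pick $R_0$ so that $|V_{\sigma_{\Emininf}}|\le \varepsilon/12$ outside $B_{R_0}$, then take $R'_\varepsilon \ge$ (some multiple of) $R_0$ and $L'_\varepsilon$ large so that $\|(V_{\sigma_\psi} - V_{\sigma_{\Emininf}})\chi_{B_{R_0}\setminus B_{R'_\varepsilon}}\|_\infty$ is controlled — actually one needs a uniform decay bound on $V_{\sigma_\psi}$ itself at scale $R'_\varepsilon$, which follows from the same split-into-$L^{3/2}$-and-$L^\infty$-parts trick as in Lemma \ref{minprop} once $\rho_\psi$ is known to concentrate in $B_{R_0}$ with mass $1-o(1)$. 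The upshot: $|\expval{V_{\sigma_\psi}}{f}| \le \varepsilon/3$ for $f$ supported on $B^c_{R'_\varepsilon}$. Finally $\expval{-\Delta_L}{f}\ge 0$ and $-\LagrML_\psi \ge -\LagrMinf_{\Emininf} - \varepsilon/3$ by Proposition \ref{prop:En&MinConv}. Adding the three pieces gives $\expval{\LL_\psi - 4\XL_\psi}{f} \ge 0 - \varepsilon/3 - \LagrMinf_{\Emininf} - \varepsilon/3 - \varepsilon/3 = -\LagrMinf_{\Emininf} - \varepsilon$, after renaming $\varepsilon$.

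\medskip

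\textbf{Main obstacle.} The delicate point is the control of $\|V_{\sigma_\psi}\chi_{B^c_{R'_\varepsilon}}\|_{L^\infty}$: unlike in the full-space case, $\dist_{\TL}(x,y)^{-1}$ does not decay, and the torus wraps around, so "$x$ far from $0$" does not mean "$x$ far from the support of $\rho_\psi$". The resolution is to first use Proposition \ref{prop:En&MinConv} to localize the mass of $\rho_\psi$ inside a \emph{fixed} ball $B_{R_0}$ (independent of $L$) up to an error $o_L(1)$, so that outside a slightly larger ball the potential generated by this bulk is small, uniformly in $L$; the tail mass of $\rho_\psi$ outside $B_{R_0}$, being $o_L(1)$ in $L^2$ (equivalently $L^1$), contributes at most $o_L(1)$ to $V_{\sigma_\psi}$ in $L^\infty$ via Young/Hölder with the $L^{3/2}+L^\infty$ split of the kernel. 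One must choose the scales in the order $\varepsilon \rightsquigarrow R_0 \rightsquigarrow R'_\varepsilon \rightsquigarrow L'_\varepsilon$ and check $R'_\varepsilon \le L'_\varepsilon / 2$ holds, which is automatic since $L'_\varepsilon$ is taken last and as large as needed.
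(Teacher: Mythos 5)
Your proposal is correct and in substance takes the same route as the paper: drop $T_L(f)\geq 0$, control $-\LagrML_{\psi}$ via Proposition~\ref{prop:En\&MinConv}, bound $\expval{\XL_{\psi}}{f}$ by $\|\chi_{B^c_{R'_{\varepsilon}}}\psi\|_2$ via Lemma~\ref{Xprop}, and show $\|V_{\sigma_{\psi}}\|_{L^{\infty}(B^c_{R'_{\varepsilon}})}$ is small using Lemma~\ref{kernelbounds}, the $L^{3/2}+L^{\infty}$ split of the kernel, and the $H^1$-closeness of $\psi$ to $\Emininf_L$ together with the decay of $\Emininf$.

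The one place where you deviate is in your handling of the potential term, and I would flag that your ``main obstacle'' is in fact illusory. You worry that on the torus, $x\in B^c_{R'_{\varepsilon}}$ need not be $W_T$-metrically far from $B_{R'_{\varepsilon}/2}$ because the torus wraps around near the boundary of the box. But on the fundamental domain $[-L/2,L/2]^3$ one has $\dist_{\TL}(x,0)=|x|$ exactly (this holds coordinate-wise), so by the triangle inequality on $\TL$, $\dist_{\TL}(x,y)\geq |x|-|y|>R'_{\varepsilon}/2$ whenever $|x|>R'_{\varepsilon}$ and $|y|<R'_{\varepsilon}/2$; the boundary of the box is exactly as far from the origin in the torus metric as in the Euclidean one. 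Consequently the direct argument works with no detour. The paper exploits exactly this: it splits the convolution integral over $\{y: \dist_{\TL}(x,y)<t\}$ and its complement, noting that the near region is contained in $B^c_{R'_{\varepsilon}-t}$, yielding
\begin{align*}
|V_{\sigma_{\psi}}(x)|\leq \frac{C}{L}+Ct\,\|\psi\|_{L^6(B^c_{R'_{\varepsilon}-t})}^2+\frac{1}{t},
\end{align*}
which is then small by choosing $t$, then $R'_{\varepsilon}$, then $L'_{\varepsilon}$ in that order. Your alternative --- splitting $\rho_{\psi}$ into its bulk on a fixed ball $B_{R_0}$ and its tail, and bounding the two pieces separately --- is correct and rests on the same ingredients (Proposition~\ref{prop:En\&MinConv}, decay of $\Emininf$, the $L^{3/2}+L^{\infty}$ kernel split), but it is a slightly more elaborate decomposition invented to dodge a difficulty that does not exist; the paper's version is the direct one. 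The final bookkeeping ($\varepsilon/3+\varepsilon/3+\varepsilon/3$ versus $\varepsilon/2+\varepsilon/2$) is an inessential difference.
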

\begin{proof}
	By definition of $\LL_{\psi}$ and $\XL_{\psi}$, we have 
	\begin{align}
	\expval{\LL_{\psi}-4\XL_{\psi}}{f}&=T_L(f)-\LagrML_{\psi} +\expval{V_{\sigma_{\psi}}}{f}-4\expval{\XL_{\psi}}{f}\nonumber\\
	&\geq -\LagrML_{\psi}+\expval{V_{\sigma_{\psi}}}{f}-4\expval{\XL_{\psi}}{f}.
	\end{align}
	By Proposition \ref{prop:En&MinConv}, taking $L_{\varepsilon}'$ sufficiently large guarantees that
	\begin{align}	
	|\LagrML_{\psi}-\LagrMinf_{\Emininf}|\leq \varepsilon/2.
	\end{align}  
	Thus we only need to show that $\expval{V_{\sigma_{\psi}}}{f}$ and $\expval{\XL_{\psi}}{f}$ can be made arbitrary small by taking $L_{\varepsilon}'$ and $R_{\varepsilon}'$ sufficiently large. Since $f$ is normalized and supported on $B_{R'_{\varepsilon}}^c$, 
	\begin{align}
	|\expval{V_{\sigma_{\psi}}}{f}|\leq \|V_{\sigma_{\psi}}\|_{L^{\infty}\left(B_{R'_{\varepsilon}}^c\right)}.
	\end{align}
	Moreover, using Lemma \ref{kernelbounds}, splitting the integral over $B_t(x)$ and $B_t^c(x)$ (for some $t>0$), and assuming $x\in B_{R'_{\varepsilon}}^c$, we find
	\begin{align}
	|V_{\sigma_{\psi}}(x)|\leq \frac C L +C\int_{\TL} \frac {|\psi(y)|^2}{\dist_{\TL}(x,y)} dy\leq \frac C L+C t \|\psi\|^2_{L^6\left(B_{R'_{\varepsilon}-t}^c\right)} +1/t.
	\end{align} 
	On the other hand, by Lemma \ref{Xprop},
	\begin{align}
	|\expval{\XL_{\psi}}{f}|\leq C\|f\|_{2}\int_{\TL} \psi(y)|f(y)|dy\leq C \|\chi_{B_{R'_{\varepsilon}}^c} \psi\|_2.
	\end{align}
	Therefore, by applying Proposition \ref{prop:En&MinConv}, we can conclude that there exists $L'_{\varepsilon}$ and $R'_{\varepsilon}$ such that, for any $L>L'_{\varepsilon}$ and any $L^2$-normalized $f$ supported on $B_{R'_{\varepsilon}}^c$, we have
	\begin{align}
	\expval{V_{\sigma_{\psi}}}{f}-4\expval{\XL_{\psi}}{f}\geq -\varepsilon/2,
	\end{align}
	which concludes our proof. 
\end{proof}

\begin{proof}[Proof of Proposition \ref{coercivity}]
	We only show the second inequality in \eqref{eq:claimliminf}, as its proof can easily be modified to also show the first. Moreover, we observe that the second inequality in \eqref{eq:claimliminf} is equivalent to the statement that for any sequence $\psi_n\in \mathcal{M}_{L_n}$ with $L_n\to \infty$,
	\begin{align}
	\liminf_{n} \inf_{f \in H^1(\mathbb{T}^3_{L_n}), \|f\|_2=1 \atop f \in \spn\{\psi_n, \partial_1 \psi_n, \partial_2 \psi_n, \partial_3 \psi_n\}^{\perp}} \expval{\LLn_{\psi_n}-4\XLn_{\psi_n}}{f}\geq h_{\infty}'',
	\end{align}
	which we shall prove in the following.
	 
	We consider $\psi_n\in \mathcal{M}_{L_n}$, $L_n\to \infty$, and define
	\begin{align}
	h_n:=\inf_{f \in H^1(\TLn), \|f\|_2=1 \atop f \in \spn\{\psi_n, \partial_1 \psi_n, \partial_2 \psi_n, \partial_3 \psi_n\}^{\perp}} \expval{\LLn_{\psi_n}-4\XLn_{\psi_n}}{f}.
	\end{align}
	By translation invariance of $\mathcal{E}_{L_n}$ and by Proposition \ref{prop:En&MinConv}, we can also restrict to sequences $\psi_n$ converging to $\Emininf$ in $L^2(\Rtre)$ and such that $\|\psi_n-\Emininf_{L_n}\|_{H^1\left(\TLn\right)}\to 0$, where $\Emininf_{L_n}$ is defined  in \eqref{eq:PsiL}.
	
	Let now $g_n$ be a normalized function in $L^2(\TLn)$, orthogonal to $\psi_n$ and its partial derivatives, realizing $h_n$ (which exists by compactness, and can be taken to be a real-valued function). We define the following partition of unity $0\leq\eta^1_R,\eta^2_R\leq1$, with $\eta^i_R\in C^{\infty}(\Rtre)$, $\eta^i_R(x)=\eta_i(x/R)$ and
	\begin{align}
	\eta_1(x)=
	\begin{cases}
	&1 \quad x\in B_1,\\
	&0 \quad x\in B_{2}^c
	\end{cases}
	\quad \quad \quad
	\eta_2=\sqrt{1-|\eta_1|^2}.
	\end{align}
	We define $\eta^i_n:=\eta^i_{L_n/8}$ and
	\begin{align}
	g_n^i:=\eta^i_ng_n/\|\eta^i_ng_n\|_2.
	\end{align}
	Standard properties of IMS localization imply that 
	\begin{align}
	h_n & = \expval{\LLn_{\psi_n}-4\XLn_{\psi_n}}{g_n}\notag\\
	&=\sum_{i=1,2} \|\eta^i_ng_n\|_2^2\expval{\LLn_{\psi_n}-4\XLn_{\psi_n}}{g_n^i} \notag\\ & \quad -\sum_{i=1,2} \Big(\expval{|\nabla \eta^i_n|^2}{g_n}+2\expval{[\eta^i_n,[\eta^i_n,\XLn_{\psi_n}]]}{g_n}\Big). \label{llo}
	\end{align}
	Clearly, the first summand in the second sum is of order $O(L_n^{-2})$, by the scaling of $\eta^i_n$. For the second summand, we observe that
	\begin{align}
	[\eta^i_n,[\eta^i_n,\XLn_{\psi_n}]](x,y)=\psi_n(x)(-\Delta_{L_n})^{-1}(x,y)\psi_n(y)\left(\eta^i_n(x)-\eta^i_n(y)\right)^2,
	\end{align}
	and proceed to bound the Hilbert-Schmidt norm of both operators ($i=1,2$), which will then bound the last line of \eqref{llo}. We make use of Lemma \ref{kernelbounds} to obtain
	\begin{align}
	&\int_{\TLn\times\TLn} |\Delta_{L_n}^{-1}(x,y)|^2 \psi_n(x)^2\psi_n(y)^2 \left(\eta^R_i(x)-\eta^i_n(y)\right)^4 dx dy\nonumber\\
	&\lesssim \frac 1 {L_n^2} +\int_{\TLn\times \TLn} \frac{ \left(\eta^i_n(x)-\eta^i_n(y)\right)^4} {d^2_{\TLn}(x,y)}\psi_n(x)^2\psi_n(y)^2 dx dy\leq \frac 1 {L_n^2}+ \|\nabla \eta^i_n\|^2_{\infty}.
	\end{align} 
	Therefore, also the second summand in the error terms is order $L_n^{-2}$, which allows us to conclude that
	\begin{align}
	\label{approxest}
	\sum_{i=1,2} \|\eta^i_ng_n\|_2^2\expval{\LLn_{\psi_n}-4\XLn_{\psi_n}}{g_n^i}=h_n+O(L_n^{-2}).
	\end{align}
	By Lemma \ref{lochess} applied to $g_n^2$ (which is supported on $B^c_{L_n/4}$) and \eqref{eq:LowerBoundh2}, we find
	\begin{align}
	\label{eq:boundong2}
	\expval{\LLn_{\psi_n}-4\XLn_{\psi_n}}{g_n^2}\geq -\LagrMinf_{\Emininf}+o_n(1)> h_{\infty}''+o_n(1).
	\end{align}
	Since the l.h.s. of \eqref{approxest} is a convex combination and $(\LLn_{\psi_n}-4\XLn_{\psi_n})$ is uniformly bounded from below, \eqref{eq:boundong2} allows to restrict to sequences $\psi_n$ such that
	\begin{align}
	\label{eq:LowerBoundNorm}
	\|\eta^1_n g_n\|_2\geq C
	\end{align} 
	uniformly in $n$ and  
	\begin{align}
	\label{g1est}
	\expval{\LLn_{\psi_n}-4\XLn_{\psi_n}}{g_n^1}\leq h_n+o_n(1),
	\end{align}
	since our claim holds on any sequence for which \eqref{eq:LowerBoundNorm} and \eqref{g1est} are not simultaneously satisfied.	
	Using \eqref{eq:LowerBoundNorm} it is easy to see that $g_n^1$ is  almost orthogonal to $\psi_n$, in the sense that 
	\begin{align}
	|\bra{g_n^1}\ket{\psi_n}|=\frac 1 {\|g_n\eta^1_n\|_2}|\bra{g_n(\eta^1_n-1)}\ket{\psi_n}|\leq \frac 1 C \|(1-\eta^1_n)\psi_n\|_2\leq\frac 1 C \|\chi_{B^c_{L_n/8}}\psi_n\|_2 \xrightarrow{n\to\infty} 0.
	\end{align}
	Here we used the $L^2$-convergence of $\psi_n$ to $\Emininf$. Clearly, the same computation (together with the $H^1$-convergence of $\psi_n$ to $\Emininf$) shows that $g_n^1$ is also almost orthogonal to the partial derivatives of $\psi_n$.
	
	To conclude, we wish to modify $g_n^1$ in order to obtain a function $\tilde{g}_n$ which satisfies the constraints (i.e., is a competitor)  of the full-space variational problem introduced in \eqref{eq:hinfty}. We also wish to have
	\begin{align}
	\label{ultimatewish}
	\expval{L_{\Emininf}-4X_{\Emininf}}{\tilde{g}_n}=\expval{\LLn_{\psi_n}-4\XLn_{\psi_n}}{g_n^1}+o_n(1).
	\end{align}
	Indeed, \eqref{ultimatewish} together with \eqref{g1est} and the fact that $\tilde{g}_n$ is a competitor on $\Rtre$, would imply that
	\begin{align}
	h_n\geq \expval{\LLn_{\psi_n}-4\XLn_{\psi_n}}{g_n^1}-o_n(1)=\expval{L_{\Emininf}-4X_{\Emininf}}{\tilde{g}_n}-o_n(1)\geq h_{\infty}''-o_n(1),
	\end{align}
	which finally yields the proof of the Proposition also for sequences $\psi_n$ satisfying \eqref{eq:LowerBoundNorm} and \eqref{g1est}.
	
	We have a natural candidate for $\tilde{g}_n$, which is simply
	\begin{align}
	\tilde{g}_n:= \frac{(\unit-\mathcal{P})g_n^1}
	{\|(\unit-\mathcal{P})g_n^1\|_2},
	\end{align}
	with $\mathcal{P}(g_n^1):=\Emininf \bra{\Emininf}\ket{g_n^1}+\sum_{i=1,2,3}\frac{\partial_i\Emininf}{\|\partial_i \Emininf\|_2} \bra{\frac{\partial_i\Emininf}{\|\partial_i \Emininf\|_2}}\ket{g_n^1}$. Clearly $\tilde{g}_n$ is a competitor for the full space minimization and we are only left with the task of proving that $\tilde{g}_n$ satisfies \eqref{ultimatewish}. 
	We observe that, since $g_n^1$ is  almost orthogonal to $\psi_n$ and its partial derivatives, and using Proposition \ref{prop:En&MinConv}, 
	\begin{align}
	\label{scalarbound}
	|\bra{\Emininf}\ket{g_n^1}|&\leq \|\Emininf-\psi_n\|_{L^2(B_{L_n/4})}+|\bra{\psi_n}\ket{g_n^1}|=o_n(1),\nonumber\\
	|\bra{\partial_i\Emininf}\ket{g_n^1}|&\leq \|\Emininf-\psi_L\|_{H^1(B_{L_n/4})}+|\bra{\partial_i\psi_n}\ket{g_n^1}|=o_n(1).
	\end{align}
	Therefore
	\begin{align}
	\|\mathcal{P}(g_n^1)\|_2\to 0 \quad \text{and} \quad \|(\unit-\mathcal{P})g_n^1\|_2\to 1.
	\end{align}
	Hence, the normalization factor does not play any role in the proof of \eqref{ultimatewish}. Moreover
	\begin{align}
	&\expval{(L_{\Emininf}-4X_{\Emininf})}{(\unit-\mathcal{P})g_n^1}\notag\\
	&=\expval{(L_{\Emininf}-4X_{\Emininf})}{g_n^1}+\expval{(L_{\Emininf}-4X_{\Emininf})}{\mathcal{P}(g_n^1)}-2\bra{g_n^1}(L_{\Emininf}-4X_{\Emininf})\ket{\mathcal{P}(g_n^1)},
	\end{align}
	and thus we can conclude that also $\mathcal{P}(g_n^1)$ does not play any role in the proof of \eqref{ultimatewish}, since $(L_{\Emininf}-4X_{\Emininf})\mathcal{P}$ is a bounded operator ($\mathcal{P}$ has finite dimensional range contained in the domain of $(L_{\Emininf}-4X_{\Emininf})$), $\mathcal{P}$ is a projection and $\|\mathcal{P}(g_n^1)\|_2\to 0$. With this discussion, we reduced our problem to showing that
	\begin{align}
	\label{veryultimatewish}
	\expval{(L_{\Emininf}-4X_{\Emininf})}{g_n^1}=\expval{(\LLn_{\psi_n}-4\XLn_{\psi_n})}{g_n^1}+o_n(1).
	\end{align}
	Clearly the kinetic energy terms coincide for every $n$ and $\LagrMLn_{\psi_n}\to \LagrMinf$, by Proposition \ref{prop:En&MinConv}. Therefore we only need to prove that
	\begin{align}
	|\expval{V_{\sigma_{\psi_n}}-V_{\sigma_{\Emininf}}}{g_n^1}|,|\expval{\XLn_{\psi_n}-X_{\Emininf}}{g_n^1}|\to 0.
	\end{align}
	For the first term, using that $g_n^1$ is supported on $B_{L_n/4}$, we have
	\begin{align}
	|\expval{V_{\sigma_{\psi_n}}-V_{\sigma_{\Emininf}}}{g_n^1}|\leq \|V_{\sigma_{\Emininf}}-V_{\sigma_{\psi_n}}\|_{L^{\infty}(B_{L_n/4})}.
	\end{align}
	If we define $\Emininf_R:=\chi_{B_R} \Emininf$ and $(\psi_n)_R:=\chi_{B_R} \psi_n$ we have $V_{\sigma_{\Emininf}}=V_{\sigma_{\Emininf_R}}+V_{\sigma_{[\Emininf-\Emininf_R]}}$ and $V_{\sigma_{\psi_n}}=V_{\sigma_{(\psi_n)_R}}+V_{\sigma_{[\psi_n-(\psi_n)_R]}}$. We consider $R=R(n)=L_n/8$ and observe that
	\begin{align}
	|V_{\sigma_{[\Emininf-\Emininf_R]}}(x)|=2\int_{\Rtre} \Gkerspace (\Emininf-\Emininf_R)^2dy\lesssim\|\Emininf-\Emininf_R\|_6^2+\|\Emininf-\Emininf_R\|_2^2\to 0.
	\end{align}
	Similar computations, together with Lemma \ref{kernelbounds}, yield similar estimates for $|V_{\sigma_{[\psi_n-(\psi_n)_R]}}(x)|$. Moreover, since $\dist_{\TLn}(x,y)=|x-y|$ for $x,y\in B_{L_n/8}$, we have, for any $x\in B_{L_n/8}$
	\begin{align}
	|(V_{\sigma_{\Emininf_R}}-V_{\sigma_{(\psi_n)_R}})(x)|&\lesssim \left|\int_{B_{L_n/4}} \frac 1 {|x-y|} (\Emininf(y)-\psi_n(y))(\Emininf(y)+\psi_n(y))dy\right| + \frac 1 {L_n}\nonumber\\
	&\lesssim  \|\Emininf+\psi_n\|_{\infty}\|\Emininf-\psi_n\|_6+\|\Emininf-\psi_n\|_2\|\Emininf+\psi_n\|_2+\frac 1 {L_n}\to 0.
	\end{align}
	Here we used again Lemma \ref{kernelbounds}, the convergence of $\psi_n$ to $\Emininf$ and the universal $L^{\infty}$-bounded\-ness of minimizers. Putting the pieces together we obtain
	\begin{align}\nonumber
	\|V_{\sigma_{\Emininf}}-V_{\sigma_{\psi_n}}\|_{L^{\infty}(B_{L_n/4})} & \leq \|V_{\sigma_{[\Emininf-\Emininf_R]}}\|_{\infty}+\|V_{\sigma_{[\psi_n-(\psi_n)_R]}}\|_{\infty} \\ & \quad +\|V_{\sigma_{\Emininf_R}}-V_{\sigma_{(\psi_n)_R}}\|_{L^{\infty}(B_{R(n)})}\to 0,
	\end{align}
	as desired. The study is similar for $\expval{\XLn_{\psi_n}-X_{\Emininf}}{g_n^1}$, hence we shall not write it down explicitly. 
	
	We  conclude that \eqref{veryultimatewish} holds and, by the discussion above, the proof is complete. 
\end{proof}

\subsubsection{Proof of Theorem \ref{uniquenessANDcoercivity}} \label{subsubsec:MainResEL} In this section we first prove universal local bounds for $\EL$ around minimizers. These are a direct consequence of the results on the Hessian in the previous subsection, the proof follows along the lines of \cite{frank2013symmetry}, \cite[Appendix A]{frank2019quantum} and \cite[Appendix A]{feliciangeli2020persistence}. Such universal local bounds yield universal local uniqueness of minimizers, i.e., the statement that minimizers that are not equivalent (i.e., not obtained one from the other by translations and changes of phase) must be universally apart (in $H^1(\TL)$). Together with Proposition \ref{prop:En&MinConv}, this clearly implies uniqueness of minimizers for $L$ big enough, which is the first part of Theorem \ref{uniquenessANDcoercivity}. A little extra effort will then complete the proof of Theorem \ref{uniquenessANDcoercivity}.

In this section, for any $\psi \in \MinLe$ and any $f\in L^2(\TL)$, we write $e^{i\theta}\psi^y=P^{L^2}_{\Theta_L(\psi)}(f)$, respectively $e^{i\theta}\psi^y=P^{H^1}_{\Theta_L(\psi)}(f)$, to mean that $e^{i\theta} \psi^y$ realizes the $L^2$-distance, respectively the $H^1$-distance, between $f$ and $\Theta_L(\psi)$. Note that by compactness these always exist, 
but they might not be unique. The possible lack of uniqueness is not a concern for our analysis, however.

\begin{prop}[Universal Local Bounds]
	\label{prop:Univlocbounds}
	There exist universal constants $K_1>0$ and $K_2>0$ and $L^{**} >0$ such that, for any $L>L^{**}$, any $\psi \in \MinLe$ and any $L^2$-normalized $f\in H^1(\TL)$ with
	\begin{align}
	\label{localrequirement}
	\dist_{H^1}(\Theta_L(\psi), f)\leq K_1,
	\end{align}  
	we have
	\begin{align}
	\EL(f)-\eL\geq K_2\|P^{L^2}_{\Theta_L(\psi)}(f)-f\|_{H^1(\TL)}\geq K_2\dist_{H^1}^2\left(\Theta_L(\psi),f\right).
	\end{align}
\end{prop}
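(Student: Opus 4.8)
The plan is to reduce the statement to a local spectral estimate for the Hessian $H^{\EL}_\psi$ established in Proposition \ref{coercivity}, via the standard perturbative expansion of $\EL$ around a minimizer, following the scheme of \cite{frank2013symmetry}, \cite[Appendix A]{frank2019quantum} and \cite[Appendix A]{feliciangeli2020persistence}. First I would fix $\psi\in\MinLe$ and an $L^2$-normalized $f\in H^1(\TL)$ satisfying \eqref{localrequirement}, and let $e^{i\theta}\psi^y=P^{H^1}_{\Theta_L(\psi)}(f)$ be a nearest point on the orbit. Replacing $f$ by $e^{-i\theta}f(\cdot+y)$ (which changes neither $\EL(f)$ nor the $H^1$-distance to $\Theta_L(\psi)$, by translation and phase invariance of $\EL$), we may assume the nearest point is $\psi$ itself, i.e. $\|f-\psi\|_{H^1}=\dist_{H^1}(\Theta_L(\psi),f)=:\varepsilon\leq K_1$. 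The minimality of the $H^1$-distance along the orbit gives the orthogonality conditions $\Im\langle\psi|f\rangle=0$ and $\langle\partial_i\psi|\,\Re(f-\psi)\rangle_{H^1}=0$ — actually it is cleaner to use $L^2$-type orthogonality after noting $\langle\partial_i\psi|f-\psi\rangle$ is controlled; in any case, writing $f=(1-t)^{1/2}\psi+g$ with $g\perp_{L^2}\psi$ (so $\|g\|_2^2=t=O(\varepsilon^2)$), one extracts that the $L^2$-component of $\Re g$ along each $\partial_i\psi$ is $O(\varepsilon^2)$ and that $\Im g$ has no $\psi$-component.

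Next I would Taylor-expand $\EL(f)$ about $\psi$. Since $\EL(\psi)=\eL$ and the first variation vanishes (Euler–Lagrange equation \eqref{eulag}), one gets
\begin{align}
\EL(f)-\eL=H^{\EL}_\psi(g)+R,
\end{align}
where $H^{\EL}_\psi$ is the quadratic form \eqref{Hessian} and $R$ collects the cubic-and-higher terms in $g$. The energy $\EL=T_L-W_L$ is a quadratic-plus-quartic polynomial in $\psi$ once written via $\rho_\psi$, so $R$ is a finite sum of terms of the form $\langle\cdots\rangle$ cubic or quartic in $g$; using Lemma \ref{kernelbounds} to bound $W_L$-type contributions, the Sobolev embedding $H^1(\TL)\hookrightarrow L^6(\TL)$, and the universal $L^\infty$-bound on $\psi$ from Lemma \ref{minprop}, I would show $|R|\lesssim \|g\|_{H^1}^3+\|g\|_{H^1}^4\lesssim \varepsilon\,\|g\|_{H^1}^2$ with universal constant, for $L\geq L_0$. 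For the quadratic part, I would split $g=\Re g+i\Im g$ and use Proposition \ref{coercivity}: decomposing $\Re g=cQ_\psi\Re g$-plus-corrections and removing the $O(\varepsilon^2)$ components along $\{\psi,\partial_i\psi\}$, one obtains $H^{\EL}_\psi(g)\geq \tfrac12\min\{h'_L,h''_L\}\,\|g\|_{H^1}^2-C\varepsilon^3$ for $L$ large — here the passage from the $L^2$-coercivity $\langle L_\psi-4X_\psi\rangle\geq h''_L$ on the orthogonal complement to an $H^1$-coercivity uses that $L_\psi-4X_\psi$ is a bounded perturbation of $-\Delta_L$, so its form dominates $c\,\|\cdot\|_{H^1}^2$ on any subspace where it is bounded below by a positive constant. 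By \eqref{eq:claimliminf} there is $L^{**}$ and a universal $\kappa>0$ with $\min\{h'_L,h''_L\}\geq\kappa$ for $L>L^{**}$.

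Combining, for $L>L^{**}$,
\begin{align}
\EL(f)-\eL\geq \tfrac{\kappa}{2}\|g\|_{H^1}^2-C\varepsilon\,\|g\|_{H^1}^2\geq \tfrac{\kappa}{4}\|g\|_{H^1}^2
\end{align}
once $\varepsilon\leq K_1$ with $K_1$ chosen universally small (so $C K_1\leq \kappa/4$). Finally, $\|g\|_{H^1}$ and $\|f-\psi\|_{H^1}=\varepsilon$ differ only by the scalar factor $(1-\sqrt{1-t})$ contributing $O(\varepsilon^2)$, so $\|g\|_{H^1}\geq \tfrac12\|f-\psi\|_{H^1}=\tfrac12\dist_{H^1}(\Theta_L(\psi),f)$, and since $P^{L^2}_{\Theta_L(\psi)}(f)$ can only be closer than or equal (up to universal constant) to the $H^1$-projection in $H^1$-norm after the same normalization bookkeeping — more carefully, $\|P^{L^2}_{\Theta_L(\psi)}(f)-f\|_{H^1}\leq C\,\dist_{H^1}(\Theta_L(\psi),f)$ by comparing the two projections, both being $O(\varepsilon)$-close to $f$ — we can also bound $\|g\|_{H^1}\gtrsim \|P^{L^2}_{\Theta_L(\psi)}(f)-f\|_{H^1}$. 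Absorbing constants into $K_2$ and using $\dist_{H^1}\leq K_1\leq 1$ to pass from the linear to the quadratic bound on the right gives the claim.

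The main obstacle I anticipate is not any single estimate but the bookkeeping of the symmetry directions: because $\EL$ has a \emph{three}-dimensional translational zero mode in addition to the phase zero mode, one must be careful that the nearest-point conditions genuinely kill the components of $g$ along all of $\spn\{\psi,\partial_1\psi,\partial_2\psi,\partial_3\psi\}$ (for $\Re g$) and along $\psi$ (for $\Im g$) up to errors of order $\varepsilon^2$, so that Proposition \ref{coercivity} is actually applicable; and one must verify that the $L^2$-orthogonality used there is compatible with the $H^1$-minimality used to define the projection, which forces the small quantitative argument comparing $P^{L^2}$ and $P^{H^1}$. All remaining ingredients — the cubic remainder bound, the $H^1$-vs-$L^2$ coercivity upgrade, the normalization corrections — are routine given Lemmas \ref{kernelbounds}, \ref{minprop} and the Sobolev embedding.
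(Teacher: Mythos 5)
Your overall scheme — expand $\EL$ around a nearest point on the orbit, bound the quadratic part from below via Proposition \ref{coercivity}, upgrade $L^2$-coercivity to $H^1$-coercivity by interpolation, control the cubic remainder — is the same as the paper's. The one place you depart is the choice of nearest point: you take $\psi=P^{H^1}_{\Theta_L(\psi)}(f)$, whereas the paper reduces to $\psi=P^{L^2}_{\Theta_L(\psi)}(f)$. This is not a cosmetic difference, and it is exactly where your sketch breaks.

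The load-bearing assertion in your sketch — ``one extracts that the $L^2$-component of $\Re g$ along each $\partial_i\psi$ is $O(\varepsilon^2)$ and that $\Im g$ has no $\psi$-component'' — is false under the $H^1$-projection. The $H^1$-nearest-point conditions give $\langle \Re\delta\,|\,\partial_i\psi\rangle_{H^1}=0$ and $\langle\Im\delta\,|\,\psi\rangle_{H^1}=0$, so
\begin{equation}
\langle\Re\delta\,|\,\partial_i\psi\rangle_{L^2}=-\langle\nabla\Re\delta\,|\,\nabla\partial_i\psi\rangle_{L^2},\qquad
\langle\Im\delta\,|\,\psi\rangle_{L^2}=-\langle\nabla\Im\delta\,|\,\nabla\psi\rangle_{L^2},
\end{equation}
both of which are $O(\varepsilon)$, not $O(\varepsilon^2)$. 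Consequently, when you write $H^{\EL}_{\psi}(\delta)\geq \tau\bigl(\|Q_\psi\Im\delta\|_2^2+\|Q'_\psi\Re\delta\|_2^2\bigr)$, the quantities $\|Q_\psi\Im\delta\|_2$ and $\|Q'_\psi\Re\delta\|_2$ differ from $\|\Im\delta\|_2$ and $\|\Re\delta\|_2$ by terms of the same order $\varepsilon$, so the claimed bound $H^{\EL}_\psi(g)\geq\tfrac12\min\{h'_L,h''_L\}\|g\|_{H^1}^2-C\varepsilon^3$ does not follow as written; a priori the loss is of the same order as the main term. The paper's use of the $L^2$-projection is precisely designed to give \emph{exact} $L^2$-orthogonality to $\{\psi,\partial_i\psi\}$ (so $Q_\psi\Re\delta=Q'_\psi\Re\delta$ and $Q_\psi\Im\delta=\Im\delta$), at the cost of a separate closing step showing that $\dist_{H^1}(\Theta_L(\psi),f)$ small implies $\|P^{L^2}_{\Theta_L(\psi)}(f)-f\|_{H^1}$ small; that step uses the universal $H^2$-bound on minimizers from Lemma \ref{minprop} plus interpolation, and you do not address it (your ``comparing the two projections, both being $O(\varepsilon)$-close to $f$'' asserts closeness of the $L^2$-projection in the $H^1$-norm without proof, which is the content of that argument).

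The gap is repairable but needs a different argument than you sketch. For instance, $H^1$-orthogonality to $\psi$ does imply $\|Q_\psi\Im\delta\|_{H^1}\geq\tfrac12\|\Im\delta\|_{H^1}$ (write $\Im\delta=a\psi+Q_\psi\Im\delta$; then $0=\langle\Im\delta|\psi\rangle_{H^1}$ forces $|a|\,\|\psi\|_{H^1}\leq\|Q_\psi\Im\delta\|_{H^1}$), and an analogous inequality holds for $Q'_\psi\Re\delta$ using the Gram matrix of $\{\partial_i\psi\}$ in $H^1$; combined with the interpolation on the projected vectors, this yields $H^1$-coercivity. But this is genuinely different bookkeeping from what you wrote, and still leaves the task of converting to the $L^2$-projection that appears in the statement. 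The paper's choice of projection sidesteps all of this.
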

\begin{proof} 
	We can restrict to \emph{positive} $\psi\in \MinLe$ and normalized $f$ such that
	\begin{align}
	\label{l2proj}
	P^{L^2}_{\Theta_L(\psi)}(f)=\psi,
	\end{align}
	which clearly implies  
	\begin{align}
	\label{eq:minconseq}
	\bra{\psi}\ket{f}\geq 0, \quad \bra{\Re f}\ket{\partial_i\psi}=0.
	\end{align} 
	Under this assumption, we prove that if \eqref{localrequirement} holds then
	\begin{align}
	\label{localhessbound}
	\EL(\phi)-\eL\geq K_2\|\psi-f\|_{H^1(\TL)}^2\geq K_2\dist^2_{H^1}\left(\Theta_L(\psi),f\right).
	\end{align}
	The general result follows immediately by invariance of $\EL$ under translations and changes of phase. 
	
	We denote $\delta:=f-\psi$ and proceed to expand $\EL$ around $\psi$:
	\begin{align}
	\label{hessexpansion}
	\EL(f)=\EL(\psi+\delta)=\eL+H^{\EL}_{\psi}(\delta)+\Err_{\psi}(\delta).
	\end{align}
	We recall that $H^{\EL}_{\psi}$ is simply the quadratic form associated to the Hessian of $\EL$ at $\psi$ and it is defined in \eqref{Hessian}. We denote $P_{\psi}:=\ket{\psi}\bra{\psi}$. The last term, which we see as an error contribution, is explicitly given by
	\begin{align}
	\Err_{\psi}(\delta)=&-8\bra{\Re \delta} \XL_{\psi}\ket{P_{\psi} \Re \delta}+4\expval{\XL_{\psi}}{P_{\psi}\Re \delta}\nonumber\\
	&-4\bra{|\delta|^2} (-\Delta_L)^{-1}\ket{\psi \Re \delta}+W_L(\delta).
	\end{align}
	Our first goal is to estimate $|\Err_{\psi}(\delta)|$. By \eqref{eq:minconseq} and the normalization of both $\psi$ and $f$, we find 
	\begin{align}
	\label{trick}
	\|\delta\|_2^2=2-2\bra{\psi}\ket{f}.
	\end{align}
	Therefore, also using the positivity of $\psi$, we have
	\begin{align}
	P_{\psi} \Re \delta=\psi(\bra{\psi}\ket{f}-1)=-\frac 1 2 \psi \|\delta\|_2^2.
	\end{align}
	We now apply Lemma \ref{Xprop} to obtain
	\begin{align}
	\label{eq:FirstErrEst}
	&|\bra{\Re \delta} \XL_{\psi}\ket{P_{\psi} \Re \delta}|\lesssim\|\Re \delta\|_2 \|P_{\psi} \Re \delta\|_2\lesssim\|\delta\|_2^3,\nonumber\\
	&|\expval{\XL_{\psi}}{P_{\psi}\Re \delta}|\lesssim \|P_{\psi}\Re \delta\|_2^2\lesssim \|\delta\|_2^4,\nonumber\\
	&|\bra{|\delta|^2} (-\Delta_L)^{-1}\ket{\psi \Re \delta}|\lesssim\|\delta\|_2^2\|\Re\delta\|_2\leq\|\delta\|_2^3.
	\end{align}
	Finally, by \eqref{Tbounds},
	\begin{align}
	\label{eq:WLest}
	W_L(\delta)=\|\delta\|_2^4 W_L\left(\frac{\delta}{\|\delta\|_2}\right)\leq \|\delta\|_2^4\left(\frac 1 2 T_L\left(\frac{\delta}{\|\delta\|_2}\right)+C\right)\lesssim\|\delta\|_2^2\|\delta\|_{H^1(\TL)}^2.
	\end{align}
	Recalling \eqref{localrequirement}, we can estimate 
	\begin{align}
	\|\delta\|_2=\dist_{L^2}(f,\Theta_L(\psi))\leq\dist_{H^1}(f,\Theta_L(\psi))\leq K_1,
	\end{align} 
	and this implies, combined with \eqref{eq:FirstErrEst} and \eqref{eq:WLest}, that
	\begin{align}
	\label{remainderbound}
	|\Err_{\psi}(\delta)|\lesssim \|\delta\|_{H^1(\TL)}^3.
	\end{align} 
	
	We now want to bound $H^{\EL}_{\psi}(\delta)$. We fix $0<\tau<\min\{h_{\infty}',h_{\infty}''\}$, where $h_{\infty}'$ and $h_{\infty}''$ are defined in \eqref{eq:hinfty}. Proposition \ref{coercivity} implies that there exists $L^{**}$ such that for $L>L^{**}$ and $\psi \in \MinLe$, we have
	\begin{align}
	\LL_{\psi}\geq \tau Q_{\psi}, \quad Q_{\psi}(\LL_{\psi}-4\XL_{\psi})Q_{\psi}\geq \tau Q_{\psi}',
	\end{align}  
	where we define $Q_{\psi}=\unit-P_{\psi}$ and $Q_{\psi}':=\unit-P_{\psi}-\sum_{i=1,2,3} P_{\partial_i \psi/\|\partial_i \psi\|_2}$. We note that, by \eqref{eq:minconseq} and since $\psi$ is orthogonal in $L^2$ to its partial derivatives, we have 
	\begin{equation}
	Q_{\psi}(\Re f- \psi)=Q_{\psi}'(\Re f- \psi).
	\end{equation}
	Therefore, recalling the definition of $H^{\EL}_{\psi}$ given in \eqref{Hessian},
	\begin{align}
	H^{\EL}_{\psi}(\delta)&=\expval{\LL_{\psi}}{\Im f}+\expval{Q_{\psi}(\LL_{\psi}-4\XL_{\psi} )Q_{\psi}}{\Re f- \psi}\nonumber\\
	&\geq \tau (\|Q_{\psi}\Im f\|_{2}^2+\|Q_{\psi}'(\Re f -\psi)\|_{2}^2)=\tau\|Q_{\psi}\delta\|^2_{L^2(\TL)}.
	\end{align}
	Moreover, applying \eqref{trick},
	\begin{align}
	\|Q_{\psi}\delta\|_{L^2(\TL)}^2=\|\delta\|_2^2-\bra{\psi}\ket{\delta}^2=\|\delta\|_2^2\left(1-\frac 1 4 \|\delta\|_2^2\right)\geq \frac 1 2 \|\delta\|_2^2,
	\end{align}
	and we can thus conclude that
	\begin{align}
	\label{HessL2}
	H^{\EL}_{\psi}(\delta)\geq \frac {\tau} 2 \|\delta\|_{2}^2.
	\end{align}
	On the other hand, by the universal boundedness of $V_{\sigma_{\psi}}$ in $L^{\infty}(\TL)$ and the universal boundedness of $\LagrML_{\psi}$ (see Proposition \ref{prop:En&MinConv}), we have, for some universal $C_1>0$,
	\begin{align}
	\LL_{\psi}\geq -\Delta_L - C_1.
	\end{align}
	Similarly, also using Lemma \ref{Xprop}, for some universal $C_2>0$,
	\begin{align}
	Q(\LL_{\psi}-4\XL_{\psi})Q\geq -\Delta_L- C_2.
	\end{align}
	If we then define $C:=(\max\{C_1,C_2\}+1)$, we can conclude the validity of the universal bound
	\begin{align}
	\label{HessH1}
	H^{\EL}_{\psi}(\delta)\geq \|\delta\|_{H^1(\TL)}^2 - C \|\delta\|_{L^2(\TL)}^2.
	\end{align}
	By interpolating between \eqref{HessL2} and \eqref{HessH1}, we obtain
	\begin{align}
	\label{eq:HboundwrtH1}
	H^{\EL}_{\psi}(\delta)\geq \frac {\tau}{\tau+2C} \|\delta\|_{H^1(\TL)}^2.
	\end{align} 
	Using \eqref{remainderbound} and \eqref{eq:HboundwrtH1} in \eqref{hessexpansion}, we can conclude that there exists a universal constant $C$ such that for any $L>L^{**}$, any $0<\psi \in\MinLe$ and any normalized $f$ satisfying \eqref{l2proj}, 
	\begin{align}
	\EL(f)-\eL\geq \frac 1 C\|\delta\|_{H^1(\TL)}^2-C\|\delta\|_{H^1(\TL)}^3.
	\end{align}
	In particular, for $K_2$ sufficiently small, we can find a universal constant $c$ such that \eqref{localhessbound} holds, \emph{as long as}
	\begin{align}
	\label{wronglocalrequirement}
	\|\delta\|_{H^1(\TL)}=\|P^{L^2}_{\Theta_L(\psi)}(f)-f\|_{H^1(\TL)}\leq c.
	\end{align}
	
	To conclude the proof, it only remains to show that there exists a universal $K_1$ such that \eqref{wronglocalrequirement} holds as long as \eqref{localrequirement} holds. This can be achieved as follows. We have, using that both $\psi$ and $P^{H^1}_{\Theta_L(\psi)}(f)$ are in $\MinLe$ and thus are universally bounded in $H^2(\TL)$ (by Lemma \ref{minprop}) and recalling (see \eqref{l2proj}) that $\psi=P^{L^2}_{\Theta_L(\psi)}(f)$,
	\begin{align}
	\|\psi-P^{H^1}_{\Theta_L(\psi)}(f)\|_{\mathring{H}^1(\TL)}&\leq \|\psi-P^{H^1}_{\Theta_L(\psi)}(f)\|^{1/2}_{L^2(\TL)}\|(-\Delta_L)(\psi-P^{H^1}_{\Theta_L(\psi)}(f))\|^{1/2}_{L^2(\TL)}\nonumber\\
	&\lesssim \|\psi-P^{H^1}_{\Theta_L(\psi)}(f)\|_{L^2(\TL)}^{1/2}\nonumber\\
	&\leq \left(\dist_{L^2}\left(\Theta_L(\psi),f\right)+\|f-P^{H^1}_{\Theta_L(\psi)}(f)\|_{L^2(\TL)}\right)^{1/2}\nonumber\\
	&\lesssim \dist^{1/2}_{H^1}\left(\Theta_L(\psi),f\right).
	\end{align}
	Therefore, for some universal $C$
	\begin{align}
	\|f-\psi\|_{H^1(\TL)}\leq \dist_{H^1}\left(\Theta_L(\psi),f\right)+C\dist^{1/2}_{H^1}\left(\Theta_L(\psi),f\right),
	\end{align}
	and it suffices to take $K_1\leq \left[(-C+\sqrt{C^2+4c})/2\right]^2$ to conclude our discussion. 
\end{proof}

We are ready to prove Theorem \ref{uniquenessANDcoercivity}.

\begin{proof}[Proof of Theorem \ref{uniquenessANDcoercivity}]
	Fix $K_1$ as in Proposition \ref{prop:Univlocbounds}. Using Proposition \ref{prop:En&MinConv}, we know that there exists $L_{K_1/2}$ such that, for any $L>L_{K_1/2}$ and any $\psi\in \MinLe$, we have
	\begin{align}
	\dist_{H^1}\left(\Theta_L(\psi),\Emininf_L\right)\leq K_1/2.
	\end{align}
	We claim that \eqref{bigLregime} holds with $L_1:=\max\{L_{K_1/2}, L^*, L^{**}\}$, where $L^*$ is the same as in Corollary \ref{cor:aperiodicity} and $L^{**}$ is the same as in Proposition \ref{prop:Univlocbounds}. 
	
	Let $L>L_1$ and $\psi\in \MinLe$. Since $L>L_1\geq L^*$, we have $\psi^y\neq\psi$ for any $0\neq y\in \TL$. Moreover, since $L>L_1\geq L_{K_1/2}$ and using the triangle inequality, for any other $\psi_1\in \MinLe$ we have
	\begin{align}
	\dist_{H^1}\left(\Theta_L(\psi),\psi_1\right)\leq K_1.
	\end{align}
	Since $L>L_1\geq L^{**}$, we can apply Proposition \ref{prop:Univlocbounds}, finding
	\begin{align}
	K_2\dist_{H^1}^2(\Theta_L(\psi),\psi_1)\leq\EL(\psi_1)-e_L=0,
	\end{align}
	i.e., $\psi_1\in \Theta_L(\psi)$, and \eqref{bigLregime} holds for $L>L_1$.
	
	For $\psi\in \MinLe=\Theta_L(\psi)$, and $L>L_1$, we now show the quadratic lower bound \eqref{globalquadbound}, independently of $L$. Lemma \ref{minprop}, which guarantees universal $H^1$-boundedness of minimizers, and estimate \eqref{Tbounds} ensure, by straightforward computations, that there exists $0<\kappa^*<1/2$ such that, if $f\in L^2(\TL)$ is normalized and satisfies
	\begin{align}
	\label{noquadbound}
	\EL(f)-\eL< \kappa^*\dist_{H^1}^2\left(\Theta_L(\psi),f\right),
	\end{align}
	then $f$ is universally bounded in $H^1(\TL)$ and must satisfy
	\begin{align}
	\label{noquadboundI}
	\EL(f)-\eL< \delta_{K_1},
	\end{align} 
	where $\delta_{K_1}$ is the $\delta_{\varepsilon}$ from Proposition \ref{prop:En&MinConv} with $\varepsilon=K_1$. On the other hand, Proposition \ref{prop:En&MinConv} and Proposition \ref{prop:Univlocbounds} combined with the fact that we have taken $L_1\geq L_{K_1/2}$ (and that trivially  $L_{K_1/2}\geq L_{K_1}$), guarantee that any $L^2$-normalized $f$ satisfying \eqref{noquadboundI} must satisfy
	\begin{align}
	\EL(f)-\eL\geq K_2\dist_{H^1}^2(\Theta_L(\psi),f).
	\end{align}
	Therefore the bound \eqref{globalquadbound} from Theorem \ref{uniquenessANDcoercivity} holds with the universal constant $\kappa_1:=\min\{\kappa^*,K_2\}$ and our proof is complete.
\end{proof}

This concludes our study of $\EL$. We now move on to the study of the functional $\FL$.

\subsection{Study of $\FL$} 
\label{subsec:FL} 

This section is structured as follows. In Section \ref{sec:LBforFL} we  prove Corollary \ref{cor:uniquenessANDcoercivity}. In Section \ref{sec:HFL}, we compute the Hessian of $\FL$ at its minimizers, showing the validity of \eqref{eq:Hessianexpr}. This allows to obtain a more precise lower bound for $\FL$ (compared to the bounds \eqref{eq:Fglobalquadbound} and \eqref{eq:Fglobalquadbound2} from Corollary \ref{cor:uniquenessANDcoercivity}), which holds locally around the $3$-dimensional surface of minimizers $\MinLf=\Omega_L(\varphi_L)$. Finally, in Section \ref{sec:localstudyFL}, we investigate closer the surface of minimizers $\Omega_L(\varphi_L)$ and the behavior of the functional $\FL$ close to it. In particular, we show that the Hessian of $\FL$ at its minimizers is strictly positive above its trivial zero modes and derive some key technical tools, which we exploit in Section \ref{sec:ProofMainResult}.

\subsubsection{Proof of Corollary \ref{cor:uniquenessANDcoercivity}}
\label{sec:LBforFL} 
In this section, we show the validity of Corollary \ref{cor:uniquenessANDcoercivity}. 
We need the following Lemma. Recall that in our discussion constants are universal if they are independent of $L$ for $L\geq L_0>0$.

\begin{lem}
	\label{lem:HardyLem}
	For $\psi,\phi \in H^1(\TL)$, $\|\psi\|_2=\|\phi\|_2=1$, 
	\begin{align}
	\expval{(-\Delta_L)^{-1/2}}{\rho_{\psi}-\rho_{\phi}}\lesssim \||\psi|-|\phi|\|_{H^1(\TL)}^2.
	\end{align}
\end{lem}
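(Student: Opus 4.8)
The plan is to estimate the left-hand side, which equals $\|\rho_\psi - \rho_\phi\|_{\mathring H^{-1/2}(\TL)}^2$, by writing $\rho_\psi - \rho_\phi = (|\psi|-|\phi|)(|\psi|+|\phi|)$ and exploiting the fact that $|\psi|+|\phi|$ is controlled in $L^2$ (by the normalization) and in higher $L^p$-spaces (by the $H^1$-bound, via Sobolev). First I would reduce to real nonnegative functions by noting $\rho_\psi = |\psi|^2$ depends only on $|\psi|$, so we may assume $\psi,\phi\ge 0$; set $g:=\psi-\phi$ and $h:=\psi+\phi$, so that $\rho_\psi-\rho_\phi = gh$ and $\|g\|_{H^1(\TL)}\lesssim \||\psi|-|\phi|\|_{H^1(\TL)}$ while $\|h\|_2 \le 2$.

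Next I would control the $\mathring H^{-1/2}$ seminorm of the product $gh$. Since $\mathring H^{-1/2}(\TL)$ is (up to the zero-mode, which drops out because we may subtract means) dual to $\mathring H^{1/2}(\TL)$, it suffices to bound $\left|\int_{\TL} gh\,\bar u\right|$ for $u\in H^{1/2}(\TL)$ with $\|u\|_{\mathring H^{1/2}(\TL)}\le 1$. By Hölder, $\left|\int gh\bar u\right| \le \|g\|_{L^3(\TL)}\,\|h\|_{L^2(\TL)}\,\|u\|_{L^6(\TL)}$. Now $\|u\|_{L^6(\TL)}\lesssim \|u\|_{\mathring H^{1/2}(\TL)} + \|u\|_{L^2}$, but since only the seminorm is bounded I would instead use $u - L^{-3}\int u$ and the fractional Sobolev embedding $\mathring H^{1/2}(\TL)\hookrightarrow L^3(\TL)$ in three dimensions (critical exponent $\frac{2\cdot 3}{3-1}=3$), giving $\|u - \text{mean}\|_{L^3(\TL)}\lesssim \|u\|_{\mathring H^{1/2}(\TL)}$; the mean of $u$ pairs with $gh$ to give $0$ since $\int(\rho_\psi-\rho_\phi)=1-1=0$. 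Hence $\|\rho_\psi-\rho_\phi\|_{\mathring H^{-1/2}(\TL)}\lesssim \|g\|_{L^3(\TL)}\|h\|_{L^2(\TL)}$. Then interpolate $\|g\|_{L^3}\le \|g\|_{L^2}^{1/2}\|g\|_{L^6}^{1/2}\lesssim \|g\|_{L^2}^{1/2}\|g\|_{H^1(\TL)}^{1/2}$ by Sobolev, and $\|g\|_{L^2}\le \|g\|_{H^1(\TL)}$, so $\|g\|_{L^3(\TL)}\lesssim \|g\|_{H^1(\TL)}$. Combining, $\|\rho_\psi-\rho_\phi\|_{\mathring H^{-1/2}(\TL)}^2 \lesssim \|g\|_{H^1(\TL)}^2 \lesssim \||\psi|-|\phi|\|_{H^1(\TL)}^2$, which is the claim.

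The one point requiring care — and the main obstacle — is the $L$-uniformity of all constants. The Sobolev and Poincaré–Sobolev constants on $\TL$ for embeddings like $\mathring H^{1/2}\hookrightarrow L^3$ and $\mathring H^1 \hookrightarrow L^6$ can be taken $L$-independent for $L\ge L_0$ (the critical Sobolev constant is scale-invariant, and the subcritical ones only improve for larger $L$), which is exactly the regime in which constants are called universal here; this is consistent with the way Lemma \ref{kernelbounds}, Lemma \ref{lem:Hsrho} and the Poincaré–Sobolev estimates in the proof of Lemma \ref{existence} are used elsewhere in the paper. An alternative, perhaps cleaner route avoiding fractional Sobolev duality altogether is to write $\expval{(-\Delta_L)^{-1/2}}{\rho_\psi-\rho_\phi} = \|(-\Delta_L)^{-1/4}(\rho_\psi-\rho_\phi)\|_2^2$ and estimate $(-\Delta_L)^{-1/4}$ acting on $gh$ directly via its kernel bound (analogous to Lemma \ref{kernelbounds}, $(-\Delta_L)^{-1/4}(x,y)\sim c\,\dist_{\TL}(x,y)^{-5/2}$ up to an $O(L^{-?})$ correction), followed by a weak-Young / Hardy–Littlewood–Sobolev inequality on the torus; I would use whichever of these is most consistent with the lemmas already available, but the Hölder-duality argument above is the shortest and uses only tools already invoked in the paper.
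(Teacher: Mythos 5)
Your decomposition $\rho_{\psi}-\rho_{\phi}=gh$ with $g=|\psi|-|\phi|$, $h=|\psi|+|\phi|$, together with the strategy of estimating the $\mathring H^{-1/2}$-norm of the product by H\"older and Sobolev, is structurally identical to the paper's proof; the only cosmetic difference is that the paper applies the Hardy--Littlewood--Sobolev bound $\|(-\Delta_L)^{-1/4}(gh)\|_2\lesssim\|gh\|_{3/2}$ directly, while you dualize against $u\in\mathring H^{1/2}(\TL)$. These are equivalent routes.

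There is, however, a concrete gap in the H\"older split. You write $\lvert\int gh\bar u\rvert\le\|g\|_{L^3}\|h\|_{L^2}\|u\|_{L^6}$, but the critical fractional Sobolev embedding in three dimensions is $\mathring H^{1/2}(\TL)\hookrightarrow L^3(\TL)$, not $L^6$ --- you correctly compute the exponent $\frac{2\cdot 3}{3-1}=3$ in your parenthetical, yet the H\"older triple you chose calls for $\|u\|_{L^6}$. The intermediate claim $\|u\|_{L^6}\lesssim\|u\|_{\mathring H^{1/2}}+\|u\|_{L^2}$ is false (it would need $s=1$), and if you downgrade $u$ to $L^3$ you are left with exponents $(3,2,3)$ satisfying $\frac{1}{3}+\frac{1}{2}+\frac{1}{3}=\frac{7}{6}>1$, which is not a H\"older triple; upgrading back to $L^6$ on a torus of volume $L^3$ costs an $L$-dependent constant, precisely the loss of universality you flag as the main obstacle. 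So the ``Hence'' does not follow. The fix is simply to put the heavier Sobolev weight on $g$ rather than on $u$: use $\lvert\int gh\bar u\rvert\le\|g\|_{L^6}\|h\|_{L^2}\|u\|_{L^3}$, with $\frac{1}{6}+\frac{1}{2}+\frac{1}{3}=1$, then $\|g\|_{L^6}\lesssim\|g\|_{H^1(\TL)}$ by $H^1\hookrightarrow L^6$, $\|h\|_{L^2}\le 2$ by normalization, and $\|u-\text{mean}\|_{L^3}\lesssim\|u\|_{\mathring H^{1/2}}$ as you describe. This is precisely the paper's split $\|gh\|_{3/2}\le\|h\|_2\|g\|_6$. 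With this one repair, and your correct observation that the pairing of $gh$ with the mean of $u$ vanishes, the argument closes; your remarks on the $L$-uniformity of the Sobolev constants are also correct.
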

\begin{proof}
	We define $f(x):=|\psi(x)|+|\phi(x)|$ and $g(x):=|\psi(x)|-|\phi(x)|$. By the Hardy-Littlewood-Sobolev and the Sobolev inequality (see for example \cite{benyi2013sobolev} for a comprehensive overview of such results on the torus), and using the normalization of $\phi$ and $\psi$ we have
	\begin{align}
	\expval{(-\Delta_L)^{-1/2}}{\rho_{\psi}-\rho_{\phi}}&=\|(-\Delta_L)^{-1/4} (fg)\|_2^2\leq C \|fg\|_{3/2}^2\leq C\|f\|_2^2\|g\|_6^2\nonumber\\
	&\leq C' \|g\|_{H^1(\TL)}^2 = C'\||\psi|-|\phi|\|_{H^1(\TL)}^2,
	\end{align}
	which proves the Lemma. 
\end{proof}

\begin{proof}[Proof of Corollary \ref{cor:uniquenessANDcoercivity}]
	With $\psi_L$ as in Theorem \ref{uniquenessANDcoercivity}, let $\varphi_L:=\sigma_{\psi_L}\in C^{\infty}(\TL)$.  
	Observing that
	\begin{align}
	\label{eq:GLalternate}
	\GL(\psi,\varphi)=\EL(\psi)+\|\sigma_{\psi}-\varphi\|_2^2,
	\end{align} 
	and using Theorem \ref{uniquenessANDcoercivity} we can immediately conclude that in the regime $L>L_1$
	\begin{align}
	\MinLf=\Omega_L(\varphi_L).
	\end{align}
	It is also immediate, recalling the definition of $\GL$ in \eqref{eq:Gfun} and that $\psi_L>0$ (as proven in Theorem \ref{uniquenessANDcoercivity}), to conclude that $\psi_L$ must be the unique positive ground state of $h_{\varphi_L}$. 
	
	To prove \eqref{eq:Fglobalquadbound}, we first of all observe that if $\varphi\in L^2(\TL)$, we have 
	\begin{align}
	\FL(\varphi)=|(\varphi)_0|^2+\FL(\hat{\varphi}).
	\end{align}
	Therefore, it is sufficient to restrict to $\varphi$ with zero-average and show that in this case
	\begin{align}
	\FL(\varphi)-\eL\geq \min_{y\in \TL} \expval{\unit -(\unit+\kappa'(-\Delta_L)^{1/2})^{-1}}{\varphi-\varphi_L^y}.
	\end{align}
	Using Theorem \ref{uniquenessANDcoercivity}, we obtain
	\begin{align}
	\GL(\psi,\varphi)-\eL&=\EL(\psi)-\eL+\|\varphi-\sigma_{\psi}\|_2^2\nonumber\geq\EL(|\psi|)-\eL+\|\varphi-\sigma_{\psi}\|_2^2\nonumber\\
	&\geq \kappa_1\dist_{H^1}^2(|\psi|,\Theta(\psi_L))+\|\varphi-\sigma_{\psi}\|_2^2\nonumber\\
	&=\kappa_1\||\psi|-\psi_L^y\|_{H^1(\TL)}^2+\|\varphi-\sigma_{\psi}\|_2^2,
	\end{align}
	for some $y\in \TL$. We now apply Lemma \ref{lem:HardyLem} and use that $\varphi_L^y=\sigma_{\psi_L^y}$ (see \eqref{eq:GLalternate}), obtaining with a simple completion of the square
	\begin{align}
	\GL(\psi,\varphi)-\eL&\geq\kappa'\expval{(-\Delta_L)^{-1/2}}{\rho_{\psi}-\rho_{\psi_L^y}}+\|\varphi-\sigma_{\psi}\|_2^2\nonumber\\
	&=\|F^{1/2}(\sigma_{\psi}-\varphi_L^y)+F^{-1/2}(\varphi_L^y-\varphi)\|_2^2\nonumber\\
	&\quad +\expval{\unit-F^{-1}}{\varphi-\varphi_L^y},
	\end{align}
	where $F=\unit+\kappa'(-\Delta_L)^{1/2}$. Dropping the first term and minimizing over $\psi$ yields our claim. Finally, \eqref{eq:Fglobalquadbound2} immediately follows from \eqref{eq:Fglobalquadbound} and the spectral gap of the Laplacian, using the fact that $\varphi_L$ and all its translates have zero average since $\varphi_L=\sigma_{\psi_L}$.
\end{proof}

\subsubsection{The Hessian of $\FL$}\label{sec:HFL}

For any $\varphi \in L^2_{\R}(\TL)$, we introduce the notation 
\begin{align}
e(\varphi):=\infspec h_{\varphi},
\end{align}
and observe that $\FL$, defined in \eqref{eq:EFfun}, can  equivalently be written as
\begin{align}
\label{eq:Ffundef}
\FL(\varphi)= \|\varphi\|_2^2+e(\varphi), \quad \varphi\in L^2_{\R}(\TL).
\end{align}

We compute the Hessian of $\FL$ at its minimizers using standard arguments in perturbation theory, showing the validity of expression \eqref{eq:Hessianexpr}. We need the following two Lemmas. 

\begin{lem}
	\label{lem:L2infinbddness}
	For $L\geq L_0>0$, any $\varphi\in L^2(\TL)$ and any $T>0$
	\begin{align}
	\label{eq:L2infinbddness}
	\|(-\Delta_L+T)^{-1}\varphi\|=\|\varphi(-\Delta_L+T)^{-1}\|\leq C_T\|\varphi\|_{L^2(\TL)+L^{\infty}(\TL)}
	\end{align} 
	for some constant $C_T>0$ with $\lim_{T\to \infty} C_T = 0$. 
	Here $\varphi$ is understood as a multiplication operator, $\|\cdot\|$ denotes the operator norm on $L^2(\TL)$, and
	\begin{align}
	\|\varphi\|_{L^2(\TL)+L^{\infty}(\TL)}:=\inf_{\varphi_1+\varphi_2=\varphi \atop \varphi_1\in L^2(\TL), \, \varphi_2\in L^{\infty}(\TL)} \left(\|\varphi_1\|_{L^2(\TL)}+\|\varphi_2\|_{L^{\infty}(\TL)}\right).
	\end{align}
\end{lem}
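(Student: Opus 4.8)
The plan is to prove the two claimed operator-norm bounds by dualizing the elementary fact that $(-\Delta_L+T)^{-1}$ maps $L^1(\TL)+L^2(\TL)$ boundedly into $L^2(\TL)$. First I would note that the two statements in \eqref{eq:L2infinbddness} are equivalent by taking adjoints: since $(-\Delta_L+T)^{-1}$ is self-adjoint and $\varphi$ (real-valued, as a multiplication operator) is self-adjoint, $\big((-\Delta_L+T)^{-1}\varphi\big)^* = \varphi(-\Delta_L+T)^{-1}$, so the two operators have the same norm. Hence it suffices to bound $\|(-\Delta_L+T)^{-1}\varphi\|$, and by the splitting $\varphi=\varphi_1+\varphi_2$ with $\varphi_1\in L^2$, $\varphi_2\in L^\infty$ and taking the infimum at the end, it suffices to bound $\|(-\Delta_L+T)^{-1}\varphi_i\|$ separately in terms of $\|\varphi_1\|_{L^2}$ and $\|\varphi_2\|_{L^\infty}$ respectively.

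For the $L^\infty$ piece this is trivial: $\|(-\Delta_L+T)^{-1}\varphi_2 g\|_2 \le \|(-\Delta_L+T)^{-1}\|_{L^2\to L^2}\,\|\varphi_2 g\|_2 \le T^{-1}\|\varphi_2\|_\infty\|g\|_2$, using the spectral gap estimate $\|(-\Delta_L+T)^{-1}\|\le T^{-1}$, which already has the desired $T\to\infty$ decay. For the $L^2$ piece I would instead estimate $\|(-\Delta_L+T)^{-1}\varphi_1 g\|_2$ by interpolating through $L^{6}$: write $(-\Delta_L+T)^{-1} = (-\Delta_L+T)^{-1/2}(-\Delta_L+T)^{-1/2}$, and use that $(-\Delta_L+T)^{-1/2}$ maps $L^{6/5}(\TL)\to L^2(\TL)$ by the Hardy--Littlewood--Sobolev (Sobolev) inequality on the torus, with norm bounded uniformly in $L\ge L_0$ and in fact decaying in $T$ (since the kernel of $(-\Delta_L+T)^{-1/2}$ is pointwise dominated, via Lemma \ref{kernelbounds} after a suitable rescaling, by a decaying function of $T$), while $(-\Delta_L+T)^{-1/2}\colon L^2\to L^2$ has norm $\le T^{-1/2}$. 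Then by H\"older $\|\varphi_1 g\|_{6/5}\le \|\varphi_1\|_2\|g\|_{L^3}$, and $\|g\|_{L^3}\le \|g\|_{L^2}^{1/2}\|g\|_{L^6}^{1/2}$... — but to keep everything on $L^2$ I would rather dualize once more: $\|(-\Delta_L+T)^{-1}\varphi_1 g\|_2 = \sup_{\|h\|_2=1}\big|\langle h, (-\Delta_L+T)^{-1}\varphi_1 g\rangle\big| = \sup\big|\langle \varphi_1 (-\Delta_L+T)^{-1} h, g\rangle\big| \le \|g\|_2 \sup \|\varphi_1 (-\Delta_L+T)^{-1}h\|_2 \le \|g\|_2\,\|\varphi_1\|_2\,\big\|(-\Delta_L+T)^{-1}\big\|_{L^2\to L^\infty}$. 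It then remains to check $\big\|(-\Delta_L+T)^{-1}\big\|_{L^2\to L^\infty}\le C_T$ with $C_T\to 0$ as $T\to\infty$: the kernel $(-\Delta_L+T)^{-1}(x,y)$ is nonnegative, translation-invariant, and by a rescaling $x\mapsto \sqrt T x$ together with the comparison in Lemma \ref{kernelbounds} it is bounded by $C(\sqrt T\,\mathrm{dist}_{\TL}(x,y))^{-1}e^{-c\sqrt T\,\mathrm{dist}}$ plus lower-order terms, whose $L^2_y$ norm is $O(T^{-1/4})$, uniformly in $L\ge L_0$; hence $\|(-\Delta_L+T)^{-1}h\|_\infty \le \sup_x\|(-\Delta_L+T)^{-1}(x,\cdot)\|_2\|h\|_2 \le C_T\|h\|_2$ with $C_T\to 0$.

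Combining the two pieces and taking the infimum over decompositions $\varphi=\varphi_1+\varphi_2$ gives \eqref{eq:L2infinbddness} with $C_T:=C(\,T^{-1/2}\text{-type bound from the }L^2\text{ part}\,)+T^{-1}$, which tends to $0$ as $T\to\infty$; uniformity in $L$ for $L\ge L_0$ follows since all the Sobolev/kernel constants used are universal in that range. The main obstacle is obtaining the $L^2\to L^\infty$ bound on the resolvent \emph{with the correct $T\to\infty$ decay and with $L$-uniform constants simultaneously}; the cleanest route is the explicit Fourier-series / rescaling argument for the heat-kernel-type pointwise bound on $(-\Delta_L+T)^{-1}(x,y)$ rather than abstract Sobolev embeddings, since the latter would have to be tracked carefully in both $T$ and $L$.
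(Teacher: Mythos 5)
Your overall structure matches the paper's: take adjoints to reduce to one of the two operator norms, split $\varphi=\varphi_1+\varphi_2$, bound the $L^\infty$ piece trivially by $T^{-1}\|\varphi_2\|_\infty$, and bound the $L^2$ piece via the operator norm of $(-\Delta_L+T)^{-1}$ from $L^2$ to $L^\infty$. The only substantive difference is how you justify that last $L^2\to L^\infty$ bound. The paper does it in one line by Cauchy--Schwarz and Parseval,
\begin{align}
\|(-\Delta_L+T)^{-1} f\|_\infty \le \sup_x \|(-\Delta_L+T)^{-1}(x,\cdot)\|_{L^2}\,\|f\|_2
= \Big(\tfrac{1}{L^3}\sum_{k\in\frac{2\pi}{L}\mathbb{Z}^3}\tfrac{1}{(|k|^2+T)^2}\Big)^{1/2}\|f\|_2 ,
\end{align}
and compares the lattice sum to $\int_{\Rtre}(|x|^2+T)^{-2}\,dx \sim T^{-1/2}$, giving a bound that vanishes as $T\to\infty$, uniformly in $L\ge L_0$. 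You write this very inequality in your last line, but you reach the $O(T^{-1/4})$ estimate on $\|(-\Delta_L+T)^{-1}(x,\cdot)\|_{L^2_y}$ via a claimed pointwise Yukawa-type bound $(\sqrt T\,\dist)^{-1}e^{-c\sqrt T\,\dist}$, which you attribute to ``a rescaling $x\mapsto\sqrt T x$ together with the comparison in Lemma \ref{kernelbounds}''. That step does not go through as stated: Lemma \ref{kernelbounds} concerns $(-\Delta_L)^{-1}$ (the $T=0$ massless kernel) and says nothing about the massive resolvent, and the torus has no exact scale invariance relating the two; so the pointwise bound would need its own argument. Since you never actually need a pointwise estimate (Parseval on the Fourier side gives the $L^2_y$ norm of the kernel directly), the cleanest fix is simply to delete the detour through the heat-kernel bound and the earlier HLS/$L^6$ interpolation sketch, and compute $\|(-\Delta_L+T)^{-1}(x,\cdot)\|_{L^2}$ as above. (A side remark: the paper's displayed final equality ``$= C T^{-1/2}\|f\|_2$'' has a typo; the integral is $O(T^{-1/2})$, so its square root gives $O(T^{-1/4})$, consistent with your count. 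This does not affect the lemma's conclusion.)
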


Note that 
\begin{align}
\|\varphi\|_{L^2(\TL)}\leq L^{3/2} \|\varphi\|_{L^2(\TL)+L^{\infty}(\TL)}\leq L^{3/2} \|\varphi\|_{L^2(\TL)},
\end{align}
which clearly makes the two norms equivalent. Nevertheless, we find it more natural to work with a bound of the form \eqref{eq:L2infinbddness}, where $C_T$ is independent of $L$.

Lemma \ref{lem:L2infinbddness} implies that, for any $\varphi\in L^2(\TL)+L^{\infty}(\TL)$, the multiplication operator associated with $\varphi$ is infinitesimally relatively bounded with respect to  $-\Delta_L$. More precisely, for any $\delta>0$, there exists $C\left(\delta, \|\varphi\|_{L^2(\TL)+L^{\infty}(\TL)}\right)$ depending on $\varphi$ only  through $\|\varphi\|_{L^2(\TL)+L^{\infty}(\TL)}$, such that for any $f \in \text{Dom}(-\Delta_L)$
\begin{align}
\|\varphi f\|\leq \delta \|\Delta_L f\|+C\left(\delta, \|\varphi\|_{L^2(\TL)+L^{\infty}(\TL)}\right)\|f\|.
\end{align}
Whenever infinitesimal relative boundedness holds with a constant $C(\delta)$ uniform over a class of operators, we will say that the class is uniformly infinitesimally relatively bounded. In this case, Lemma \ref{lem:L2infinbddness} ensures that multiplication operators associated to functions in $(L^2+L^{\infty})$-balls are uniformly infinitesimally relatively bounded with respect to  $-\Delta_L$. 
\begin{proof}
	We first observe that, by self-adjointness of $(-\Delta_L + T)^{-1}$, it is sufficient to show that the claimed bound holds for $\|\varphi(-\Delta_L+T)^{-1}\|$. For any $f, \varphi\in L^2(\TL)$ and any decomposition of the form $\varphi=\varphi_1+\varphi_2$ with $\varphi_1\in L^2(\TL)$ and $\varphi_2\in L^{\infty}(\TL)$ we have
	\begin{align}
	\|\varphi (-\Delta_L+T)^{-1} f\|_2&\leq \|\varphi_1\|_2 \|(-\Delta_L+T)^{-1}f\|_{\infty}+\|\varphi_2\|_{\infty} \|(-\Delta_L+T)^{-1} f\|_2\nonumber\\
	&\leq \|\varphi_1\|_2 \|(-\Delta_L+T)^{-1}f\|_{\infty}+T^{-1}\|\varphi_2\|_{\infty}\|f\|_2.
	\end{align}
	Moreover, 
	\begin{align}
	\|(-\Delta_L+T)^{-1} f\|_{\infty}&\leq \sum_{k\in \frac {2\pi}{L} \mathbb{Z}^3} \frac 1 {L^{3/2}(|k|^2+T)} |f_k|\leq \left(\frac 1 {L^3}\sum_{k\in \frac {2\pi}{L} \mathbb{Z}^3} \frac 1 {(|k|^2+T)^2}\right)^{1/2}\|f\|_2\nonumber\\
	&\leq C \left(\int_{\Rtre} \frac 1 {(|x|^2+T)^2}\right)^{1/2}\|f\|_2 = C T^{-1/2} \|f\|_2. 
	\end{align}
	Therefore, picking $C_T:=\max\left\{T^{-1}, C T^{-1/2} \right\}$ yields
	\begin{align}
	\|\varphi (-\Delta_L+T)^{-1} f\|_2\leq C_T\left(\|\varphi_1\|_2 +\|\varphi_2\|_{\infty}\right)\|f\|_2,
	\end{align}
	optimizing over $\varphi_1$ and $\varphi_2$ completes the proof. 
\end{proof}

\begin{lem}
	\label{lem:inversehalflaplbounds}
	For  $\varphi \in L^2(\TL)$ 
	\begin{align}
	\|(-\Delta_L)^{-1/2}\varphi\|_{L^{\infty}(\TL)+L^2(\TL)}\lesssim \|(-\Delta_L+1)^{-1/2}\varphi\|_{L^2(\TL)}.
	\end{align}
\end{lem}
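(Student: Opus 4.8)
The plan is to pass to Fourier coefficients and split the multiplier $(-\Delta_L)^{-1/2}$ at the fixed frequency $|k|=1$: the part supported on frequencies $|k|\geq1$ will be controlled in $L^2(\TL)$, and the part supported on $0<|k|<1$ in $L^{\infty}(\TL)$. First I would set $h:=(-\Delta_L+1)^{-1/2}\varphi$, so that the right-hand side of the asserted inequality is $\|h\|_{L^2(\TL)}$, and write, using the convention $f=\sum_k f_k\,e^{ik\cdot x}/L^{3/2}$ adopted throughout the paper,
\begin{align*}
(-\Delta_L)^{-1/2}\varphi=\sum_{0\neq k\in\frac{2\pi}{L}\mathbb{Z}^3}\frac{\sqrt{|k|^2+1}}{|k|}\,h_k\,\frac{e^{ik\cdot x}}{L^{3/2}}=:g,
\end{align*}
where the zero mode of $h$ is annihilated (which only helps). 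I would then decompose $g=g_{\mathrm{low}}+g_{\mathrm{high}}$ according to whether $0<|k|<1$ or $|k|\geq1$.

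For $g_{\mathrm{high}}$ one has $\sqrt{|k|^2+1}/|k|\leq\sqrt2$ on $|k|\geq1$, so Parseval gives $\|g_{\mathrm{high}}\|_{L^2(\TL)}^2\leq 2\|h\|_{L^2(\TL)}^2$. For $g_{\mathrm{low}}$, which is a trigonometric polynomial in the finitely many frequencies $k=\frac{2\pi}{L}n$ with $n\in\mathbb{Z}^3\setminus\{0\}$, $|n|<L/(2\pi)$, I would bound each exponential by $1$, use $\sqrt{|k|^2+1}/|k|\leq\sqrt2/|k|$ on $0<|k|<1$, and apply Cauchy--Schwarz:
\begin{align*}
\|g_{\mathrm{low}}\|_{L^{\infty}(\TL)}\leq\frac{\sqrt2}{L^{3/2}}\sum_{0<|k|<1}\frac{|h_k|}{|k|}\leq\frac{\sqrt2}{L^{3/2}}\Big(\sum_{0<|k|<1}\frac{1}{|k|^2}\Big)^{1/2}\|h\|_{L^2(\TL)}.
\end{align*}

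It then remains to see that $L^{-3}\sum_{0<|k|<1}|k|^{-2}$ is bounded by a universal constant: writing $k=\frac{2\pi}{L}n$ this equals $\frac{1}{4\pi^2L}\sum_{0<|n|<L/(2\pi)}|n|^{-2}$, and the elementary lattice-counting estimate $\sum_{0<|n|<R}|n|^{-2}\lesssim R$ (shell-by-shell comparison with $\int_{1<|x|<R}|x|^{-2}\,dx$), applied with $R=L/(2\pi)$, yields the bound; for $L<2\pi$ there are no frequencies with $0<|k|<1$, so $g_{\mathrm{low}}\equiv0$ and the estimate is trivial. Combining the two pieces, $\|(-\Delta_L)^{-1/2}\varphi\|_{L^{\infty}(\TL)+L^2(\TL)}\leq\|g_{\mathrm{low}}\|_{L^{\infty}(\TL)}+\|g_{\mathrm{high}}\|_{L^2(\TL)}\lesssim\|h\|_{L^2(\TL)}=\|(-\Delta_L+1)^{-1/2}\varphi\|_{L^2(\TL)}$. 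The argument is entirely elementary; the only point requiring any care is the $L$-uniformity of the constant, i.e. the bound $L^{-3}\sum_{0<|k|<1}|k|^{-2}\lesssim1$, which is precisely why the splitting frequency is chosen at the fixed scale $|k|=1$ rather than at a scale shrinking with $L$.
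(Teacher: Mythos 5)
Your proof is correct and follows essentially the same route as the paper's: split $(-\Delta_L)^{-1/2}\varphi$ in Fourier at the fixed threshold $|k|=1$, control the low-frequency piece in $L^\infty$ via Cauchy--Schwarz and the high-frequency piece in $L^2$ via Parseval, and verify that $L^{-3}\sum_{0<|k|<1}|k|^{-2}$ is bounded uniformly in $L$. The only cosmetic difference is that you introduce the auxiliary function $h=(-\Delta_L+1)^{-1/2}\varphi$ and work with its coefficients, whereas the paper estimates directly against $\sum_k(|k|^2+1)^{-1}|\varphi_k|^2$; the lattice-sum estimate, which the paper leaves implicit, you spell out correctly.
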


\begin{proof}
	We write $f_1=\chi_{[0,1)}$ and $f_2=\chi_{[1,+\infty)}$ and 
	\begin{align}
	\varphi_1= f_1\left[(-\Delta_L)^{-1/2}\right] \varphi, \quad \varphi_2=f_2\left[(-\Delta_L)^{-1/2}\right]\varphi.
	\end{align}
	Clearly $(-\Delta_L)^{-1/2} \varphi=\varphi_1+\varphi_2$. 
	Moreover
	\begin{align}
	\|(-\Delta_L)^{-1/2} \varphi\|_{L^{\infty}+L^2}&\leq \|\varphi_1\|_{\infty}+\|\varphi_2\|_2\nonumber\\
	&\leq \left(\sum_{0\neq k\in \frac {2\pi}{L} \mathbb{Z}^3\atop |k|<1} \frac 1 {L^3|k|^2}\right)^{1/2}\left(\sum_{0\neq k\in \frac {2\pi}{L} \mathbb{Z}^3\atop |k|<1} |\varphi_k|^2\right)^{1/2}+\left(\sum_{k\in \frac {2\pi}{L} \mathbb{Z}^3\atop |k|\geq 1} \frac {|\varphi_k|^2}{|k|^2}\right)^{1/2}\nonumber\\
	&\lesssim \left(\sum_{0\neq k\in \frac {2\pi}{L} \mathbb{Z}^3\atop |k|<1} |\varphi_k|^2\right)^{1/2}+\left(\sum_{k\in \frac {2\pi}{L} \mathbb{Z}^3\atop |k|\geq 1} \frac {|\varphi_k|^2}{|k|^2}\right)^{1/2}\nonumber\\
	&\lesssim \left(\sum_{k\in \frac {2\pi}{L} \mathbb{Z}^3} \frac 1 {|k|^2+1}|\varphi_k|^2\right)^{1/2}=C\|(-\Delta_L+1)^{-1/2}\varphi\|_{L^2(\TL)}.
	\end{align}
	This concludes the proof.
\end{proof}

Lemmas \ref{lem:L2infinbddness} and \ref{lem:inversehalflaplbounds} together yield the following Corollary, whose proof is omitted as it is now straightforward.

\begin{cor}
	\label{cor:Vphiinfrelbdd}
	For any $\varphi$ such that $\|(-\Delta_L+1)^{-1/2}\varphi\|_2$ is finite, the multiplication operator $V_{\varphi}$ (defined in \eqref{eq:hVphi}) is infinitesimally relatively bounded with respect to  $(-\Delta_L)$. Moreover, for $T> 0$ there exists $C_T$ such that
	\begin{align}
	\|(-\Delta_L+T)^{-1} V_{\varphi}\|\leq C_T \|(-\Delta_L+1)^{-1/2}\varphi\|_2, \quad \text{and} \;\;\; C_T\searrow 0\,\,\, \text{as} \,\,\, T\to \infty.
	\end{align}
\end{cor}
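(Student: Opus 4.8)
The plan is to obtain the corollary by directly composing the two preceding lemmas, the point being that $V_\varphi$ is itself the multiplication operator associated with the function $-2(-\Delta_L)^{-1/2}\varphi$. I would first reduce to $\varphi\in L^2(\TL)$ — so that all the operators below are unambiguously defined — and aim for bounds that depend on $\varphi$ only through $\|(-\Delta_L+1)^{-1/2}\varphi\|_2$; the general case with $\|(-\Delta_L+1)^{-1/2}\varphi\|_2<\infty$ then follows by approximating $\varphi$ with its Fourier truncations and passing to the limit in the resulting uniform bounds.

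For the quantitative estimate, I would apply Lemma \ref{lem:L2infinbddness} with $V_\varphi$ in the role of $\varphi$, which gives for every $T>0$
\begin{align}
\|(-\Delta_L+T)^{-1}V_\varphi\|=\|V_\varphi(-\Delta_L+T)^{-1}\|\leq 2\,C_T\,\|(-\Delta_L)^{-1/2}\varphi\|_{L^2(\TL)+L^\infty(\TL)},
\end{align}
with $C_T\searrow 0$ as $T\to\infty$, and then bound the right-hand side via Lemma \ref{lem:inversehalflaplbounds} by a universal multiple of $C_T\|(-\Delta_L+1)^{-1/2}\varphi\|_2$. Renaming the product of universal constants and $C_T$ as the new $C_T$ produces exactly the displayed inequality of the corollary and preserves $C_T\searrow 0$.

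To deduce infinitesimal relative boundedness I would feed this into the resolvent identity: for $f\in\mathrm{Dom}(-\Delta_L)$, writing $f=(-\Delta_L+T)^{-1}(-\Delta_L+T)f$ yields
\begin{align}
\|V_\varphi f\|\leq\|V_\varphi(-\Delta_L+T)^{-1}\|\big(\|\Delta_L f\|+T\|f\|\big),
\end{align}
and for any $\delta>0$ I would pick $T=T(\delta)$ large enough that $\|V_\varphi(-\Delta_L+T)^{-1}\|\leq\delta$, obtaining $\|V_\varphi f\|\leq\delta\|\Delta_L f\|+\delta\,T(\delta)\|f\|$ with $\delta\,T(\delta)$ depending on $\varphi$ only through $\|(-\Delta_L+1)^{-1/2}\varphi\|_2$, which is precisely the uniform infinitesimal relative boundedness over balls in that norm.

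I do not expect any genuine obstacle: the proof is a two-line composition, which is why it is left to the reader in the excerpt. The only items worth a remark are the identity $\|(-\Delta_L+T)^{-1}V_\varphi\|=\|V_\varphi(-\Delta_L+T)^{-1}\|$ (true because the operator norm equals that of the adjoint and conjugating the multiplier is an $L^2$-isometry) and the verification that $(-\Delta_L)^{-1/2}\varphi\in L^2(\TL)+L^\infty(\TL)$, needed for Lemma \ref{lem:L2infinbddness} to apply — but this is exactly the conclusion of Lemma \ref{lem:inversehalflaplbounds} under the finiteness hypothesis.
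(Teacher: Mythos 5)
Your composition of Lemmas \ref{lem:L2infinbddness} and \ref{lem:inversehalflaplbounds} is precisely the route the authors intend, as the sentence preceding the corollary indicates, and every step you give is correct: Lemma \ref{lem:inversehalflaplbounds} simultaneously supplies the membership $(-\Delta_L)^{-1/2}\varphi\in L^2(\TL)+L^\infty(\TL)$ needed to invoke Lemma \ref{lem:L2infinbddness} with $V_\varphi$ in place of its $\varphi$, and the quantitative bound that lets you rename constants while preserving $C_T\searrow0$. The passage from the resolvent bound to uniform infinitesimal relative boundedness and the density/truncation remark are both standard and unobjectionable; nothing is missing.
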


In particular, Corollary \ref{cor:Vphiinfrelbdd} implies that the family of multiplication operators associated to $\{V_{\varphi} \,|\, \|(-\Delta_L+1)^{-1/2}\varphi\|_2 \leq M\}$ is uniformly infinitesimally relatively bounded with respect to  $-\Delta_L$ for any $M$.

With these tools at hand we now investigate $\FL$ close to its minimum and, in particular, compute the Hessian of $\FL$ at its minimizers. We follow very closely the analogous analysis carried out in \cite{frank2019quantum}. By translation invariance of the problem, it is clearly sufficient to perform the computation with respect to  $\varphi_L$, where $\varphi_L$ is the same as in Corollary \ref{cor:uniquenessANDcoercivity}. 

\begin{prop}
	\label{prop:FHessian}
	For $L> L_1$ let $\varphi\in L^2_{\mathbb{R}}(\TL)$ be such that
	\begin{align}
	\label{eq:FHessianregime}
	\|(-\Delta_L+1)^{-1/2}(\varphi-\varphi_L)\|_{L^2(\TL)}\leq \varepsilon_L
	\end{align}
	for some $\varepsilon_L>0$ small enough. Then
	\begin{align}
	\label{eq:claimedHessianF}
	&\left|\FL(\varphi)-\eL-\expval{\unit-K_L}{\varphi-\varphi_L}\right|\notag\\
	&\lesssim_L \|(-\Delta_L+1)^{-1/2}(\varphi-\varphi_L)\|_{2}\expval{J_L}{\varphi-\varphi_L},
	\end{align}
	where
	\begin{align}
	\label{eq:KJdef}
	&K_L:=4(-\Delta_L)^{-1/2} \psi_L \frac {\GSOrthProj} {h_{\varphi_L}-e(\varphi_L)}\psi_L (-\Delta_L)^{-1/2},\notag\\
	&J_L=4(-\Delta_L)^{-1/2}\psi_L (-\Delta_L+1)^{-1} \psi_L (-\Delta_L)^{-1/2},
	\end{align} 
	and $\psi_L$, which we recall (see \eqref{eq:psilphil}) is the (positive) ground state of $h_{\varphi_L}$, is understood, in the expressions for $K_L$ and $J_L$, as a multiplication operator. 
\end{prop}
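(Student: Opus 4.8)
\emph{Proof plan.} The plan is to run second--order perturbation theory for the ground--state eigenvalue $e(\varphi)=\infspec h_\varphi$ of $h_\varphi=h_{\varphi_L}+V_\eta$, where $\eta:=\varphi-\varphi_L$ and $V_\eta:=V_\varphi-V_{\varphi_L}=-2(-\Delta_L)^{-1/2}\eta$, and to track the remainder in the weighted quadratic form $\langle\eta,J_L\eta\rangle$. By Corollary~\ref{cor:Vphiinfrelbdd}, $\|(-\Delta_L+1)^{-1}V_\eta\|\lesssim_L\|(-\Delta_L+1)^{-1/2}\eta\|_2\le\varepsilon_L$, so for $\varepsilon_L$ small (depending on $L$) the bottom of $\sigma(h_\varphi)$ is a simple isolated eigenvalue within $O_L(\varepsilon_L)$ of $e(\varphi_L)$, with spectral gap at least $\tfrac12\gamma_L$, where $\gamma_L>0$ is the gap of $h_{\varphi_L}$. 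Since $\|\varphi\|_2^2=\|\varphi_L\|_2^2+2\langle\varphi_L,\eta\rangle+\|\eta\|_2^2$ and, by the criticality relation $\varphi_L=\sigma_{\psi_L}$ together with self--adjointness of $(-\Delta_L)^{-1/2}$, one has $\langle\psi_L,V_\eta\psi_L\rangle=-2\big\langle(-\Delta_L)^{-1/2}|\psi_L|^2,\eta\big\rangle=-2\langle\varphi_L,\eta\rangle$, the linear terms cancel and $\FL(\varphi)-\eL=\|\eta\|_2^2+\big(e(\varphi)-e(\varphi_L)-\langle\psi_L,V_\eta\psi_L\rangle\big)$. Writing $R:=\GSOrthProj(h_{\varphi_L}-e(\varphi_L))^{-1}\GSOrthProj$ for the reduced resolvent, a short computation gives $\langle V_\eta\psi_L,R\,V_\eta\psi_L\rangle=\langle\eta,K_L\eta\rangle$ and $\|(-\Delta_L+1)^{-1/2}V_\eta\psi_L\|_2^2=\langle\eta,J_L\eta\rangle$, with $K_L,J_L$ as in~\eqref{eq:KJdef}, so that~\eqref{eq:claimedHessianF} is exactly the statement
\[
\big|\,e(\varphi)-e(\varphi_L)-\langle\psi_L,V_\eta\psi_L\rangle+\langle\eta,K_L\eta\rangle\,\big|\lesssim_L\|(-\Delta_L+1)^{-1/2}\eta\|_2\,\langle\eta,J_L\eta\rangle .
\]

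For the upper bound I would use the normalized trial state $\chi:=(\psi_L-R\,V_\eta\psi_L)/\|\psi_L-R\,V_\eta\psi_L\|_2$ and expand $\langle\chi,h_\varphi\chi\rangle$: using $R\psi_L=0$, $h_{\varphi_L}\psi_L=e(\varphi_L)\psi_L$, and $(h_{\varphi_L}-e(\varphi_L))R=\GSOrthProj$ on $\ran\GSOrthProj$, the first-- and second--order terms organize into $\langle\psi_L,V_\eta\psi_L\rangle$ and $-\langle\eta,K_L\eta\rangle$, while the remainder is a sum of $\langle R V_\eta\psi_L,V_\eta R V_\eta\psi_L\rangle$ and of terms bounded by $\|R V_\eta\psi_L\|_2^2\big(|\langle\psi_L,V_\eta\psi_L\rangle|+\langle\eta,K_L\eta\rangle\big)$. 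For the lower bound I would take the true normalized ground state $\chi$ of $h_\varphi$, write $\chi=a\psi_L+\zeta$ with $a>0$, $\zeta\perp\psi_L$, put $\widetilde\zeta:=\zeta+aR\,V_\eta\psi_L$, and complete the square to obtain the exact identity
\begin{align*}
e(\varphi)&=e(\varphi_L)+\langle\psi_L,V_\eta\psi_L\rangle-\langle\eta,K_L\eta\rangle+\big\|(h_{\varphi_L}-e(\varphi_L))^{1/2}\widetilde\zeta\big\|_2^2\\
&\quad+\|\zeta\|_2^2\big(\langle\eta,K_L\eta\rangle-\langle\psi_L,V_\eta\psi_L\rangle\big)+\langle\zeta,V_\eta\zeta\rangle .
\end{align*}
Dropping the two nonnegative terms $\|(h_{\varphi_L}-e(\varphi_L))^{1/2}\widetilde\zeta\|_2^2$ and $\|\zeta\|_2^2\langle\eta,K_L\eta\rangle$ (recall $K_L\ge0$) yields the matching lower bound up to the error $-\|\zeta\|_2^2|\langle\psi_L,V_\eta\psi_L\rangle|-|\langle\zeta,V_\eta\zeta\rangle|$.

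All error terms are controlled with three ingredients. First, the gap together with $V_{\varphi_L}\in L^\infty(\TL)$ gives $h_{\varphi_L}-e(\varphi_L)\ge c_L(-\Delta_L+1)$ on $\ran\GSOrthProj$, hence $R\le c_L^{-1}(-\Delta_L+1)^{-1}$ and $\|(-\Delta_L+1)^{1/2}R(-\Delta_L+1)^{1/2}\|\lesssim_L1$; combined with the splitting $V_\eta\psi_L=(-\Delta_L+1)^{1/2}(-\Delta_L+1)^{-1/2}V_\eta\psi_L$ this estimates every reduced--resolvent factor by $\langle\eta,J_L\eta\rangle^{1/2}$. Second, Lemmas~\ref{lem:L2infinbddness}--\ref{lem:inversehalflaplbounds} and a H\"older/Sobolev estimate on $\TL$ (using that $(-\Delta_L)^{-1/2}$ annihilates the zero mode, so $\|(-\Delta_L)^{-1/2}\eta\|_2\lesssim_L\|(-\Delta_L+1)^{-1/2}\eta\|_2$) yield $\|(-\Delta_L+1)^{-1/2}V_\eta(-\Delta_L+1)^{-1/2}\|\lesssim_L\|(-\Delta_L+1)^{-1/2}\eta\|_2$, which supplies the small prefactor for the ``extra'' factor of $V_\eta$ in each remainder; likewise $|\langle\psi_L,V_\eta\psi_L\rangle|=2|\langle\varphi_L,\eta\rangle|\lesssim_L\|(-\Delta_L+1)^{-1/2}\eta\|_2$ since $\varphi_L\in C^\infty(\TL)$. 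Third, for the lower bound one also needs the a priori estimate $\|\zeta\|_{H^1(\TL)}^2\lesssim_L\langle\eta,J_L\eta\rangle$; this I would obtain by feeding the already--proved upper bound for $e(\varphi)$ into the exact identity above, using $\|(h_{\varphi_L}-e(\varphi_L))^{1/2}\zeta\|_2^2\gtrsim_L\|\zeta\|_{H^1(\TL)}^2$, $\|\zeta-\widetilde\zeta\|_{H^1(\TL)}\lesssim_L\langle\eta,J_L\eta\rangle^{1/2}$, and absorbing $|\langle\zeta,V_\eta\zeta\rangle|\lesssim_L\varepsilon_L\|\zeta\|_{H^1(\TL)}^2$ on the left--hand side for $\varepsilon_L$ small. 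Collecting the estimates, every remainder term in both bounds is $\lesssim_L\|(-\Delta_L+1)^{-1/2}\eta\|_2\,\langle\eta,J_L\eta\rangle$ (several with an extra factor $\varepsilon_L$), which gives~\eqref{eq:claimedHessianF}; the Hessian formula~\eqref{eq:Hessianexpr} then follows by specializing to $\eta=\varepsilon\phi$ and letting $\varepsilon\to0$.

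I expect the main obstacle to be the bookkeeping of resolvent weights, namely ensuring that the remainder comes out precisely as $\|(-\Delta_L+1)^{-1/2}\eta\|_2\,\langle\eta,J_L\eta\rangle$ and not, say, as $\langle\eta,J_L\eta\rangle^{3/2}$. This hinges on the operator--norm bound $\|(-\Delta_L+1)^{-1/2}V_\eta(-\Delta_L+1)^{-1/2}\|\lesssim_L\|(-\Delta_L+1)^{-1/2}\eta\|_2$, which lets one spend the extra power of the perturbation on the small prefactor while keeping a clean $J_L$--quadratic form, and, for the lower bound, on the self--improving estimate for $\|\zeta\|_{H^1(\TL)}$.
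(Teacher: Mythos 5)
Your proof is correct, and it takes a genuinely different route from the paper's. The paper computes $e(\varphi)$ via the Riesz contour integral $e(\varphi)=\Tr\int_\gamma \frac{z}{z-(h_{\varphi_L}+V_\eta)}\frac{dz}{2\pi i}$, extracts the rank-one part of the resolvent through the identity
\[
\frac{1}{\unit-A}\frac{1}{\unit-B}=\unit+A+A(A+B)+\frac{B}{\unit-B}+\frac{A^3}{\unit-A}+\frac{A^2}{\unit-A}B+\frac{A}{\unit-A}\frac{B^2}{\unit-B}
\]
with $A=V_\eta(z-h_{\varphi_L})^{-1}$ and $B=\frac{Q_{\psi_L}}{z-h_{\varphi_L}}V_\eta$, identifies the constant, linear, and quadratic contributions term by term, and bounds the remaining (third-order and higher) terms by operator-norm estimates involving $\|(-\Delta_L+1)^{-1/2}V_\eta(-\Delta_L+1)^{1/2}\|$-type quantities. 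Your approach instead treats the upper and lower bounds separately: the trial state $\chi\propto\psi_L-R\,V_\eta\psi_L$ for the upper bound, and for the lower bound the exact identity obtained by completing the square with the true ground state decomposed as $a\psi_L+\zeta$. The contour-integral method buys a single unified expansion where the Hessian term $-\langle\eta,K_L\eta\rangle$ falls out algebraically and all error terms are handled uniformly; your variational method is more elementary and makes the sign structure of the remainder manifest (the two discarded terms $\|(h_{\varphi_L}-e(\varphi_L))^{1/2}\widetilde\zeta\|_2^2$ and $\|\zeta\|_2^2\langle\eta,K_L\eta\rangle$ are nonnegative), at the price of having to feed the already-proved upper bound back into the lower-bound identity to obtain the a priori control $\|\zeta\|_{H^1}^2\lesssim_L\langle\eta,J_L\eta\rangle$. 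Both rely on the same two structural inputs: the uniform gap of $h_{\varphi_L}$ together with $V_{\varphi_L}\in L^\infty$ (giving $R\lesssim_L(-\Delta_L+1)^{-1}$), and the relative-boundedness estimates of Lemmas~\ref{lem:L2infinbddness}--\ref{lem:inversehalflaplbounds}; the specific operator bound you isolate, $\|(-\Delta_L+1)^{-1/2}V_\eta(-\Delta_L+1)^{-1/2}\|\lesssim_L\|(-\Delta_L+1)^{-1/2}\eta\|_2$, plays the role that the family of bounds $\|(-\Delta_L+1)^{-1/2}A(\unit-A)^{-1}(-\Delta_L+1)^{1/2}\|\lesssim_L\|(-\Delta_L+1)^{-1}(\varphi-\varphi_L)\|$ plays in the paper, and your Sobolev/H\"older derivation of it (using that $(-\Delta_L)^{-1/2}$ kills the zero mode, so the $\lesssim_L$ constant picks up a factor $\sim L$) is sound.
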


Note that this implies that $H^{\FL}_{\varphi_L}=\unit-K_L$, as claimed in \eqref{eq:Hessianexpr}. In particular, $K_L\leq \unit$ by minimality of $\varphi_L$. It is also clear, by definition, that $K_L\geq0$. We emphasize that $J_L$ is trace class, being the square of $2(-\Delta_L+1)^{-1/2}\psi_L(-\Delta_L)^{-1/2}$, which is Hilbert-Schmidt since $\psi_L$ is in $L^2$, as a function of $x$, and $f(k):=(|k|^2+1)^{-1/2}|k|^{-1}$ is in $L^2$, as a function of $k$. From the trace class property of $J_L$, together with the boundedness of $(-\Delta_L+1)^{1/2}\frac {\GSOrthProj} {h_{\varphi_L}-e(\varphi_L)} (-\Delta_L+1)^{1/2}$ (which follows from Corollary \ref{cor:Vphiinfrelbdd}), we immediately infer the trace class property of $K_L$. We even show in Lemma \ref{lem:KJLaplBound} that $J_L,K_L\lesssim_L (-\Delta_L+1)^{-2}$.

We shall in the following denote by $K_L^y$, respectively $J_L^y$, the unitary equivalent operators obtained from $K_L$ and $J_L$  by a translation by $y$. Note that $K_L^y$ and $J_L^y$ appear if one expands $\FL$ with respect to $\varphi_L^y$ instead of $\varphi_L$. Moreover, the invariance under translations of $\FL$ implies that 
\begin{align}
\spn\{\partial_j \varphi_L\}_{j=1}^3 \subset \ker (\unit-K_L).
\end{align}
We show in Section \ref{sec:localstudyFL} that these two sets coincide. Finally, even though both $\varepsilon_L$ and the estimate \eqref{eq:claimedHessianF} in Proposition \ref{prop:FHessian} depend on $L$, with a little extra work one can show that the bound is actually uniform in $L$ (for large $L$). For simplicity we opt for the current version of Proposition \ref{prop:FHessian}, as it is sufficient for the purpose of our investigation, which is  set on a torus of fixed linear size $L>L_1$.

\begin{proof}
	We shall denote $h_0:=h_{\varphi_L}$. By assumption \eqref{eq:FHessianregime} and since $\varphi_L \in L^2(\TL)$, we can apply Corollary \ref{cor:Vphiinfrelbdd} to $\varphi_L$ and to $(\varphi-\varphi_L)$. This way we see that $V_{\varphi-\varphi_L}$ is uniformly infinitesimally relatively bounded with respect to $h_{0}$ for any $\varphi$ satisfying \eqref{eq:FHessianregime}.
	
	It is clear that $h_{0}$ admits a simple and isolated least eigenvalue $e(\varphi_L)$. Standard results in perturbation theory then imply that there exist $\varepsilon_L>0$ and a contour $\gamma$ around $e(\varphi_L)$ such that for any $\varphi$ satisfying \eqref{eq:FHessianregime} $e(\varphi)$ is the only eigenvalue  of $h_{\varphi}=h_0+V_{\varphi-\varphi_L}$ inside $\gamma$. (For fixed $\varphi$, the statement above is a standard result in perturbation theory, see \cite[Theorem XII.8]{reed1978iv}; moreover it is also possible to get a $\varphi$-independent $\gamma$ encircling $e(\varphi)$ (see \cite[Theorem XII.11]{reed1978iv}) since  $V_{\varphi-\varphi_L}$ is \emph{uniformly} infinitesimally relatively bounded with respect to $h_0$.)
	We can thus write
	\begin{align}
	\label{eq:perturbedev}
	e(\varphi)=\Trace \int_{\gamma}\frac z {z-(h_0+V_{\varphi-\varphi_L})} \frac {dz}{2\pi i}.
	\end{align}
	Moreover, by the uniform infinitesimal relative boundedness of $V_{\varphi-\varphi_L}$ with respect to $h_0$, we have 
	\begin{align}
	\label{eq:resolventrelbddness}
	\sup_{z\in \gamma} \|V_{\varphi-\varphi_L}(z-h_0)^{-1}\|<1,
	\end{align}
	for $\varepsilon_L$ sufficiently small. For any $z\in \gamma$, we can thus use the resolvent identity in the form
	\begin{align}
	\label{eq:resexp}
	\frac 1 {z-h_{0}-V_{\varphi-\varphi_L}}=&\left(\unit-\frac {\GSOrthProj} {z-h_0} V_{\varphi-\varphi_L}\right)^{-1} \frac {\GSOrthProj} {z-h_0}\nonumber\\
	&+\left(\unit- \frac {\GSOrthProj} {z-h_0} V_{\varphi-\varphi_L}\right)^{-1} \frac {P_{\psi_L}} {z-h_0} \left(\unit- V_{\varphi-\varphi_L} \frac 1 {z-h_0}\right)^{-1}.
	\end{align}
	The first term is analytic inside the contour $\gamma$ and hence it gives zero after integration when inserted in \eqref{eq:perturbedev}. Inserting the second term of \eqref{eq:resexp}, which is rank one, in \eqref{eq:perturbedev} and using Fubini's Theorem to interchange the trace and the integral, we obtain
	\begin{align}
	\label{eq:ebvexpansion}
	e(\varphi)=\int_{\gamma} \frac{z}{z-e(\varphi_L)} \left \langle \psi_L \left|\left(\unit- V_{\varphi-\varphi_L} \frac 1 {z-h_0}\right)^{-1}\left(\unit- \frac {\GSOrthProj} {z-h_0} V_{\varphi-\varphi_L}\right)^{-1}\right| \psi_L \right \rangle \frac {dz}{2\pi i}.
	\end{align}
	For simplicity, we introduce the notation 
	\begin{align}
	A= V_{\varphi-\varphi_L} \frac 1 {z-h_0}, \quad B= \frac {\GSOrthProj} {z-h_0} V_{\varphi-\varphi_L}.
	\end{align}
	Because of \eqref{eq:resolventrelbddness}, both $A$ and $B$ are smaller than $1$ in norm, uniformly in $z\in \gamma$. We shall use the identity 
	\begin{align}
	\frac 1 {\unit-A} \frac 1 {\unit-B}=&\unit+A+A(A+B)+\frac B {\unit-B} \notag\\
	&+ \frac {A^3}{\unit-A}+\frac{A^2}{\unit-A} B+\frac A{\unit-A} \frac {B^2}{\unit-B}.
	\end{align} 
	We insert the various terms in \eqref{eq:ebvexpansion} and do the contour integration. The term $\unit$ gives $e(\varphi_L)$. The term $A$, recalling (see \eqref{eq:psilphil}) that $(-\Delta_L)^{-1/2} \rho_{\psi_L}=\varphi_L$, yields
	\begin{align}
	\expval{V_{\varphi-\varphi_L}}{\psi_L}=-2\bra{\varphi-\varphi_L}\ket{\varphi_L}.
	\end{align}
	A standard calculation shows that the term $A(A+B)$ gives
	\begin{align}
	\left\langle \psi_L \left|V_{\varphi-\varphi_L} \frac {\GSOrthProj} {e(\varphi_L)-h_0} V_{\varphi-\varphi_L}\right|\psi_L\right\rangle=-\expval{K_L}{\varphi-\varphi_L}.
	\end{align}
	Furthermore, since $\GSOrthProj \psi_L=0$, the term $B(\unit-B)^{-1}$ yields zero. Recalling that $\FL(\varphi)=\|\varphi\|^2+e(\varphi)$ we obtain from \eqref{eq:ebvexpansion}
	\begin{align}
	&\FL(\varphi)-\FL(\varphi_L)-\expval{\unit-K_L}{\varphi-\varphi_L}\notag\\
	&=\int_{\gamma} \frac z {z-e(\varphi_L)}\left \langle \psi_L \left| \frac {A^3}{\unit-A}+A\left(\frac{A}{\unit-A} +\frac 1{\unit-A} \frac {B}{\unit-B}\right) B\right| \psi_L \right\rangle \frac {dz}{2\pi i}.
	\end{align}
	We observe that, since $\gamma$ is uniformly bounded and uniformly bounded away from $e(\varphi_L)$, we can get rid of the integration, i.e., it suffices to bound 
	\begin{align}
	&(I):=\sup_{z\in \gamma} \left| \left \langle \psi_L \left| \frac {A^3}{\unit-A}\right| \psi_L \right\rangle\right|,\notag\\
	&(II):=\sup_{z\in \gamma}\left| \left \langle \psi_L \left|A\left(\frac{A}{\unit-A} +\frac 1{\unit-A} \frac {B}{\unit-B}\right) B\right| \psi_L \right\rangle\right|,
	\end{align}
	with the r.h.s. of \eqref{eq:claimedHessianF} to conclude the proof. We note that 
	\begin{align}
	\expval{J_L}{\varphi-\varphi_L}=\left\|(-\Delta_L+1)^{1/2} V_{\varphi-\varphi_L} \psi_L\right\|_2^2,
	\end{align}
	and that, by infinitesimal relative boundedness of $V_{\varphi_L}$ with respect to $(-\Delta_L)$ and since $\gamma$ is uniformly bounded away from $e(\varphi_L)$, there exists some constant $C_L>0$ such that 
	\begin{align}
	&\sup_{z\in \gamma} \left\|(-\Delta_L+1)^{1/2} (z-h_0)^{-k}(-\Delta_L+1)^{1/2}\right\|\leq C_L \quad \text{for} \,\, k=1,2.
	\end{align}
	Therefore, 
	\begin{align}
	(I)&=\sup_{z\in \gamma}\left|(z-e(\varphi_L))^{-1}\expval{(z-h_0)^{-1}A (\unit-A)^{-1}}{V_{\varphi-\varphi_L} \psi_L}\right|\notag\\
	&\lesssim_L \sup_{z\in\gamma}\left\|(-\Delta_L+1)^{1/2} (z-h_0)^{-1} \frac A {\unit-A} (-\Delta_L+1)^{1/2}\right\|\expval{J_L}{\varphi-\varphi_L}\notag\\
	&\lesssim_L \sup_{z\in\gamma}\left\|(-\Delta_L+1)^{-1/2} \frac A {\unit-A} (-\Delta_L+1)^{1/2}\right\|\expval{J_L}{\varphi-\varphi_L},\\  
	(II)&\leq \sup_{z\in \gamma}\left\|\frac A {\unit-A} + \frac 1 {\unit-A} \frac B {\unit-B}\right\|\expval{AA^{\dagger}}{\psi_L}^{1/2} \expval{BB^{\dagger}}{\psi_L}^{1/2}\notag\\
	&\lesssim_L \sup_{z\in \gamma}\left\|\frac A {\unit-A} + \frac 1 {\unit-A} \frac B {\unit-B}\right\| \expval{J_L}{\varphi-\varphi_L}.
	\end{align}
	Since
	\begin{align}
	A(\unit-A)^{-1}=V_{\varphi-\varphi_L}(z-h_{\varphi})^{-1},
	\end{align}
	it follows that 
	\begin{align}
	&\left\|(-\Delta_L+1)^{-1/2} \frac A {\unit-A}(-\Delta_L+1)^{1/2}\right\|\notag\\	
	&\leq \|(-\Delta_L+1)^{-1/2}V_{\varphi-\varphi_L}(-\Delta_L)^{-1/2}\|\|(-\Delta_L)^{1/2}(z-h_{\varphi})^{-1}(-\Delta_L)^{1/2}\|\notag\\
	&\lesssim_L \|(-\Delta_L+1)^{-1} (\varphi-\varphi_L)\|,
	\end{align}
	where we used the relative boundedness of $h_{\varphi}$ w.r.t to $-\Delta_L$ and Corollary \ref{cor:Vphiinfrelbdd}. This yields the right bound for $(I)$. Similar estimates yield the right bounds for $\|A(\unit-A)^{-1}\|$ and $\|(\unit-A)^{-1}B(\unit-B)^{-1}\|\lesssim_L\|B\|$, concluding the proof.	
\end{proof}

As a final result of this subsection,  we prove the following Lemma about the operators $K_L$ and $J_L$. 

\begin{lem}
	\label{lem:KJLaplBound}
	Let $K_L$ and $J_L$ be the operators defined in \eqref{eq:KJdef}. We have 
	\begin{align}
	K_L, J_L \lesssim_L (-\Delta_L+1)^{-2}.
	\end{align}
\end{lem}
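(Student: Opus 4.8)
The plan is to treat $J_L$ and $K_L$ separately, starting with $J_L$, which is more straightforward. Recall $J_L = 4(-\Delta_L)^{-1/2}\psi_L(-\Delta_L+1)^{-1}\psi_L(-\Delta_L)^{-1/2}$. Since $\psi_L \in C^\infty(\TL)$ is bounded (Lemma \ref{minprop}), the middle factor $\psi_L(-\Delta_L+1)^{-1}\psi_L$ is a bounded positive operator, so $J_L \lesssim_L (-\Delta_L)^{-1/2}\psi_L^2(-\Delta_L)^{-1/2}$ in the naive estimate — but this only gives $J_L \lesssim_L (-\Delta_L)^{-1}$, not the claimed $(-\Delta_L+1)^{-2}$. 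To gain the extra power, I would instead write $J_L = C_L C_L^{\dagger}$ with $C_L := 2(-\Delta_L)^{-1/2}\psi_L(-\Delta_L+1)^{-1/2}$, so $J_L \lesssim_L \|C_L(-\Delta_L+1)\|^2\,(-\Delta_L+1)^{-2}$ would follow if $C_L(-\Delta_L+1)$ were bounded; but it is not, since $(-\Delta_L)^{-1/2}\psi_L(-\Delta_L+1)^{1/2}$ need not be bounded. The correct route is to sandwich: it suffices to show $(-\Delta_L+1)J_L(-\Delta_L+1)$ is bounded (this is equivalent to $J_L \lesssim_L (-\Delta_L+1)^{-2}$ since $(-\Delta_L+1)$ commutes with itself). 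Writing this out, $(-\Delta_L+1)J_L(-\Delta_L+1) = 4(-\Delta_L+1)(-\Delta_L)^{-1/2}\psi_L(-\Delta_L+1)^{-1}\psi_L(-\Delta_L)^{-1/2}(-\Delta_L+1)$; the operator $(-\Delta_L+1)(-\Delta_L)^{-1/2}$ equals $(-\Delta_L)^{1/2} + (-\Delta_L)^{-1/2}$, and since $L\geq L_0$ the spectral gap of $-\Delta_L$ makes $(-\Delta_L)^{-1/2}$ bounded, so it reduces to bounding $(-\Delta_L)^{1/2}\psi_L(-\Delta_L+1)^{-1/2}$ and its adjoint. This is a commutator estimate: $(-\Delta_L)^{1/2}\psi_L(-\Delta_L+1)^{-1/2} = \psi_L(-\Delta_L)^{1/2}(-\Delta_L+1)^{-1/2} + [(-\Delta_L)^{1/2},\psi_L](-\Delta_L+1)^{-1/2}$; the first term is bounded since $\psi_L\in L^\infty$ and $(-\Delta_L)^{1/2}(-\Delta_L+1)^{-1/2}$ has norm $\leq 1$, and for the commutator term I would use that $\|[(-\Delta_L)^{1/2},\psi_L]f\|_2 \lesssim_L \|\nabla\psi_L\|_{L^\infty}\|f\|_2 + (\text{lower order})$, which holds because $\psi_L$ is smooth with bounded derivatives (depending on $L$).

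For $K_L = 4(-\Delta_L)^{-1/2}\psi_L \frac{Q_{\psi_L}}{h_{\varphi_L}-e(\varphi_L)}\psi_L(-\Delta_L)^{-1/2}$, I would similarly reduce to showing $(-\Delta_L+1)K_L(-\Delta_L+1)$ is bounded. The extra ingredient here, compared to $J_L$, is the resolvent factor $R := \frac{Q_{\psi_L}}{h_{\varphi_L}-e(\varphi_L)}$. By Corollary \ref{cor:Vphiinfrelbdd}, $V_{\varphi_L}$ is infinitesimally $-\Delta_L$-bounded, and since $e(\varphi_L)$ is an isolated simple eigenvalue with $Q_{\psi_L}$ projecting off it, the operator $(-\Delta_L+1)^{1/2}R(-\Delta_L+1)^{1/2}$ is bounded (depending on $L$) — this is exactly the kind of statement used already in the proof of Proposition \ref{prop:FHessian} (cf. the bound $\sup_{z\in\gamma}\|(-\Delta_L+1)^{1/2}(z-h_0)^{-k}(-\Delta_L+1)^{1/2}\| \leq C_L$ with $k=1$). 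So I would write $(-\Delta_L+1)K_L(-\Delta_L+1) = 4\big[(-\Delta_L+1)(-\Delta_L)^{-1/2}\psi_L(-\Delta_L+1)^{-1/2}\big]\big[(-\Delta_L+1)^{1/2}R(-\Delta_L+1)^{1/2}\big]\big[(-\Delta_L+1)^{-1/2}\psi_L(-\Delta_L)^{-1/2}(-\Delta_L+1)\big]$ and bound each bracket: the outer two are handled by exactly the commutator argument above (the second being the adjoint of the first), and the middle one by the resolvent estimate just cited.

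The main obstacle is the commutator estimate $\|(-\Delta_L)^{1/2}\psi_L(-\Delta_L+1)^{-1/2}\|\lesssim_L 1$, i.e., controlling $[(-\Delta_L)^{1/2},\psi_L]$ on $L^2$. Unlike $[-\Delta_L,\psi_L] = -(\Delta_L\psi_L) - 2\nabla\psi_L\cdot\nabla$, which is manifestly first-order, the commutator of the \emph{square root} of the Laplacian with a smooth multiplier is a pseudodifferential operator of order $0$, and boundedness requires a genuine (if standard) argument — e.g., via the integral representation $(-\Delta_L)^{1/2} = c\int_0^\infty (-\Delta_L)(-\Delta_L+s)^{-1}s^{-1/2}\,ds$, commuting $\psi_L$ past the resolvents, and using $\|[-\Delta_L,\psi_L](-\Delta_L+s)^{-1}\|\lesssim_L (1+s)^{-1/2}$ together with $\|(-\Delta_L+s)^{-1}\|\leq s^{-1}$ to make the $s$-integral converge. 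Since $\psi_L$ is smooth on the compact torus with all derivatives bounded (by constants depending on $L$), this goes through, but it is the one nontrivial analytic point; everything else is bookkeeping with the spectral gap of $-\Delta_L$ (which gives boundedness of negative powers for $L\geq L_0$) and boundedness of $\psi_L$.
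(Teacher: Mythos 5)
Your overall strategy --- reduce to boundedness of $(-\Delta_L+1)(-\Delta_L)^{-1/2}\psi_L(-\Delta_L+1)^{-1/2}$, then handle $K_L$ by inserting the additional relative bound on the resolvent $\frac{Q_{\psi_L}}{h_{\varphi_L}-e(\varphi_L)}$ --- matches the paper's. Where you diverge is in \emph{how} you prove $(-\Delta_L+1)(-\Delta_L)^{-1/2}\psi_L(-\Delta_L+1)^{-1/2}$ is bounded: you split via the commutator $[(-\Delta_L)^{1/2},\psi_L]$, whereas the paper simply writes the operator's Fourier kernel and runs a Cauchy--Schwarz/Schur estimate using the smoothness of $\psi_L$ and the elementary convolution bound $\sum_{0\neq k}\frac{(|k|^2+1)^2}{|k|^2(|k-\xi|^2+1)^3}\lesssim |\xi|^2+1$. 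The paper's route is more self-contained; yours is more conceptual and would generalize beyond the torus, \emph{but the specific argument you sketch for the commutator bound does not close}.

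Concretely, with the integral representation $(-\Delta_L)^{1/2}=c\int_0^\infty(-\Delta_L)(-\Delta_L+s)^{-1}s^{-1/2}\,ds$ one has
\begin{align*}
[(-\Delta_L)^{1/2},\psi_L]=c\int_0^\infty s^{1/2}\,(-\Delta_L+s)^{-1}[-\Delta_L,\psi_L](-\Delta_L+s)^{-1}\,ds,
\end{align*}
and inserting your bounds $\|(-\Delta_L+s)^{-1}\|\leq s^{-1}$ and $\|[-\Delta_L,\psi_L](-\Delta_L+s)^{-1}\|\lesssim_L(1+s)^{-1/2}$ yields an integrand of size $\sim s^{-1}$ for large $s$, so the $s$-integral diverges logarithmically. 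The trailing $(-\Delta_L+1)^{-1/2}$ does not rescue this, because $\|(-\Delta_L+s)^{-1}(-\Delta_L+1)^{-1/2}\|$ is still of order $s^{-1}$ (the supremum over the spectrum is attained at the low modes). The crude norm bounds discard the cancellation between the two factors of $(|k|+|\xi|)$: evaluating the $s$-integral exactly on the Fourier kernel gives $\int_0^\infty\frac{s^{1/2}(|k|^2-|\xi|^2)}{(|k|^2+s)(|\xi|^2+s)}\,ds=\pi(|k|-|\xi|)$, so the kernel of $[(-\Delta_L)^{1/2},\psi_L]$ is $\pi\,(|k|-|\xi|)(\psi_L)_{k-\xi}L^{-3/2}$, bounded by $\pi\,|k-\xi|\,|(\psi_L)_{k-\xi}|L^{-3/2}$, which is a convolution kernel summable in $k-\xi$ (by smoothness of $\psi_L$), giving boundedness by Young/Schur. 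Once you do the $s$-integral exactly (or just compute the Fourier kernel directly), you land essentially on the paper's argument; as written, the step ``this goes through'' is the one point that fails. Everything else in your reduction, including the treatment of $K_L$ via the resolvent estimate from the proof of Proposition \ref{prop:FHessian}, is sound.
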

\begin{proof}
	We prove the result for $J_L$. By the relative boundedness of $h_{\varphi_L}$ with respect to $-\Delta_L$ the same proof applies to $K_L$. We shall show that $(-\Delta_L+1)(-\Delta_L)^{-1/2}\psi_L (-\Delta_L+1)^{-1/2}$ is bounded as an operator on $L^2(\TL)$. In fact, for $f\in L^2(\TL)$,
	\begin{align}
	&\|(-\Delta_L+1)(-\Delta_L)^{-1/2}\psi_L (-\Delta_L+1)^{-1/2} f\|_2^2\nonumber\\
	&=\sum_{0\neq k\in \frac {2\pi} L \mathbb{Z}^3} \left(\frac{|k|^2+1}{|k|}\right)^2 \left|\sum_{\xi \in \frac {2\pi} L \mathbb{Z}^3} (\psi_L)_{k-\xi} \frac{f_{\xi}}{(|\xi|^2+1)^{1/2}}\right|^2\nonumber\\
	&\leq \|(-\Delta_L+1)^{3/2}\psi_L\|_2^2\sum_{0\neq k\in \frac {2\pi} L \mathbb{Z}^3} \left(\frac{|k|^2+1}{|k|}\right)^2 \sum_{\xi \in \frac {2\pi} L \mathbb{Z}^3} \frac{|f_{\xi}|^2}{(|k-\xi|^2+1)^{3}(|\xi|^2+1)}\notag\\
	&\lesssim_L \sum_{ \xi \in \frac {2\pi} L \mathbb{Z}^3} \frac{|f_{\xi}|^2}{|\xi|^2+1} \sum_{0\neq k\in \frac {2\pi} L \mathbb{Z}^3} \frac{(|k|^2+1)^2}{|k|^2(|k-\xi|^2+1)^{3}}\lesssim_L \|f\|_2^2,
	\end{align}
	where we used that $\psi_L\in C^{\infty}(\TL)$ and that $\sum_{0\neq k\in \frac {2\pi} L \mathbb{Z}^3}\frac{(|k|^2+1)^2}{|k|^2(|k-\xi|^2+1)^{3}}\lesssim |\xi|^2+1$. Therefore
	\begin{equation}
		J_L \leq \|(-\Delta_L+1)(-\Delta_L)^{-1/2} \psi_L (-\Delta_L+1)^{-1/2}\|^2(-\Delta_L+1)^{-2}\lesssim_L (-\Delta_L+1)^{-2},
	\end{equation}	
	as claimed. 
\end{proof}

\subsubsection{Local Properties of $\MinLf$ and $\FL$}
\label{sec:localstudyFL}

For $L>L_1$ we introduce the notation
\begin{align}
\label{eq:projdef}
\GradProj:= \text{$L^2$-projection onto} \,\, \spn\{\partial_j \varphi_L\}_{j=1}^3,
\end{align}   
which is going to be used throughout this section and Section \ref{sec:ProofMainResult}.
According to Theorem~\ref{uniquenessANDcoercivity}, the condition  $L> L_1$ guarantees that $\psi_L^y\neq \psi_L$ for any $\psi_L \in \MinLe$ and any $y\neq 0$, which implies that $\ran \GradProj$ is three dimensional (i.e that the partial derivatives of $\varphi_L$ are linearly independent); if not, there would exist $\nu \in \mathbb{S}^2$ such that $\partial_{\nu} \psi_L=0$ and this would imply $\psi_L=\psi_L^y$ for any $y$ parallel to $\nu$.

For technical reasons, we also introduce a family of weighted norms which will be needed in Section \ref{sec:ProofMainResult}. For $T\geq0$, we define
\begin{align}
\label{eq:weightednorm}
\|\varphi\|_{W_T}:=\expval{W_T}{\varphi}^{1/2},
\end{align}
where $W_T$ acts in $k$-space as multiplication by 
\begin{align}
\label{eq:WTdef}
W_T(k)=
\begin{cases}
1 & |k|\leq T\\
(|k|^2+1)^{-1} & |k|> T.
\end{cases}
\end{align} 
Note that $\|\varphi\|^2_{W_0}=\expval{(-\Delta_L+1)^{-1}}{\varphi}$ and $\|\varphi\|_{W_{\infty}}=\|\varphi\|_2$. 

For the purpose of this section we could  formulate the following Lemma only with respect to $\|\cdot\|_2=\|\cdot\|_{W_{\infty}}$, but we opt for this more general version since we shall need it in Section~\ref{sec:ProofMainResult}.

\begin{lem}
	\label{lemma:WTneighbourhood}
	For any $L>L_1$, there exists $\varepsilon'_L$ (independent of $T$) such that for any $\varphi\in L^2_{\R}(\TL)$ with $\dist_{W_T}(\varphi,\Omega_L(\varphi_L))\leq \varepsilon'_L$ there exist a \emph{unique} couple $(y_{\varphi},v_{\varphi})$, depending on $T$, with $y_{\varphi}\in \TL$ and $v_{\varphi}\in (\spn_{i=1,2,3} \{W_T\partial_i \varphi_L\})^{\perp}$, such that
	\begin{align}
	\label{eq:diffreq1}
	\varphi=\varphi_L^{y_{\varphi}}+(v_{\varphi})^{y_{\varphi}} \quad \text{and} \quad \|v_{\varphi}\|_{W_T}\leq \varepsilon'_L.
	\end{align} 
\end{lem}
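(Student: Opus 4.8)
The plan is to reduce the statement to a finite-dimensional implicit function theorem argument, since the manifold $\Omega_L(\varphi_L)$ is a three-dimensional orbit of the translation group. For $\varphi$ with $\dist_{W_T}(\varphi,\Omega_L(\varphi_L))\leq \varepsilon'_L$, pick a point $\varphi_L^{z}$ realizing this distance; then $\varphi - \varphi_L^z$ has $W_T$-norm at most $\varepsilon'_L$, and one would like to further translate so that the residual becomes $W_T$-orthogonal to $\spn_{i}\{W_T \partial_i \varphi_L^z\}$ — equivalently (after translating everything back by $-z$) to produce $y_\varphi$ near $0$ and $v_\varphi \perp_{W_T}$ the space $\spn_{i}\{W_T\partial_i \varphi_L\}$ in the ordinary $L^2$-pairing, i.e. $\langle W_T \partial_i \varphi_L , v_\varphi\rangle = 0$. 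The key point is that the condition $L>L_1$ guarantees (via Theorem~\ref{uniquenessANDcoercivity}, exactly as invoked just before the Lemma to show $\ran\GradProj$ is three-dimensional) that $\{\partial_i\varphi_L\}_{i=1}^3$ are linearly independent, hence the $3\times 3$ Gram matrix $G_{ij}:=\langle W_T\partial_i\varphi_L,\partial_j\varphi_L\rangle$ is invertible; moreover $W_T \geq (1+|k|^2)^{-1}\unit$ uniformly in $T$, so $G$ is bounded below uniformly in $T$, which is what will make $\varepsilon'_L$ independent of $T$.

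Concretely, I would define, for $\varphi$ near the orbit, the map $\Phi_\varphi:\TL\to\mathbb{R}^3$ by $\Phi_\varphi(y)_i := \langle W_T\partial_i\varphi_L\,,\,\varphi^{-y}-\varphi_L\rangle$ and seek a zero of $\Phi_\varphi$; at such a zero, setting $v_\varphi := \varphi^{-y_\varphi} - \varphi_L$ (so that $\varphi = \varphi_L^{y_\varphi} + (v_\varphi)^{y_\varphi}$) gives the desired decomposition, with $\|v_\varphi\|_{W_T}\lesssim_L \varepsilon'_L$ once the distance is small (the translation costs only the orbit's size, controlled since $\varphi_L\in C^\infty$). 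Differentiating, $\partial_{y_j}\Phi_\varphi(y)_i\big|_{y=0} = -\langle W_T\partial_i\varphi_L,\partial_j\varphi_L\rangle + \langle W_T\partial_i\varphi_L,\partial_j(\varphi_L-\varphi)\rangle = -G_{ij} + O_L(\dist_{W_T})$, which is invertible for $\varepsilon'_L$ small; one also needs $\Phi_\varphi(0) = \langle W_T\partial_i\varphi_L,\varphi-\varphi_L\rangle = O_L(\dist_{W_T})$ after using the nearly-minimizing translate, together with a Lipschitz/second-derivative bound on $\Phi_\varphi$ coming from the $C^\infty$ regularity of $\varphi_L$ (derivatives of $\varphi_L$ of all orders are bounded, and $W_T\leq \unit$, so the relevant constants are $L$-dependent but $T$-independent). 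A quantitative inverse function theorem (or a contraction-mapping fixed-point argument for $y\mapsto G^{-1}(\text{nonlinear remainder})$) then yields existence of a zero $y_\varphi$ in a ball of radius $\lesssim_L\dist_{W_T}(\varphi,\Omega_L(\varphi_L))$, hence in particular $\|v_\varphi\|_{W_T}\leq \varepsilon'_L$ provided $\varepsilon'_L$ is chosen small enough.

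For \textbf{uniqueness}: suppose $\varphi = \varphi_L^{y}+ (v)^{y} = \varphi_L^{y'}+(v')^{y'}$ with both $v,v'$ of $W_T$-norm $\leq \varepsilon'_L$ and $W_T$-orthogonal to $\spn\{\partial_i\varphi_L\}$. Applying $\varphi\mapsto\varphi^{-y}$ we get $\varphi_L + v = \varphi_L^{y'-y}+(v')^{y'-y}$. Writing $w := y'-y$, expand $\varphi_L^{w} = \varphi_L + w\cdot\nabla\varphi_L + O_L(|w|^2)$ and test against $W_T\partial_i\varphi_L$: the $v$- and $v'$-terms drop by orthogonality (for $v'$ one uses $\langle W_T\partial_i\varphi_L,(v')^{w}\rangle = \langle W_T\partial_i\varphi_L,v'\rangle + O_L(|w|\,\|v'\|_{W_T})$), leaving $0 = (Gw)_i + O_L(|w|^2) + O_L(|w|\varepsilon'_L)$, forcing $w=0$ when $\varepsilon'_L$ and the implied $|w|\lesssim_L\varepsilon'_L$ are small, and then $v=v'$. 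The main obstacle is bookkeeping the various $L$-dependent but $T$-\emph{independent} constants — one must carefully verify that every estimate involving $W_T$ uses only the uniform bounds $(1+|k|^2)^{-1}\unit\leq W_T\leq \unit$ and never anything that degenerates as $T\to\infty$, so that a single $\varepsilon'_L$ works for all $T$; the nonlinear remainder estimates are otherwise routine given $\varphi_L\in C^\infty(\TL)$.
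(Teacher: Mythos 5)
Your proposal takes a genuinely different route from the paper. The paper reduces the lemma to the single strict inequality $\varepsilon<\|\varphi_L+\varepsilon v-\varphi_L^y\|_{W_T}$ for all $W_T$-normalized $v\in(\spn\{W_T\partial_i\varphi_L\})^{\perp}$, $\varepsilon\le\varepsilon'_L$, $0\ne y\in\TL$, and verifies it by a direct case analysis on $|y|$: for $|y|$ small a Taylor expansion plus orthogonality gives a Pythagorean gain over $\varepsilon$, while for $|y|$ bounded away from $0$ the triangle inequality together with the $T$-uniform lower bound $\|\varphi_L-\varphi_L^y\|_{W_T}\ge\|\varphi_L-\varphi_L^y\|_{W_0}$, the aperiodicity of $\varphi_L$, and the nondegeneracy of the gradient do the job. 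Existence of a decomposition is then obtained for free from compactness and first-order optimality of a $W_T$-distance-realizing translate. You instead build the decomposition via a quantitative inverse-function/contraction argument for $\Phi_\varphi(y)_i=\langle W_T\partial_i\varphi_L,\varphi^{-y}-\varphi_L\rangle$ and argue uniqueness by linearizing two putative decompositions. Both routes hinge on the same three ingredients: linear independence of $\{\partial_i\varphi_L\}$ (from $L>L_1$), the $T$-uniform operator bounds $W_0\le W_T\le\unit$, and $\varphi_L\in C^{\infty}(\TL)$. The paper's route buys a cleaner global argument plus the subsequently used fact that $\varphi_L^{y_\varphi}$ is the unique $W_T$-nearest point; your route fits the standard tubular-neighborhood template and would port more easily to other settings.

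There is, however, a real gap in your uniqueness step. You expand $\varphi_L^{w}=\varphi_L+w\cdot\nabla\varphi_L+O_L(|w|^2)$ with $w=y'-y$, arrive at $0=(Gw)_i+O_L(|w|^2)+O_L(|w|\varepsilon'_L)$, and conclude $w=0$ ``when $\varepsilon'_L$ and the implied $|w|\lesssim_L\varepsilon'_L$ are small.'' But $w$ ranges a priori over all of $\TL$, and nothing in your write-up rules out $|w|$ of order $L$, where the remainder $O_L(|w|^2)$ dominates $Gw$ and the linearization is useless. You must first establish a priori smallness of $|w|$: from the two decompositions, $\|\varphi_L-\varphi_L^{w}\|_{W_T}\le\|v\|_{W_T}+\|v'\|_{W_T}\le 2\varepsilon'_L$; since $\|\varphi_L-\varphi_L^{w}\|_{W_T}\ge\|\varphi_L-\varphi_L^{w}\|_{W_0}$, which is continuous in $w$, vanishes only at $w=0$ (Theorem~\ref{uniquenessANDcoercivity} with $L>L_1$ gives $\psi_L^y\ne\psi_L$ hence $\varphi_L^y\ne\varphi_L$ for $y\ne 0$), and is $\gtrsim_L|w|$ near $w=0$, one gets $|w|\lesssim_L\varepsilon'_L$ for $\varepsilon'_L$ small. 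This global-to-local reduction is precisely the part of the paper's case analysis that your argument silently skips, and it is where the hypothesis $L>L_1$ enters beyond guaranteeing the invertibility of your Gram matrix $G$.
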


As Proposition \ref{prop:FHessian} above, we opt for an $L$-dependent version of Lemma \ref{lemma:WTneighbourhood} for simplicity, as it is sufficient for our purposes. We nevertheless believe it is possible to prove a corresponding statement that is uniform in $L$.  Note that Lemma \ref{lemma:WTneighbourhood} is equivalent to the statement that there exists a $T$-independent $\varepsilon'_L$ such that the $W_T$-projection onto $\Omega_L(\varphi_L)$ is uniquely defined in an $\varepsilon_L'$-neighborhood of $\Omega_L(\varphi_L)$ with respect to the $W_T$-norm, and that, for any $\varphi$ therein,  $\varphi_L^{y_{\varphi}}$ characterizes the $W_T$-projection of $\varphi$ onto $\Omega_L(\varphi_L)$, so that
\begin{align}
\dist_{W_T}(\varphi,\Omega_L(\varphi_L))=\|\varphi-\varphi_L^{y_{\varphi}}\|_{W_T}=\|v_{\varphi}\|_{W_T}.
\end{align}

\begin{proof}
	We begin by observing that the Lemma is equivalent to showing that for any $\|\cdot\|_{W_T}$-normalized ${v \in (\spn_{i=1,2,3} \{W_T\partial_i \varphi_L\})^{\perp}}$, any $\varepsilon \leq \varepsilon_L'$ and any $0\neq y\in \TL$ we have 
	\begin{align}
	\label{eq:diffreq2}
	\varepsilon <  \|\varphi_L+\varepsilon v-\varphi_L^y\|_{W_T}.
	\end{align}
	Indeed, if the Lemma holds then $\varphi=\varphi_L+\varepsilon v $ does not admit other decompositions of the form \eqref{eq:diffreq1}, which implies that, for any $y\neq 0$, \eqref{eq:diffreq2} holds (otherwise there would exist $y\neq 0$ minimizing the $W_T$-distance of $\varphi$ from $\Omega_L(\varphi_L)$ and such $y$ would necessarily yield a second decomposition of the form \eqref{eq:diffreq1}). On the other hand, if the statement \eqref{eq:diffreq2} holds and the Lemma does not, then there exists $\varphi$ such that $\dist_{W_T}(\varphi,\Omega_L(\varphi_L))\leq \varepsilon'_L$ and also such that $(y_1,v_1)$ and $(y_2,v_2)$ yield two different decompositions of the form \eqref{eq:diffreq1} for $\varphi$ (note that at least one decomposition of the form \eqref{eq:diffreq1} always exist, as there exist at least one element of $\Omega_L(\varphi_L)$ realizing the $W_T$-distance of $\varphi$ from $\Omega_L(\varphi_L)$). By considering $\varphi^{-y_1}$ (respectively $\varphi^{-y_2}$) we find $\|v_1\|_{W_T}>\|v_2\|_{W_T}$ (respectively $\|v_2\|_{W_T}>\|v_1\|_{W_T}$), which is clearly a contradiction. We shall hence proceed to prove the statement \eqref{eq:diffreq2}.
	
	Taylor's formula and the regularity of $\varphi_L$ imply the existence of a $T$-independent constant $C^1_L$  such that
	\begin{align}
	\label{eq:hessianbound}
	\varphi_L^y=\varphi_L+y\cdot (\nabla \varphi_L)+ g_y, \quad \text{with} \quad \|g_y\|_{W_T}\leq \|g_y\|_2\leq C^1_L |y|^2.
	\end{align}
	As remarked after \eqref{eq:projdef}, the kernel of $\GradProj$ is three-dimensional, hence there exists a constant $C^2_L$ independent of $T$ such that
	\begin{align}
	\label{eq:gradlb}
	\min_{\nu \in \mathbb{S}^2} \|\nu \cdot \nabla \varphi_L\|_{W_T}\geq \min_{\nu \in \mathbb{S}^2} \|\nu \cdot \nabla \varphi_L\|_{W_0}\geq C^2_L.
	\end{align}
	Therefore, using that $v\perp_{W_T} \nabla \varphi_L$ in combination with \eqref{eq:hessianbound} and \eqref{eq:gradlb}, we find, for 
	\begin{align}
	\label{eq:firstybound}
	|y|<(C^2_L-2\varepsilon C^1_L)^{1/2}(C^1_L)^{-1},
	\end{align}
	that
	\begin{align}
	\|\varphi_L+\varepsilon v-\varphi_L^y\|_{W_T}&=\|\varepsilon v-y\cdot (\nabla \varphi_L)-g_y\|_{W_T}\geq \left(\varepsilon^2+|y|^2 C_L^2\right)^{1/2}-C_L^1 |y|^2> \varepsilon,
	\end{align}
	i.e., that \eqref{eq:diffreq2} holds for $y$ satisfying \eqref{eq:firstybound}. Furthermore, we have
	\begin{align}
	\label{eq:initialest}
	\|\varphi_L+\varepsilon v-\varphi_L^y\|_{W_T}^2\geq \varepsilon^2+\|\varphi_L-\varphi_L^y\|_{W_T}\left(\|\varphi_L-\varphi_L^y\|_{W_T}-2\varepsilon\right),
	\end{align}
	and this implies that \eqref{eq:diffreq2} holds for any $y$ such that
	\begin{align}
	\label{eq:firstscan}
	\|\varphi_L-\varphi_L^y\|_{W_T}> 2 \varepsilon.
	\end{align}
	Using again \eqref{eq:gradlb} and \eqref{eq:hessianbound}, there exist $C^3_L,c^1_L,c^4_L>0$ independent of $T$ such that 
	\begin{align}
	\label{eq:localexp}
	\|\varphi_L-\varphi_L^y\|_{W_T}=\|y\cdot (\nabla \varphi_L)&+g_y\|_{W_T}\geq C^2_L|y|-C^1_L|y|^2\geq C^3_L |y|, \quad \text{for} \quad |y|\leq c^1_L,\nonumber\\
	&\|\varphi_L-\varphi_L^y\|_{W_T}>c^4_L \quad \text{for}\quad |y|>c^1_L,
	\end{align}
	where the second line simply follows from $\|\cdot\|_{W_T}\geq\|\cdot\|_{W_0}$, the fact that $\varphi_L\neq \varphi_L^y$ for any $0\neq y \in [-L/2,L/2]^3$ and the continuity of $\varphi_L$.
	Combining \eqref{eq:firstscan} and \eqref{eq:localexp}, we conclude that \eqref{eq:diffreq2} holds if either $|y|>c^1_L$ or 
	\begin{align}
	\label{eq:secondybound}
	|y|>2\varepsilon(C^3_L)^{-1}.
	\end{align}
	Picking $\varepsilon_L'$ sufficiently small, the fact that \eqref{eq:diffreq2} holds both under the conditions \eqref{eq:firstybound} and \eqref{eq:secondybound} shows that it holds for any $y\in \TL$, and  this completes the proof. 
\end{proof}

We conclude our study of the Pekar functional $\FL$ by showing that $\ker (\unit-K_L)=\spn\{\partial_j \varphi_L\}_{j=1}^3=\ran \GradProj$. Since clearly $\ran \GradProj\subset \ker(\unit-K_L)$, this is a consequence of the following Proposition. 

\begin{prop}
	\label{prop:Hessianstrictpos}
	Recalling the definition of $\tau_L$ from Corollary \ref{cor:uniquenessANDcoercivity}, we have
	\begin{align}
	\unit-K_L\geq \tau_L (\unit-\GradProj).
	\end{align}
\end{prop}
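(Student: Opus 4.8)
The plan is to reduce the lower bound $\unit - K_L \geq \tau_L(\unit - \GradProj)$ to the quadratic lower bound \eqref{eq:Fglobalquadbound2} of Corollary \ref{cor:uniquenessANDcoercivity} via a second-order expansion of $\FL$ around $\varphi_L$, restricted to directions orthogonal to the zero modes. Concretely, for any $\phi \in L^2_\R(\TL)$ with $\GradProj \phi = 0$ and any small $\varepsilon > 0$, I would apply Proposition \ref{prop:FHessian} with $\varphi = \varphi_L + \varepsilon\phi$ (which satisfies \eqref{eq:FHessianregime} for $\varepsilon$ small), obtaining
\begin{align}
\FL(\varphi_L + \varepsilon\phi) - \eL = \varepsilon^2 \expval{\unit - K_L}{\phi} + O(\varepsilon^3),
\end{align}
where the error is $\lesssim_L \varepsilon^3 \|(-\Delta_L+1)^{-1/2}\phi\|_2 \expval{J_L}{\phi}$, which is finite for fixed $\phi$. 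On the other hand, \eqref{eq:Fglobalquadbound2} gives $\FL(\varphi_L + \varepsilon\phi) - \eL \geq \tau_L \dist_{L^2}^2(\MinLf, \varphi_L + \varepsilon\phi)$. The key geometric input is that, since $\phi$ is $L^2$-orthogonal to the tangent space $\spn\{\partial_j\varphi_L\}_{j=1}^3$ of the manifold $\Omega_L(\varphi_L)$ at $\varphi_L$, the distance from $\varphi_L + \varepsilon\phi$ to the manifold is $\varepsilon\|\phi\|_2 + o(\varepsilon)$; this is exactly the content of Lemma \ref{lemma:WTneighbourhood} (with $T = \infty$), which guarantees a unique $W_\infty$-projection and hence that $\varphi_L$ itself realizes the distance up to higher order, so $\dist_{L^2}^2(\MinLf, \varphi_L+\varepsilon\phi) = \varepsilon^2\|\phi\|_2^2 + o(\varepsilon^2)$. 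Combining, dividing by $\varepsilon^2$, and letting $\varepsilon \to 0$ yields $\expval{\unit - K_L}{\phi} \geq \tau_L \|\phi\|_2^2$ for all $\phi \in \ran(\unit - \GradProj) \cap L^2_\R(\TL)$.

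To finish, I would remove the reality and orthogonality restrictions. For the orthogonality: any $\phi$ can be written $\phi = \GradProj\phi + (\unit - \GradProj)\phi$; since $\ran\GradProj \subset \ker(\unit - K_L)$ and $\ran\GradProj \perp \ran(\unit-\GradProj)$, and since $\unit - K_L$ commutes with $\GradProj$ (because $K_L$ annihilates the zero modes and is self-adjoint, so $\ran\GradProj$ reduces $K_L$), we get $\expval{\unit - K_L}{\phi} = \expval{\unit - K_L}{(\unit-\GradProj)\phi} \geq \tau_L\|(\unit-\GradProj)\phi\|_2^2 = \tau_L\expval{\unit-\GradProj}{\phi}$. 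For complex $\phi = \phi_1 + i\phi_2$ with $\phi_1,\phi_2$ real: since $K_L$ is a real operator (its integral kernel is real, as $\psi_L$, $h_{\varphi_L}$, $\GSOrthProj$ and $(-\Delta_L)^{-1/2}$ all have real kernels) and $\GradProj$ likewise, the quadratic form splits as $\expval{\unit-K_L}{\phi} = \expval{\unit-K_L}{\phi_1} + \expval{\unit-K_L}{\phi_2}$ and similarly for $\GradProj$, so the real-case bound extends verbatim. This establishes the operator inequality.

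The main obstacle I anticipate is the justification that $\varphi_L$ realizes the $L^2$-distance to $\MinLf$ up to $o(\varepsilon)$ when we perturb in an orthogonal direction — i.e., correctly exploiting Lemma \ref{lemma:WTneighbourhood}. One must argue that for $\varepsilon$ small enough the unique $W_\infty$-projection of $\varphi_L + \varepsilon\phi$ onto $\Omega_L(\varphi_L)$ is $\varphi_L$ itself (not a nearby translate), which follows because the decomposition $\varphi_L + \varepsilon\phi = \varphi_L^{0} + (\varepsilon\phi)^{0}$ with $\varepsilon\phi \in (\spn\{\partial_i\varphi_L\})^\perp$ is the unique admissible decomposition of the form \eqref{eq:diffreq1}; hence $\dist_{L^2}(\MinLf, \varphi_L + \varepsilon\phi) = \varepsilon\|\phi\|_2$ exactly, which actually sharpens the argument and removes the $o(\varepsilon)$ worry entirely. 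A minor secondary point is checking that the $O(\varepsilon^3)$ error in Proposition \ref{prop:FHessian} genuinely vanishes faster than $\varepsilon^2$ for fixed $\phi$, which is immediate since $\|(-\Delta_L+1)^{-1/2}\phi\|_2$ and $\expval{J_L}{\phi}$ are fixed finite numbers (using $J_L$ bounded, Lemma \ref{lem:KJLaplBound}).
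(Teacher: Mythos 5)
Your proposal is correct and follows essentially the same route as the paper: perturb $\varphi_L$ by $\varepsilon\phi$ with $\phi \perp \spn\{\partial_j\varphi_L\}$, use Lemma~\ref{lemma:WTneighbourhood} with $T=\infty$ to identify $\dist_{L^2}(\MinLf,\varphi_L+\varepsilon\phi)=\varepsilon\|\phi\|_2$ exactly, sandwich $\FL(\varphi_L+\varepsilon\phi)-\eL$ between \eqref{eq:Fglobalquadbound2} and the expansion of Proposition~\ref{prop:FHessian}, and send $\varepsilon\to0$. Your closing paragraph making explicit the reduction from the quadratic-form bound on $\ran(\unit-\GradProj)\cap L^2_\R$ to the full operator inequality (via reality of $K_L$ and the fact that $\GradProj$ reduces $K_L$) is a correct elaboration of a step the paper leaves implicit.
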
 
\begin{proof}
	We need to show that for all normalized $v\in \ran  (\unit-\GradProj)$ the bound
	\begin{align}
	\expval{\unit-K_L}{v}\geq \tau_L
	\end{align}
	holds. Using Lemma \ref{lemma:WTneighbourhood} in the case $T=\infty$, for any such $v$ and $\varepsilon$ small enough, denoting $\varphi=\varphi_L+\varepsilon v$, we obtain
	\begin{align}
	\dist_{L^2}^2(\varphi,\Omega_L(\varphi_L))= \varepsilon^2.
	\end{align}
	Moreover, since $\|(-\Delta_L+1)^{-1}(\varphi-\varphi_L)\|\leq \varepsilon\|v\|_2=\varepsilon$, for $\varepsilon$ small enough we can expand $\FL(\varphi)$ with respect to $\varphi_L$ using Proposition \ref{prop:FHessian}. Combining this with \eqref{eq:Fglobalquadbound2}, we arrive at
	\begin{align}
	\tau_L \varepsilon^2\leq\FL(\varphi_L+\varepsilon v)-\eL \leq \varepsilon^2\expval{\unit-K_L}{v}+\varepsilon^3\expval{J_L}{v}.
	\end{align}
	Since $\varepsilon$ can be taken arbitrarily small, the proof is complete.
\end{proof}

\section{Proof of Main Results} \label{sec:ProofMainResult}

In this Section we give the proof  of Theorem \ref{theo:GSEexpansion}. In Section \ref{Sec:UpperBound} we prove the upper bound in \eqref{eq:infspecHsharp}. In Section \ref{Sec:LowerBoundI} we estimate the cost of substituting the full Hamiltonian $\HL$ with a cut-off Hamiltonian depending only on finitely many phonon modes, a key step in providing a lower bound for $\infspec \HL$. Finally, in Section \ref{Sec:LowerBoundII}, we show the validity of the lower bound in \eqref{eq:infspecHsharp}. 

The approach used in Sections \ref{Sec:UpperBound} and \ref{Sec:LowerBoundI} follows closely the one used in \cite{frank2019quantum}, even if, in our setting, minor complications arise in the proof of the upper bound due the presence of the zero modes of the Hessian. For the lower bound in Section \ref{Sec:LowerBoundII}, however, a substantial amount of additional work is needed to deal with the translation invariance, which complicates the proof significantly. 

\subsection{Upper Bound} \label{Sec:UpperBound} In this section we construct a trial state, which will be used to obtain an upper bound on the ground state energy of $\HL$ for fixed $L>L_1$. This is carried out using the $Q$-space representation of the bosonic Fock space $\mathcal{F}(L^2(\TL))$ (see \cite{reed1975ii}). Even though the estimates contained in this section are $L$-dependent, we believe it is possible, with little modifications to the proof, to obtain the same upper bound with the same error estimates uniformly in $L$.

Our trial state depends non-trivially only  on finitely many phonon variables,  and we proceed to describe it more in detail. If one picks $\Pi$ to be a \emph{real} finite rank projection on $L^2(\TL)$, then
\begin{align}
\mathcal{F}(L^2(\TL))\cong \mathcal{F}(\Pi L^2(\TL))\otimes \mathcal{F}((\unit-\Pi)L^2(\TL)).
\end{align}
The first factor $\mathcal{F}(\Pi L^2(\TL))$ can isomorphically be identified with $L^2(\mathbb{R}^N)$, where $N$ is the complex dimension of $\ran \Pi$. In particular, there is a one-to-one correspondence between any real $\varphi\in \ran \Pi$ and $\lambda=(\lambda_1,\dots,\lambda_N)\in \mathbb{R}^N$, explicitly given by
\begin{align}
\label{eq:rangeRniso}
\varphi=\sum_{i=1}^N \lambda_i \varphi_i\cong(\lambda_1,\dots,\lambda_N)=\lambda,
\end{align}
where $\{\varphi_i\}_{i=1}^N$ denotes an orthonormal basis of $\ran \Pi$ consisting of real-valued functions. The trial state we use corresponds to the vacuum in the second factor ${\mathcal{F}((\unit-\Pi)L^2(\TL))}$ and shall hence be written only as a function of $x$ (the electron variable) and $\lambda$ (the finitely many phonon variables selected by $\Pi$). We begin by specifying some properties we wish $\Pi$ to satisfy. Consider $\varphi_L$ from Corollary \ref{cor:uniquenessANDcoercivity} and define $\Pi$ to be a projection of the form $\Pi=\Pi'+\GradProj$, where $\GradProj$ is defined in \eqref{eq:projdef} and $\Pi'$ is an $(N-3)$-dimensional projection onto $(\spn\{\partial_j \varphi_L\}_{j=1}^3)^{\perp}=\ran (\unit-\GradProj)$ that will be further specified later but will always be taken so that $\varphi_L \in \ran \Pi$. Our trial state is of the form  
\begin{align}
\Psi(x,\varphi)=G(\varphi)\eta(\varphi) \psi_{\varphi}(x),
\end{align}
where
\begin{itemize}
	\item $x\in \TL$ and $\varphi$ is a real element of $\ran \Pi$ (identified with $\lambda\in \mathbb{R}^N$ as in \eqref{eq:rangeRniso}),
 	\item $G(\varphi)$ is a Gaussian factor explicitly given by
	\begin{align}
	G(\varphi)=\exp(-\alpha^2\expval{\left[\Pi(\unit-K_L)\Pi \right]^{1/2}}{\varphi-\varphi_L}),
	\end{align} 
	\item $\eta$ is a `localization factor' given by
	\begin{align}
	\label{eq:eta}
	\eta(\varphi)=\chi\left(\varepsilon^{-1}\|(-\Delta_L+1)^{-1/2}(\varphi-\varphi_L)\|_{L^2(\TL)}\right),
	\end{align} 
	for some $0<\varepsilon<\varepsilon_L$ (with $\varepsilon_L$  as in Proposition \ref{prop:FHessian}), where $0\leq \chi\leq 1$ is a smooth cut-off function such that $\chi(t)=1$ for $t\leq 1/2$ and $\chi(t)=0$ for $t\geq 1$,
	\item $\psi_{\varphi}$ is the unique positive ground state of $h_{\varphi}=-\Delta_L+V_{\varphi}$.
\end{itemize}
We note that our state actually depends on two parameters ($N$ and $\varepsilon$) and, of course, on the specific choice of $\Pi'$. We choose $\{\varphi_i\}_{i=1,\dots, N}$ to be a real orthonormal basis of  eigenfunctions of $[\Pi(\unit-K_L)\Pi]$ corresponding to eigenvalues $\mu_i=0$ for $i=1,2,3$ and $\mu_i\geq \tau_L>0$ for $i=4,\dots,N$. Recalling Proposition \ref{prop:Hessianstrictpos}, this amounts to choosing $\{\varphi_i\}_{i=1,2,3}$ to be a real orthonormal basis of $\ran \GradProj$ and $\{\varphi_i\}_{i=4,\dots,N}$ to be a real orthonormal basis of eigenfunctions of $[\Pi'(\unit-K_L)\Pi']$. With this choice,
we have (with a slight abuse of notation)
\begin{align}
G(\varphi)=
G(\lambda_4,\dots,\lambda_N)=\exp(-\alpha^2\sum_{i=4}^{N} \mu_i^{1/2}(\lambda_i-\lambda^L_i)^2),
\end{align}
where $\varphi_L\cong\lambda_L=(0,0,0,\lambda^L_4,\dots,\lambda_N^L)$, since $\varphi_L \in \ran \Pi$ by construction, and the first three coordinates are $0$ since $\varphi_L \in \left(\ran \GradProj\right)^{\perp}$. 

We first show that even if $G$ does not have finite $L^2(\mathbb{R}^N)$-norm, $\Psi$ does due to the presence of $\eta$. We define 
\begin{align}
\label{eq:Teps}
T_{\varepsilon}:=\{\|(-\Delta_L+1)^{-1/2}(\varphi-\varphi_L)\|\leq \varepsilon\}\subset \R^{N}
\end{align}
and
\begin{align}
\gamma_L:=\inf_{\varphi\in \ran \GradProj\atop \|\varphi\|_2=1} \expval{(-\Delta_L+1)^{-1}}{\varphi}>0.
\end{align} 
Then, on $T_{\varepsilon}$, noting that $\GradProj \varphi_L =0$, we have
\begin{align}
&\gamma_L^{1/2}\sqrt{\lambda_1^2+\lambda_2^2+\lambda_3^2}=\gamma_L^{1/2}\|\GradProj\varphi\|\leq \|(-\Delta_L+1)^{-1/2} \GradProj (\varphi-\varphi_L)\|_2\nonumber\\
&\leq \|(-\Delta_L+1)^{-1/2} \Pi' (\varphi-\varphi_L)\|_2+\varepsilon \leq \|\Pi'(\varphi-\varphi_L)\|+\varepsilon=\left(\sum_{i=4}^N (\lambda_i-\lambda_i^L)^2\right)^{1/2}+\varepsilon
\end{align}
and this implies, using the normalization of $\psi_{\varphi}$, that
\begin{align}
\|\Psi\|^2&=\int_{\mathbb{R}^N} G(\lambda_4,\dots, \lambda_N)^2 \eta(\lambda)^2 d \lambda_1 \dots d \lambda_N\leq \int_{\mathbb{R}^N} G(\lambda_4,\dots, \lambda_N)^2\unit_{T_{\varepsilon}}(\lambda) d \lambda_1 \cdots d \lambda_N\notag\\
&\leq \frac {4\pi} 3 \int_{\mathbb{R}^{N-3}} G(\lambda_4,\dots,\lambda_N)^2 \gamma_L^{-3/2}\left[\left(\sum_{i=4}^N (\lambda_i-\lambda_i^L)^2\right)^{1/2}+\varepsilon\right]^3 d\lambda_4 \cdots d\lambda_N<\infty.
\end{align}

We spend a few words to motivate our choice of $\Psi$. The absolute value squared of $\Psi$ has to be interpreted as a probability density over the couples $(\varphi,x)$, with $\varphi$ being a classical state for the phonon field and $x$ the position of the electron. In the electron coordinate, our $\Psi$ corresponds to the ground state of $h_{\varphi}$ for any value of $\varphi$. This implies, by straightforward computations, that  the expectation value of the Fr\"ohlich Hamiltonian in $\Psi$ equals the one of $e(\varphi)+\numero$, $e(\varphi)$ being the ground state energy of $h_{\varphi}$ and $\mathbb{N}$  the number operator. Moreover, because of the factor $\eta$, we are localizing our state to the regime where the Hessian expansion of $e(\varphi)$ from Proposition \ref{prop:FHessian} holds. To leading order, this effectively makes our system formally correspond to a system of infinitely many harmonic oscillators with frequencies given by the eigenvalues of $(\unit-K_L)^{1/2}$, with a Gaussian ground state. To carry out this analysis out rigorously, we need to choose a suitable finite rank projection $\Pi$, as detailed in the remainder of this section.

\medskip

We are now ready to delve into the details of the proof. It is easy to see that the interaction term appearing in the Fr\"ohlich Hamiltonian acts in the $Q$-space representation as the multiplication by $V_{\varphi}(x)$. Therefore, since $\Psi$ corresponds to the vacuum on $(\unit-\Pi)L^2(\TL)$ and only depends on $x$ through the factor $\psi_{\varphi}(x)$, the g.s. of $h_{\varphi}$, it follows that
\begin{align}
\expval{\HL}{\Psi}=\expval{e(\varphi)+\numero}{\Psi} 
\end{align}
where $\varphi = \Pi \varphi \cong \lambda\in \R^N$ and  the inner product on the r.h.s. is naturally interpreted as the one on $L^2(\TL)\otimes L^2(\mathbb{R}^{N})$. In  the $Q$-space representation, the number operator 
takes the form
\begin{align}
\numero=\sum_{n=1}^N \left(-\frac 1 {4\alpha^4} \partial_{\lambda_n}^2+\lambda_n^2-\frac 1 {2\alpha^2}\right)= \frac 1 {4\alpha^4} (-\Delta_{\lambda})+|\lambda|^2-\frac N {2\alpha^2}.
\end{align}
Using the fact that $\eta$ is supported on the set $T_{\varepsilon}$ defined in \eqref{eq:Teps}, we can use the Hessian expansion from Proposition \ref{prop:FHessian} to obtain bounds on $e(\lambda)$. Consequently, for a suitable positive constant $C_L$,
\begin{align}
\label{eq:firstevaluationtrial}
\expval{\HL}{\Psi}&\leq \expval{\eL+\expval{\unit-K_L+\varepsilon C_L J_L}{\varphi-\varphi_L}}{\Psi}\nonumber\\
&\quad +\left \langle \Psi \left|\frac 1 {4\alpha^4}(-\Delta_{\lambda})-\frac N{2\alpha^2} \right| \Psi\right \rangle\nonumber\\
& =\left(\eL-\frac 1 {2\alpha^2}\Tr(\Pi)\right) \|\Psi\|^2+A+B,
\end{align}
with
\begin{align}
&A=\left \langle \Psi \left| \frac 1 {4\alpha^4}(-\Delta_{\lambda})+\sum_{i=4}^{N} \mu_i(\lambda_i-\lambda^L_i)^2 \right| \Psi \right\rangle 
,\\
&B=\varepsilon C_L\expval{\expval{J_L}{\varphi-\varphi_L}}{\Psi}.
\end{align}
We shall now proceed to first show that $B$ only contributes as an error term and then to rewrite $A$ as the sum of a leading order energy correction term and an error term. We recall that by Lemma \ref{lem:KJLaplBound} 
\begin{align}
J_L\lesssim_L (-\Delta_L+1)^{-2}.
\end{align} 
Therefore, since $\eta$ is supported on $T_{\varepsilon}$, we have
\begin{align}
\label{eq:orderofB}
B\lesssim_L \varepsilon^3\|\Psi\|^2.
\end{align}
To treat $A$ a bit more work is required. A direct calculation shows that
\begin{align}
\left[\frac 1 {4\alpha^4}(-\Delta_{\lambda})+\sum_{i=4}^{N} \mu_i(\lambda_i-\lambda^L_i)^2\right]G=\frac 1 {2\alpha^2} \Tr(\left[\Pi(\unit-K_L)\Pi\right]^{1/2})G.
\end{align}
The previous identity, together with straightforward manipulations involving integration by parts, shows that
\begin{align}
\label{eq:boundA}
A&=\frac 1 {4\alpha^4}\left(\langle \psi_\varphi G \eta| \psi_\varphi (-\Delta_{\lambda} G)\eta \rangle+ \int_{\TL\times \mathbb{R}^{N}}  G^2  |\nabla_{\lambda} (\eta \psi_{\varphi})|^2 \right)  +\left\langle \Psi \left|\sum_{i=4}^N \mu_i(\lambda-\lambda^L_i)^2\right|\Psi\right \rangle \notag\\
&\leq\frac 1 {2\alpha^2} \Tr(\left[\Pi(\unit-K_L)\Pi\right]^{1/2})\|\Psi\|^2 \nonumber\\  & \quad +\frac 1 {2\alpha^4}\left[ \int_{\TL\times \mathbb{R}^{N}} G^2 \eta^2 |\nabla_{\lambda} \psi_{\varphi}|^2 +\int_{\TL\times \mathbb{R}^{N}} G^2 |\nabla_{\lambda}\eta|^2 |\psi_{\varphi}|^2 \right] \nonumber\\
&=:\frac 1 {2\alpha^2} \Tr(\left[\Pi(\unit-K_L)\Pi\right]^{1/2})\|\Psi\|^2+A_1+A_2,
\end{align}
where the first term is clearly a leading order energy correction whereas $A_1$ and $A_2$ have to be interpreted as error terms, as we now proceed to show. By standard first order perturbation theory (using that the phase of $\psi_{\varphi}$ is chosen so that it is the unique positive minimizer of $h_{\varphi}$) we have
\begin{align}
\partial_{\lambda_n} \psi_{\varphi}=-\frac {Q_{\psi_{\varphi}}}{h_{\varphi}-e(\varphi)} V_{\varphi_n} \psi_{\varphi},
\end{align}
where we recall that $Q_{\psi_{\varphi}}=\unit-\ket{\psi_{\varphi}}\bra{\psi_{\varphi}}$. This implies that, for fixed $\varphi$, 
\begin{align}
&\int_{\TL} |\nabla_{\lambda} \psi_{\varphi}(x)|^2 dx=\sum_{n=1}^N\left\|\frac {Q_{\psi_{\varphi}}}{h_{\varphi}-e(\varphi)} V_{\varphi_n} \psi_{\varphi}\right\|^2_{L^2(\TL)}\notag\\
&=\sum_{n=1}^N \expval{(-\Delta_L)^{-1/2} \psi_{\varphi} \left(\frac{Q_{\psi_{\varphi}}}{h_{\varphi}-e(\varphi)}\right)^2 \psi_{\varphi}(-\Delta_L)^{-1/2}}{\varphi_n}\notag\\
&=\Tr(\Pi(-\Delta_L)^{-1/2} \psi_{\varphi} \left(\frac{Q_{\psi_{\varphi}}}{h_{\varphi}-e(\varphi)}\right)^2 \psi_{\varphi}(-\Delta_L)^{-1/2}\Pi),
\end{align}
where $\psi_{\varphi}$ is interpreted as a multiplication operator in the last two expressions. Since $(-\Delta_L+1)^{1/2} \left(\frac{Q_{\psi_{\varphi}}}{h_{\varphi}-e(\varphi)}\right)^2 (-\Delta_L+1)^{1/2}$ is uniformly bounded over the support of $\eta$ (the potential $V_{\varphi}$ being uniformly infinitesimally relatively bounded with respect to $-\Delta_L$ by Corollary \ref{cor:Vphiinfrelbdd}) and recalling that $\psi_{\varphi}$ is normalized by definition, we get
\begin{align}
&\Tr(\Pi(-\Delta_L)^{-1/2} \psi_{\varphi} \left(\frac{Q_{\psi_{\varphi}}}{h_{\varphi}-e(\varphi)}\right)^2 \psi_{\varphi}(-\Delta_L)^{-1/2}\Pi)\nonumber\\
&\lesssim_L \Tr(\Pi(-\Delta_L)^{-1/2}\psi_{\varphi} (-\Delta_L+1)^{-1} \psi_{\varphi} (-\Delta_L)^{-1/2}\Pi)\lesssim_L 1.
\end{align}
In summary, we conclude that   
\begin{align}
\label{eq:A1bound}
A_1\lesssim_L \frac 1 {\alpha^4} \|\Psi\|^2.
\end{align}

Finally, we proceed to bound $A_2$. Recalling the definition of $\eta$ and $T_{\varepsilon}$, we see that
\begin{align}
\label{eq:gradetaest}
|\nabla_{\lambda} \eta|^2&=\left|\nabla_{\lambda}\left[\chi\left(\varepsilon^{-1}\|(-\Delta_L+1)^{-1/2}(\varphi-\varphi_L)\|_{L^2(\TL)}\right)\right]\right|^2 \nonumber\\
&\lesssim\varepsilon^{-2}\unit_{T_{\varepsilon}}(\varphi)\left|\nabla_{\lambda} \|(-\Delta_L+1)^{-1/2}(\varphi-\varphi_L)\|_{L^2(\TL)}\right|^2\nonumber\\
&\lesssim \varepsilon^{-2}\unit_{T_{\varepsilon}}(\varphi)\frac{\|(-\Delta_L+1)^{-1}(\varphi-\varphi_L)\|^2}{\|(-\Delta_L+1)^{-1/2}(\varphi-\varphi_L)\|^2}\leq \unit_{T_{\varepsilon}}(\varphi)\varepsilon^{-2},
\end{align}
where we used that $\eta$ is supported on $T_{\varepsilon}$ and that $\chi$ is smooth and compactly supported. Therefore, using also the normalization of $\psi_{\varphi}$, we obtain
\begin{align}
\label{eq:A2boundfirst}
A_2\lesssim \frac 1 {\alpha^4\varepsilon^2} \|\unit_{T_{\varepsilon}}G\|_{L^2(\R^N)}^2.
\end{align}
We now need to bound $\|\unit_{T_{\varepsilon}}G\|_{L^2(\R^N)}$ in terms of $\|\Psi\|= \|\eta G\|_{L^2(\R^N)}$. We define 
\begin{align}
S_{\nu}:=\{\varphi\in \ran \Pi \,|\, \|\Pi'(\varphi-\varphi_L)\|_2\leq \nu\}
\end{align}
and observe that on $S_{\nu}\cap T_{\varepsilon}$ we have, by the triangle inequality,
\begin{align}
\label{eq:insidebound}
\|(-\Delta_L+1)^{-1/2}\GradProj \varphi\|_2 \leq \varepsilon+\nu,
\end{align}
and that on $S_{\nu}^c$ 
\begin{align}
\label{eq:outsidebound}
G(\lambda)\leq \exp(-\alpha^2 \tau_L^{1/2} \nu^2),
\end{align}
where we used that $\left[\Pi(\unit-K_L)\Pi\right]^{1/2}\geq \tau_L^{1/2} \Pi'$ (with $\tau_L$ being the constant appearing in Proposition \ref{prop:Hessianstrictpos}). We then have, using \eqref{eq:insidebound}, that
\begin{align}
\|\unit_{T_{\varepsilon}}G\|_2^2&= \|\unit_{T_{\varepsilon}\cap S_{\nu}}G\|_2^2+\|\unit_{T_{\varepsilon}\cap S^c_{\nu}}G\|_2^2\notag\\
&\leq \int_{\{\|(-\Delta_L+1)^{-1/2}\GradProj \varphi\|_2 \leq \varepsilon+\nu\}\cap S_{\nu}} G^2 d\lambda_1 \dots d\lambda_N+\int_{T_{\varepsilon}\cap S^c_{\nu}} G^2 d\lambda_1 \dots d\lambda_N.
\end{align} 
We now perform the change of variables $(\lambda_1,\lambda_2,\lambda_3)=3(\lambda'_1,\lambda'_2,\lambda'_3)$ in the first integral and the change of variables $\lambda-\lambda_L=2(\lambda'-\lambda_L)$ in the second integral and fix $\nu=\varepsilon/8$, obtaining
\begin{align}
\|\unit_{T_{\varepsilon}}G\|_2^2&\leq 27 \int_{\{\|(-\Delta_L+1)^{-1/2}\GradProj \varphi\|_2 \leq (\varepsilon+\nu)/3\}\cap S_{\nu}} G^2 d\lambda+2^N\int_{T_{\varepsilon/2}\cap S_{\nu/2}^c} G(\lambda')^8 d\lambda'\notag\\
&\leq \left(27+2^N \exp(-6\alpha^2 \tau_L^{1/2} \nu^2/4)\right) \int_{T_{\varepsilon/2}} G^2 d\lambda \nonumber\\
&\leq \left(27+2^N \exp(-6\alpha^2 \tau_L^{1/2} \nu^2/4)\right)\|\Psi\|^2,
\end{align}
where in the second step we  used that $\{\|(-\Delta_L+1)^{-1/2}\GradProj \varphi\|_2 \leq (\varepsilon+\nu)/3\}\cap S_{\nu} \subset T_{\varepsilon/2}$ by the triangle inequality if $\nu=\varepsilon/8$, and \eqref{eq:outsidebound} to estimate the Gaussian factor on $S_{\nu/2}^c$. Therefore, as long as $\sqrt{N}\leq C_L^1 \alpha \varepsilon$ for a sufficiently small $C_L^1$, we  conclude that
\begin{align}
\label{eq:A2boundsecond}
A_2\lesssim \frac 1 {\alpha^4 \varepsilon^2} \|\Psi\|^2.
\end{align}
Plugging estimates \eqref{eq:orderofB}, \eqref{eq:boundA}, \eqref{eq:A1bound}, and \eqref{eq:A2boundsecond} into \eqref{eq:firstevaluationtrial}, we infer, for $\sqrt{N}\leq C_L^1 \alpha\varepsilon$, that for a sufficiently large $C_L^2$
\begin{align}
\label{eq:FinalUpEst}
&\frac{\expval{\HL}{\Psi}}{\bra{\Psi}\ket{\Psi}}\leq \eL-\frac 1 {2\alpha^2}\Tr(\Pi-\left[\Pi(\unit-K_L)\Pi\right]^{1/2})+ C_L^2(\varepsilon^3+\alpha^{-4}\varepsilon^{-2}).
\end{align}

We now proceed to choose a real orthonormal basis for $\ran \Pi$ which is convenient to bound the r.h.s. of \eqref{eq:FinalUpEst}. Let $\{g_j\}_{j\in \mathbb{N}}$ be an orthonormal basis of eigenfunctions of $K_L$ with corresponding eigenvalue $k_j$, ordered such that $k_{j+1}\geq k_j$. By Proposition \ref{prop:Hessianstrictpos} we have $k_j=1$ for $j=1,2,3$ and $k_j<1$ for $j>3$. Moreover, $\GradProj$ coincides with the projection onto $\spn\{g_1,g_2,g_3\}$. We  pick $\Pi'$ to be the projection onto $\spn\{g_4,\dots, g_{N}\}$ if $\varphi_L$ is spanned by $\{g_1,\dots, g_N\}$ and onto $\spn\{g_4,\dots, g_{N-1}, \varphi_L\}$ otherwise. With this choice the eigenvalues $\mu_i$ of $\Pi(\unit-K_L)\Pi$ appearing in the Gaussian factor $G$ are equal to
\begin{align}
\mu_j=1-k_j, \,\,\,\, j=1, \dots, N-1, \quad \mu_N=\begin{cases}
1-k_N & \text{if} \,\, \varphi_L \in \spn\{g_1,\dots,g_N\},\\
\expval{\unit-K_L}{\tilde{\varphi}_L} & \text{otherwise},
\end{cases}
\end{align}
with $\tilde{\varphi}_L:= \frac{\varphi_L-\sum_{j=4}^{N-1} g_j \bra{g_j}\ket{\varphi_L}}{\|\varphi_L-\sum_{j=4}^{N-1} g_j \bra{g_j}\ket{\varphi_L}\|_2}$.
In any case
\begin{align}
&\Tr(\Pi-\left[\Pi(\unit-K_L)\Pi\right]^{1/2})\notag\\
&\geq \sum_{j=1}^{N-1} (1-(1-k_j)^{1/2})=\Tr(\unit-(\unit-K_L)^{1/2})-\sum_{j=N}^{\infty} (1-(1-k_j)^{1/2}).
\end{align}
In order to estimate $\sum_{j=N}^{\infty} (1-(1-k_j)^{1/2})$, we note that Lemma \ref{lem:KJLaplBound} implies that $k_j\lesssim_L (l_j+1)^{-2}$, where $l_j$ denotes the ordered eigenvalues of $-\Delta_L$. Since $l_j\sim j^{2/3}$ for $j\gg1$, we have 
\begin{align}
\sum_{j=N}^{\infty} (1-(1-k_j)^{1/2})\lesssim_L N^{-1/3}.
\end{align}
This allows us to conclude that
\begin{align}
\frac{\expval{\HL}{\Psi}}{\bra{\Psi}\ket{\Psi}}\leq \eL-\frac 1 {2\alpha^2}\Tr(\unit-(\unit-K_L)^{1/2})+C_L^3(\varepsilon^3+\alpha^{-4}\varepsilon^{-2}+\alpha^{-2}N^{-1/3}),
\end{align}
as long as $\sqrt{N}\leq C_L^1 \alpha \varepsilon$. The error term is minimized, under this constraint, for $\varepsilon\sim\alpha^{-8/11}$ and $N\sim \alpha^2 \varepsilon^2 \sim\alpha^{6/11}$, which yields
\begin{align}
\frac{\expval{\HL}{\Psi}}{\bra{\Psi}\ket{\Psi}}\leq \eL-\frac 1 {2\alpha^2}\Tr(\unit-(\unit-K_L)^{1/2})+C_L \alpha^{-24/11},
\end{align}
as claimed in \eqref{eq:infspecHsharp}.

\subsection{The Cutoff Hamiltonian} \label{Sec:LowerBoundI} As a first step to derive the lower bound in \eqref{eq:infspecHsharp}, we show that it is possible to apply an ultraviolet cutoff of size $\Lambda$ to $\HL$ at an expense of order $\Lambda^{-5/2}$ (this is proven in Proposition \ref{prop:cutoffH} in Section \ref{Sec:FinalCutOff}). Our approach follows closely the one in \cite{frank2019quantum}. It relies on an application of a triple Lieb--Yamazaki bound (extending the method of \cite{lieb1958ground}) which we carry out in Section \ref{Sec:TripleLY}, and on a consequent use (in Section \ref{Sec:Gross}) of a Gross transformation \cite{gross1962particle,nelson1964interaction}. 

We shall in the following, for any \emph{real-valued} $f\in L^2(\TL)$, denote 
\begin{align}
&\Phi(f):=\ad(f)+a(f),\\ 
&\Pi(f):= \Phi(if)=i(\ad(f)-a(f)).
\end{align} 
We recall that (see \eqref{eq:FrHam}) the interaction term in the Fr\"ohlich Hamiltonian is given by 
\begin{align}
-a^{\dagger}(\elecphononcoupl)-a(\elecphononcoupl)=-\Phi(\elecphononcoupl),
\end{align} 
where $v_L$ was defined in \eqref{eq:ElPhCouplDef} and $a$ and $a^{\dagger}$ satisfy the rescaled commutation relations \eqref{eq:commrel}. We shall apply an ultraviolet cutoff of size $\Lambda$ in $k$-space, which amounts to substituting the interaction term with
\begin{align}
-a^{\dagger}(\elecphononcouplCUT)-a(\elecphononcouplCUT)=-\Phi(\elecphononcouplCUT),
\end{align}
where
\begin{align}
\label{eq:vxlambda}
v_{L,\Lambda}(y):=\sum_{0\neq k\in \frac{2\pi} L \mathbb{Z}^3\atop |k|< \Lambda} \frac 1 {|k|} \frac{e^{-i k \cdot y}}{L^3}.
\end{align}
To quantify the expense of such a cutoff we clearly need to bound
\begin{align}
-a^{\dagger}(\elecphononcouplREST)-a(\elecphononcouplREST)=-\Phi(\elecphononcouplREST),
\end{align}
where
\begin{align}
\label{eq:wx}
w_{L,\Lambda}(y)=v_{L}(y)-v_{L,\Lambda}(y)=\sum_{k\in \frac{2\pi} L \mathbb{Z}^3\atop |k|\geq \Lambda} \frac 1 {|k|} \frac{e^{-i k \cdot y}}{L^3}.
\end{align}

\subsubsection{Triple Lieb--Yamazaki Bounds} \label{Sec:TripleLY} Let us introduce the notation $p=(p_1,p_2,p_3)=-i\nabla_x$ for the electron momentum operator. Note that on any function of the form $f(x,y)=f(y-x)$, such as $\elecphononcouplREST$ for example, the operator $p$ simply acts as multiplication by $k$ in $k$-space and agrees, up to a sign, with  $-i\nabla_y$. 

The purpose of this section is to prove the following Proposition. 

\begin{prop}
	\label{prop:awxfirstbound}
	Let $w_{L,\Lambda}$ be defined as in \eqref{eq:wx} and $\Lambda>1$. Then 
	\begin{align}
	\label{eq:awxOpBoundI}
	a^{\dagger}(\elecphononcouplREST)+a(\elecphononcouplREST)=\Phi(\elecphononcouplREST)\lesssim (|p|^2+\mathbb{N}+1)^2(\Lambda^{-5/2}+\alpha^{-1} \Lambda^{-3/2}),
	\end{align}
	as quadratic forms on $L^2(\TL)\otimes \mathcal{F}(L^2(\TL))$.
\end{prop}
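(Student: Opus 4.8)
The plan is to bound the form $\Phi(\elecphononcouplREST)$ by the Lieb--Yamazaki commutator trick applied three times. The starting point is the observation made just before the statement: on a function of the relative variable $y-x$, the electron momentum $p$ acts as multiplication by $k$ in Fourier space. Hence, writing $w_{L,\Lambda}$ in Fourier variables as in \eqref{eq:wx}, one can ``convert'' each factor of $|k|^{-1}$ into a commutator with a component of $p$: concretely, for $j=1,2,3$ one has the identity $[p_j, a(\elecphononcoupl)] = a(\,(\,\cdot\,)_j$-weighted kernel$)$ up to bookkeeping, which schematically lets us write $a(\elecphononcouplREST)$ as $\sum_j [p_j, a(u^{(1)}_j)] / (\text{stuff})$, where the new coupling function $u^{(1)}_j$ has kernel $k_j/|k|^2$ restricted to $|k|\geq\Lambda$, i.e.\ one extra power of $|k|$ in the denominator. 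Iterating this three times produces coupling functions whose kernels decay like $|k|^{-4}$ (times a bounded factor $k_{i}k_{j}k_{\ell}/|k|^{3}$) on the region $|k|\geq\Lambda$, at the cost of picking up three factors of $p$ (which will be absorbed into the $(|p|^2+\mathbb N+1)$ prefactor) and commutators of $p$ with the $x$-dependence of $\elecphononcouplREST$ (which, since $p$ acts as $k$ on the relative variable, do not generate genuinely new difficulties — they reshuffle the kernel but keep the same decay).

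First I would record the elementary Fock-space bounds: for real $g\in L^2$, $\pm\Phi(g)\leq \delta\, \mathbb N + \delta^{-1}\|g\|_2^2$ and, more to the point, $\|a(g)\Psi\|\leq \|g\|_2\,\|\mathbb N^{1/2}\Psi\|$ with the rescaled CCR \eqref{eq:commrel} giving $\|a^{\dagger}(g)\Psi\|^2 \leq \|g\|_2^2(\|\mathbb N^{1/2}\Psi\|^2 + \alpha^{-2}\|\Psi\|^2)$ — this is the source of the two terms $\Lambda^{-5/2}$ and $\alpha^{-1}\Lambda^{-3/2}$ in \eqref{eq:awxOpBoundI}: after three commutator steps the residual coupling function $g$ has $\|g\|_2^2 \sim \sum_{|k|\geq\Lambda}|k|^{-8}\lesssim \Lambda^{-5}$, whose square root is $\Lambda^{-5/2}$, while the ``creation-operator'' piece carries the extra $\alpha^{-1}$ and only two denominator powers are needed for $L^2$-summability against the counting measure on $\frac{2\pi}{L}\mathbb Z^3$ in that term, giving $\sum_{|k|\geq\Lambda}|k|^{-6}\lesssim\Lambda^{-3}$ and hence $\Lambda^{-3/2}$. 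Then I would carry out the three-fold iteration carefully, at each stage writing $a(u^{(m)})$ or $a^{\dagger}(u^{(m)})$ as a sum over $j$ of $p_j a(u^{(m+1)}_j) - a(u^{(m+1)}_j) p_j$ (plus lower-order commutator remainders coming from the fact that $p_j$ does not exactly commute with the annihilation operator of an $x$-dependent function — here one uses that $[p_j,a(f^x)] = -a((\partial_{y_j}f)^x)$ acting on the relative variable), and collecting the powers of $p$ on the outside.

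The bookkeeping step is to convert the resulting operator inequality: having written $\Phi(\elecphononcouplREST) = \sum (\text{monomial of degree} \leq 3 \text{ in } p)\cdot \Phi(g) \cdot (\text{monomial in } p)$ with $\|g\|_2$ controlled as above, I would use Cauchy--Schwarz in the Fock variable together with $|p|\leq |p|^2+1$ and the standard bound $|p|^{a}(|p|^2+\mathbb N+1)^{-1}|p|^{b}$ bounded for $a+b\leq 2$, to land on the claimed form $\lesssim (|p|^2+\mathbb N+1)^2(\Lambda^{-5/2}+\alpha^{-1}\Lambda^{-3/2})$. The degree-two prefactor (rather than degree $3/2$) is comfortable because we only need enough powers of $(|p|^2+\mathbb N+1)$ to dominate the three $p$'s and one $\mathbb N^{1/2}$ that appear; a crude count already gives $2$.

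\textbf{Main obstacle.} The genuinely delicate point, I expect, is not the decay estimate but the noncommutativity: each time one replaces a factor $|k|^{-1}$ by a commutator with $p_j$, one must keep track of the fact that $p$ acts on the electron variable $x$ while the coupling function $\elecphononcouplREST(y)=v_{L,\Lambda}^{\mathrm{rest}}(y-x)$ depends on $x$ through the shift, so $[p_j, a(\elecphononcouplREST)]$ is not simply $a$ of a rescaled kernel — it also produces terms where $p_j$ hits $\elecphononcouplREST$, i.e.\ $k_j$-weighted kernels, which is exactly what we want, but the order of operations and the resulting operator-ordered remainders have to be organized so that no uncontrolled factor of $|k|$ (equivalently, of $|p|$) is left over and so that after three iterations all leftover $\Phi(\cdot)$'s really have an $|k|^{-4}$-type kernel. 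Making this organization clean — essentially bounding all the ``cross'' and ``remainder'' terms by the same right-hand side — is where the bulk of the careful work lies, and it is precisely the content that \cite{frank2019quantum} developed and which we follow here with only cosmetic changes due to the torus Fourier sums replacing integrals.
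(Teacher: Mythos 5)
Your overall architecture --- three iterations of the Lieb--Yamazaki identity, $L^2$-estimates on the resulting kernels, and the rescaled CCR as the source of the $\alpha^{-1}$ factors, combined with Cauchy--Schwarz in the Fock variable --- does match the paper's proof. However, your account of where the term $\alpha^{-1}\Lambda^{-3/2}$ comes from is not correct, and the omission hides a step that is essential for the claimed estimate.

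To keep the prefactor at $(|p|^2+\mathbb{N}+1)^2$ one must distribute the three powers of $p$ as $2+1$ around each $a^\sharp$; the paper writes $\Phi(\elecphononcouplREST)$ as $\sum(p_jp_kC_{jk}+\mathrm{h.c.})-2\sum(p_jp_kB_{jkl}p_l+\mathrm{h.c.})$ with $C_{jk}=\sum_l[p_l,B_{jkl}]$. The $C_{jk}$ have a coupling function with only \emph{two} extra inverse powers of $|k|$, i.e.\ $\|c\|_2^2\sim\Lambda^{-3}$, not $\Lambda^{-5}$ --- so your assertion that after three commutator steps ``all leftover $\Phi(\cdot)$'s really have an $|k|^{-4}$-type kernel'' misses this piece. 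If you bound it with the crude inequality $a^\dagger(c)a(c)\leq\|c\|_2^2\,\mathbb{N}$, Cauchy--Schwarz produces an error of size $\Lambda^{-3/2}$ with \emph{no} $\alpha^{-1}$ in front, which is strictly worse than the Proposition claims and is dominant precisely in the regime $\Lambda\gg\alpha^{4/5}$ one actually needs. The paper's Lemma \ref{lem:prellybounds}, estimate \eqref{eq:awx3}, is the crucial ingredient your sketch lacks: for a translation-covariant coupling $f^x(\,\cdot\,)=f(\,\cdot\,-x)$ with $\hat f \sim k_jk_l|k|^{-5}\chi_{|k|\geq\Lambda}$, the rank-one operator $\ket{f^x}\bra{f^x}$ on $L^2(\TL)\otimes L^2(\TL)$ is bounded not by $\|f\|_2^2\sim\Lambda^{-3}$ but by $\Lambda^{-5}\left(|p|^2+L^{-3}\Lambda^{-1}\right)$, via an off-diagonal Cauchy--Schwarz in $k$-space weighted by the electron momentum $|q-k|^2$. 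This trades a factor $\Lambda^{-2}$ for one power of $|p|^2$ (harmless given the allowed prefactor), after which only the CCR remainder $\alpha^{-2}\|c\|_2^2\sim\alpha^{-2}\Lambda^{-3}$ survives at the $\Lambda^{-3/2}$ scale, producing exactly the $\alpha^{-1}\Lambda^{-3/2}$ in the statement. Your heuristic that ``the creation-operator piece carries the extra $\alpha^{-1}$ and only two denominator powers are needed'' attributes the $\alpha^{-1}$ correctly to the commutator $[a,a^\dagger]=\alpha^{-2}$ but treats it as if that were the entire contribution of the two-derivative piece; the $a^\dagger a$ part of $C_{jk}^2$, which carries no $\alpha^{-1}$, is what would spoil the estimate without this lemma.
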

We first need the following Lemma. 
\begin{lem}
	\label{lem:prellybounds}
	Let $w_{L,\Lambda}$ be defined as in \eqref{eq:wx} and $\Lambda>1$. Then for any $j,l,m\in\{1,2,3\}$ 
	\begin{align}
	\label{eq:awx1}
	&a^{\dagger}\left[(\partial_j \partial_l \partial_m (-\Delta_L)^{-3}w_{L,\Lambda})^x\right]a\left[(\partial_j \partial_l \partial_m (-\Delta_L)^{-3}w_{L,\Lambda})^x\right]\lesssim \Lambda^{-5} \mathbb{N},\\
	\label{eq:awx2}
	&\|\partial_j \partial_l(-\Delta_L)^{-2} w_{L,\Lambda}\|^2_{L^2(\TL)} \lesssim \Lambda^{-3},\\
	\label{eq:awx3}
	&a^{\dagger}\left[(\partial_j \partial_l(-\Delta_L)^{-2} w_{L,\Lambda})^x\right]a\left[(\partial_j \partial_l(-\Delta_L)^{-2} w_{L,\Lambda})^x\right]\lesssim \Lambda^{-5} (|p|^2+L^{-3}\Lambda^{-1})\numero,
	\end{align}
	as quadratic forms on $L^2(\TL)\otimes \mathcal{F}(L^2(\TL))$.
\end{lem}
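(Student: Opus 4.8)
The plan is to establish each of the three estimates \eqref{eq:awx1}, \eqref{eq:awx2}, \eqref{eq:awx3} by direct computation in Fourier space, exploiting that all the relevant functions are of the form $g^x(y)=g(y-x)$ so that $-i\nabla_y$ and $p=-i\nabla_x$ both act as multiplication by $k$ (up to sign) on the $k$-th Fourier mode, and that the Fourier coefficients of $w_{L,\Lambda}$ are supported on $\{|k|\geq \Lambda\}$ with $(w_{L,\Lambda})_k = L^{-3}|k|^{-1}$.

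\textbf{Step 1: the pointwise $L^2$ bound \eqref{eq:awx2}.} First I would compute
\begin{align}
\|\partial_j\partial_l(-\Delta_L)^{-2}w_{L,\Lambda}\|_{L^2(\TL)}^2 = \sum_{|k|\geq \Lambda} \frac{k_j^2 k_l^2}{|k|^8}\,\frac{1}{L^6 |k|^2}\,L^3 \lesssim \frac{1}{L^3}\sum_{|k|\geq \Lambda}\frac{1}{|k|^6}\lesssim \Lambda^{-3},
\end{align}
where the last step uses that $\frac{1}{L^3}\sum_{k\in\frac{2\pi}{L}\mathbb{Z}^3,|k|\geq\Lambda}|k|^{-6}$ is a Riemann sum comparable to $\int_{|\xi|\geq\Lambda}|\xi|^{-6}d\xi\lesssim\Lambda^{-3}$, uniformly in $L$ for $\Lambda>1$ (a small amount of care is needed to control the Riemann-sum error near $|k|\sim\Lambda$, but this is routine and $L$-uniform).

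\textbf{Step 2: the operator bounds \eqref{eq:awx1} and \eqref{eq:awx3}.} For a one-phonon-type operator $a^\dagger(f^x)a(f^x)$ with $f$ real, the standard bound is $a^\dagger(f^x)a(f^x)\leq \|a(f^x)\|$-type estimates; more precisely, using the rescaled CCR \eqref{eq:commrel}, for any $\Psi$ one has $\|a(f^x)\Psi\|^2 \leq \alpha^{-2}\|f\|_2^2\,\langle\Psi|\mathbb{N}|\Psi\rangle$ after integrating the electron variable — but here we want a bound \emph{without} the $\alpha^{-2}$, so instead I would use the form bound $a^\dagger(f^x)a(f^x)\le \langle f|f\rangle_{\text{op}}$ where the relevant operator norm is taken in the electron variable; concretely $a^\dagger(f^x)a(f^x)\leq \|f\|_2^2\,\mathbb{N}$ as quadratic forms when $a,a^\dagger$ are \emph{un}scaled, and with the scaling \eqref{eq:commrel} the factor $\alpha^{-2}$ appears in $\mathbb{N}$ already since $\sigma(\mathbb{N})=\alpha^{-2}\{0,1,\dots\}$, so the net statement $a^\dagger(f^x)a(f^x)\lesssim\|f\|_2^2\,\mathbb{N}$ holds as written. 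Applying this with $f=\partial_j\partial_l\partial_m(-\Delta_L)^{-3}w_{L,\Lambda}$, whose $L^2$-norm squared is $\frac{1}{L^3}\sum_{|k|\geq\Lambda}|k|^{-10}\lesssim\Lambda^{-5}$ (wait — $|k|^6/|k|^{12}\cdot|k|^{-2}=|k|^{-8}$, summed gives $\Lambda^{-5}$), yields \eqref{eq:awx1}. For \eqref{eq:awx3} the extra $|p|^2$ must be extracted: since $p$ acts as $k$ on $f^x$, one has $a^\dagger(f^x)a(f^x)=a^\dagger((|p|^{-1}\cdot|p|f)^x)a((|p|^{-1}\cdot|p|f)^x)$ and using $|p|^2\geq L^{-2}(2\pi)^2$ (spectral gap) one can factor $|p|^{-2}$ through to get $a^\dagger(f^x)a(f^x)\lesssim (|p|^2+\text{gap term})\,\|\,|p|^{-1}f\text{-type}\|_2^2\,\mathbb{N}$; the bookkeeping of the shift $L^{-3}\Lambda^{-1}$ comes from separating the lowest shell $|k|\sim\Lambda$. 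The cleanest route is the Lieb–Yamazaki commutator identity: write $a^\dagger(g^x)=[\,|p|^{-1},\,\cdot\,]$-type commutators to trade one power of $|k|$ for a derivative landing on the field; this is exactly the "triple Lieb–Yamazaki" structure the three functions $\partial_j\partial_l\partial_m(-\Delta_L)^{-3}$, $\partial_j\partial_l(-\Delta_L)^{-2}$ are set up for.

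\textbf{Main obstacle.} The delicate point is the $L$-uniformity of the Riemann-sum estimates: replacing $\frac{1}{L^3}\sum_{k\in\frac{2\pi}{L}\mathbb{Z}^3}$ by $\int_{\mathbb{R}^3}$ introduces errors that are controlled for $\Lambda>1$ and $L\geq L_0$, but one must verify the comparison $\sum_{|k|\geq\Lambda,\,k\in\frac{2\pi}{L}\mathbb{Z}^3}|k|^{-p}\lesssim L^3\Lambda^{3-p}$ holds with a constant independent of $L$ for each relevant $p\in\{6,8,10\}$; this follows from monotonicity of $|\xi|^{-p}$ and a dyadic decomposition of the lattice shells, and is where the explicit shift terms like $L^{-3}\Lambda^{-1}$ in \eqref{eq:awx3} originate (they account for the single innermost shell that cannot be absorbed into the integral comparison). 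None of this is conceptually hard, but it must be done carefully to keep all constants universal.
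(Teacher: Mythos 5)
Your treatment of \eqref{eq:awx2} and \eqref{eq:awx1} is correct and matches the paper: compute the $L^2(\TL)$-norm in Fourier space using $(w_{L,\Lambda})_k \sim L^{-3/2}|k|^{-1}$ for $|k|\geq\Lambda$, compare the lattice sum to an integral, and use $a^\dagger(g)a(g)\leq\|g\|_2^2\,\mathbb{N}$ (which is valid as stated with the rescaled CCR since $\mathbb{N}$ already carries the $\alpha^{-2}$).

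For \eqref{eq:awx3}, however, there is a real gap. The $|p|^2$ on the right-hand side of \eqref{eq:awx3} is the electron kinetic energy operator acting on the \emph{state}, not a multiplier on the coupling function, and it cannot be extracted by ``factoring $|p|^{-2}$ through.'' Two things go wrong with that plan. First, $|p|^2=-\Delta_L$ on $L^2(\TL)$ has a zero mode (the constant function), so it has no spectral gap above zero, and any argument that presupposes $|p|^2\geq c>0$ fails. Second, and more importantly, $a^\dagger(f^x)a(f^x)$ restricted to the one-phonon sector is the rank-one projection $\ket{f^x}\bra{f^x}$ acting fibrewise over $x$, and $|p|^2$ does not commute with this projection; inserting $|p|^{-1}|p|$ inside the argument of $a^\dagger$ and $a$ and ``moving $|p|^{-2}$ outside'' is not an identity and has no justification. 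The mechanism by which $|p|^2$ appears is different: writing the matrix element of $\ket{f^x_{j,l}}\bra{f^x_{j,l}}$ on a state $\Psi$ with double Fourier coefficients $\Psi_{q,k}$, one obtains a convolution $\sum_q\big|\sum_{|k|\geq\Lambda} L^{-3/2}k_jk_l|k|^{-5}\Psi_{q-k,k}\big|^2$; a Cauchy--Schwarz inequality with weight $|q-k|^2$ (the electron momentum squared) on one factor and a $q$-uniformly-summable kernel $L^{-3}|k|^{-6}|q-k|^{-2}$ on the other produces exactly $\Lambda^{-5}\expval{|p|^2}{\Psi}$. The diagonal term $k=q$ (zero electron momentum, where the weight $|q-k|^{-2}$ is singular) must be treated separately, and \emph{that} is the origin of the $L^{-3}\Lambda^{-1}$ correction — not, as you suggest, an innermost-shell Riemann-sum effect. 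Finally, the remark that the ``cleanest route'' is the Lieb--Yamazaki commutator identity is a confusion of layers: the Lieb--Yamazaki triple commutator is what makes \emph{use} of \eqref{eq:awx1}--\eqref{eq:awx3} in Proposition \ref{prop:awxfirstbound}, but it plays no role in proving the lemma itself.
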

 
\begin{proof}
	For any $j,l,m\in\{1,2,3\}$, \eqref{eq:awx1} follows from  
	$\ad(g)a(g)\leq \|g\|_2^2 \numero$ for $g\in L^2(\TL),$  
	and then proceeding along the same lines of the proof of \eqref{eq:awx2}. To prove \eqref{eq:awx2} we  estimate
	\begin{equation}
	\|\partial_j \partial_l(-\Delta_L)^{-2} w_{L,\Lambda}\|^2_{L^2(\TL)} = \frac 1{L^3} \sum_{|k|\geq \Lambda \atop k\in \frac{2\pi} L \mathbb{Z}^3} \frac {k_j^2 k_l^2} {|k|^{10}} 
	\lesssim \int_{B_{\Lambda}^c} \frac 1 {|t|^6} dt =\frac{4\pi}{3} \Lambda^{-3}. 
	\end{equation} 
	If we denote  $f_{j,l}^x:=(-\partial_j \partial_l (-\Delta_L)^{-2} w_{L,\Lambda})^x$,
	in order to show \eqref{eq:awx3} it suffices to prove that 
	\begin{align}
	\ket{f_{j,l}^x}\bra{f^x_{j,l}} \lesssim \Lambda^{-5}\left(|p|^2+\Lambda^{-1}\right) \quad \text{on} \quad L^2(\TL)\otimes L^2(\TL),
	\end{align}
	where the bracket notation refers to the second factor in the tensor product, i.e., the left side is a rank-one projection on the second factor parametrized by $x$, which acts via multiplication on the first factor.   For any $\Psi \in L^2(\TL)\otimes L^2(\TL)$ with Fourier coefficients $\Psi_{q,k}$, we have
	\begin{align}
	\label{eq:EstimateSup}
	&\left\langle \Psi \Big|\ket{f_{j,l}^x}\bra{f^x_{j,l}} \Big| \Psi \right \rangle=\int dx \left|\int dy \overline{f_{j,l}^x(y)} \Psi(x,y) \right|^2 =\sum_{q\in \frac{2\pi} L \mathbb{Z}^3} \left| \sum_{k\in \frac{2\pi} L \mathbb{Z}^3\atop |k|\geq \Lambda} \frac{k_j k_l}{L^{3/2}|k|^5} \Psi_{q-k,k}\right|^2\notag\\
	&\leq \sum_{q \in \frac{2\pi} L \mathbb{Z}^3} \left(\sum_{k\in \frac{2\pi} L \mathbb{Z}^3\atop |k|\geq \Lambda, \,\,k\neq q} \frac{1}{L^3|k|^6|q-k|^2}\right)\left(\sum_{k\in \frac{2\pi} L \mathbb{Z}^3} |q-k|^2|\Psi_{q-k,k}|^2\right)+\sum_{q \in \frac{2\pi} L \mathbb{Z}^3\atop |q|\geq \Lambda} \frac{|\Psi_{0,q}|^2}{L^3|q|^6}\notag\\
	&\leq \sup_{q\in\frac{2\pi} L \mathbb{Z}^3}\left(\sum_{k\in \frac{2\pi} L \mathbb{Z}^3\atop |k|\geq \Lambda, \, \, k\neq q} \frac{L^{-3}}{|k|^6|q-k|^2}\right) \expval{|p|^2}{\Psi}+L^{-3} \Lambda^{-6} \| \Psi\|^2 \nonumber \\ &\lesssim \expval{\Lambda^{-5} (|p|^2+ L^{-3}\Lambda^{-1})}{\Psi},
	\end{align}
	which shows our claim. We only need to justify the last step, i.e.,  that the supremum appearing in \eqref{eq:EstimateSup} is bounded by $C\Lambda^{-5}$. We have
	\begin{align}
	\sum_{0\neq k\in \frac{2\pi} L \mathbb{Z}^3\atop |k|\geq \Lambda, \,\, k\neq q} \frac{L^{-3}}{|k|^6|q-k|^2}&\lesssim \int_{B_{\Lambda}^c} \frac 1 {|x|^6|q-x|^2} dx=\Lambda^{-5}\int_{B_1^c}\frac 1 {|x|^6|\Lambda^{-1}q-x|^2}\nonumber\\
	&\leq \Lambda^{-5} \left(\int_{B_1(\Lambda^{-1}q)}\frac 1 {|\Lambda^{-1}q-x|^2}+\int_{B_1^c} |x|^{-6}\right)\leq \frac {16\pi} 3 \Lambda^{-5}.
\end{align}
	This concludes the proof. 
\end{proof}

We are now able to prove Proposition \ref{prop:awxfirstbound}.

\begin{proof}[Proof of Proposition \ref{prop:awxfirstbound}]
	Following the approach by Lieb and Yamazaki in \cite{lieb1958ground}, we have
	\begin{align}
	\sum_{j=1}^3 [p_j,a(p_j |p|^{-2}\elecphononcouplREST)]=-a(\elecphononcouplREST).
	\end{align}
	Applying this three times, we obtain
	\begin{align}
	\sum_{j,k,l =1}^3 [p_j,[p_k,[p_l,a(p_jp_kp_l |p|^{-6}\elecphononcouplREST)]]]=-a(\elecphononcouplREST).
	\end{align}
	Similarly,
	\begin{align}
	\sum_{j,k,l =1}^3 [p_j,[p_k,[p_l,a^{\dagger}(p_jp_kp_l |p|^{-6}\elecphononcouplREST)]]]=a^{\dagger}(\elecphononcouplREST).
	\end{align}
	Therefore, if we define 
	\begin{align}
	B_{jkl}&:=a^{\dagger}(p_jp_kp_l |p|^{-6}\elecphononcouplREST)-a(p_jp_kp_l |p|^{-6}\elecphononcouplREST)\nonumber\\
	&\;=a^{\dagger}\left[(\partial_j \partial_l \partial_m (-\Delta_L)^{-3}w_{L,\Lambda})^x\right]-a\left[(\partial_j \partial_l \partial_m (-\Delta_L)^{-3}w_{L,\Lambda})^x\right],
	\end{align}
	we have
	\begin{align}
	a^{\dagger}(\elecphononcouplREST)+a(\elecphononcouplREST)=\Phi(\elecphononcouplREST)=\sum_{j,k,l =1}^3 [p_j,[p_k,[p_l,B_{jkl}]]].
	\end{align}
	Using that $B_{jkl}^{\dagger}=-B_{jkl}$ and that $B_{jkl}$ is invariant under exchange of indices, we arrive at
	\begin{align}
	\label{eq:lyexpression}
	\Phi(\elecphononcouplREST)=\sum_{j,k,l =1}^3 \left(p_j p_k [p_l,B_{jkl}]+[B_{jkl}^{\dagger},p_l]p_jp_k\right) - 2\sum_{j,k,l=1}^3 \left(p_j p_k B_{jkl} p_l +p_l B_{jkl}^{\dagger} p_j p_k \right).
	\end{align} 
	By the Cauchy--Schwarz inequality, we have for any $\lambda>0$ 
	\begin{align}
	\label{eq:boundB1}
	-p_j p_k B_{jkl} p_l -p_l B_{jkl}^{\dagger} p_j p_k \leq \lambda p_j^2 p_k^2+ \lambda^{-1} p_l B_{jkl}^{\dagger} B_{jkl} p_l.
	\end{align}
	Moreover, using \eqref{eq:awx1} and the rescaled commutation relations \eqref{eq:commrel} satisfied by $a$ and $a^{\dagger}$, we have
	\begin{align}
	\label{eq:boundB2}
	B_{jkl}^{\dagger} B_{jkl}\leq C\left(4\mathbb{N}+2\alpha^{-2}\right)\Lambda^{-5}.
	\end{align}
	Using \eqref{eq:boundB1} and \eqref{eq:boundB2}  and picking $\lambda=C^{1/2}\Lambda^{-5/2}$
	we  conclude that
	\begin{align}
	\label{eq:lzboundII}
	-2\sum_{j,k,l=1}^3 \left(p_j p_k B_{jkl} p_l +p_l B_{jkl}^{\dagger} p_j p_k \right)\lesssim \Lambda^{-5/2}\left(|p|^4+3 |p|^2(4\mathbb{N}+2\alpha^{-1})\right).
	\end{align}
	We now define 
	\begin{align}
	C_{jk}&:= \sum_{l=1}^3 [p_l,B_{jkl}]=a^{\dagger}(p_jp_k|p|^{-4}\elecphononcouplREST)+a(p_jp_k|p|^{-4}\elecphononcouplREST)\nonumber\\
	&\;=a^{\dagger}\left[(\partial_j \partial_k (-\Delta_L)^{-2}w_{L,\Lambda})^x\right]+a\left[(\partial_j \partial_k (-\Delta_L)^{-2}w_{L,\Lambda})^x\right]=C_{jk}^{\dagger}.
	\end{align}
	Using \eqref{eq:awx2}, \eqref{eq:awx3}  
	and the Cauchy-Schwarz inequality,  we have for any $\lambda>0$
	\begin{align}
	p_jp_k C_{jk}+C_{jk}p_j p_k \leq \lambda p_j^2 p_k^2 +\lambda^{-1} C_{jk}^2 .
	\end{align}
	Moreover, 
	\begin{align}
	C_{jk}^2&\leq 4 a^{\dagger}(p_jp_k |p|^{-4}\elecphononcouplREST) a(p_jp_k |p|^{-4}\elecphononcouplREST)+2\alpha^{-2} \|p_jp_k |p|^{-4} \elecphononcouplREST\|_2^2\nonumber\\
	&\lesssim \Lambda^{-5} (|p|^2+\Lambda^{-1}) \mathbb{N}+\alpha^{-2} \Lambda^{-3}.
	\end{align} 
	Picking $\lambda=\Lambda^{-5/2}+\alpha^{-1}\Lambda^{-3/2}$, we therefore conclude that
	\begin{align}\nonumber
	&\sum_{j,k,l =1}^3 \left(p_j p_k [p_l,B_{jkl}]+[B_{jkl}^{\dagger},p_l]p_jp_k\right) \\ & \lesssim (\Lambda^{-5/2}+\alpha^{-1}\Lambda^{-3/2})[|p|^4+\mathbb{N}(|p|^2+ L^{-3}\Lambda^{-1})+1].  \label{eq:lzboundI}
	\end{align}
	Applying \eqref{eq:lzboundII} and \eqref{eq:lzboundI} in \eqref{eq:lyexpression}, we finally obtain
	\begin{align}
	\Phi(\elecphononcouplREST)&\lesssim (\Lambda^{-5/2}+\alpha^{-1}\Lambda^{-3/2})\left[|p|^4+\mathbb{N}(|p|^2+ L^{-3}\Lambda^{-1})+1\right]  \notag\\ & \quad +\Lambda^{-5/2}\left(|p|^4+3 |p|^2(4\mathbb{N}+2\alpha^{-1})\right)\notag\\
	&\lesssim (|p|^2+\mathbb{N}+1)^2(\Lambda^{-5/2}+\alpha^{-1} \Lambda^{-3/2}),
	\end{align}
	as claimed.
\end{proof}

\subsubsection{Gross Transformation} \label{Sec:Gross} The bound \eqref{eq:awxOpBoundI}, derived in Proposition \ref{prop:awxfirstbound}, is not immediately useful as it stands. In order to relate the r.h.s. of \eqref{eq:awxOpBoundI} to the square of the Fr\"ohlich Hamiltonian $\HL$ in \eqref{eq:FrHam}, we shall apply a Gross transformation \cite{gross1962particle}, \cite{nelson1964interaction}. 

For a real-valued $f\in H^1(\TL)$, recalling that $f^x(\,\cdot\,)=f(\,\cdot\,-x)$, we consider the following unitary transformation on $L^2(\TL)\otimes \mathcal{F}$ 
\begin{align}\label{def:U}
U=e^{a(\alpha^2f^x)-a^{\dagger}(\alpha^2f^x)}=e^{i\Pi(\alpha^2f^x)},
\end{align}
where $U$ is understood to act as a `multiplication' with respect to the $x$ variable. For any $g\in L^2(\TL)$, we have
\begin{align}
\label{eq:Ua}
Ua(g)U^{\dagger}=a(g)+\bra{g}\ket{f^x} \quad \text{and} \quad U\ad(g)U^{\dagger}=\ad(g)+\bra{f^x}\ket{g},
\end{align}
and therefore
\begin{align}
U\numero U^{\dagger} = \numero+\Phi(f^x)+\|f\|_2^2.
\end{align}
Moreover, 
\begin{align}
UpU^{\dagger}=p+\alpha^2\Phi(p f^x)=p+\alpha^2\Phi[(i\nabla f)^x].
\end{align}
This implies that
\begin{align}
U p^2 U^{\dagger}=p^2+\alpha^4 (\Phi[(i\nabla f)^x])^2+2\alpha^2 p\cdot a[(i\nabla f)^x]+2\alpha^2 \ad[(i\nabla f)^x]\cdot p+\alpha^2 \Phi[(-\Delta_L f)^x].
\end{align}
Therefore, we also have 
\begin{align}
\label{eq:UHUd}
U \HL U^{\dagger}& = |p|^2+\alpha^4 (\Phi[(i\nabla f)^x])^2+2\alpha^2 p\cdot a[(i\nabla f)^x]+2\alpha^2 \ad[(i\nabla f)^x]\cdot p\nonumber\\
& \quad + \Phi[(-\alpha^2\Delta_L f +f-v_L)^x]+\numero +\|f\|^2_2 -2\bra{v_L}\ket{f}.
\end{align}
We denote 
\begin{align}
\label{eq:gxexpr}
g= -\alpha^2\Delta_L f +f-v_L,
\end{align}
and we shall pick 
\begin{align}
\label{eq:fxexpr}
f(y)&=\left[(-\alpha^2 \Delta_L +1)^{-1}(-\Delta_L)^{-1/2} \chi_{B_{K^2}^c}(-\Delta_L)\right](0,y)\nonumber\\
&=\sum_{|k|\geq K \atop k\in \frac {2\pi} L \mathbb{Z}^3} \frac 1 {(\alpha^2 |k|^2 +1)|k|} \frac{e^{-ik\cdot y}}{L^3}
\end{align}
for some $K>0$. Recalling \eqref{eq:vxlambda}, this implies that
\begin{align}
\label{gxexplexpr}
g(y)= -v_{L,K}(y)=-\sum_{0\neq k\in \frac{2\pi} L \mathbb{Z}^3\atop |k|< K} \frac 1 {|k|} \frac{e^{-i k \cdot y}}{L^3}.
\end{align}
For simplicity we suppress the dependence on $K$ in the notation for $f$ and $g$, but we will keep track of the parameter $K$ by denoting the operator $U$ related to this choice of $f$ (depending on $\alpha$ and $K$) via \eqref{def:U} by $U^K_{\alpha}$. We shall need the following estimates for norms involving $f$ and $g$. We have
\begin{align}
\label{eq:estgross1}
&\|g\|_2^2=\sum_{0\neq k\in \frac {2\pi}{L} \mathbb{Z}^3\atop |k|<K}\frac 1 {L^3|k|^2}\lesssim K,\\
\label{eq:estgross2}
&\|f\|_2^2=\sum_{0\neq k\in \frac {2\pi}{L} \mathbb{Z}^3\atop |k|\geq K} \frac 1 {L^3|k|^2(\alpha^2|k|^2+1)^2}\lesssim \alpha^{-4}\int_{B_K^c} \frac 1 {|t|^6} dt \lesssim \alpha^{-4} K^{-3},\\
\label{eq:estgross3}
&\bra{v_L}\ket{f}=\sum_{k\in \frac {2\pi}{L} \mathbb{Z}^3\atop |k|\geq K}\frac 1 {L^3|k|^2(\alpha^2|k|^2+1)}\lesssim \alpha^{-2} \int_{B_K^c} \frac 1 {|t|^4} dt\lesssim \alpha^{-2} K^{-1},\\
\label{eq:estgross4}
&\|\nabla f\|_2^2=\sum_{k\in \frac {2\pi}{L} \mathbb{Z}^3\atop |k|\geq K} \frac 1 {L^3(\alpha^2|k|^2+1)^2}\lesssim \alpha^{-4} \int_{B_K^c} \frac 1 {|t|^4} dt\lesssim \alpha^{-4} K^{-1}.
\end{align}
We  now state and prove the main result of this subsection, the proof of which follows the approach used in \cite{griesemer2016self} for the analogous statement on $\Rtre$, and in \cite{frank2019quantum} for the analogous statement on a domain with Dirichlet boundary conditions.

\begin{prop}
	\label{prop:Grossfinal}
	For any $\varepsilon>0$ there exist $K_{\varepsilon}>0$ and $C_{\varepsilon}>0$ such that, for all $\alpha\gtrsim 1$ and any $\Psi\in L^2(\TL)\otimes \mathcal{F}$ in the domain of $|p|^2+\mathbb{N}$
	\begin{align}
	\label{eq:Grossfinal}
	(1-\varepsilon)\|(|p|^2+\mathbb{N})\Psi\|-C_{\varepsilon}\|\Psi\|\leq\|U^{K_{\varepsilon}}_{\alpha} \HL (U^{K_{\varepsilon}}_{\alpha})^{\dagger}\Psi\|\leq(1+\varepsilon)\|(|p|^2+\mathbb{N})\Psi\|+C_{\varepsilon}\|\Psi\|.
	\end{align}
\end{prop}

\begin{proof}
	We shall use the following standard (given the rescaled commutation relations satisfied by $a$ and $\ad$) properties, which hold for any $\Psi \in \mathcal{F}$, any $f\in L^2(\TL)$ and any function $h: [0,\infty)\to \mathbb{R}$ 
	\begin{align}
	&\|a(f)\Psi\|\leq \|f\|_2\|\sqrt \numero \Psi\|, \quad \|\ad(f)\Psi\|\leq \|f\|_2\|\sqrt{\numero+\alpha^{-2}} \Psi\|,\\
	&h(\numero+\alpha^{-2})a=a h(\numero),\quad h(\numero) \ad=\ad h(\numero+\alpha^{-2}).
	\end{align}
	It is then straightforward, with the aid of the estimates \eqref{eq:estgross1}, \eqref{eq:estgross2}, \eqref{eq:estgross3} and \eqref{eq:estgross4},  to show, for any $\Psi \in L^2(\TL)\otimes \mathcal{F}$, any $\delta>0$ and any $K>0$, that 
	\begin{align}
	\label{eq:Grossest1}
	&\alpha^4\|(\Phi[(i\nabla f)^x])^2 \Psi\|\lesssim \alpha^4 \|\nabla f\|^2 \|(\numero+\alpha^{-2}) \Psi\|\lesssim K^{-1} \|(\numero+\alpha^{-2}) \Psi\|,\\
	\label{eq:Grossest2}
	& \|\Phi(g^x) \Psi\|\lesssim K^{1/2} \|\sqrt{\numero+\alpha^{-2}} \Psi\|\lesssim \delta \|(\numero+\alpha^{-2})\Psi\|+\delta^{-1} K \|\Psi\|,\\
	\label{eq:Grossest3}
	&\alpha^2\|\ad[(i\nabla f)^x]\cdot p\Psi\|\lesssim K^{-1/2}\|\sqrt{\numero+\alpha^{-2}}\sqrt{|p|^2} \Psi\|\lesssim K^{-1/2}\|(|p|^2+\numero+\alpha^2)\Psi\|.
	\end{align}
	It remains to bound the term
	\begin{align}
	\|\alpha^2 p \cdot a[(i\nabla f)^x] \Psi\|\leq \|\alpha^2 a[(i\nabla f)^x] \cdot p \Psi\|+\|a[(-\alpha^2 \Delta_L f)^x] \Psi\|=:(\mathrm{I})+(\mathrm{II}).
	\end{align}
	As in \eqref{eq:Grossest3}, we can easily bound
	\begin{align}
	\label{eq:GrosslasttermI}
	(\mathrm{I})\lesssim K^{-1/2}\|(|p|^2+\numero+\alpha^{-2})\Psi\|.
	\end{align}
	By \eqref{eq:gxexpr} and \eqref{gxexplexpr} and recalling \eqref{eq:vxlambda} and \eqref{eq:wx}, we have 
	\begin{align}
	a[(-\alpha^2 \Delta_L f)^x]=a[(g-f+v_L)^x]=-a(f^x)+a(w_{L,K}^x).
	\end{align}
	With the same arguments used in the proof of Lemma \ref{lem:prellybounds} we obtain
	\begin{align}
	\|a(w_{L,K}^x)\Psi\|\lesssim K^{-1/2}\|\sqrt{\numero(|p|^2+K^{-1})}\Psi\|,
	\end{align}
	and therefore, using \eqref{eq:estgross2} to bound $\|a(f^x) \Psi\|$ , we arrive at 
	\begin{align}
	\label{eq:GrosslasttermII}
	(\mathrm{II})&\lesssim \alpha^{-2}K^{-3/2}\|\sqrt{\numero} \Psi\|+K^{-1/2}\|\sqrt{\numero (|p|^2+K^{-1})} \Psi\|\nonumber\\
	&\lesssim \alpha^{-2}K^{-3/2}(\|(\numero+\alpha^{-2})\Psi\|+\|\Psi\|)+K^{-1/2}\|(|p|^2+\numero+K^{-1})\Psi\|.
	\end{align}
	Combining \eqref{eq:Grossest1}--\eqref{eq:Grossest3}, \eqref{eq:GrosslasttermI}, \eqref{eq:GrosslasttermII}, \eqref{eq:estgross2} and \eqref{eq:estgross3}  with \eqref{eq:UHUd}, we obtain, for any $K\geq 1$
	\begin{align}
	&\|U^K_{\alpha}\HL (U^{k}_{\alpha})^{\dagger}\Psi\|\leq [1+C(K^{-1/2}+\delta)]\|(|p|^2+\mathbb{N})\Psi\|+C(\delta^{-1} K+3\alpha^{-2}K^{-1})\|\Psi\|,\\
	&\|U^K_{\alpha}\HL (U^K_{\alpha})^{\dagger}\Psi\|\geq [1-C(K^{-1/2}+\delta)]\|(|p|^2+\mathbb{N})\Psi\|-C(\delta^{-1} K+3\alpha^{-2}K^{-1})\|\Psi\|,
	\end{align}
	which allows to conclude the proof by picking $K_{\varepsilon}\sim \varepsilon^{-2}$, $\delta \sim \varepsilon$ and $C_{\varepsilon}\sim \varepsilon^{-3}$.
\end{proof}

\begin{rem}
	\label{rem:Gross}
	Proposition \ref{prop:Grossfinal} has as an important consequence the fact that the ground state energy of $\HL$ is uniformly bounded for $\alpha \gtrsim 1$.
\end{rem}

\subsubsection{Final Estimates for Cut-off Hamiltonian} \label{Sec:FinalCutOff} With Propositions \ref{prop:awxfirstbound} and \ref{prop:Grossfinal} at hand, we are finally ready to prove the main result of this section. Note that all the estimates performed in this section are actually independent of $L$.

\begin{prop}
	\label{prop:cutoffH}
	Let
	\begin{align}
	\label{eq:Hlambda}
	\HL^{\Lambda}=-\Delta_L-\Phi(\elecphononcouplCUT)+\numero,
	\end{align}
	where $v_{L,\Lambda}$ is defined in \eqref{eq:vxlambda}. Then, for any $\Lambda\gtrsim 1$ and $\alpha \gtrsim 1$, 
	\begin{align}
	\label{eq:cutofffinal}
	\infspec{\HL}-\infspec{\HL^{\Lambda}}\gtrsim -(\Lambda^{-5/2}+\alpha^{-1}\Lambda^{-3/2}+\alpha^{-2}\Lambda^{-1}).
	\end{align}
\end{prop}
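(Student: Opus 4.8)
The argument follows \cite{frank2019quantum}, combining Proposition~\ref{prop:awxfirstbound} with the Gross transformation of Proposition~\ref{prop:Grossfinal}. Set $T:=-\Delta_L+\numero$ and recall from \eqref{eq:wx} and \eqref{eq:Hlambda} that $\HL=\HL^{\Lambda}-\Phi(\elecphononcouplREST)$. The plan is to fix $\varepsilon=\tfrac12$ in Proposition~\ref{prop:Grossfinal}, put $K:=K_{1/2}$ and $U:=U^{K}_{\alpha}$, and conjugate everything by the unitary $U$ (which preserves spectra). By the transformation rules \eqref{eq:Ua} and since $f$ and $w_{L,\Lambda}$ are real-valued,
\begin{align}
\label{eq:planconj}
U\HL U^{\dagger}=U\HL^{\Lambda}U^{\dagger}-\Phi(\elecphononcouplREST)-2\bra{f^x}\ket{\elecphononcouplREST},
\end{align}
and the constant last term equals $2\sum_{k\in\frac{2\pi}{L}\mathbb{Z}^3,\,|k|\geq\max(K,\Lambda)}\bigl(L^3(\alpha^2|k|^2+1)|k|^2\bigr)^{-1}$, which for $\alpha\gtrsim1$ is $\lesssim\alpha^{-2}\max(K,\Lambda)^{-1}\leq\alpha^{-2}\Lambda^{-1}$ (comparing the sum with $\alpha^{-2}\int_{B_{\Lambda}^c}|t|^{-4}\,dt$).

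Next I would prove a uniform a priori bound on $\expval{(T+1)^2}{\Psi}$ for low-lying states $\Psi$ of $U\HL U^{\dagger}$. Squaring the lower bound in \eqref{eq:Grossfinal} (with $\varepsilon=\tfrac12$), using $(a-b)^2\geq\tfrac12a^2-b^2$ and $(T+1)^2\leq2T^2+2$, produces a universal constant $C_1$ with
\begin{align}
\label{eq:planapriori}
(T+1)^2\leq C_1\bigl((U\HL U^{\dagger})^2+1\bigr)
\end{align}
as quadratic forms on $\dom(T)$; moreover $\dom(U\HL U^{\dagger})=\dom(T)$, since after conjugation the Hamiltonian is a $T$-bounded perturbation of $T$ with relative bound $<1$ (this is exactly what the proof of Proposition~\ref{prop:Grossfinal} establishes). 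By Remark~\ref{rem:Gross} there is a universal $C_0$ with $|\infspec\HL|\leq C_0$ for $\alpha\gtrsim1$. Hence, for any $\delta\in(0,1]$, every normalized $\Psi$ in the range of the spectral projection of $U\HL U^{\dagger}$ associated with the interval $(-\infty,\infspec\HL+\delta]$ lies in $\dom(T)$ and satisfies $\expval{U\HL U^{\dagger}}{\Psi}\leq\infspec\HL+\delta$ and $\expval{(U\HL U^{\dagger})^2}{\Psi}\leq(C_0+1)^2$; by \eqref{eq:planapriori} it therefore satisfies $\expval{(T+1)^2}{\Psi}\leq M$ for a universal $M$. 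Such $\Psi$ exist because $\infspec\HL$ lies in the spectrum of $U\HL U^\dagger$.

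For such a $\Psi$, combining $\expval{U\HL^{\Lambda}U^{\dagger}}{\Psi}\geq\infspec(U\HL^{\Lambda}U^{\dagger})=\infspec\HL^{\Lambda}$ with \eqref{eq:planconj}, Proposition~\ref{prop:awxfirstbound} and the bound just obtained,
\begin{align}
\infspec\HL+\delta&\geq\expval{U\HL U^{\dagger}}{\Psi}\notag\\
&\geq\infspec\HL^{\Lambda}-\expval{\Phi(\elecphononcouplREST)}{\Psi}-C\alpha^{-2}\Lambda^{-1}\notag\\
&\geq\infspec\HL^{\Lambda}-CM\bigl(\Lambda^{-5/2}+\alpha^{-1}\Lambda^{-3/2}\bigr)-C\alpha^{-2}\Lambda^{-1}.
\end{align}
Letting $\delta\downarrow0$ gives \eqref{eq:cutofffinal}, with all constants independent of $L$.

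The step I expect to be the main obstacle is the control of the unbounded weight $(T+1)^2$ produced by Proposition~\ref{prop:awxfirstbound}: one cannot simply test on an arbitrary near-minimizing sequence for $U\HL U^{\dagger}$, but must stay in a low-lying spectral subspace, where the squared Gross estimate \eqref{eq:planapriori} keeps $\expval{(T+1)^2}{\Psi}$ bounded uniformly in $\alpha$, $\Lambda$ and $L$ — and this in turn rests on the fact that the Gross transformation has genuinely removed the ultraviolet singularity, so that $U\HL U^{\dagger}$ is a mild perturbation of $T$. (Equivalently, since the confined polaron $\HL$ has a ground state $\Psi_0$ — there being a gap of size $\sim\alpha^{-2}$ below its essential spectrum — one may take $\Psi=U\Psi_0$, for which $\expval{(U\HL U^{\dagger})^2}{\Psi}=(\infspec\HL)^2$ directly.)
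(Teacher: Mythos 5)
Your proof is correct and follows essentially the same strategy as the paper's: combine Proposition~\ref{prop:awxfirstbound} with the Gross transformation of Proposition~\ref{prop:Grossfinal} to trade the unbounded weight $(|p|^2+\numero+1)^2$ for a function of the (transformed) Hamiltonian, and then restrict to a low-lying spectral subspace, where Remark~\ref{rem:Gross} keeps that weight uniformly bounded. The paper organizes this as a form inequality $\Phi(\elecphononcouplREST)\lesssim(\Lambda^{-5/2}+\alpha^{-1}\Lambda^{-3/2}+\alpha^{-2}\Lambda^{-1})(\HL+C)^2$ obtained by conjugating Proposition~\ref{prop:awxfirstbound}, while you conjugate the Hamiltonian itself and test in a spectral subspace of $U\HL U^\dagger$; these are the same calculation reorganized. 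One small caveat: the parenthetical alternative at the very end, taking $\Psi=U\Psi_0$ with $\Psi_0$ a ground state of $\HL$ protected by a spectral gap of order $\alpha^{-2}$, invokes facts (existence of a ground state, size of the gap) that the paper neither proves nor uses; your main argument avoids this and is self-contained.
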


Note that for the error term introduced in \eqref{eq:cutofffinal} to be negligible compared to $\alpha^{-2}$ it suffices to pick $\Lambda \gg \alpha^{4/5}$.

\begin{proof}
	We begin by recalling that Proposition \ref{prop:awxfirstbound} implies that
	\begin{align}
	a(\elecphononcouplREST)+\ad(\elecphononcouplREST)=\Phi(\elecphononcouplREST)\lesssim (\Lambda^{-5/2}+\alpha^{-1}\Lambda^{-3/2})(|p|^2+\numero+1)^2.
	\end{align}
	Applying the unitary Gross transformation $U^K_{\alpha}$ introduced in the previous subsection (with $f$ defined in \eqref{eq:fxexpr} and $K$ large enough for Proposition \ref{prop:Grossfinal} to hold for some $0<\varepsilon<1$) to both sides of the previous inequality and recalling \eqref{eq:Ua}, we obtain
	\begin{align}
	\label{eq:phiwU}
	(U^K_{\alpha})^{\dagger}\Phi(\elecphononcouplREST) U^K_{\alpha}&=\Phi(\elecphononcouplREST)+2 \bra{f}\ket{w_{L,\Lambda}}\nonumber\\
	&\lesssim (\Lambda^{-5/2}+\alpha^{-1}\Lambda^{-3/2})(U^K_{\alpha})^{\dagger}(|p|^2+\numero+1)^2U^K_{\alpha}.
	\end{align}
 	Proposition \ref{prop:Grossfinal} implies that
	\begin{align}
	\label{eq:hc2}
	(U^K_{\alpha})^{\dagger}(|p|^2+\numero+1)^2U^K_{\alpha}\lesssim (\HL+C)^2,
	\end{align}
	where $C$ is a positive constant (independent of $\alpha$ for $\alpha \gtrsim 1$). Recalling the definitions of $f$ and $w_{L,\Lambda}$ we also have
	\begin{align}
	|\bra{f}\ket{w_{L,\Lambda}}|\leq \sum_{0\neq k\in \frac {2\pi}{L} \mathbb{Z}^3\atop |k|>\Lambda} \frac 1{L^3(\alpha^2|k|^2+1)|k|^2}\lesssim \alpha^{-2}\Lambda^{-1},
	\end{align}
	and this allows us to conclude, in combination with \eqref{eq:phiwU} and \eqref{eq:hc2}, that
	\begin{align}
	\Phi(\elecphononcouplREST)\lesssim (\Lambda^{-5/2}+\alpha^{-1}\Lambda^{-3/2}+\alpha^{-2}\Lambda^{-1})(\HL+C)^2.
	\end{align}
	Hence
	\begin{align}
	\expval{\HL}{\Psi}\geq\expval{\HL^{\Lambda}}{\Psi}-(\Lambda^{-5/2}+\alpha^{-1}\Lambda^{-3/2}+\alpha^{-2}\Lambda^{-1})\expval{(\HL+C)^2}{\Psi}.
	\end{align}
	By Remark \ref{rem:Gross}, to compute the ground state energy of $\HL$ it is clearly sufficient to restrict to the spectral subspace relative to $|\HL|\leq C$ for some suitable $C$, which then yields \eqref{eq:cutofffinal}.	This concludes the proof and the section. 
\end{proof}

\subsection{Final Lower Bound} \label{Sec:LowerBoundII} In this section we show the validity of the lower bound in \eqref{eq:infspecHsharp}, thus completing the proof of Theorem \ref{theo:GSEexpansion}. With Proposition \ref{prop:cutoffH} at hand, we have good estimates on the cost of substituting $\HL$ with $\HL^{\Lambda}$ and, in particular, we know that the difference between the ground state energies of the two is negligible for  $\Lambda\gg \alpha^{4/5}$. We are thus left with the task of giving a lower bound on $\infspec \HL^{\Lambda}$. 

While the previous steps in the lower bound follow closely the analogous strategy in \cite{frank2019quantum}, the translation invariance of our model leads to substantial complications in the subsequent steps, and the analysis given in this subsection is the main novel part of our proof. In contrast to the case considered in \cite{frank2019quantum}, the set of minimizers $\MinLf=\Omega_L(\varphi_L)$ is a three-dimensional manifold, and in order to decouple the resulting zero-modes of the Hessian of the Pekar functional we find it necessary introduce a suitable diffeomorphism that 'flattens' the manifold of minimizers and the region close to it. Special attention also has to be paid on the metric in which this closeness is measured, necessitating the introduction of the family of norms in \eqref{eq:WTdef}. 

We emphasize that the non-uniformity in $L$ also results from the subsequent analysis, where the compactness of resolvent of $-\Delta_L$ enters in an essential way. 

\bigskip

Let $\Pi$ denote the projection 
\begin{align}
\label{eq:defPI}
\ran \Pi=\spn\left\{ L^{-3/2}e^{i k\cdot x}, \,\, k \in \frac {2\pi}L \mathbb{Z}^3, \,\,|k|\leq \Lambda\right\}, \quad N=\dim_{\mathbb{C}} \ran \Pi.
\end{align}
For later use we note that 
\begin{align}
\label{eq:Nlambdaasymp}
N \sim \left(\frac L {2\pi}\right)^3 \Lambda^3 \quad \text{as} \,\, \Lambda \to \infty.
\end{align}
The Fock space $\mathcal{F}(L^2(\TL))$ naturally factorizes into the tensor product $\mathcal{F}(\Pi L^2(\TL))\otimes \mathcal{F}((\unit-\Pi) L^2(\TL))$ and $\HL^{\Lambda}$ is of the form $\mathbb{A}\otimes \unit + \unit \otimes \mathbb{N}^>$, with $\mathbb{A}$ acting on $L^2(\TL)\otimes \mathcal{F}(\Pi L^2(\TL))$ and $\mathbb{N}^>$ being the number operator on $\mathcal{F}((\unit-\Pi) L^2(\TL))$. In particular, $\infspec \HL^{\Lambda}=\infspec \mathbb{A}$.

As in Section \ref{Sec:UpperBound}, we can, for any $L^2$-orthonormal basis of real-valued functions $\{f_n\}$ of $\ran \Pi$, identify $\mathcal{F}(\Pi L^2(\TL))$ with $L^2(\mathbb{R}^N)$ through the $Q$-space representation (see \cite{reed1975ii}). In particular, any real-valued $\varphi \in \ran \Pi$ corresponds to a point $\lambda \in \R^N$ via
\begin{align}
\label{eq:identification}
\varphi=\Pi \varphi= \sum_{n=1}^N \lambda_n f_n\cong(\lambda_1,\dots,\lambda_N)=\lambda.
\end{align}
Note that, compared to Section \ref{Sec:UpperBound}, we are using a different choice of $\Pi$ here for the decomposition $L^2(\TL)=\ran \Pi \oplus (\ran \Pi)^{\perp}$.

In the representation given by \eqref{eq:identification}, the operator $\mathbb{A}$ is given by 
\begin{align}
\mathbb{A}=-\Delta_L+V_{\varphi}(x)+\sum_{n=1}^N \left(-\frac 1 {4\alpha^4} \partial_{\lambda_n}^2+\lambda_n^2-\frac 1 {2\alpha^2}\right)
\end{align}
on $L^2(\TL)\otimes L^2(\mathbb{R}^N)$. For a lower bound, we can replace $h_{\varphi}=-\Delta_L+V_{\varphi}$ with the infimum of its spectrum $e(\varphi)$, obtaining
\begin{align}
\infspec \HL^{\Lambda}\geq \infspec \mathbb{K},
\end{align}
where $\mathbb{K}$ is the operator on $L^2(\mathbb{R}^N)$ defined as
\begin{align}
\label{eq:defK}
\mathbb{K}=-\frac 1 {4\alpha^4} \sum_{n=1}^N \partial_{\lambda_n}^2- \frac N {2 \alpha^2}+\FL(\varphi)=\frac 1 {4\alpha^4} (-\Delta_{\lambda})- \frac N {2 \alpha^2}+\FL(\lambda),
\end{align}
where $\FL$, which is understood as a multiplication operator in \eqref{eq:defK}, can be seen as a function of $\varphi\in \spn_{\R}\{f_j\}_{j=1}^N$ or $\lambda\in \R^N$ through the identification \eqref{eq:identification}. 

Using IMS localization we shall split $\mathbb{R}^N$ into two regions, one localized around the surface of minimizers of $\FL$, i.e., $\MinLf=\Omega_L(\varphi_L)$, and the other localized away from it. On each of these regions we can bound $\FL$ from below with the estimates contained in Proposition \ref{prop:FHessian} and in Corollary \ref{cor:uniquenessANDcoercivity}, respectively. Because of the prefactor $\alpha^{-4}$ in front of $-\Delta_{\lambda}$  the outer region turns out to be negligible compared to the inner one (at least if we define the inner and outer region with respect to  an appropriate norm). At the same time, employing an appropriate diffeomorphism,  the inner region can be treated as if $\Omega_L(\varphi_L)$ was a a flat torus, leading to a system of harmonic oscillators whose ground state energy can be calculated explicitly. 

We start by specifying the norm with respect to  which we measure closeness to $\Omega_L(\varphi_L)$. Recall the definition of the $W_T$-norms given in \eqref{eq:WTdef}. Note that for $T\geq \Lambda$ the $L^2$-norm coincides with the $W_T$-norm on $\ran \Pi$, which makes $0<T< \Lambda$ the relevant regime for our discussion. In fact, we shall pick 
\begin{align}
\label{eq:TlambdaRegime}
1\ll T \ll \Lambda^{2/3} \ , \quad  \alpha^{4/5}\ll\Lambda,
\end{align}
where $T\gg 1$ is needed for the inner region to yield  the right contribution, and $T\ll \Lambda^{2/3}$ ensures that the outer region contribution is negligible.

We proceed by introducing an IMS type localization with respect to  $\|\cdot\|_{W_T}$. Let $\chi:\mathbb{R}_+\to [0,1]$ be a smooth function such that $\chi(t)=1$ for $t\leq 1/2$ and $\chi(t)=0$ for $t\geq 1$. Let $\varepsilon>0$ and let $j_1$ and $j_2$ denote the multiplication operators on $L^2(\mathbb{R}^N)$ 
\begin{align}
j_1=\chi\left(\varepsilon^{-1} \text{dist}_{W_T}(\varphi,\Omega_L(\varphi_L))\right), \quad j_2=\sqrt{1-j_1^2}.
\end{align}
Then
\begin{align}\label{r1}
\mathbb{K}=j_1 \mathbb{K} j_1+j_2\mathbb{K} j_2-\mathbb{E},
\end{align}
where $\mathbb{E}$ is the IMS localization error given by
\begin{align}
\mathbb{E}=\frac 1 {4\alpha^4} \sum_{n=1}^N \left(|\partial_{\lambda_n} j_1|^2+|\partial_{\lambda_n} j_2|^2\right),
\end{align}
which is estimated in the following lemma. 

\begin{lem}
	\begin{align}\label{r2}
	\mathbb{E}\lesssim \alpha^{-4} \varepsilon^{-2} 
	\end{align}
\end{lem}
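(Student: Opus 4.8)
The plan is to estimate the IMS localization error $\mathbb{E} = \frac{1}{4\alpha^4}\sum_{n=1}^N\left(|\partial_{\lambda_n}j_1|^2 + |\partial_{\lambda_n}j_2|^2\right)$ by a direct chain-rule computation, exploiting that both $j_1$ and $j_2$ are smooth functions of the single scalar quantity $\dist_{W_T}(\varphi,\Omega_L(\varphi_L))$. First I would observe that, writing $d(\lambda) := \dist_{W_T}(\varphi,\Omega_L(\varphi_L))$ (with $\varphi \cong \lambda$ via \eqref{eq:identification}), we have $j_i = \chi_i(\varepsilon^{-1}d(\lambda))$ for appropriate smooth functions $\chi_i$ (namely $\chi_1 = \chi$ and $\chi_2 = \sqrt{1-\chi^2}$, both of which are smooth since $\chi$ is constant near the endpoints where $1-\chi^2$ vanishes). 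Hence $\partial_{\lambda_n}j_i = \varepsilon^{-1}\chi_i'(\varepsilon^{-1}d)\,\partial_{\lambda_n}d$, and therefore
\begin{align}
\sum_{n=1}^N\left(|\partial_{\lambda_n}j_1|^2 + |\partial_{\lambda_n}j_2|^2\right) = \varepsilon^{-2}\left(|\chi_1'(\varepsilon^{-1}d)|^2 + |\chi_2'(\varepsilon^{-1}d)|^2\right)|\nabla_\lambda d|^2 \lesssim \varepsilon^{-2}|\nabla_\lambda d|^2,
\end{align}
since $\chi_1',\chi_2'$ are bounded (uniformly, being fixed smooth compactly-supported-derivative functions).

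The remaining task is then to show $|\nabla_\lambda d|^2 \lesssim 1$, i.e.\ that the $W_T$-distance function to $\Omega_L(\varphi_L)$ is Lipschitz with constant $\lesssim 1$ with respect to the Euclidean metric on $\R^N$. The key point is that, by the $Q$-space identification \eqref{eq:identification} restricted to $\ran\Pi$, the Euclidean norm $|\lambda|$ equals the $L^2(\TL)$-norm $\|\varphi\|_2$ (the $f_n$ being an $L^2$-orthonormal basis), and on $\ran\Pi$ with $T<\Lambda$ one has $\|\cdot\|_{W_T}\leq\|\cdot\|_2$ pointwise in $k$-space (since $W_T(k)\leq 1$). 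A distance function $\varphi\mapsto\dist_{W_T}(\varphi,\Omega_L(\varphi_L))$ is always $1$-Lipschitz with respect to the $W_T$-norm itself by the triangle inequality; composing with the contraction $\|\cdot\|_{W_T}\leq\|\cdot\|_2 = |\cdot|_{\text{Eucl}}$ gives that it is $1$-Lipschitz with respect to the Euclidean metric on $\R^N$, hence $|\nabla_\lambda d|\leq 1$ wherever $d$ is differentiable. Combining this with the display above yields $\mathbb{E}\lesssim\alpha^{-4}\varepsilon^{-2}$, as claimed.

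There is one technical subtlety to address: $d(\lambda)$ need not be everywhere differentiable (the distance to a manifold typically fails to be smooth on the cut locus), but on the support of $\nabla_\lambda j_i$ we only care about the region $\frac12\varepsilon\leq d\leq\varepsilon$, which by Lemma \ref{lemma:WTneighbourhood} (applied with $\varepsilon\leq\varepsilon_L'$) lies inside the tubular neighborhood where the $W_T$-projection onto $\Omega_L(\varphi_L)$ is unique; there $d$ is smooth. Alternatively, and more cleanly, one avoids differentiability issues entirely by working with difference quotients: for any unit vector $e_n$ and small $h$, $|d(\lambda+he_n)-d(\lambda)|\leq\|h e_n\|_{W_T}\leq |h|$ by the triangle inequality for $\dist_{W_T}$ and the contraction bound, so each Clarke-type directional derivative is bounded by $1$ and the bound $|\partial_{\lambda_n}j_i|\lesssim\varepsilon^{-1}$ holds a.e.; then $\mathbb{E}\lesssim\alpha^{-4}\varepsilon^{-2}$ follows. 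I do not expect any real obstacle here — the only thing requiring minor care is confirming that $\chi_2 = \sqrt{1-\chi^2}$ has bounded derivative, which holds because $\chi$ is chosen so that $1-\chi^2$ is identically $0$ near $t\geq 1$ and identically $1$ near $t\leq 1/2$, so $\chi_2$ is genuinely smooth with compactly supported derivative.
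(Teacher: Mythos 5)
Your proof is correct, and it takes a somewhat cleaner route than the paper's. The paper works with the \emph{squared} distance $d^2 = \dist^2_{W_T}(\varphi,\Omega_L(\varphi_L))$, invokes Lemma \ref{lemma:WTneighbourhood} to establish that the nearest point $\varphi_L^{y_\varphi}$ is unique and depends continuously on $\varphi$ on $\text{supp}\,j_1$, and then computes $\partial_{\lambda_n}d^2 = 2\langle f_n, W_T(\varphi-\varphi_L^{y_\varphi})\rangle$ explicitly via a two-sided difference-quotient squeeze (showing $\lim_{h\to 0}y^{h,n}=y^\varphi$); the final estimate then follows by summing, using $\|W_T\|\le 1$, orthonormality of $\{f_n\}$, and $\|\varphi-\varphi_L^{y_\varphi}\|_{W_T}\leq\varepsilon$ on the support. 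You instead bound the gradient of $d$ itself without computing it: the function $\dist_{W_T}(\cdot,\Omega_L(\varphi_L))$ is $1$-Lipschitz in $\|\cdot\|_{W_T}$ by the triangle inequality, $\|\cdot\|_{W_T}\leq\|\cdot\|_2$ because $0\leq W_T\leq\unit$, and $\|\cdot\|_2$ is the Euclidean norm on $\ran\Pi$ under the $Q$-space identification, so $|\nabla_\lambda d|\leq 1$ a.e. This is a genuine simplification: it sidesteps the explicit gradient formula and the continuity-of-projection argument, needing Lemma \ref{lemma:WTneighbourhood} (or your difference-quotient alternative) only to justify that $d$ is actually differentiable where the chain rule is applied. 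Both routes give the same final constant-free bound $\mathbb{E}\lesssim\alpha^{-4}\varepsilon^{-2}$. One small presentational point: your assertion that $\chi_2=\sqrt{1-\chi^2}$ is smooth with bounded derivative is not automatic from $\chi$ being constant near the endpoints (smoothness of a square root of a flat nonnegative function can fail); one really should choose $\chi$ so that $1-\chi^2$ is itself a square of a smooth function, which is standard and also implicitly assumed in the paper where it states $|[(1-\chi^2)^{1/2}]'|\lesssim\unit_{[1/2,1]}$.
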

\begin{proof}
To bound $\mathbb{E}$ we apply Lemma \ref{lemma:WTneighbourhood}, which states that for $\varepsilon$ sufficiently small, 
 for any $\varphi\in \text{supp} j_1$, there exists a \emph{unique} $y_{\varphi}\in \TL$ such that
\begin{equation}
\dist_{W_T}^2(\varphi,\Omega_L(\varphi_L))=\expval{W_T}{\varphi-\varphi_L^{y_{\varphi}}}.
\end{equation}
Likewise, for any $n\in \{1,\dots,N\}$ and any $h$ sufficiently small there exists a unique $y_{n,h}\in \TL$ such that
\begin{equation}
\dist_{W_T}^2(\varphi+hf_n,\Omega_L(\varphi_L))=\expval{W_T}{\varphi+hf_n-\varphi_L^{y_{n,h}}}.
\end{equation} 
It is easy to see, using again Lemma \ref{lemma:WTneighbourhood}, that $\lim_{h\to 0}y^{h,n}= y^{\varphi}$ for any $n$. Therefore, using that $\dist_{W_T}(\varphi+hf_n,\Omega_L(\varphi_L))\leq \|\varphi-\varphi_L^{y_{\varphi}}\|_{W_T}$ and $\dist_{W_T}(\varphi,\Omega_L(\varphi_L))\leq \|\varphi-\varphi_L^{y_{h,n}}\|_{W_T}$, we arrive at
\begin{align}
& 2\bra{f_n}W_T\ket{\varphi-\varphi_L^{y_{\varphi}}}=\lim_{h\to 0}2\bra{f_n}W_T\ket{\varphi-\varphi_L^{y_{h,n}}}\notag\\
&\leq\lim_{h\to 0} h^{-1}\left(\dist_{W_T}^2(\varphi+hf_n,\Omega_L(\varphi_L))-\dist_{W_T}^2(\varphi,\Omega_L(\varphi_L))\right)\notag \\
&\leq 2\bra{f_n}W_T\ket{\varphi-\varphi_L^{y_{\varphi}}},
\end{align}
which shows that
\begin{align}
\partial_{\lambda_n} \dist^2_{W_T}(\varphi, \Omega_L(\varphi_L))=2\bra{f_n}W_T(\varphi-\varphi_L^{y_{\varphi}})\rangle.
\end{align}
Using that $|\chi'|, \left|\left[(1-\chi^2)^{1/2}\right]'\right|\lesssim \unit_{[1/2,1]}$, for $k=1,2$ we obtain
\begin{align}\nonumber
\left|\left[\partial_{\lambda_n} j_k\right](\varphi)\right|^2&\lesssim \varepsilon^{-4}\left|\partial_{\lambda_n} \dist^2_{W_T}(\varphi, \Omega_L(\varphi_L))\right|^2\unit_{\left\{\dist_{W_T}(\varphi, \Omega_L(\varphi_L))\leq \varepsilon\right\}}&\\
&\lesssim \varepsilon^{-4} |\bra{f_n}W_T(\varphi-\varphi_L^{y_{\varphi}})\rangle|^2\unit_{\left\{\dist_{W_T}(\varphi, \Omega_L(\varphi_L))\leq \varepsilon\right\}}.
\end{align}
Summing over $n$, using that $\|W_T\|\leq 1$ and that $\{f_n\}$ is an orthonormal system, we arrive at \eqref{r2}.
\end{proof}

Thus, the localization error  is negligible as long as $\varepsilon \gg \alpha^{-1}$. Hence, we are left with the task of providing lower bounds for $j_1 \mathbb{K} j_1$ and $j_2 \mathbb{K} j_2$ under the constraint $\varepsilon \gg \alpha^{-1}$. We carry out these estimates in the next two subsections, \ref{Sec:IMSinner} and \ref{Sec:IMSouter}. Finally, in Section \ref{Sec:FinalLowerBound}, we combine these bounds to prove the lower bound in \eqref{eq:infspecHsharp}.

\subsubsection{Bounds on $j_1 \mathbb{K} j_1$} \label{Sec:IMSinner}  Let us look closer at the intersection of the $\varepsilon$-neighborhood of $\Omega_L(\varphi_L)$ with respect to  the $W_T$-norm with $\ran \Pi$, i.e., the set 
\begin{align}
\Tneigh:=\{\varphi\in \ran \Pi \,|\, \bar{\varphi}=\varphi, \,\, \text{dist}_{W_T}(\varphi,\Omega_L(\varphi_L))\leq \varepsilon\}=\text{supp} j_1 \cap \ran \Pi .
\end{align}
In the following we shall show that this set is, for $\varepsilon$ small enough,  a tubular neighborhood of $\Pi\Omega_L(\varphi_L)$, which can be mapped via a suitable  diffeomorphism (given in Definition \ref{def:GroosCoord})  to a tubular neighborhood of a flat torus. 

Since $\varphi \in \ran \Pi$ and $\Pi$ commutes both with $W_T$ and with the transformation $g\mapsto g^y$ for any $y\in \TL$, we have
\begin{align}
\dist^2_{W_T}(\varphi,\Omega_L(\varphi_L))=\|(\unit-\Pi)\varphi_L\|^2_{W_T}+\dist^2_{W_T}(\varphi,\Omega_L(\Pi\varphi_L)).
\end{align}
This implies that $\Tneigh$ is non-empty if and only if 
\begin{align}
r_{T,\varepsilon}:=\sqrt{\varepsilon^2-\|(\unit-\Pi)\varphi_L\|^2_{W_T}}>0.
\end{align}
Since $\varphi_L\in C^{\infty}(\TL)$, $r_{T,\varepsilon}>0$ as long as 
\begin{align}
\label{eq:restr1}
\varepsilon \gtrsim_L \Lambda^{-h}
\end{align} 
for some $h>0$ and $\Lambda$ sufficiently large. In particular, \eqref{eq:restr1} is satisfied with $h=5/4$ for $\alpha$ large enough since, as discussed above, we need to pick $\varepsilon\gg \alpha^{-1}$ and $\Lambda \gg \alpha^{4/5}$ for the IMS and the cutoff errors to be negligible.

Lemma \ref{lemma:WTneighbourhood} implies that any $\varphi \in \Tneigh$, for $\varepsilon\leq \varepsilon'_L$ (independently of $T$ and $N$), admits a unique $W_T$-projection $\varphi_L^{y_{\varphi}}$ onto $\Omega_L(\varphi_L)$ and 
\begin{align}
\label{eq:Decomp}
\varphi=\varphi_L^{y_{\varphi}}+(v_{\varphi})^{y_{\varphi}}, \quad \text{with} \quad v_{\varphi} \in (\spn \{\Pi W_T \partial_j \varphi_L\}_{j=1}^3)^{\perp_{L^2}}.
\end{align}
Since $W_T$ and $\Pi$ commute,  $\Omega_L(\varphi_L)$ is `parallel' to $\ran \Pi$ with respect to  $\|\cdot\|_{W_T}$, i.e., $\dist_{W_T}(\ran \Pi, \varphi_L^y)$ is independent of $y$ and the $W_T$-projection of $\varphi_L^y$ onto $\Pi$ is simply $\Pi (\varphi_L^y)=(\Pi \varphi_L)^y$. Therefore, for $\varepsilon\leq \varepsilon'_L$, any $\varphi\in \Tneigh$ admits a \emph{unique} $W_T$-projection $(\Pi \varphi_L)^{y_\varphi}$ onto $\Omega_L(\Pi\varphi_L)$ and \eqref{eq:Decomp} induces a unique decomposition of the form
\begin{align}
\label{eq:ProjDecomp}
\varphi=(\Pi \varphi_L)^{y_{\varphi}}+(\eta_{\varphi})^{y_{\varphi}},\quad \text{with} \quad \eta_{\varphi}\in (\spn \{\Pi W_T \partial_j \varphi_L\}_{j=1}^3)^{\perp_{L^2}}, \,\, \|\eta_{\varphi}\|_{W_T}\leq r_{T,\varepsilon},
\end{align}
where $\eta_{\varphi}=\Pi v_{\varphi}$ (note that $(\unit-\Pi)v_{\varphi}=-(\unit-\Pi)\varphi_L$). 
This allows to introduce the following diffeomorphism,  which is a central object in our discussion. It maps $\Tneigh$ onto a tubular neighborhood of a flat torus. We shall call this diffeomorphism \emph{Gross coordinates}, as it is inspired by an approach introduced in \cite{gross1976strong}.

\begin{defin}[Gross coordinates]
	\label{def:GroosCoord}
	For
	\begin{align}
	\label{eq:BTLedef}
	B^{T,\Lambda}_{\varepsilon}:=\left\{\eta\in (\spn \{\Pi W_T \partial_j \varphi_L\}_{j=1}^3)^{\perp_{L^2}}\cap \ran \Pi\,\,|\,\,\|\eta\|_{W_T}\leq r_{T,\varepsilon}\right\}\subset \ran \Pi,
	\end{align} 
 	we define the Gross coordinates map $u$ as 
	\begin{align}
	\label{eq:uinversedef}
	u:\Tneigh& \rightarrow \TL \times B^{T,\Lambda}_{\varepsilon},\nonumber\\
	\varphi& \mapsto (y_{\varphi},\eta_{\varphi}),
	\end{align}
	where $y_{\varphi}$ and $\eta_{\varphi}$ are defined through the decomposition \eqref{eq:ProjDecomp}.
\end{defin}

By the discussion above it is clear that $u$ is well-defined and invertible, for $\varepsilon\leq \varepsilon'_L$ (defined in Lemma~\ref{lemma:WTneighbourhood}), with inverse $u^{-1}$ given by  
\begin{align}
\label{eq:diffeoimpl}
u^{-1}: \TL \times B^{T,\Lambda}_{\varepsilon} &\rightarrow \Tneigh \nonumber\\
(y,\eta)&\mapsto (\Pi\varphi_L)^y+\eta^y.
\end{align}
We emphasize that the whole aim of the discussion above is to show that $u$ is well-defined, 
since once that has been shown the invertibility of $u$ and the form of $u^{-1}$ are obvious. In other words, the map $u^{-1}$ as defined in \eqref{eq:diffeoimpl} is trivially-well defined, but it is injective and surjective with inverse $u$ only thanks to the existence and uniqueness of the decomposition \eqref{eq:ProjDecomp}.

To show that $u$ is a smooth diffeomorphism, we prefer to work with its inverse $u^{-1}$, which we proceed to write down more explicitly. For this purpose, we pick a real $L^2$-orthonormal basis $\{f_k\}_{k=1}^N$ of $\ran \Pi$, such that $f_1$, $f_2$ and $f_3$ are an orthonormal basis of $\spn \{\Pi W_T \partial_j \varphi_L\}_{j=1}^3$ and $f_4=\frac{\Pi \varphi_L}{\|\Pi \varphi_L\|_2}$. Note that $\spn \{\Pi W_T \partial_j \varphi_L\}_{j=1}^3$ is three dimensional, as remarked after \eqref{eq:projdef}, at least for $N$ and $T$ large enough, and that $f_4$ is indeed orthogonal to $f_1$, $f_2$ and $f_3$ since in $k$-space $W_T$ and $\Pi$ are even multiplication operators while the partial derivatives are odd multiplication operators. We denote the projection onto $\spn \{\Pi W_T \partial_j \varphi_L\}_{j=1}^3$ by
\begin{align}
\GradProjT:=\sum_{k=1}^3 \ket{f_k}\bra{f_k}.
\end{align}
Having fixed a real orthonormal $L^2$-basis, we can identify any real-valued function in $\ran \Pi$ (and hence also any function in $\Tneigh$) with a point $(\lambda_1,\dots,\lambda_N)$ via \eqref{eq:identification}. In these coordinates, the orthogonal transformation that acts on functions in $\ran \Pi$ as the translation by $y$, i.e., $\varphi\mapsto\varphi^y$, reads 
\begin{align}\label{def:Ry}
R(y):=\sum_{k=1}^{N} \ket{f_k^y}\bra{f_k},
\end{align}
and we can write $B^{T,\Lambda}_{\varepsilon}$ in \eqref{eq:BTLedef} as 
\begin{align}
B^{T,\Lambda}_{\varepsilon}:=\left\{\eta=(\eta_4,\dots,\eta_N)\in \spn_{\R}\{f_4,\dots, f_N\}\; \Big| \; \left\|\sum_{k=4}^N \eta_k f_k\right\|_{W_T}\leq r_{T,\varepsilon}\right\}.
\end{align}
In this basis, we can write $u^{-1}$ explicitly as
\begin{equation}
\label{eq:diffeo}
u^{-1}(y,\eta)=(\Pi\varphi_L)^y+\eta^y = R(y)(0,0,0,\|\Pi\varphi_L\|_2+\eta_4,\eta_5,\dots,\eta_N).
\end{equation}

The following Lemma uses this explicit expression for $u^{-1}$ and shows that it is a smooth diffeomorphism (therefore showing that the Gross coordinates map $u$ is as well).

\begin{lem}
	\label{lem:diffeomorphism}
	Let $u^{-1}$ be the map defined in \eqref{eq:diffeo}. There exists $\varepsilon^1_L\leq \varepsilon'_L$ (independent of $T$ and $N$) and $N_L>0$ such that for any $\varepsilon\leq \varepsilon_L^1$, any $T>0$ and any $N>N_L$ the map $u^{-1}$ is a $C^{\infty}$-diffeomorphism from $\TL\times B_{\varepsilon}^{T,\Lambda}$ onto $\Tneigh$. Moreover, for $\varepsilon\leq \varepsilon^1_L$, $|\det Du^{-1}|$ and all its derivatives are uniformly bounded independently of $T$ and $N$.
\end{lem}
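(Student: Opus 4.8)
The plan is to prove that the map $u^{-1}$ defined explicitly in \eqref{eq:diffeo} is a smooth diffeomorphism onto $\Tneigh$ with uniformly controlled Jacobian. Note that smoothness of $u^{-1}$ is immediate from the formula \eqref{eq:diffeo}: the map $y\mapsto R(y)$ is smooth (in fact real-analytic) on $\TL$, being built out of the plane waves $f_k^y$, and $u^{-1}$ is a composition of $R(y)$ with a linear map in the $\eta$-variables, hence $C^\infty$ on all of $\TL\times \spn_\R\{f_4,\dots,f_N\}$. By the preceding discussion (existence and uniqueness of the decomposition \eqref{eq:ProjDecomp}, guaranteed by Lemma~\ref{lemma:WTneighbourhood} for $\varepsilon\le\varepsilon'_L$), the map $u^{-1}$ is a bijection from $\TL\times B^{T,\Lambda}_\varepsilon$ onto $\Tneigh$ with inverse $u$. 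So the only real content is: (i) showing $Du^{-1}$ is everywhere invertible (so that, combined with bijectivity, $u^{-1}$ is a diffeomorphism by the inverse function theorem), and (ii) establishing the uniform-in-$T$-and-$N$ bounds on $\det Du^{-1}$ and its derivatives.

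For step (i), I would compute $Du^{-1}$ at a point $(y,\eta)$. Differentiating \eqref{eq:diffeo} in the $\eta_k$ directions ($k=4,\dots,N$) gives the vectors $R(y)f_k = f_k^y$, which are $L^2$-orthonormal; differentiating in the $y_j$ directions ($j=1,2,3$) gives $\partial_{y_j}\big[R(y)(0,0,0,\|\Pi\varphi_L\|_2+\eta_4,\eta_5,\dots)\big]$. At $\eta=0$ this last expression is $\partial_{y_j}(\Pi\varphi_L)^y = (\Pi\,\partial_j\varphi_L)^y$, which spans a three-dimensional space (the partial derivatives being linearly independent, as noted after \eqref{eq:projdef}, for $N$ large so that $\Pi$ captures enough of the Fourier content of $\partial_j\varphi_L$). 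By the orthogonality built into the basis choice — $f_1,f_2,f_3$ span $\spn\{\Pi W_T\partial_j\varphi_L\}$, which for $T$ large is a small perturbation of $\spn\{\Pi\partial_j\varphi_L\}$ — the $y$-derivatives at $\eta=0$ are transverse to $\spn\{f_4^y,\dots,f_N^y\}$, so $Du^{-1}$ is invertible at $\eta=0$ with a determinant bounded below by a positive constant depending only on $L$ (via $\min_{\nu\in\mathbb S^2}\|\nu\cdot\nabla\varphi_L\|_{W_0}$, compare \eqref{eq:gradlb}). For $\|\eta\|_{W_T}\le r_{T,\varepsilon}\le\varepsilon$, the $y$-derivatives differ from their $\eta=0$ values by a correction controlled by $\|\eta\|_2\le L^{3/2}\|\eta\|_{W_0}\le L^{3/2}\|\eta\|_{W_T}$ — here one must be a little careful, since $\|\cdot\|_2$ and $\|\cdot\|_{W_T}$ differ by a factor that grows with $\Lambda$, but the correction terms $\partial_{y_j}(R(y)\eta^\flat)$ only involve $\eta$ through $\eta$ itself (not amplified), and $\|\eta^\flat\|_{W_T}\le r_{T,\varepsilon}$; after passing through the equivalence of norms on $\ran\Pi$ one gets a bound of the form $C_L(\text{something})\varepsilon$ — so choosing $\varepsilon^1_L$ small (independently of $T,N$ — this is the delicate point, see below) keeps $Du^{-1}$ invertible with $|\det Du^{-1}|$ bounded away from zero.

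For step (ii), the determinant and its derivatives: expanding $\det Du^{-1}$, the block in the $\eta$-directions is exactly the identity (the columns are the orthonormal $f_k^y$), so $\det Du^{-1}$ reduces to the $3\times 3$ determinant of the $y$-derivative block projected onto $\spn\{f_1^y,f_2^y,f_3^y\}$ (the complementary directions being handled by orthogonality). This $3\times3$ matrix has entries that are smooth functions of $y$ and $\eta$, and crucially these entries are inner products $\langle f_i^y,\partial_{y_j}u^{-1}\rangle$, which — after using $\langle f_i^y, g^y\rangle=\langle f_i,g\rangle$ and the smoothness/boundedness of $\varphi_L\in C^\infty(\TL)$ — depend on $N$ only through projections $\Pi$ applied to fixed smooth functions, hence are bounded uniformly in $N$ (and in $T$, since $W_T$ enters only through the fixed basis $f_1,f_2,f_3$ whose defining span $\spn\{\Pi W_T\partial_j\varphi_L\}$ is, for $T,N$ large, uniformly transverse). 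All derivatives of $u^{-1}$ in $y$ are again plane-wave combinations of the same type, uniformly bounded; derivatives in $\eta$ beyond first order vanish (the map is affine in $\eta$). Hence $\det Du^{-1}$ and its derivatives are uniformly bounded.

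The main obstacle is the uniformity of $\varepsilon^1_L$ in $T$ and $N$. The subtlety is that $B^{T,\Lambda}_\varepsilon$ is defined via the $W_T$-norm, which on $\ran\Pi$ is weaker than $\|\cdot\|_2$ by a $\Lambda$-dependent factor, so a ball of $W_T$-radius $r_{T,\varepsilon}\sim\varepsilon$ can contain $\eta$'s of large $L^2$-norm; one must check that the nonlinear corrections to $Du^{-1}$ are genuinely controlled by $\|\eta\|_{W_T}$ (not $\|\eta\|_2$), which holds because the correction $\partial_{y_j}\big(R(y)(0,0,0,\eta_4,\dots,\eta_N)\big)$ is itself a translate of a function built linearly from $\eta$, and one can estimate $\langle f_i^y, \partial_{y_j}(\eta^y)\rangle$ by moving the $y$-derivative onto $f_i$ (an odd, uniformly bounded Fourier multiplier applied to a fixed smooth function) and pairing with $\eta$ — producing, after Cauchy--Schwarz with the weight $W_T$, a bound $\lesssim_L \|W_T^{-1/2}(\text{multiplier})f_i\|_2\,\|\eta\|_{W_T}$, where the first factor is the one that needs a uniform-in-$T$ bound; since the relevant multiplier is $|k|\cdot W_T(k)^{-1/2}$ acting on the smooth function $f_i$ with rapidly decaying Fourier coefficients, and $|k|W_T(k)^{-1/2}\le |k|(|k|^2+1)^{1/2}\lesssim |k|^2+1$ for $|k|>T$ while it equals $|k|\le T$ for $|k|\le T$, this is bounded by $\|(-\Delta_L+1)f_i\|_2\lesssim_L 1$ uniformly in $T$. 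Carrying this estimate through carefully is the crux; everything else is bookkeeping.
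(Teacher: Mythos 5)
Your proof follows essentially the same approach as the paper's: you factor $Du^{-1}$ into the orthogonal rotation $R(y)$ times a block-triangular piece (the $\eta$-columns being the standard basis vectors $f_4,\dots,f_N$), reduce the determinant to the $3\times 3$ block, and bound it from below by testing $\partial_au^{-1}(0,\eta)$ against normalized elements of $\spn\{\Pi W_T\partial_j\varphi_L\}$, controlling the $\eta$-correction via a $W_T$-weighted Cauchy--Schwarz estimate together with the smoothness of $\varphi_L$ — precisely the paper's route. One wording to correct: the Lemma asserts uniformity for \emph{all} $T>0$, so justifications resting on ``$T$ large'' (e.g.\ your claim that $\spn\{\Pi W_T\partial_j\varphi_L\}$ is a ``small perturbation'' of $\spn\{\Pi\partial_j\varphi_L\}$ for $T$ large, or that transversality holds ``for $T,N$ large'') should instead invoke the monotonicity $W_T\geq W_0$, which yields $\|\Pi W_T^{1/2}\partial_a\varphi_L\|_2\geq\|\Pi W_0^{1/2}\partial_a\varphi_L\|_2>0$ uniformly in $T>0$ — this is exactly the step the paper takes in \eqref{eq:partialaest} and is what makes the lower bound on $|\det Du^{-1}|$ genuinely $T$-independent.
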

\begin{proof}
	We introduce the notation $J(y,\eta)=D u^{-1}(y,\eta)$ and $d(y,\eta):=| \det J(y,\eta)|$. 	Note that $R(y)$ in \eqref{def:Ry}  satisfies $R(-y)=R(y)^{-1}=R(y)^t$ since $\{f_j^y\}_{j=1}^N$ is an orthonormal basis of $\ran \Pi$ for any $y$. Hence, for $j=1,\dots,N$ we have
	\begin{align}
	(u^{-1})_j(y,\eta)&=\bra{f_j}\ket{u^{-1}(y,\eta)}=
	\bra{R(-y)f_j}\ket{\Pi \varphi_L+\sum_{l=4}^N\eta_l f_l}.
	\end{align}
	This yields the smoothness of $u^{-1}$ in $\eta$ and in $y$  (noting that $\{f_j\}_{j=1}^N\subset \ran \Pi$ is a set of smooth functions for any $N$). We proceed to compute $J$. We have, for $4\leq k \leq N$, 
	\begin{align}
	\partial_{\eta_k} (u^{-1})_j (y,\eta)=\bra{R(-y) f_j}\ket{f_k}=\bra{f_j}\ket{R(y) f_k},
	\end{align}
	and
	\begin{align}\nonumber
	\partial_{y_k} (u^{-1})_j (y,\eta) & = \bra{f_j}\ket{\partial_{y_k}R(y)\left(\Pi \varphi_L+\sum_{l=4}^N\eta_l f_l\right)} \\ & = - \bra{f_j}\ket{R(y) \partial_k\left(\Pi \varphi_L+\sum_{l=4}^N\eta_l f_l\right)}
	\end{align}
	for $1\leq k \leq 3$. 
	Therefore
	\begin{align}
	J (y,\eta)&= R(y)\left[\sum_{k=1}^3 \ket{v_k}\bra{f_k}+\sum_{k\geq 4} \ket{f_k}\bra{f_k}\right]\nonumber\\
	&=R(y)\left(\unit-\GradProjT+ \sum_{k=1}^3 \ket{v_k}\bra{f_k}\right)=:R(y)J_0(\eta), \label{JRJ}
	\end{align}
	where $v_k(\eta):= -\partial_k u^{-1}(0,\eta)=-\partial_k \left(\Pi \varphi_L+\sum_{l=4}^N\eta_l f_l\right)$. Since $R(y)$ is orthogonal, we see that $d=|\det J_0|$ (implying, in particular, that $d$ is independent of $y$). 
	
	Observe that 
	\begin{align}
	\label{eq:Jzero}
	J_0=
	\begin{pmatrix}
	A_0 & 0     \\
	A_1 & \unit
	\end{pmatrix},
	\end{align}
	where $A_0$ is the $3\times 3$ matrix given by 
	\begin{align}
	\label{eq:Adef}
	(A_0)_{jk}=\bra{f_j}\ket{v_k}=\bra{f_j}\ket{-\partial_k\left(\Pi \varphi_L +\sum_{l=4}^N \eta_l f_l\right)}, \quad j,k\in\{1,2,3\},
	\end{align} 
	and $A_1$ is the $(N-3)\times 3$ matrix defined by
	\begin{align}
	(A_1)_{jk}=\bra{f_{j+3}}\ket{-\partial_k\left(\Pi \varphi_L+\sum_{l=4}^N\eta_l f_l\right)} \quad j\in \{1,\dots, N-3\},\;\;k\in\{1,2,3\}.
	\end{align}
	Since $J_0$ is the identity in the bottom-right $(N-3)\times (N-3)$ corner and $0$ in the top-right $3\times (N-3)$ corner, $d=|\det A_0|$.
	On $\ran \GradProjT$ the operators $\partial_k$ with $k=1,2,3$ and $W_T^{-1}$ are uniformly bounded in $N$ and $T$. Recall also that $\|\eta\|_{W_T}\leq \varepsilon^1_L$. Hence, for some constant $C_L$ independent of $N$ and $T$, and for any $j,k\in \{1,2,3\}$, we have
	\begin{align}
	\label{eq:Abdd}
	|(A_0)_{jk}|\leq \|\partial_k f_j\|_2 \|\Pi \varphi_L\|_2+\|W_T^{-1}\partial_k f_j\|_{W_T}\|\eta\|_{W_T}\leq C_L.
	\end{align}
	Moreover, for any $j, k \in \{1,2,3\}$ and any $l, l_1, l_2\in\{4,\dots,N\}$, we also have
	\begin{align}
	\label{eq:Ader}
	\partial_{\eta_l} (A_0)_{jk}=\bra{\partial_k f_j}\ket{f_l}, \quad \partial_{\eta_{l_1}}\partial_{\eta_{l_2}} (A_0)_{jk}=0.
	\end{align}
	Clearly, \eqref{eq:Abdd} and \eqref{eq:Ader} together with the fact that $d=|\det A_0|$ show that $d$ and all its derivatives are uniformly bounded in $N$ and $T$. 
	To show that there exists $\varepsilon_L^1$ and $N_L$ such that $d\geq C_L>0$ for all $\varepsilon\leq \varepsilon_L^1$, $T>0$ and $N>N_L$, we show that the image of the $3$-dimensional unit sphere under $A_0$ is uniformly bounded away from $0$, which clearly yields our claim. For this purpose, we observe that the $k$-th column of $A_0$ is given by $\GradProjT \left[-\partial_k \left(\Pi \varphi_L+\sum_{l=4}^N\eta_l f_l\right)\right]$ and therefore, for any unit vector $a=(a_1,a_2,a_3)\in \R^3$, 
	\begin{align}
	A_0 a&=\sum_{k=1}^3 a_k \GradProjT \left[-\partial_k \left(\Pi \varphi_L+\sum_{l=4}^N\eta_l f_l\right)\right]
	=-\GradProjT\partial_a u^{-1}(0,\eta), 
	\end{align}
	where we denote  $\sum_{k=1}^3 a_k \partial_k = \partial_{a}$.  
	To bound the norm of $A_0 a$ from below, it is then sufficient to test $\partial_a u^{-1}(0,\eta)$ against one normalized element of $\ran \GradProjT$, say $\frac{\Pi W_T \partial_a \varphi_L}{\|\Pi W_T \partial_a \varphi_L\|_2}$. We obtain 
	\begin{align}
	\label{eq:partialaest}
	\|A_0a\|_2^2&=\|\GradProjT\partial_a u^{-1}(0,\eta)\|_2^2\geq \left|\bra{\frac{\Pi W_T \partial_a \varphi_L}{\|\Pi W_T \partial_a \varphi_L\|_2}}\ket{\partial_a\left(\Pi \varphi_L+\sum_{l=4}^N\eta_l f_l\right)}\right|^2\nonumber\\
	&=\|\Pi W_T \partial_a \varphi_L\|_2^{-2}\left|\|\Pi W_T^{1/2}\partial_a \varphi_L\|_2^2-\bra{\Pi \partial_a^2 \varphi_L}\ket{\eta}_{W_T}\right|^2\notag\\
	&\geq \|\partial_a \varphi_L\|_2^{-2}\left(\|\Pi W_0^{1/2}\partial_a \varphi_L\|_2^2-\|\Pi \partial_a^2 \varphi_L\|_{W_T}\|\eta\|_{W_T}\right)_+^2\nonumber\\
	&\geq \|\partial_a \varphi_L\|_2^{-2}\left(\|\Pi W_0^{1/2}\partial_a \varphi_L\|_2^2-\varepsilon\|\partial_a^2 \varphi_L\|_2\right)_+^2,
	\end{align}
	where we used that $\|\eta\|_{W_T}\leq \varepsilon$,  $0\leq W_T\leq \unit$ and  $\Pi\leq \unit$, and $(\, \cdot\,)_+$ denotes the positive part. As remarked after \eqref{eq:projdef},  $\partial_a \varphi_L =(-\Delta_L)^{-1/2} \partial_a |\psi_L|^2\neq 0$ and since $\varphi_L\in C^{\infty}$,  $\partial_a \varphi_L$ and $\partial_a^2 \varphi_L$ are uniformly bounded in $a$. We can thus find $N_L>0$ and $\varepsilon_L^1$ such that the r.h.s. of \eqref{eq:partialaest} is bounded from below by some constant $C_L>0$ uniformly for $T>0$, $N>N_L$ and $\varepsilon\leq \varepsilon_L^1$. This shows that $A_0$ (and hence $J$) is invertible at every point and  that $d\geq C_L>0$ uniformly in $T>0$, $N>N_L$ and $\varepsilon\leq\varepsilon^1_L$, as claimed. This concludes the proof.
\end{proof}

Since $u$ is a diffeomorphism, we can introduce a unitary operator that lifts $u^{-1}$ to $L^2$, defined by
\begin{align}
\label{eq:unitary}
&U:L^2(\TL \times B^{T,\Lambda}_{\varepsilon}) \longrightarrow L^2(\Tneigh) \nonumber\\
&U(\psi):=|\det \left(D u\right)|^{1/2} \psi \circ u.
\end{align}
Recall that $j_1$ is supported in $\Tneigh$, hence we can apply $U$ to 
 $j_1 \mathbb{K} j_1$, obtaining an operator that acts on functions on $\TL \times \R^{N-3}$ that are supported in  $\TL \times B^{T,\Lambda}_{\varepsilon}$.  In particular, 
\begin{align}
j_1 \mathbb{K} j_1\geq j_1^2 \infspec_{H^1_0\left(\TL\times B^{T,\Lambda}_{\varepsilon}\right)}[U^*\mathbb{K}U],
\end{align}  
where the subscript indicates that the operator has to be understood as the corresponding quadratic form with form domain $H^1_0(\TL\times B^{T,\Lambda}_{\varepsilon})$ (i.e., with Dirichlet boundary conditions on the boundary of $B^{T,\Lambda}_{\varepsilon}$). 
We are hence left with the task of giving a lower bound on $\infspec_{H^1_0\left(\TL\times B^{T,\Lambda}_{\varepsilon}\right)}[U^*\mathbb{K}U]$, which will be done in the remainder of this subsection.

Recalling the definition of $\mathbb{K}$ given in \eqref{eq:defK}, we proceed to find a convenient lower bound for $U^* \FL U$. Any $(\Pi\varphi_L)^{y_{\varphi}}+(w_{\varphi})^{y_{\varphi}}=\varphi \in \Tneigh$ satisfies \eqref{eq:FHessianregime} with $\varphi_L^{y_{\varphi}}$ in place of $\varphi_L$,  and we can therefore expand $\FL(\varphi)$ using Proposition \ref{prop:FHessian}, obtaining 
\begin{align}
\FL(\varphi)-\eL\geq&\expval{\unit-K_L^{y_{\varphi}}-\varepsilon C_LJ_L^{y_{\varphi}}}{(w_{\varphi})^{y_{\varphi}}-((\unit-\Pi)\varphi_L)^{y_{\varphi}}}\notag\\
=& \expval{(\unit-\Pi)(\unit-K_L-\varepsilon C_LJ_L)(\unit-\Pi)}{\varphi_L}\notag\\
&-2\bra{(\unit-\Pi)\varphi_L} \unit-K_L-\varepsilon C_LJ_L\ket{w_{\varphi}}+\expval{\unit-K_L-\varepsilon C_LJ_L}{w_{\varphi}}.
\end{align} 
Since $K_L$ and $J_L$ are trace class operators, 
\begin{align}
(\unit-\Pi)(\unit-K_L-\varepsilon C_LJ_L)(\unit-\Pi)>0
\end{align} 
holds for $\Lambda$ sufficiently large and $\varepsilon$ sufficiently small. Moreover, since $\varphi_L\in C^{\infty}(\TL)$
\begin{align}\nonumber
& |\bra{(\unit-\Pi)\varphi_L} \unit-K_L-\varepsilon C_LJ_L\ket{w_{\varphi}}| \\ &\leq \|W_T^{-1/2}(\unit-K_L-\varepsilon C_LJ_L)(\unit-\Pi)\varphi_L\|_2\|w_{\varphi}\|_{W_T}=O(\varepsilon \Lambda^{-h})
\end{align} 
for  arbitrary $h>0$ and uniformly in $T$. 
This implies that, for any $\varphi=(\Pi\varphi_L)^{y_{\varphi}}+(w_{\varphi})^{y_{\varphi}} \in \Tneigh$, any $\Lambda$ sufficiently large, any $\varepsilon$ sufficiently small and an arbitrary $h$ 
\begin{align}
\label{eq:FLlocalbound}
\FL(\varphi)=\FL((\Pi\varphi_L)^{y_{\varphi}}+(w_{\varphi})^{y_{\varphi}})\geq \eL -O(\varepsilon \Lambda^{-h})+\expval{\unit-K_L-\varepsilon C_LJ_L}{w_{\varphi}}.
\end{align}
Therefore, if we define the $[(N-3)\times (N-3)]$-matrix $M$ with coefficients
\begin{align}
\label{eq:Mdef}
M_{k,j}:= \bra{f_{k+3}} \unit-K_L-\varepsilon C_LJ_L \ket{f_{j+3}},
\end{align}
then, by \eqref{eq:FLlocalbound}, the multiplication operator $U^* \FL U$ satisfies 
\begin{align}\label{p1}
(U^* \FL U) (y,\eta)\geq \eL+\expval{M}{\eta}-O(\varepsilon \Lambda^{-h}).
\end{align}
It is easy to see that $M$ is a positive matrix, at least for $\varepsilon$ sufficiently small and $T$ and $\Lambda$ sufficiently large. Indeed, the positivity of $M$ is equivalent to the positivity of $(\unit-K_L-\varepsilon C_LJ_L)$ on $\ran (\Pi-\GradProjT)$ and, by Proposition \ref{prop:Hessianstrictpos}, $(\unit-K_L-\varepsilon C_LJ_L)$ is positive on any vector space with trivial intersection with $\ran \GradProj$. Clearly, since $\GradProjT\to \GradProj$ as $T\to \infty$, the bound 
\begin{align}
\label{eq:Mpos}
M\geq c_L>0
\end{align} 
holds, uniformly in $T$, $\Lambda$ and for $\varepsilon$ sufficiently small.

We now proceed to bound $- U^* \Delta_{\lambda} U$ from below. 

\begin{lem}
Let $U$ be the unitary transformation defined in \eqref{eq:unitary}. There exists $C_L>0$, independent of $N$, $T$ and $\varepsilon$, such that, for $\varepsilon \leq \varepsilon^1_L$, $T>0$ and $N>N_L$ 
\begin{align}
\label{eq:LemmaDiffLaplClaim}
U^*\left(-\Delta_{\lambda}\right)U\geq -\Delta_{\eta}-C_L.
\end{align}
\end{lem}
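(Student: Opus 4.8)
\textbf{Proof strategy for the Laplacian bound \eqref{eq:LemmaDiffLaplClaim}.} The plan is to compute $U^*(-\Delta_\lambda)U$ explicitly using the change of variables $u$ and the fact that $U$ is the natural unitary associated to a diffeomorphism. First I would recall that for a diffeomorphism $u^{-1}:\TL\times B^{T,\Lambda}_\varepsilon\to\Tneigh$ with the unitary $U$ defined in \eqref{eq:unitary}, conjugation of the flat Laplacian $-\Delta_\lambda$ on $\ran\Pi$ by $U$ yields the Laplace--Beltrami-type operator associated to the pulled-back flat metric, i.e.
\begin{align}
U^*(-\Delta_\lambda)U = -\frac{1}{\sqrt{d}}\,\partial_{z_a}\Big(\sqrt{d}\,G^{ab}\,\partial_{z_b}\Big) + (\text{zeroth order}),
\end{align}
where $z=(y,\eta)$ are the Gross coordinates, $G=(Du^{-1})^t(Du^{-1})=J^tJ$ is the induced metric, $G^{ab}$ its inverse, and $d=|\det J|$. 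The zeroth-order term comes from differentiating the Jacobian factor $|\det Du|^{1/2}$ and is a bounded multiplication operator precisely because, by Lemma \ref{lem:diffeomorphism}, $d$ and all its derivatives are uniformly bounded (in $N$, $T$, $\varepsilon$) and $d$ is bounded below by a positive constant $C_L$. Hence that term contributes at most $-C_L$ on the right-hand side, which is exactly the shape of the claimed bound; the remaining task is to show the second-order part dominates $-\Delta_\eta$.

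The key structural input is the block form of $J$ from \eqref{JRJ}--\eqref{eq:Jzero}: $J(y,\eta)=R(y)J_0(\eta)$ with $R(y)$ orthogonal, so $G=J_0^tR(y)^tR(y)J_0=J_0^tJ_0$ is \emph{independent of $y$}, and moreover $J_0=\begin{pmatrix}A_0&0\\ A_1&\unit\end{pmatrix}$. A direct computation then gives
\begin{align}
J_0^{-1}=\begin{pmatrix}A_0^{-1}&0\\ -A_1A_0^{-1}&\unit\end{pmatrix},\qquad
G^{-1}=J_0^{-1}(J_0^{-1})^t=\begin{pmatrix}A_0^{-1}(A_0^{-1})^t & -A_0^{-1}(A_1A_0^{-1})^t\\ -A_1A_0^{-1}(A_0^{-1})^t & \unit + A_1A_0^{-1}(A_1A_0^{-1})^t\end{pmatrix}.
\end{align}
The crucial observation is that the bottom-right $(N-3)\times(N-3)$ block of $G^{-1}$ is $\unit + (A_1A_0^{-1})(A_1A_0^{-1})^t \geq \unit$. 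Therefore, writing the second-order part as $-\partial_{z_a}(\sqrt d\,G^{ab}\,\partial_{z_b})/\sqrt d$ and dropping (by positivity, after symmetrizing) the $y$-derivative blocks and cross terms, one is left with a bound from below by $-\frac{1}{\sqrt d}\partial_{\eta_k}(\sqrt d\,(G^{-1})_{kl}^{\eta\eta}\partial_{\eta_l})$ with $(G^{-1})^{\eta\eta}\geq\unit$; combined with the uniform bounds on $\partial_\eta\log\sqrt d$ this gives $-\Delta_\eta$ up to a bounded multiplication error $-C_L$. The reason dropping the $y$-block is legitimate is that the full second-order form is nonnegative and its restriction to the $\eta\eta$-block is again nonnegative, so one may lower-bound it by $(G^{-1})^{\eta\eta}_{kl}\partial_{\eta_k}\cdot\partial_{\eta_l}$ after the usual symmetrization; alternatively one can argue directly from $U^*(-\Delta_\lambda)U=\sum_a (U^*\partial_{\lambda_a}U)^*(U^*\partial_{\lambda_a}U)$ and that $U^*\partial_{\lambda_a}U$ restricted to the span of $f_4,\dots,f_N$ directions, after projecting, reproduces $\partial_{\eta}$ up to the lower-order terms above.

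\textbf{Main obstacle.} The delicate point is controlling the zeroth-order (potential) term $\tfrac14\big(\Delta_\eta\log\sqrt d\big) + \tfrac18|\nabla_\eta\log\sqrt d|^2$-type contributions and the analogous terms in $y$: these must be shown to be \emph{bounded multiplication operators with a bound depending only on $L$}, uniformly in $N$, $T$, $\varepsilon$. This is where Lemma \ref{lem:diffeomorphism} is essential — it gives exactly uniform upper bounds on $d$ and its derivatives and a uniform lower bound $d\geq C_L$, so $\log\sqrt d$ has uniformly bounded derivatives. One must be a little careful that although $d=|\det A_0|$ depends on $\eta$ (and not $y$), its $\eta$-derivatives $\partial_{\eta_l}(A_0)_{jk}=\langle\partial_k f_j|f_l\rangle$ are summable against $\partial_{\eta_l}$ in the right way because $\{f_l\}$ is orthonormal and $\partial_k f_j$ (with $j\in\{1,2,3\}$, $f_j\in\ran\GradProjT$) has $L^2$-norm uniformly bounded in $N$ and $T$ — this is the same mechanism already used in \eqref{eq:Abdd}--\eqref{eq:Ader}. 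Once these uniform bounds are in hand, collecting all lower-order terms into a single constant $-C_L$ and invoking the $\geq\unit$ lower bound on the $\eta\eta$-block of $G^{-1}$ completes the proof.
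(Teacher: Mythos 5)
Your overall strategy matches the paper's: express $U^*(-\Delta_\lambda)U$ via the conjugated differential operator $-d^{-1/2}\nabla\cdot d^{1/2}\left[J^{-1}(J^{-1})^t\right]d^{1/2}\nabla\, d^{-1/2}$, use $J=R(y)J_0$ with $R(y)$ orthogonal to eliminate the $y$-dependence, compute $J_0^{-1}(J_0^{-1})^t$ in block form, and absorb the $d$-factors into a bounded constant via Lemma \ref{lem:diffeomorphism}. The final control of the zeroth-order terms is handled exactly as in the paper.

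However, the crucial matrix inequality is not correctly justified. You claim that ``the full second-order form is nonnegative and its restriction to the $\eta\eta$-block is again nonnegative, so one may lower-bound it by $(G^{-1})^{\eta\eta}_{kl}\partial_{\eta_k}\cdot\partial_{\eta_l}$.'' This reasoning does not hold for a general positive block matrix: if $M=\begin{pmatrix}P & Q\\ Q^t & R\end{pmatrix}\geq 0$, the quadratic form $\langle(u,v)|M|(u,v)\rangle$ is \emph{not} bounded below by $\langle v|R|v\rangle$ — the sharp lower bound upon minimizing over $u$ is the Schur complement $R-Q^tP^{-1}Q\leq R$. (Take $P=Q=R=1$ and $u=-v$ to see the failure.) So dropping the $yy$-block and cross terms is not a consequence of nonnegativity of the restriction. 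What saves the argument here is the specific rank structure: with $B=A_0^{-1}$ and $C=A_1A_0^{-1}$ one has
\begin{align}
J_0^{-1}(J_0^{-1})^t-\begin{pmatrix}0&0\\0&\unit\end{pmatrix}
=\begin{pmatrix}BB^t&-BC^t\\-CB^t&CC^t\end{pmatrix}
=\begin{pmatrix}B\\-C\end{pmatrix}\begin{pmatrix}B^t&-C^t\end{pmatrix}\geq 0,
\end{align}
so $J_0^{-1}(J_0^{-1})^t\geq\unit-\GradProjT$. This is exactly the paper's decomposition $J_0^{-1}=(\unit-\GradProjT)+D$ with $D(\unit-\GradProjT)=(\unit-\GradProjT)D^t=0$, hence $J_0^{-1}(J_0^{-1})^t=(\unit-\GradProjT)+DD^t$. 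You already wrote down the explicit block form of $G^{-1}$, so this step is only a completion of the square away; but as stated, your justification is not a valid argument, and you should replace it by the decomposition above. Once that is fixed, the rest of the proof goes through as you describe.
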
 
\begin{proof}
Since \eqref{JRJ} shows that  $J(y,\eta)=R(y)J_0(\eta)$ with $R(y)$ orthogonal, we have 
\begin{align}
\label{eq:diffeomLapl1}
U^* \left(-\Delta_{\lambda} \right)U&=-d^{-1/2} \grad \cdot d^{1/2}\left[J^{-1} (J^{-1})^t\right] d^{1/2} \grad d^{-1/2}\nonumber\\
&=-d^{-1/2} \grad  \cdot d^{1/2}\left[J_0^{-1} (J_0^{-1})^t\right] d^{1/2} \grad d^{-1/2},
\end{align}
with $d(y,\eta)=| \det J(y,\eta)|$ and  $\grad$ denoting the gradient with respect to  $(y,\eta)\in \R^N$. 
Recalling the  expression \eqref{eq:Jzero} for $J_0$, 
we find 
\begin{align}
J_0^{-1}=
\begin{pmatrix}
A_0^{-1} & 0     \\
-A_1 A_0^{-1} & \unit
\end{pmatrix}
=
\begin{pmatrix}
0 & 0     \\
0 & \unit
\end{pmatrix}
+
\begin{pmatrix}
A_0^{-1} & 0     \\
-A_1 A_0^{-1} & 0
\end{pmatrix}
=:(1-\GradProjT)+D.
\end{align}
Since $D(\unit-\GradProjT)=(\unit-\GradProjT)D^t=0$, we have
\begin{align}
\label{eq:JzeroJzerot}
J_0^{-1} (J_0^{-1})^t=(\unit-\GradProjT)+D D^t\geq \unit-\GradProjT.
\end{align}
With \eqref{eq:diffeomLapl1} and \eqref{eq:JzeroJzerot}, we thus obtain 
\begin{align}
U^* \left(-\Delta_{\lambda}\right) U&\geq -d^{-1/2} \grad \cdot d^{1/2}\left(\unit -\GradProjT\right) d^{1/2} \grad d^{-1/2}\nonumber\\
&=-\Delta_{\eta}-(2d)^{-2} |\nabla d|^2+(2d)^{-1}\Delta d.
\end{align}
 Lemma \ref{lem:diffeomorphism} guarantees that $d$ and all its derivatives are bounded, and $d$ is bounded away from $0$ uniformly in $N>N_L$, $T>0$ and $\varepsilon\leq \varepsilon_L^1$, leading to  \eqref{eq:LemmaDiffLaplClaim}.
\end{proof}

In combination, \eqref{p1}, \eqref{eq:LemmaDiffLaplClaim} and the positivity of $M$ imply that 
\begin{align}
j_1 \mathbb{K} j_1&\geq j_1^2 \infspec_{H^1_0\left(\TL\times B^{T,\Lambda}_{\varepsilon}\right)}(U^* \mathbb{K} U)\\
&\geq j_1^2\left(\eL-\frac N {2 \alpha^2}-O(\varepsilon \Lambda^{-h})-O(\alpha^{-4})+\infspec_{L^2(\R^N)} \left[-\frac 1 {4\alpha^4} \Delta_{\eta}+\expval{M}{\eta}\right] \right)\nonumber\\
&=j_1^2\left(\eL-\frac 1 {2 \alpha^2}(N-\Tr (M^{1/2}))-O(\varepsilon \Lambda^{-h})- O(\alpha^{-4})\right).
\end{align}
Note that since we are taking $\Lambda\gg \alpha^{4/5}$, $\varepsilon\ll 1$ and $h>0$ was arbitrary, picking $h=5$  allows to absorb the error term $O(\varepsilon \Lambda^{-h})$ in the error term $O(\alpha^{-4})$. Recalling the definition of $M$ given in \eqref{eq:Mdef}, we have
\begin{align}
\Tr(M^{1/2})=\Tr\left[\sqrt{(\Pi-\GradProjT)(\unit-K_L-\varepsilon C_L J_L)(\Pi-\GradProjT)}\right].
\end{align}
With $\{t_j\}_{j=1}^{N-3}$ an orthonormal basis of $\ran(\Pi-\GradProjT)$ of eigenfunctions of $(\Pi-\GradProjT)(\unit-K_L-\varepsilon C_LJ_L)(\Pi-\GradProjT)$, we can write   
\begin{align}
\Tr(M^{1/2})&=\sum_{j=1}^{N-3} \expval{\unit-K_L-\varepsilon C_LJ_L}{t_j}^{1/2}\nonumber\\
&=\sum_{j=1}^{N-3} \left[ \expval{\unit-K_L}{t_j}^{1/2}-\frac{\varepsilon C_L}{2 \xi_j^{1/2}} \expval{J_L}{t_j}\right]
\end{align}
for some $\{\xi_j\}_{j=1}^{N-3}$ satisfying
\begin{align}
c_L\leq \expval{\unit-K_L-\varepsilon C_L J_L}{t_j}\leq \xi_j \leq \expval{\unit-K_L}{t_j}\leq 1
\end{align}
for $T$ and $\Lambda$ large enough and $\varepsilon$ small enough, 
where we used  \eqref{eq:Mpos} for the lower bound. Using the concavity of the square root and the trace class property of $J_L$, we conclude that
\begin{align}
\Tr(M^{1/2})\geq \sum_{j=1}^{N-3} \expval{\sqrt{\unit-K_L}}{t_j}-\varepsilon C_L \Tr(J_L)=\Tr\left[(\Pi-\GradProjT)\sqrt{\unit-K_L}\right]-\varepsilon C_L.
\end{align}
Since $\varphi_L\in C^{\infty}$ and recalling \eqref{eq:TlambdaRegime}, for an arbitrary $h>0$ we can bound
\begin{align}
\|\GradProj-\GradProjT\|\lesssim_L \min\{\Lambda,T\}^{-h}=T^{-h},
\end{align} 
which also implies the same estimate for the trace-norm of the difference of $\GradProj$ and $\GradProjT$,  both operators being of rank $3$. Recalling that $\GradProj$ projects onto $\ker (\unit-K_L)$, we finally obtain 
\begin{align}
\Tr(M^{1/2})\geq \Tr [\Pi\sqrt{\unit-K_L}]-O(\varepsilon)-O(T^{-h}).
\end{align} 
The error term $O(T^{-h})$ forces  $T\to \infty$ as $\alpha \to \infty$, but  allows $T$ to grow with an arbitrarily small power of $\alpha$. By picking $h$ to be sufficiently large we can absorb it in the error term $O(\varepsilon)$. 

We obtain the final lower bound
\begin{align}
j_1 \mathbb{K} j_1 &\geq j_1^2 \left[\eL-\frac 1 {2\alpha^2}\Tr[\Pi(\unit-(\unit- K_L )^{1/2})]-O(\varepsilon \alpha^{-2})-O(\alpha^{-4})\right]\nonumber\\
&\geq j_1^2 \left[ \eL-\frac 1 {2\alpha^2}\Tr[(\unit-(\unit-K_L )^{1/2})]-O(\varepsilon \alpha^{-2})-O(\alpha^{-4})\right]. \label{r3}
\end{align}

\subsubsection{Bounds on $j_2 \mathbb{K} j_2$} \label{Sec:IMSouter} We recall Corollary \ref{cor:uniquenessANDcoercivity}, which implies that, for any $\varphi \in L^2_{\mathbb{R}}(\TL)$,
\begin{align}
\FL(\varphi)\geq \eL+\inf_{y\in \TL}\expval{B}{\varphi-\varphi_L^y}, 
\end{align}
where $B$ acts in $k$-space as the multiplication by 
\begin{align}
B(k)=\begin{cases}
1 &\text{for} \,\, k=0,\\
1-(1+\kappa'|k|)^{-1} &\text{for}\,\, k\neq 0.
\end{cases}
\end{align}
Note that $B-\eta W_T>0$ for $\eta>0$ small enough (independently of $T$). Moreover, for any $\varphi$ in the support of $j_2$ and any $y\in \TL$, 
\begin{align}
\expval{W_T}{\varphi-\varphi_L^y}\geq \varepsilon^2/4.
\end{align}
Therefore, on the support of $j_2$, we have
\begin{align}
\FL(\varphi) \geq \eL+\inf_{y\in \TL}\expval{B-\eta W_T}{\varphi-\varphi_L^y}+\eta \varepsilon^2/4.
\end{align}
By the Cauchy--Schwarz inequality, using that all the operators involved commute, we have
\begin{align}
\expval{B-\eta W_T}{\varphi-\varphi_L^y}&\geq \expval{(\unit-W_{\gamma}^{1/2})(B-\eta W_T)}{\varphi}\nonumber\\
&\quad +\expval{(\unit-W_{\gamma}^{-1/2})(B-\eta W_T)}{\varphi_L}
\end{align}
for any $\gamma>0$. 
Note that the right hand side is  independent of $y$. 
Since $\varphi_L\in C^{\infty}(\TL)$, the Fourier coefficients of $\varphi_L$ satisfy
\begin{align}
(1+|k|^2)^{5/2}|(\varphi_L)_k|^2\leq  C_{L,t} \gamma^{-t} \quad \text{for} \quad |k|\geq \gamma
\end{align} 
for any $t>0$. 
Using the positivity of $B-\eta W_T$ we can bound 
\begin{align}
\expval{(\unit-W_{\gamma}^{-1/2})(\mathbb{B}-\eta W_T)}{\varphi_L}&\geq-\sum_{k\in \frac {2\pi}{L} \mathbb{Z}^3\atop |k|>\gamma} (B(k)-\eta W_T(k))(1+|k|^2)^{1/2} |(\varphi_L)_k|^2\nonumber\\
&= -\sum_{k\in \frac {2\pi}{L} \mathbb{Z}^3\atop |k|>\gamma} \frac{(B(k)-\eta W_T(k))}{(1+|k|^2)^{2}}(1+|k|^2)^{5/2} |(\varphi_L)_k|^2\nonumber\\
&\geq  -C_{L,t} \gamma^{-t} \sum_{k\in \frac {2\pi}{L} \mathbb{Z}^3\atop |k|>\gamma} \frac{1}{(1+|k|^2)^{2}}\gtrsim_L \gamma^{-t-1}  . 
\end{align}
Therefore we conclude, using the positivity of $\unit-W_{\gamma_{\beta}}^{1/2}$ and of $B-\eta W_T$, that
\begin{align}
\label{eq:j2Kbound1}
&j_2\mathbb{K} j_2\notag\\
&\geq j_2^2 \infspec\left[\eL- \frac N {2 \alpha^2}+\frac{\eta\varepsilon^2}4-O(\gamma^{-t-1})-\frac 1 {4\alpha^4} \Delta_{\lambda}+\expval{(\unit-W_{\gamma}^{1/2})(B-\eta W_T)}{\varphi}\right] \notag\\
&=j_2^2\left(\eL+\frac{\eta\varepsilon^2}4-O(\gamma^{-t-1})-\frac 1 {2\alpha^2} \Tr\left[\Pi\left(\unit-\sqrt{(\unit-W_{\gamma}^{1/2})(B-\eta W_T)}\right)\right]\right).
\end{align}

We need to estimate the behavior in $N=\rank \Pi$, $T$ and $\gamma$ of the trace appearing in the last equation, which equals
\begin{align}
\nonumber
&\Tr\left[\Pi\left(\unit-\sqrt{(\unit-W_{\gamma}^{1/2})(B-\eta W_T)}\right)\right] \\ &=\sum_{k\in \frac{2\pi} L \mathbb{Z}^3 \atop |k|\leq \Lambda} \left(1-\sqrt{(1-W_{\gamma}(k)^{1/2})(B(k)-\eta W_T(k))}\right). 
\label{eq:estimatesTraceOutside}
\end{align}
The contribution to the sum from $|k|\leq \max\{\gamma,T\}$ can be bounded by $C(L\max\{\gamma,T\})^3$. For $|k|> \max\{\gamma,T\}$, $W_\gamma(k) = W_T(k) = (1+|k|^2)^{-1}$, and the coefficient under the square root in the last line of \eqref{eq:estimatesTraceOutside} behaves asymptotically for large momenta as $1 - |k|^{-1}$. 
Hence, recalling \eqref{eq:Nlambdaasymp}, we  conclude that 
\begin{align}
\label{eq:tracebound}
\Tr\left[\Pi\left(\unit-\sqrt{(\unit-W_{\gamma}^{1/2})(B-\eta W_T)}\right)\right]\leq O\left(\max\{\gamma,T\}^3\right)+O(\Lambda^2).
\end{align}
Because of \eqref{eq:TlambdaRegime}, the first term on the right hand side is negligible compared to the second if we choose $\gamma$ to equal $\alpha$ to some small enough power. Because $t$ was arbitrary, we thus arrive at 
\begin{align}
j_2\mathbb{K} j_2\geq  
j_2^2\left(\eL+\frac{\eta\varepsilon^2}4-O(\alpha^{-2}\Lambda^2)\right).
\end{align}
Therefore, if
\begin{align}
\label{eq:restr2bis}
 \varepsilon\geq  C_L \alpha^{-1}\Lambda
\end{align}
for a sufficiently large constant $C_L$, we conclude that for sufficiently large $\alpha$ and $\Lambda$
\begin{align}\label{r4}
j_2\mathbb{K} j_2\geq j_2^2 \eL.
\end{align}

\subsubsection{Proof of Theorem \ref{theo:GSEexpansion}, lower bound} \label{Sec:FinalLowerBound}  By combining the results \eqref{r3} and \eqref{r4} of the previous two subsections with \eqref{r1} and \eqref{r2}, we obtain
\begin{align}
\mathbb{K}& \geq j_1 \mathbb{K} j_1+j_2 \mathbb{K} j_2 +O(\alpha^{-4}\varepsilon^{-2})\nonumber\\
&\geq j_1^2\left[ \eL-\frac 1 {2\alpha^2}\Tr[(\unit-(\unit-K_L )^{1/2})]+O(\varepsilon \alpha^{-2})+O(\alpha^{-4})\right]+j_2^2 \eL+O(\alpha^{-4}\varepsilon^{-2})\nonumber\\
&\geq \eL-\frac 1 {2\alpha^2}\Tr[(\unit-(\unit-K_L )^{1/2})]+O(\varepsilon \alpha^{-2})+O(\alpha^{-4})+O(\alpha^{-4}\varepsilon^{-2})
\end{align}
under the constraint \eqref{eq:restr2bis}. With Proposition \ref{prop:cutoffH}  
we can thus conclude that
\begin{align}
\infspec \HL&\geq \infspec \HL^{\Lambda}+O(\Lambda^{-5/2})+O(\alpha^{-1} \Lambda^{-3/2})+O(\alpha^{-2}\Lambda^{-1})\nonumber\\
&\geq \infspec \mathbb{K}+O(\Lambda^{-5/2})+O(\alpha^{-1} \Lambda^{-3/2})+O(\alpha^{-2}\Lambda^{-1})\nonumber\\
&\geq \eL-\frac 1 {2\alpha^2}\Tr[(\unit-(\unit-K )^{1/2})]+O(\varepsilon \alpha^{-2})+O(\alpha^{-4})+O(\alpha^{-4}\varepsilon^{-2})\nonumber\\
&\quad +O(\Lambda^{-5/2})+O(\alpha^{-1} \Lambda^{-3/2})+O(\alpha^{-2}\Lambda^{-1}).
\end{align}
To minimize the error terms under the constraint \eqref{eq:restr2bis}, we 
pick $\varepsilon \sim \alpha^{-1/7}$ and $\Lambda \sim \alpha^{6/7}$, which yields the claimed estimate
\begin{align}
\infspec \HL\geq \eL-\frac 1 {2\alpha^2}\Tr[(\unit-(\unit-K_L )^{1/2})]+O(\alpha^{-15/7}).
\end{align} 
This concludes the proof of the lower bound, and hence the proof of Theorem~\ref{theo:GSEexpansion}.

\section*{Acknowledgments}
Funding from the European Union's Horizon 2020 research and innovation programme under the ERC grant agreement No 694227 is gratefully acknowledged. We would also like to thank  Rupert Frank for many helpful discussions, especially related to the Gross coordinate transformation defined in Def.~\ref{def:GroosCoord}.
\bibliographystyle{siam}

\begin{thebibliography}{20}

\bibitem{alexandrov2010advances}
{\sc A.~S. Alexandrov and J.~T. Devreese}, {\em Advances in polaron physics},
 Springer lecture notes vol.~159,  2010.

\bibitem{allcock1963strong}
{\sc G.~Allcock}, {\em Strong-coupling theory of the polaron}, in: Polarons and
  Excitons,  (1963), pp.~45--70.

\bibitem{benyi2013sobolev}
{\sc {\'A}.~B{\'e}nyi and T.~Oh}, {\em The {S}obolev inequality on the torus
  revisited}, Publicationes Mathematicae Debrecen, 83 (2013), p.~359.

\bibitem{donsker1983asymptotics}
{\sc M.~D. Donsker and S.~S. Varadhan}, {\em Asymptotics for the polaron},
  Comm. Pure Appl. Math., 36 (1983), pp.~505--528.

\bibitem{feliciangeli2020persistence}
{\sc D.~Feliciangeli, S.~Rademacher, and R.~Seiringer}, {\em Persistence of the
  spectral gap for the Landau--Pekar equations},  
  Lett. Math. Phys. 111, 19 (2021).
  
\bibitem{feliciangeli2020uniqueness}
{\sc D.~Feliciangeli and R.~Seiringer}, {\em Uniqueness and nondegeneracy of
  minimizers of the Pekar functional on a ball}, SIAM Journal on Mathematical
  Analysis, 52 (2020), pp.~605--622.
  
\bibitem{frank2013symmetry}
 {\sc R.~L. Frank, E.~H. Lieb, and R.~Seiringer}, {\em Symmetry of bipolaron
 	bound states for small coulomb repulsion}, Communications in Mathematical
 Physics, 319 (2013), pp.~557--573.

\bibitem{frank2014ground}
{\sc R.~L. Frank, E.~H. Lieb, R.~Seiringer, and L.~E. Thomas}, {\em Ground
  state properties of multi-polaron systems}, in XVIIth International Congress
  on Mathematical Physics, World Scientific, 2014, pp.~477--485.

\bibitem{frank2019quantum}
{\sc R.~L. Frank and R.~Seiringer}, {\em Quantum corrections to the Pekar
  asymptotics of a strongly coupled polaron},  Commun. Pure Appl. Math., 74 (2021), pp.~544--588.

\bibitem{frohlich1937theory}
{\sc H.~Fr{\"o}hlich}, {\em Theory of electrical breakdown in ionic crystals},
  Proceedings of the Royal Society of London. Series A-Mathematical and
  Physical Sciences, 160 (1937), pp.~230--241.

\bibitem{gerlach1991analytical}
{\sc B.~Gerlach and H.~L{\"o}wen}, {\em Analytical properties of polaron
  systems or: do polaronic phase transitions exist or not?}, Reviews of Modern
  Physics, 63 (1991), p.~63.

\bibitem{griesemer2016self}
{\sc M.~Griesemer and A.~W{\"u}nsch}, {\em Self-adjointness and domain of the
  Fr{\"o}hlich Hamiltonian}, Journal of Mathematical Physics, 57 (2016),
  p.~021902.

\bibitem{gross1962particle}
{\sc E.~P. Gross}, {\em Particle-like solutions in field theory}, Annals of
  Physics, 19 (1962), pp.~219--233.

\bibitem{gross1976strong}
\leavevmode\vrule height 2pt depth -1.6pt width 23pt, {\em Strong coupling
  polaron theory and translational invariance}, Annals of Physics, 99 (1976),
  pp.~1--29.

\bibitem{lenzmann2009uniqueness}
{\sc E.~Lenzmann}, {\em Uniqueness of ground states for pseudorelativistic
  {H}artree equations}, Analysis \& PDE, 2 (2009), pp.~1--27.

\bibitem{lieb1977existence}
{\sc E.~H. Lieb}, {\em Existence and uniqueness of the minimizing solution of
  {C}hoquard's nonlinear equation}, Studies in Applied Mathematics, 57 (1977),
  pp.~93--105.

\bibitem{lieb2001analysis}
{\sc E.~H. Lieb and M.~Loss}, {\em Analysis}, volume 14 of graduate studies in
  mathematics, American Mathematical Society, Providence, RI (2001).

\bibitem{lieb1997exact}
{\sc E.~H. Lieb and L.~E. Thomas}, {\em Exact ground state energy of the
  strong-coupling polaron}, Commun. Math. Phys, 183 (1997), p.~519.

\bibitem{lieb1958ground}
{\sc E.~H. Lieb and K.~Yamazaki}, {\em Ground-state energy and effective mass
  of the polaron}, Physical Review, 111 (1958), p.~728.

\bibitem{moller2006polaron}
{\sc J.~S. M{\o}ller}, {\em The polaron revisited}, Reviews in Mathematical
  Physics, 18 (2006), pp.~485--517.

\bibitem{mukherjee2018identification}
{\sc C.~Mukherjee and S.~Varadhan}, {\em Identification of the polaron measure
  in strong coupling and the Pekar variational formula},  preprint
  arXiv:1812.06927,  (2018).

\bibitem{mukherjee2018strong}
\leavevmode\vrule height 2pt depth -1.6pt width 23pt, {\em Strong coupling
  limit of the polaron measure and the Pekar process},  preprint
  arXiv:1806.06865,  (2018).

\bibitem{nelson1964interaction}
{\sc E.~Nelson}, {\em Interaction of nonrelativistic particles with a quantized
  scalar field}, Journal of Mathematical Physics, 5 (1964), pp.~1190--1197.

\bibitem{pekar1954untersuchungen}
{\sc S.~I. Pekar}, {\em Investigations on the electron theory of crystals},
  Akademie-Verlag, 1954.

\bibitem{reed1975ii}
{\sc M.~Reed and B.~Simon}, {\em II: Fourier Analysis, Self-Adjointness},
  vol.~2, Academic Press, 1975.

\bibitem{reed1978iv}
\leavevmode\vrule height 2pt depth -1.6pt width 23pt, {\em IV: Analysis of
  Operators}, vol.~4, Elsevier, 1978.

\bibitem{seiringer2020polaron}
{\sc R.~Seiringer}, {\em The polaron at strong coupling}, 
Rev. Math. Phys. 33, 2060012 (2021).  

\bibitem{spohn1987effective}
{\sc H.~Spohn}, {\em Effective mass of the polaron: A functional integral
  approach}, Annals of Physics, 175 (1987), pp.~278--318.

\end{thebibliography}

\end{document}